\newcommand{\AlgFTG}{{\textsc{\textsc{FastThresholdGreedy}}}\xspace}
\newcommand{\AlgBR}{{\textsc{\textsc{BoostRatio}}}\xspace}
\newcommand{\AlgG}{{\textsc{\textsc{Greedy}}}\xspace}
\newcommand{\AlgLG}{{\textsc{\textsc{LazyGreedy}}}\xspace}
\newcommand{\AlgSG}{{\textsc{\textsc{StochasticGreedy}}}\xspace}
\newcommand{\AlgBG}{{\textsc{\textsc{BarrierGreedy}}}\xspace}
\newcommand{\AlgDG}{{\textsc{\textsc{DensityGreedy}}}\xspace}
\newcommand{\AlgMRT}{{\textsc{\textsc{DynamicMrt}}}\xspace}
\newcommand{\AlgFast}{{\textsc{\textsc{Fast}}}\xspace}
\newcommand{\FTG}{{\textsc{\textsc{FTG}}}\xspace}
\newcommand{\FTGP}{{\textsc{\textsc{FTGP}}}\xspace}
\newtheorem{theorem}{Theorem}[section]
\newtheorem{lemma}[theorem]{Lemma}
\newtheorem{corollary}[theorem]{Corollary}
\newtheorem{proposition}[theorem]{Proposition}
\newtheorem{observation}[theorem]{Observation}
\newcommand{\defcal}[1]{\expandafter\newcommand\csname c#1\endcsname{{\mathcal{#1}}}}
\newcommand{\defbb}[1]{\expandafter\newcommand\csname b#1\endcsname{{\mathbb{#1}}}}
\newcounter{calBbCounter}
    \edef\letter{\Alph{calBbCounter}}
\newcommand{\eps}{\varepsilon}
\newcommand{\ie}{{\it i.e.}}
\newcommand{\nnR}{{\bR_{\geq 0}}}
\newcommand{\SetExtract}{\texttt{SetExtract}}
\newcommand{\BigAlg}{\texttt{BigElementsAlg}}
\newcommand{\SMKS}{{\texttt{SMKS}}}
\newcommand{\SMK}{{\texttt{SMK}}}
\newcommand{\SMC}{{\texttt{SMC}}}
\newcommand{\SI}{{\texttt{Set-\allowbreak Identification}}}
\newcommand{\USM}{{\texttt{USM}}}
\newcommand{\characteristic}{{\mathbf{1}}}
\newcommand{\SMCFull}{{\texttt{Submodular Maximization subject to a Cardinality Constraint}}}
\newcommand{\SMKFull}{{\texttt{Submodular Maximization subject to a Knapsack Constraint}}}
\newcommand{\USMFull}{{\texttt{Unconstrained Submodular Maximi\-zation}}}
\newcommand{\SMKSFull}{{\texttt{Submodular Maximization subject to Knapsacks and a Set System Constraint}}}
\title{Submodular Maximization in Clean Linear Time}
\author{
		Wenxin Li\thanks{Department of Electrical and Computer Engineering, Ohio State University. Email: \texttt{wenxinliwx.1@gmail.com}.} \and
		Moran Feldman\thanks{Department of Computer Science, University of Haifa, Israel. Email: \texttt{moranfe@cs.haifa.ac.il}.} \and
		Ehsan Kazemi\thanks{Google, Z\"{u}rich. Email: \texttt{ehsankazemi@google.com}.} \and
		Amin Karbasi\thanks{School of Engineering and Applied Science, Yale University. Email: \texttt{amin.karbasi@yale.edu}.}
}
\begin{document}
\date{}
\maketitle
\pagenumbering{arabic}

\begin{abstract}
In this paper, we provide the first deterministic algorithm that achieves the tight $1-1/e$ approximation guarantee for submodular maximization under a cardinality (size) constraint while making a number of queries that scales only linearly with the size of the ground set $n$. To complement our result, we also show strong information-theoretic lower bounds. More specifically, we show that when the maximum cardinality allowed for a solution is constant, no algorithm making a sub-linear number of function evaluations can guarantee any constant approximation ratio. Furthermore,  when the constraint allows the selection of a constant fraction of the ground set, we show that any algorithm making fewer than $\Omega(n/\log(n))$ function evaluations cannot perform better than an algorithm that simply outputs a uniformly random subset of the ground set of the right size. We then provide a variant of our deterministic algorithm for the more general knapsack constraint, which is the first linear-time algorithm that achieves $1/2$-approximation guarantee for this constraint. Finally, we extend our results to the general case of maximizing a monotone submodular function subject to the intersection of a $p$-set system and multiple knapsack constraints. We extensively evaluate the performance of our algorithms on multiple real-life machine learning applications, including movie recommendation, location summarization, twitter text summarization and video summarization.

\textbf{Keywords:} submodular maximization, linear-time algorithms, knapsack constraint, cardinality constraint, set system
\end{abstract}
\section{Introduction}

One of today's most ubiquitous challenges is making an informed decision while facing a massive amount of data. 
Indeed, many machine learning models feature inherently
discrete decision variables: from phrases in a corpus to objects in an image. Similarly, nearly all aspects
of the machine learning pipeline involve discrete tasks, from data summarization \cite{lin2011class} and sketching \cite{muthukrishnan2005data} to
feature selection \cite{das2011submodular} and model explanation \cite{ribeiro2016should}.

The study of how to make near-optimal decisions from a
massive pool of possibilities is at the heart of combinatorial optimization. To make use of combinatorial optimization in machine learning, it is of great importance to understand which discrete formulations can be solved efficiently, and how to do so. Many of these problems
are notoriously hard, and even those that are theoretically tractable may not scale to large instances. However,
the problems of practical interest are often much more well-behaved, and possess extra structures that make
them amenable to exact or approximate optimization techniques. Just as convexity has been a celebrated
and well-studied condition under which continuous optimization is tractable, submodularity is a condition allowing
discrete objectives to be optimized \cite{fujishige91old}. Informally, submodularity captures the intuitive diminishing
returns notion, where the added value of an element (e.g., image, sensor, etc.) to a context (data set
of images, sensor network, etc.) decreases as the context in which it is considered increases.

Submodular functions have  attracted significant interest from the machine
learning community, and have led to the development of algorithms with near-optimal solutions for a wide range
of applications, including 
outbreak detection \cite{leskovec2007cost},  graph 
cuts in computer vision \cite{jegelka2011submodularity},  image and video 
summarization \cite{tschiatschek2014learning, mitrovic2018data, feldman2018less, harshaw2021power},  active  learning \cite{guillory10interactive,golovin11,esfandiari2021adaptivity},  compressed sensing and structured sparsity 
\cite{bach2010structured, elenberg2018restricted}, fairness in machine learning 
\cite{celis2016fair, kazemi2018scalable}, recommendation \cite{mitrovic2019adaptive},  human brain parcellation \cite{salehi2017submodular}, model training \cite{mirzasoleiman2020coresets, mirzasoleiman2016distributed},  and learning causal structures \cite{zhou2016causal, sussex2021near}. For recent surveys on the applications of submodular functions in machine learning and artificial intelligence, we refer the interested reader to \cite{tohidi2020submodularity, bilmes2022submodularity}.

In their seminal works, \citet{nemhauser1978best} and \citet{nemhauser1078analysis} showed that a greedy algorithm achieves a tight $1-1/e$ approximation guarantee for maximizing  a monotone submodualr function subject to cardinality constraint ({\SMC}). Since then, there has been a significant effort to develop faster algorithms and extend the approximation guarantee to more general settings (such as non-monotone functions, or more complex constraints). In particular, it has been observed that, for many machine learning applications, running the greedy algorithm is too expensive as it requires $O(nk)$ function evaluations, where $n$ is the size of the ground set and $k$ is the maximum cardinality allowed by the constraint. Many works attempted to solve this issue: first, using heuristics such as lazy greedy \cite{minoux1978accelerated} or multi-stage greedy \cite{wei2014fast} that aim to  reduce the number of function evaluations and the run-time of each function evaluation, respectively; and later, by suggesting new algorithms that guarantee $(1-1/e-\eps)$-approximation using provably fewer function evaluations. Currently, there is a known \emph{randomized} algorithm achieving this approximation ratio in $O_\eps(n)$ time~\cite{buchbinder2017comparing,mirzasoleiman2015lazier}; however, the state-of-the-art \emph{deterministic} algorithm only manages to achieve this approximation ratio using $O_\eps(n \log \log n)$ function evaluations~\cite{huang2018multi}. In this work, we propose the first deterministic linear-time algorithm that guarantees $(1-1/e-\eps)$ approximation via $O_\eps(n)$ function evaluations.\footnote{See Section~\ref{ssc:independent_works} for a discussion of two recent works that independently obtained similar results.}

Since the linear time complexity of our algorithm is the minimum required to read all the elements of the ground set, it is difficult to believe that any algorithm can improve over it while guaranteeing a reasonable approximation ratio. However, the situation is somewhat different when one is interested in the number of function evaluations made by the algorithm. An algorithm making a sub-linear number of function evaluations can be interesting, even if its total time complexity is linear, because the cost of every function evaluation is quite high in some applications. Since an algorithm can gather information about many elements of the ground set with a single function evaluation, it is not a priori clear that such an algorithm cannot exist. However, we are able to prove that this is indeed the case in at least two important regimes. First, we show that when the maximum cardinality allowed for the solution is constant, no algorithm making a sub-linear number of function evaluations can guarantee any constant approximation ratio for {\SMC}.\footnote{This inapproximability result also appears in~\cite{kuhnle2021quick}, which was published after appearance of an earlier arXiv version of this paper\iftoggle{anonymize}{}{~\cite{li2020note}} that included this result.} Second, when the constraint allows the selection of a constant fraction of the ground set, we show that any algorithm making fewer than $\Omega(n/\log(n))$ function evaluations cannot perform better than an algorithm that simply outputs a uniformly random subset of the ground set of the right size. Our technique also leads to a similar inapproximability result for unconstrained submodular maximization. These lower bounds, along with the corresponding linear-time algorithms, nearly complete our understanding of the minimum query complexity of their respective problems. It is important to note that (as is often the case in the field of submodular maximization) our inapproximability results are based on information theoretic arguments rather than relying on any complexity assumptions.

In many applications, the elements of the ground set may have different costs, and instead of selecting a set of size $k$, we might be interested in selecting a set whose total cost does not exceed a predefined budget.  We extend our algorithm to this problem---formally, we we aim here to maximize a monotone submodular function subject to a knapsack constraint ({\SMK})---and prove that this extended algorithm is the first linear time algorithm achieving $(1/2-\eps)$-approximation using $O_\eps(n)$ function evaluations. We note that some nearly-linear time algorithms have been previously proposed for this problem. Ene and Nguyen~\cite{ene2019nearly} showed such an algorithm achieving $(1-1/e-\eps)$-approximation, but unfortunately, the number of function evaluation used by their algorithm scales poorly with $\eps$ (namely, $O((1/\eps)^{1/\eps})$) which makes is impractical even for moderate values of $\eps$. This has motivated Yaroslavtsev et al.~\cite{yaroslavtsev2020bring} to design an algorithm called \texttt{Greedy+} that achieve $(\nicefrac{1}{2} - \eps)$-approximation and has a much more practical nearly-linear time complexity. Nevertheless, our algorithm manages to combine the approximation guarantee of \texttt{Greedy+} with a clean linear time.

Finally, we consider the more general setting where we aim to maximize a monotone submodular function subject to a $p$-set system and $d$ knapsack constraints ({\SMKS}). We study the
tradeoff between the time complexity and approximation ratio for this problem. In particular, our results improve
over the state-of-the-art approximation for nearly-linear time algorithms~\cite{badanidiyuru2014fast}.

We extensively evaluate the performance of our algorithms using real-life and synthetic  experiments, including movie recommendation, location summarization, vertex cover, twitter text summarization and video summarization.

\subsection{Comparison with Recent Independent Works} \label{ssc:independent_works}

Most of our results have never appeared in another paper. However, in the case of maximizing a monotone submodualr function subject to cardinality constraint (\SMC), there are two recent independent works that overlap our results and we would like to discuss. Let us begin with the work of \citet{kuhnle2021quick}, which presents an algorithm that is similar to our algorithm (and in particular, fully matches the theoretical properties we prove for our {\SMC} algorithm). Formally, an earlier version\iftoggle{anonymize}{}{~\cite{li2020note}} of this paper appeared on arXiv and achieved linear query complexity, but using nearly-linear time; and later, the arXiv version of \citet{kuhnle2021quick} appeared as \citep{kuhnle2020quick} and achieved both linear query and time complexities. However, both arXiv versions were posted long after the results presented in them were first obtained, and neither group was aware of the work of the other group until recently (personal communication~\citep{kuhnle2022personal}), and therefore, our work and the work of~\citep{kuhnle2021quick} should be considered as independent of each other, with neither clearly preceding the other in time.

In addition to the above works, very recently, a journal version of~\cite{huang2018multi} appeared as~\cite{huang2022multi} and claimed another deterministic algorithm that achieves linear time, but with a worse dependence on $\eps$ compared to our work and the work of~\citet{kuhnle2021quick}.

\subsection{Additional Related Work} \label{ssc:related_work}

The work of \citet{badanidiyuru2014fast} is probably the first work whose main motivation was to find algorithms with guaranteed approximation ratios for submodular maximization problems that provably enjoy faster time complexities. This work introduced the thresholding technique, which became a central component of basically every fast deterministic implementation of the greedy algorithm considered to date, including our algorithms. In particular, using this technique, \citet{badanidiyuru2014fast} described an $O_\eps(n \log n)$ time algorithm for maximizing a monotone submodular function subject to a cardinality constraint.

\citet{sviridenko2004note} was the first to obtain $(1 - 1/e)$-approximation for maximizing a monotone submodular function subject to a knapsack constraint. His original algorithm was very slow (having a time complexity of $O(n^5)$), but it can be improved to run in $\tilde{O}(n^4)$ time using the thresholding technique of~\cite{badanidiyuru2014fast}. The algorithm of \citet{sviridenko2004note} is based on making $3$ guesses. Recently, \citet{feldman2021practical} showed that the optimal approximation ratio of $1 - 1/e$ can be obtained using only $2$ guesses, leading to $\tilde{O}(n^3)$ time. Furthermore, \citet{feldman2021practical} also showed that a single guess suffices to guarantee a slightly worse approximation ratio of $0.6174$ and requires only $\tilde{O}(n^2)$ time. As mentioned above, there exists also an impractical nearly-linear time algorithm for the problem due to Ene and Nguyen~\cite{ene2019nearly}. Another fast algorithm was suggested by~\cite{badanidiyuru2014fast}, but an error was later found in their analysis (see~\cite{ene2019nearly} for details). We also note that multiple recent works aimed to bound the performance of a natural (and historically important) greedy algorithm~\cite{feldman2021practical,kulik2021faster,tang2021revisiting}.

When considering fast algorithms for combinatorial optimization problems, it is important to consider also (semi-)streaming algorithms since algorithms of this type are usually naturally fast. Among the first semi-streaming algorithms for submodular maximization was the work of \citet{badanidiyuru2014streaming}, who obtained $(1/2 - \eps)$-approximation for maximizing a submodular function subject to a cardinality constraint. While this work was later slightly improved~\cite{kazemi2019submodular} and extended to non-monotone functions~\cite{alaluf2020optimal}, it was also proved that no semi-streaming algorithm can provide a better than $1/2$-approximation for this problem~\cite{feldman2020}. The more general case of a knapsack constraint was first considered in the semi-streaming setting by~\citet{yu2018streaming}. Improved algorithms for this problem were later found by~\cite{huang2021improved,huang2020streaming} for single pass and by~\cite{huang2022multi,yaroslavtsev2020bring} for multiple passes.

Maximizing a monotone submodular function subject to a $p$-set system constraint was first studied by Fisher et al.~\cite{fisher1978analysis}, who showed that a natural greedy algorithm obtains $(p + 1)^{-1}$-approximation for this problem. Many years later, \citet{badanidiyuru2014fast} used their thresholding technique to get a nearly-linear time algorithm guaranteeing $(p + 2d + 1 + \eps)^{-1}$-approximation even when there are $d$ knapsack constraints (in addition to the $p$-set system constraint). Recently, \citet{badanidiyuru2020submodular} presented a new technique that allows them to improve over the last approximation ratio. They explicitly consider the implications of their ideas only for the regime $d \leq p$, and achieve in this regime $[2(p + 1 + \eps)]^{-1}$-approximation and $(p + 1 + \eps)^{-1}$-approximation in $\tilde{O}(nr/\eps)$ and $\tilde{O}(n^3r^2/\eps)$ time, respectively, where $r \leq n$ is the rank of $p$-set system. We note, however, that some of the ideas of~\cite{badanidiyuru2020submodular} can also be applied to the regime $d > p$, leading to $(p + d + 1)^{-1}$-approximation in $\tilde{O}(n^3)$ time.\footnote{To be more exact, to get the $(p + d + 1)^{-1}$-approximation one has to apply the idea used in the proof of Theorem~8 of~\cite{badanidiyuru2020submodular} to the $(p + 2d + 1 + \eps)^{-1}$-approximation algorithm due to~\cite{badanidiyuru2014fast}.} One can observe that the improvement in the approximation ratio obtained by the new results of~\cite{badanidiyuru2020submodular} comes at the cost of an increased time complexity. Our results in Section~\ref{sec:SMKS} further study this tradeoff between the time complexity and approximation ratio and improve over the state-of-the-art approximation for nearly-linear time algorithms.

\paragraph{Paper Organization.} Section~\ref{sec:preliminaries} presents the definitions and notation that we use. Our algorithms for submodular maximization subject to a cardinality or knapsack constraint appear in Section~\ref{sec:SMK}; and Section~\ref{sec:inapproximability} complements these algorithms by presenting our inapproximability results for a cardinality constraint and unconstrained submodular maximization. Our result for the general case of the intersection of a $p$-set system constraint with multiple knapsack constraints can be found in Section~\ref{sec:SMKS}. Finally, Section~\ref{sec:experiments} describes our experiments and their results.

\section{Preliminaries} \label{sec:preliminaries}

In this section we present the definitions and notation that we use throughout the paper.

\paragraph{Set Functions.} A set function $f\colon 2^\cN \to \bR$ over a ground set $\cN$ is a function that assigns a numeric value to every subset of $\cN$. Given such a function, an element $u \in \cN$ and a set $S \subseteq \cN$, we denote by $f(u \mid S) \triangleq f(S \cup \{u\}) - f(S)$ the marginal contribution of $u$ to $S$ with respect to $f$. The set function $f$ is submodular if $f(u \mid S) \geq f(u \mid T)$ for every two sets $S \subseteq T \subseteq \cN$ and element $u \in \cN \setminus T$, and it is monotone if $f(u \mid S) \geq 0$ for every element $u \in \cN$ and set $S \subseteq \cN$. Since we are interested in multiplicative approximation guarantees, all the set functions considered in this paper are also non-negative. Additionally, given two sets $S, T \in \cN$, we often use the shorthand $f(T \mid S) \triangleq f(S \cup T) - f(T)$.

\paragraph{Set Systems.} A set system $\cM$ is an ordered pair $(\cN, \cI)$, where $\cN$ is a ground set and $\cI$ is a non-empty subset of $2^\cN$ that obeys the following property: if $S \subseteq T \subseteq \cN$ and $T \in \cI$, then $S$ also belongs to $\cI$. It is customary to refer to the sets of $\cI$ as independent (and to sets that do not belong to $\cI$ as dependent). A base of a set $S \subseteq \cN$ is a subset of $S$ that is independent and is not included in any other set having these properties. The set system $\cM = (\cN, \cI)$ is called $p$-set system for an integer $p \geq 1$ if for every $S \subseteq \cN$ the ratio between the sizes of the largest base of $S$ and the smallest base of $S$ is at most $p$.

\paragraph{Additional Notation.} Given a set $S$ and element $u$, we often use $S + u$ as a shorthand for $S \cup \{u\}$.
\section{Cardinality or Knapsack Constraint} \label{sec:SMK}

In this section we consider the {\SMKFull} problem (\SMK). In this problem we are given a non-negative monotone submodular function $f\colon 2^\cN \to \nnR$, a non-negative cost function $c\colon \cN \to \nnR$ and a budget $B > 0$. We say that a set $S \subseteq \cN$ is \emph{feasible} if it obeys $c(S) \leq B$ (where $c(S) \triangleq \sum_{u \in S} c(u)$), and the objective of the problem is to find a set maximizing $f$ among all feasible sets $S \subseteq \cN$. For simplicity, we assume below that $c(u) \in (0, B]$ for every element $u \in \cN$ and that $B = 1$. These assumptions are without loss of generality because (i) every element of $\cN$ whose cost exceeds $B$ can be simply removed, (ii) every element of $\cN$ whose cost is $0$ can be safely assumed to be part of the optimal solution, and (iii) one can scale the costs to make the budget $B$ equal to $1$.

When the knapsack constraint of {\SMK} happens to be simply a cardinality constraint, we get the {\SMCFull} problem (\SMC). Formally, in this problem we have an integer parameter $1 \leq k \leq n$, and we are allowed to output any set whose size is at most $k$. One can observe that {\SMC} is a special case of {\SMK} in which $c(u) = 1/k$ for every element $u \in \cN$.

The first step in the algorithms we develop for {\SMK} and {\SMC} is getting an estimate $\Gamma$ of the value of an optimal solution up to a constant factor. We show how this can be done in Section~\ref{sec:estimate}. Then, in Section~\ref{sec:greedy} we present a variant of the threshold-greedy algorithm of~\cite{badanidiyuru2014fast} that, unlike the original algorithm of~\cite{badanidiyuru2014fast}, uses the estimate $\Gamma$ to reduce the number of thresholds that need to be considered. This already leads to our algorithmic result for {\SMC}, which is given by the next theorem.
\begin{theorem} \label{thm:cardinality_alg}
For every $\eps > 0$, there exists a deterministic $(1 - 1/e - \eps)$-approximation algorithm for {\SMCFull} (\SMC) using $O(n/\eps)$ time.
\end{theorem}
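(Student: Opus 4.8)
The plan is to build the algorithm in two stages, as the text already foreshadows. First I would obtain a constant-factor estimate $\Gamma$ of $f(\mathrm{OPT})$; concretely, one pass over the ground set suffices to compute $\Delta := \max_{u \in \cN} f(\{u\})$, and since $f$ is monotone submodular with $|\mathrm{OPT}| \le k$, subadditivity gives $\Delta \le f(\mathrm{OPT}) \le k\Delta$. This localizes the optimum to a window of multiplicative width $k$, which is too wide to use directly but is exactly what makes the second stage cheap. The cost of this stage is $O(n)$ queries.

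Second, I would run a thresholding variant of the greedy algorithm. Fix a geometric sequence of thresholds $\tau_i = \Delta \cdot (1-\eps)^i$ descending from $\Delta$ down to roughly $\eps \Delta / k$ (equivalently $\eps \Gamma / k$), so there are only $O(\eps^{-1}\log k)$ — and after a further idea, only $O(\eps^{-1})$ — thresholds. For each threshold $\tau$ in decreasing order, make a single sweep through $\cN$ and add any element $u$ with $f(u \mid S) \ge \tau$ (while $|S| < k$). The standard threshold-greedy analysis then shows that each added element captures at least a $(1-\eps)$ fraction of the best available marginal at the time, and a telescoping/averaging argument over $\mathrm{OPT}$ yields $f(S) \ge (1 - 1/e - O(\eps)) f(\mathrm{OPT})$, provided the lowest threshold is small enough that ignoring elements with marginal below it costs only an $O(\eps) f(\mathrm{OPT})$ additive term — which holds because each of the at most $k$ elements of $\mathrm{OPT}$ contributes marginal at most $\eps\Gamma/k \le \eps f(\mathrm{OPT})/k$ below threshold.

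The main obstacle — and the only place real cleverness is needed — is getting the query count down from the naive $O(n \eps^{-1} \log k)$ (which is only nearly-linear) to a clean $O(n/\eps)$. The key is to use the estimate $\Gamma$ to argue that only $O(1/\eps)$ of the thresholds can ever cause an element to be added: once the running solution $S$ has $f(S) \ge c\Gamma$ for an appropriate constant $c$, or once $|S| = k$, we are done; and the decrement structure of the thresholds, combined with the fact that $\Gamma$ pins $f(\mathrm{OPT})$ to within a constant factor of $\Delta$, means the "active" thresholds span only a constant-length geometric range $[\,\Omega(\eps\Gamma/k),\,O(\Gamma/k)\,]$ once one restores the scaling by $k$ — i.e. $O(\log(1/\eps)) = O(1/\eps)$ many. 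Thresholds above this range add nothing new; thresholds below it are cut off by the stopping rule. So the sweeps that matter number $O(1/\eps)$, each costing $O(n)$ queries, for $O(n/\eps)$ total. I would be careful to also bound the bookkeeping (maintaining $f(S)$, checking $|S| < k$) so that the total running time, not just the query count, is $O(n/\eps)$; this is routine once each marginal is evaluated with a single query.

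Putting the two stages together gives a deterministic algorithm with $O(n/\eps)$ queries, $O(n/\eps)$ running time, and approximation ratio $1 - 1/e - O(\eps)$; rescaling $\eps$ by a constant yields the theorem as stated.
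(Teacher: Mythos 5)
Your overall architecture (constant-time-per-pass threshold greedy, with a preliminary estimate of $f(OPT)$ used to limit the threshold range) matches the paper, but there is a genuine gap in the first stage, and the second stage's query bound does not survive it. Your estimate is $\Delta = \max_u f(\{u\})$, which only gives $\Delta \le f(OPT) \le k\Delta$, a window of width $k$. Later you assert that ``$\Gamma$ pins $f(OPT)$ to within a constant factor of $\Delta$'' --- this is simply false in general (take $f$ linear with all $n$ elements of equal value $\Delta$; then $f(OPT) = k\Delta$). Without a constant-factor estimate, the thresholds genuinely needed run from roughly $\Delta$ down to roughly $\eps f(OPT)/k$, which can be as small as $\eps\Delta/k$, so a geometric grid with ratio $(1-\eps)$ has $\Theta(\eps^{-1}\log(k/\eps))$ levels and each level costs an $\Omega(n)$ sweep whether or not any element is added; this is the standard Badanidiyuru--Vondr\'{a}k $O(n\eps^{-1}\log k)$ bound, not $O(n/\eps)$. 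Your proposed rescues do not close this: the stopping rule ``halt once $f(S) \ge c\Gamma$'' is unsound when $\Gamma$ may be a factor $k$ below $f(OPT)$ (you would stop at value $\approx c\, f(OPT)/k$), and the claim that the ``active'' thresholds lie in a constant-factor band $[\Omega(\eps\Gamma/k), O(\Gamma/k)]$ has no justification --- when $f(OPT)\approx\Delta$, additions occur at marginals near $\Delta$ itself, and you cannot know in advance which sweeps will be productive, so unproductive sweeps still cost $O(n)$ queries each. (Also, a constant-factor band with ratio $(1-\eps)$ contains $O(1/\eps)$ thresholds, not ``$O(\log(1/\eps))$''; that slip is secondary.)

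The missing idea --- and the actual content of the paper's proof --- is a linear-time \emph{constant-factor} estimator of $f(OPT)$. The paper's Algorithm~\ref{alg:estimate} makes a single pass, adding $u$ to a running set $S$ whenever $f(u \mid S)/c(u) \ge f(S)$, and proves (Observation~\ref{obs:under_estimation}, Lemma~\ref{lem:suffix_exponential}, Corollary~\ref{cor:over_estimation}) that the final value satisfies $f(OPT)/2 \le f(\tilde S) \le 4 f(OPT)$, so $\Gamma = f(\tilde S)/4$ obeys $\Gamma \le f(OPT) \le 8\Gamma$. With such a $\Gamma$, the threshold-greedy stage only needs density thresholds in the constant-width window $[(1-\eps)\Gamma/e,\; 8\Gamma]$, i.e.\ $O(1/\eps)$ passes of $O(n)$ queries each, and the two-case analysis (solution fills the budget or it does not; Lemmas~\ref{lem:unfilled_bound} and~\ref{lem:greedy_bound}) yields $1-1/e-\eps$. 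If you replace your max-singleton estimate with an estimator of this type (or prove an equivalent constant-factor guarantee), the rest of your outline goes through; as written, the $O(n/\eps)$ claim is not established.
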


Our algorithmic result for {\SMK} is given by Theorem~\ref{thm:knapsack}. To prove this theorem one must combine the threshold-greedy variant from Section~\ref{sec:greedy} with a post-processing step presented in Section~\ref{sec:plus}.
\begin{theorem} \label{thm:knapsack}
For every $\eps > 0$, there exists a deterministic $(\nicefrac{1}{2} - \eps)$-approximation algorithm for {\SMKFull} (\SMK) that uses $O(n \eps^{-1} \log \eps^{-1})$ time.
\end{theorem}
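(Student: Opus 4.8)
The plan is to build the {\SMK} algorithm out of two pieces that are developed separately in the paper: the estimation subroutine of Section~\ref{sec:estimate}, which produces a value $\Gamma$ with $\Gamma \le f(\mathrm{OPT}) \le O(1) \cdot \Gamma$ using only $O_\eps(n)$ queries, and the threshold-greedy variant of Section~\ref{sec:greedy}, which—given $\Gamma$—sweeps a geometrically decreasing sequence of density thresholds $\tau$ (marginal value per unit cost) and adds an element whenever its marginal density exceeds the current $\tau$ and it still fits in the remaining budget. Because $\Gamma$ pins down $f(\mathrm{OPT})$ up to a constant, only $O(\eps^{-1}\log\eps^{-1})$ thresholds are needed, each threshold costs one pass, i.e.\ $O(n)$ queries, so the whole threshold-greedy phase runs in $O(n\eps^{-1}\log\eps^{-1})$ time. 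First I would invoke the guarantee already proved for the threshold-greedy phase: if it terminates with a solution $S$ that it could \emph{not} extend by the last (smallest) threshold-eligible element because of the budget, then a standard density argument gives that $f(S)$ plus the density-scaled cost of the ``blocking'' element already accounts for a $(1-1/e-O(\eps))$ fraction of $f(\mathrm{OPT})$; and if instead $S$ is never budget-blocked, then $S$ already is a $(1-1/e-O(\eps))$-approximation outright.

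The only bad case is therefore the one where a \emph{single} high-value element $u^\star$ alone would be worth a large chunk of $f(\mathrm{OPT})$ but does not fit together with what greedy accumulated—this is the classical obstruction that forces knapsack-greedy down from $1-1/e$ to $1/2$. To handle it I would add the post-processing step of Section~\ref{sec:plus} (the ``\texttt{Greedy+}''-style fix of Yaroslavtsev et al.): in parallel, for each element $v$ whose cost is large (say $c(v)$ above some $\eps$-threshold, of which there are only $O(1/\eps)$ many, or more simply for the single best singleton), seed the threshold-greedy run with $\{v\}$ and complete it on the residual budget; then output the best of all these runs together with the unseeded run. The key inequality is the usual one: letting $S$ be the unseeded greedy output and $o$ the element of $\mathrm{OPT}$ that greedy ``just missed'', submodularity gives $f(S) + f(o) \ge f(S \cup \mathrm{OPT}) - (\text{lower-order terms}) \ge f(\mathrm{OPT})$ up to $O(\eps)$ losses, while the seeded run anchored at $o$ (or at a large-cost element dominating this role) recovers $f(o)$; averaging / taking the max of the two yields $f \ge (\tfrac12 - \eps) f(\mathrm{OPT})$. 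Running the seeded variant $O(1/\eps)$ times only multiplies the time by $O(1/\eps)$, which is absorbed into the stated $O(n\eps^{-1}\log\eps^{-1})$ bound.

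Putting it together: correctness of $\Gamma$ (Section~\ref{sec:estimate}) feeds the threshold-greedy analysis (Section~\ref{sec:greedy}), which handles every case except the single-oversized-element obstruction, and the post-processing of Section~\ref{sec:plus} closes that gap at the cost of an extra $O(1/\eps)$ factor; the query/time accounting is then immediate. The main obstacle I anticipate is purely bookkeeping of the $\eps$'s: each of the three components introduces its own additive or multiplicative $O(\eps)$ slack (the constant-factor estimate, the geometric threshold spacing, and the ``just missed'' element in the \texttt{Greedy+} argument), and one must choose the internal accuracy parameters of each component as suitable constant multiples of the target $\eps$ so that the sum of all losses is at most $\eps$ while the number of thresholds stays $O(\eps^{-1}\log\eps^{-1})$ and the number of seeded runs stays $O(\eps^{-1})$. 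A secondary subtlety is ensuring the budget-feasibility of the seeded runs—seeding with $v$ must leave a residual budget $1 - c(v)$ and the threshold sweep must be re-scaled accordingly—but this is routine once the unseeded case is in hand.
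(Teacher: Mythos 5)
Your high-level decomposition (estimate $\Gamma$, run the threshold greedy with a boosted top threshold, then fix the single-oversized-element obstruction) matches the paper, but the post-processing step you propose has a genuine gap that breaks the stated time bound. You claim that there are only $O(1/\eps)$ elements of cost above an $\eps$-threshold and propose to seed a separate threshold-greedy run with each of them. That count is only valid \emph{inside a feasible set}; the ground set may contain $\Theta(n)$ elements of cost, say, $1/2$, so your partial-enumeration scheme requires up to $\Theta(n)$ seeded runs and $\Theta(n^2\eps^{-1}\log\eps^{-1})$ time, not $O(n\eps^{-1}\log\eps^{-1})$. The fallback you offer---seeding only with the best singleton---does not rescue the argument: the critical element $r\in OPT$ of large cost need not be the best singleton, and seeding with the wrong element gives no handle on $f(r\mid S)$, so the ``max of the two runs'' inequality you sketch does not go through. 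Relatedly, the inequality $f(S)+f(o)\ge f(OPT)$ (up to $O(\eps)$) is not available for the \emph{final} greedy output $S$; the Greedy+-style argument needs the augmentation to be applied to a prefix of the run whose cost is at most $1-c(r)$, and your proposal never explains how to locate such a prefix.

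The paper closes exactly this gap by reversing the direction of the guess: instead of seeding runs with candidate large elements (guess-before), it runs the threshold greedy \emph{once} (with $\alpha=\eps^{-1}$, so the thresholds start at $8\Gamma/\eps$), stores snapshots $S^{(i)}$ of the solution at the $O(\eps^{-1}\log\eps^{-1})$ geometric cost levels $\eps(1+\eps)^i$, and augments each snapshot with the single best element that still fits (augment-after), returning the best of $S_\ell$, all singletons, and all augmented snapshots. Each augmentation is one $O(n)$ scan, so the whole post-processing costs $O(n\eps^{-1}\log\eps^{-1})$. In the analysis, the snapshot $S^{(i_r)}$ with $c(S^{(i_r)})\le 1-c(r)$ plays the role of your ``right prefix'': if its augmentation is worth less than $\tfrac12 f(OPT)$, then $f(OPT)-f(S^{(i_r)}+r)\ge\tfrac12 f(OPT)$, and the density lower bound of Lemma~\ref{lem:greedy_bound} applied with $T=OPT-r$ shows the greedy prefix itself already accumulates $(\tfrac12-\eps)f(OPT)$ by cost level $\approx 1-c(r)$; the case $c(r)\ge 1-\eps$ is handled separately (Lemma~\ref{lem:huge_r}). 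Without this snapshot-and-augment mechanism (or some other device achieving the same effect in $O_\eps(n)$ time), your proposal does not establish Theorem~\ref{thm:knapsack} within the claimed complexity.
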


\subsection{Estimating the Value of an Optimal Solution} \label{sec:estimate}

In this section our objective is to design an algorithm (given as Algorithm~\ref{alg:estimate}) that produces a value $\Gamma$ that obeys $\Gamma \leq f(OPT) \leq s \cdot \Gamma$ for some constant $s \geq 1$, where $OPT$ is an arbitrary optimal solution.

\begin{algorithm2e}[th]
\caption{\textsc{Estimating} $f(OPT)$} \label{alg:estimate}
Let $S \gets \varnothing$.\\
\For{every element $u \in \cN$}
{
	\If{$\tfrac{f(u \mid S)}{c(u)} \geq f(S)$}
	{
		Add $u$ to $S$.
	}
}
\Return{$f(S) / 4$}.
\end{algorithm2e}

Clearly Algorithm~\ref{alg:estimate} can be implemented to work in $O(n)$ time. Therefore, we concentrate on bounding the quality of the estimate it produces. The following observation shows that Algorithm~\ref{alg:estimate} does not underestimate $OPT$ by more than a constant factor. Let $\tilde{S}$ denote the final value of the set $S$ in Algorithm~\ref{alg:estimate}.
\begin{observation} \label{obs:under_estimation}
$f(\tilde{S}) \geq f(OPT) / 2$.
\end{observation}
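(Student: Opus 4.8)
The plan is to compare the greedy-like set $\tilde{S}$ built by Algorithm~\ref{alg:estimate} against the optimal solution $OPT$ element by element, using the acceptance rule of the algorithm together with submodularity. The key structural fact is that every element $u$ that was \emph{rejected} was rejected because, at the moment it was examined, it failed the test $f(u \mid S)/c(u) \geq f(S)$; by submodularity its marginal contribution to the final set $\tilde{S} \supseteq S$ is only smaller, so $f(u \mid \tilde{S}) < c(u) \cdot f(S) \leq c(u) \cdot f(\tilde{S})$, where the last inequality uses monotonicity (the value of $S$ never decreases as the algorithm proceeds). Elements $u$ that were accepted lie in $\tilde{S}$, so $f(u \mid \tilde{S}) = 0$ trivially.

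First I would split $OPT$ into $OPT \cap \tilde{S}$ and $OPT \setminus \tilde{S}$. By monotonicity and submodularity,
\[
f(OPT) \leq f(OPT \cup \tilde{S}) \leq f(\tilde{S}) + \sum_{u \in OPT \setminus \tilde{S}} f(u \mid \tilde{S}).
\]
Then I would apply the rejection bound to each $u \in OPT \setminus \tilde{S}$, which (as above) gives $f(u \mid \tilde{S}) \leq c(u) \cdot f(\tilde{S})$. Summing over these elements and using that $OPT$ is feasible, hence $c(OPT \setminus \tilde{S}) \leq c(OPT) \leq B = 1$, yields $\sum_{u \in OPT \setminus \tilde{S}} f(u \mid \tilde{S}) \leq f(\tilde{S})$. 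Combining, $f(OPT) \leq 2 f(\tilde{S})$, which is exactly the claim.

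The main obstacle — really the only subtle point — is making sure the rejection inequality is applied with the right reference set. At the time $u$ is examined the current set is some $S_u \subsetneq \tilde{S}$, and the test that failed is $f(u \mid S_u) < c(u) \cdot f(S_u)$. To push both sides to $\tilde{S}$ I need submodularity on the left ($f(u \mid \tilde{S}) \leq f(u \mid S_u)$ since $S_u \subseteq \tilde{S}$ and $u \notin \tilde{S}$) and monotonicity on the right ($f(S_u) \leq f(\tilde{S})$). One should double-check the edge case $S = \varnothing$ at the very first step: there the test reads $f(u \mid \varnothing)/c(u) \geq f(\varnothing)$, and since $f \geq 0$ and $f(\varnothing) \geq 0$, a rejected first element still satisfies $f(u \mid \varnothing) < c(u) f(\varnothing) \leq c(u) f(\tilde S)$, so the argument goes through uniformly. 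No separate treatment of $f(\varnothing)$ is needed beyond non-negativity. The rest is the routine summation above.
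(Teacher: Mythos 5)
Your proof is correct and follows essentially the same approach as the paper: bound the marginal contribution of each rejected element $u \in OPT \setminus \tilde{S}$ via the acceptance test pushed forward to $\tilde{S}$ by submodularity and monotonicity, sum over $OPT \setminus \tilde{S}$ using $c(OPT) \leq 1$, and rearrange. The paper phrases step 3--4 through $f(OPT \mid \tilde{S})$ rather than directly through $f(OPT \cup \tilde{S})$, but this is only a cosmetic difference.
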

\begin{proof}
Consider an arbitrary element $u \in OPT \setminus \tilde{S}$. The fact that $u$ was never added to $S$ by Algorithm~\ref{alg:estimate} implies that at the time in which $u$ was processed by Algorithm~\ref{alg:estimate} the following inequality held
\[
	\frac{f(u \mid S)}{c(u)} \leq f(S)
	\enspace.
\]
Since elements are only ever added to $S$, by the monotonicity and submodularity of $f$, the last inequality implies
\[
	\frac{f(u \mid \tilde{S})}{c(u)} \leq f(\tilde{S})
	\enspace.
\]

Using the last inequality, we now get
\[
	f(OPT \mid \tilde{S})
	\leq
	\sum_{u \in OPT \setminus \tilde{S}} \mspace{-9mu} f(u \mid \tilde{S})
	\leq
	f(\tilde{S}) \cdot \sum_{u \in OPT \setminus \tilde{S}} \mspace{-9mu} c(u)
	=
	f(\tilde{S}) \cdot c(OPT \setminus \tilde{S})
	\leq
	f(\tilde{S})
	\enspace,
\]
where the first inequality follows from the submodularity of $f$, and the last inequality holds because $OPT$ is a feasible solution and $OPT \setminus \tilde{S}$ is a subset of it. Plugging the definition of $f(OPT \mid \tilde{S})$ into the last inequality, and rearranging, yields
\[
	f(\tilde{S})
	\geq
	\frac{f(OPT \cup \tilde{S})}{2}
	\geq
	\frac{f(OPT)}{2}
	\enspace,
\]
where the second inequality follows from the monotonicity of $f$.
\end{proof}

To prove that Algorithm~\ref{alg:estimate} also does not over estimate $f(OPT)$, we need to define some additional notation. From this point on, it will be convenient to assume that the elements of $\cN$ are ordered in the order in which Algorithm~\ref{alg:estimate} considers them. Given a set $T \subseteq \cN$, let $T^{\geq 1}$ be the minimal suffix of $T$ (according to the above order) whose value according to $c$ is at least $1$ (unless $c(T) < 1$, in which case $T^{\geq 1} = T$). Clearly $f(\tilde{S}^{\geq 1}) \leq f(\tilde{S})$ by the monotonicity of $f$. However, it turns out that $f(\tilde{S}^{\geq 1})$ is never much smaller than $f(\tilde{S})$.
\begin{lemma} \label{lem:suffix_exponential}
$f(\tilde{S}^{\geq 1}) \geq f(\tilde{S}) / 2$.
\end{lemma}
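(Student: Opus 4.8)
The plan is to control the value of the \emph{prefix} of $\tilde S$ that is thrown away when we pass to $\tilde S^{\geq 1}$, and then recover the bound via subadditivity. Write the elements of $\tilde S$, in the order in which Algorithm~\ref{alg:estimate} processed them, as $u_1, u_2, \dots, u_m$, and for $0 \le i \le m$ set $S_i = \{u_1, \dots, u_i\}$ (so $S_0 = \varnothing$ and $S_m = \tilde S$). If $c(\tilde S) < 1$ then $\tilde S^{\geq 1} = \tilde S$ and the claim is immediate, so assume $c(\tilde S) \ge 1$ and let $j \in \{1, \dots, m\}$ be the index for which $\tilde S^{\geq 1} = \{u_j, u_{j+1}, \dots, u_m\}$; by the minimality of this suffix, $c(\tilde S^{\geq 1}) = \sum_{i=j}^m c(u_i) \ge 1$.

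First I would turn the selection rule into a multiplicative growth estimate. Each $u_i$ was added because the test $\tfrac{f(u_i \mid S)}{c(u_i)} \ge f(S)$ held with $S = S_{i-1}$ (the value of the set at that moment), i.e.\ $f(u_i \mid S_{i-1}) \ge c(u_i) \cdot f(S_{i-1})$, so $f(S_i) = f(S_{i-1}) + f(u_i \mid S_{i-1}) \ge (1 + c(u_i)) f(S_{i-1})$, where non-negativity of $f$ is used. Iterating this over $i = j, j+1, \dots, m$ gives
\[
  f(\tilde S) = f(S_m) \ge f(S_{j-1}) \prod_{i=j}^m (1 + c(u_i)) \ge f(S_{j-1}) \Bigl( 1 + \sum_{i=j}^m c(u_i) \Bigr) \ge 2 f(S_{j-1}) \enspace,
\]
using the elementary inequality $\prod_i (1 + x_i) \ge 1 + \sum_i x_i$ for non-negative reals and then $c(\tilde S^{\geq 1}) \ge 1$. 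Hence $f(S_{j-1}) \le \tfrac12 f(\tilde S)$.

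To conclude, observe that $S_{j-1}$ and $\tilde S^{\geq 1}$ are disjoint sets whose union is $\tilde S$; therefore submodularity (together with $f(\varnothing) \ge 0$) gives $f(\tilde S) = f(S_{j-1} \cup \tilde S^{\geq 1}) \le f(S_{j-1}) + f(\tilde S^{\geq 1})$, so that $f(\tilde S^{\geq 1}) \ge f(\tilde S) - f(S_{j-1}) \ge f(\tilde S) - \tfrac12 f(\tilde S) = \tfrac12 f(\tilde S)$, which is the desired bound.

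I do not expect a real obstacle here; the only point that requires the right perspective is recognizing that one should bound the value of the discarded prefix $S_{j-1}$ rather than try to lower-bound $f(\tilde S^{\geq 1})$ directly. Once the greedy-type acceptance condition is rewritten as $f(S_i) \ge (1 + c(u_i)) f(S_{i-1})$, the product over the elements of the suffix automatically absorbs a factor of at least $1 + c(\tilde S^{\geq 1}) \ge 2$, and the remaining step is just subadditivity.
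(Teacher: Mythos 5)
Your proof is correct, and it shares the paper's overall skeleton: both arguments reduce the lemma to showing that the discarded prefix $\tilde{S} \setminus \tilde{S}^{\geq 1}$ has value at most $f(\tilde{S})/2$, and both conclude with the same subadditivity step $f(\tilde{S}) \leq f(\tilde{S} \setminus \tilde{S}^{\geq 1}) + f(\tilde{S}^{\geq 1})$, valid by submodularity and non-negativity. The difference is in how the intermediate bound $f(\tilde{S}) \geq 2 f(\tilde{S} \setminus \tilde{S}^{\geq 1})$ is obtained: the paper sums the acceptance inequalities $f(u \mid S_u) \geq c(u) \cdot f(S_u)$ over the suffix, telescopes the marginals, and uses \emph{monotonicity} to replace each $f(S_u)$ by $f(\tilde{S} \setminus \tilde{S}^{\geq 1})$; you instead compound the same inequalities multiplicatively via $f(S_i) \geq (1 + c(u_i)) \cdot f(S_{i-1})$ and then apply $\prod_i (1+x_i) \geq 1 + \sum_i x_i$. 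Your version yields the slightly stronger factor $1 + c(\tilde{S}^{\geq 1})$ and, pleasantly, needs only non-negativity (not monotonicity) in that step, while the paper's additive telescoping is a touch shorter. One cosmetic remark: the inequality $c(\tilde{S}^{\geq 1}) \geq 1$ follows from the definition of the suffix (given $c(\tilde{S}) \geq 1$), not from its minimality, but nothing in your argument depends on this attribution.
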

\begin{proof}
If $c(\tilde{S}^{\geq 1}) < 1$, then $\tilde{S}^{\geq 1} = \tilde{S}$, which makes the lemma trivial. Therefore, we may assume in the rest of the proof that $c(\tilde{S}^{\geq 1}) \geq 1$.

The fact that an element $u \in \tilde{S}^{\geq 1}$ was added to $S$ by Algorithm~\ref{alg:estimate} implies that if, we denote by $S_u$ the set $S$ immediately before $u$ was processed, then we have the following inequality.
\[
	\frac{f(u \mid S_u)}{c(u)}
	\geq
	f(S_u)
	\enspace.
\]
Since $\tilde{S}^{\geq 1}$ is a suffix of $\tilde{S}$, this inequality implies
\[
	f(\tilde{S}^{\geq 1} \mid \tilde{S} \setminus \tilde{S}^{\geq 1})
	=
	\sum_{u \in \tilde{S}^{\geq 1}} f(u \mid S_u)
	\geq
	\sum_{u \in \tilde{S}^{\geq 1}} c(u) \cdot f(S_u)
	\geq
	c(\tilde{S}^{\geq 1}) \cdot f(\tilde{S} \setminus \tilde{S}^{\geq 1})
	\geq
	f(\tilde{S} \setminus \tilde{S}^{\geq 1})
	\enspace,
\]
where the penultimate inequality holds by the monotonicity of $f$, and the last inequality follows from our assumption that $c(\tilde{S}^{\geq 1}) \geq 1$.

Plugging the definition of $f(\tilde{S}^{\geq 1} \mid \tilde{S} \setminus \tilde{S}^{\geq 1})$ into the last inequality gives
\[
	f(\tilde{S})
	\geq
	2f(\tilde{S} \setminus \tilde{S}^{\geq 1})
	\geq
	2[f(\tilde{S}) - f(\tilde{S}^{\geq 1})]
	\enspace,
\]
where the second inequality holds by the submodularity (and non-negativity) of $f$. The lemma now follows by rearranging the last inequality.
\end{proof}
\begin{corollary} \label{cor:over_estimation}
$f(\tilde{S}) \leq 4 \cdot f(OPT)$.
\end{corollary}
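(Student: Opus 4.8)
The plan is to combine Lemma~\ref{lem:suffix_exponential} with the observation that the suffix $\tilde S^{\geq 1}$ can be covered by only two feasible sets. Since Lemma~\ref{lem:suffix_exponential} already gives $f(\tilde S) \leq 2 f(\tilde S^{\geq 1})$, it suffices to establish that $f(\tilde S^{\geq 1}) \leq 2 f(OPT)$; multiplying this by $2$ then yields the claimed inequality $f(\tilde S) \leq 4 f(OPT)$.

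To prove $f(\tilde S^{\geq 1}) \leq 2 f(OPT)$ I would first dispose of the trivial cases. If $\tilde S = \varnothing$, the bound is immediate from the monotonicity and non-negativity of $f$ (so that $f(\varnothing) \leq f(OPT) \leq 4f(OPT)$), and if $c(\tilde S) < 1$ then $\tilde S^{\geq 1} = \tilde S$ is itself feasible (recall the normalization $B = 1$), so $f(\tilde S) \leq f(OPT)$ directly. Hence one may assume $c(\tilde S) \geq 1$, in which case $\tilde S^{\geq 1}$ is a nonempty suffix of $\tilde S$ and is, by definition, the \emph{minimal} suffix whose $c$-value is at least $1$. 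Let $w$ be the first element of $\tilde S^{\geq 1}$ in the order used by Algorithm~\ref{alg:estimate}. Then $\tilde S^{\geq 1} \setminus \{w\}$ is a strictly shorter suffix of $\tilde S$, so the minimality of $\tilde S^{\geq 1}$ forces $c(\tilde S^{\geq 1} \setminus \{w\}) < 1$, i.e.\ this set is feasible; moreover $\{w\}$ is feasible as well since every single element has cost at most $B = 1$ (by the preprocessing assumption $c(u) \in (0, B]$). Consequently $f(\tilde S^{\geq 1} \setminus \{w\}) \leq f(OPT)$ and $f(\{w\}) \leq f(OPT)$ by the optimality of $OPT$ among feasible sets.

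To finish, I would write $\tilde S^{\geq 1} = \{w\} \cup (\tilde S^{\geq 1} \setminus \{w\})$ and apply submodularity together with non-negativity: from $f(A \cup B) + f(A \cap B) \leq f(A) + f(B)$ with $A \cap B = \varnothing$ and $f(\varnothing) \geq 0$, one obtains $f(\tilde S^{\geq 1}) \leq f(\{w\}) + f(\tilde S^{\geq 1} \setminus \{w\}) \leq 2 f(OPT)$. Combining with Lemma~\ref{lem:suffix_exponential} gives $f(\tilde S) \leq 2 f(\tilde S^{\geq 1}) \leq 4 f(OPT)$, as desired.

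The only step requiring any thought is the decomposition: one must notice that the \emph{minimality} built into the definition of $\tilde S^{\geq 1}$, in conjunction with the normalization $B = 1$ and the assumption $c(u) \leq B$, is precisely what lets one peel off a single element and leave a feasible remainder. Everything else is a routine application of feasibility, the optimality of $OPT$, and subadditivity, so I do not anticipate a genuine obstacle.
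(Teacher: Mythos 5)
Your proof is correct and follows essentially the same route as the paper: both peel off the first element of the minimal suffix $\tilde S^{\geq 1}$ (feasible by minimality, with the singleton feasible by the cost normalization), bound each piece by $f(OPT)$ via subadditivity, and then invoke Lemma~\ref{lem:suffix_exponential}. The explicit treatment of the trivial cases is a minor addition but does not change the argument.
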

\begin{proof}
Since $\tilde{S}^{\geq 1}$ is the minimal suffix of $\tilde{S}$ whose size according to $c$ is at least $1$, if we denote by $u$ the first element of this suffix, then both $\{u\}$ and $\tilde{S}^{\geq 1} - u$ are feasible solutions. Hence,
\[
	f(OPT)
	\geq
	\max\{f(\{u\}), f(\tilde{S}^{\geq 1} - u)\}
	\geq
	\frac{f(\{u\}) + f(\tilde{S}^{\geq 1} - u)}{2}
	\geq
	\frac{f(\tilde{S}^{\geq 1})}{2}
	\geq
	\frac{f(\tilde{S})}{4}
	\enspace,
\]
where the penultimate inequality follows from the submodularity of $f$, and the last inequality follows from Lemma~\ref{lem:suffix_exponential}. The corollary now follows by multiplying the last inequality by $4$.
\end{proof}

Recall now that the output of Algorithm~\ref{alg:estimate} is $f(\tilde{S}) / 4$. Therefore, Observation~\ref{obs:under_estimation} and Corollary~\ref{cor:over_estimation} imply together the following proposition.
\begin{proposition} \label{prop:estimation}
Algorithm~\ref{alg:estimate} runs in $O(n)$ time, and if we denote by $\Gamma$ its output, then $\Gamma \leq f(OPT) \leq 8 \cdot \Gamma$.
\end{proposition}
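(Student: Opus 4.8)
The plan is to combine the three results already established about Algorithm~\ref{alg:estimate} into a single clean statement. The runtime claim is immediate: the algorithm makes one pass over $\cN$, and for each element $u$ it performs a single marginal-value computation $f(u \mid S)$ and a comparison, so the whole thing runs in $O(n)$ time (assuming, as is standard in this literature, that each oracle call and arithmetic operation costs $O(1)$). This requires no real argument beyond pointing at the structure of the pseudocode.

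The remaining content is the sandwiching inequality $\Gamma \le f(OPT) \le 8 \Gamma$, and here the work is already done. First I would recall that $\Gamma = f(\tilde{S})/4$ by the definition of the algorithm's output. The upper bound $f(OPT) \le 8\Gamma$ is then exactly a restatement of Corollary~\ref{cor:over_estimation}: that corollary says $f(\tilde{S}) \le 4 f(OPT)$, equivalently $f(OPT) \ge f(\tilde{S})/4 = \Gamma$ --- wait, I need to be careful about which direction each piece gives. Corollary~\ref{cor:over_estimation} states $f(\tilde S) \le 4 f(OPT)$, i.e.\ $f(OPT) \ge f(\tilde S)/4 = \Gamma$, which is the \emph{lower} bound $\Gamma \le f(OPT)$. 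Observation~\ref{obs:under_estimation} states $f(\tilde S) \ge f(OPT)/2$, i.e.\ $f(OPT) \le 2 f(\tilde S) = 8 \cdot (f(\tilde S)/4) = 8\Gamma$, which is the \emph{upper} bound $f(OPT) \le 8\Gamma$. So the proof is just: substitute $\Gamma = f(\tilde S)/4$ into the two previously proved inequalities and rearrange.

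There is essentially no obstacle here --- the proposition is a bookkeeping corollary that packages Observation~\ref{obs:under_estimation}, Lemma~\ref{lem:suffix_exponential}, and Corollary~\ref{cor:over_estimation} into the form that will be convenient downstream (the constant $s = 8$). The only thing to be mildly careful about is tracking the factor of $4$ correctly in both directions and making sure the inequalities are quoted with the right orientation, as I did above. So the write-up would be three short sentences: one for the runtime, one deriving $\Gamma \le f(OPT)$ from Corollary~\ref{cor:over_estimation}, and one deriving $f(OPT) \le 8\Gamma$ from Observation~\ref{obs:under_estimation}.
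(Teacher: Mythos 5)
Your proposal is correct and matches the paper's own argument exactly: the paper also notes that Algorithm~\ref{alg:estimate} clearly runs in $O(n)$ time and then obtains the proposition by combining Observation~\ref{obs:under_estimation} and Corollary~\ref{cor:over_estimation} with the fact that $\Gamma = f(\tilde{S})/4$, with the two inequalities oriented just as you describe.
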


\subsection{Fast Threshold Greedy Algorithm} \label{sec:greedy}

In this section we present a variant, given as Algorithm~\ref{alg:threshold_greedy}, of the threshold-greedy algorithm of~\cite{badanidiyuru2014fast} that uses the estimate $\Gamma$ calculated by Algorithm~\ref{alg:estimate} to reduce the number of thresholds that need to be considered. Specifically, our algorithm considers an exponentially decreasing series of thresholds between $8\alpha \cdot \Gamma$ and $\Gamma / e$, where $\alpha \geq 1$ is a parameter of the algorithm. For every threshold considered, the algorithm adds to its solution every element such that (i) adding the element to the solution does not violate feasibility, and (ii) the density of the element with respect to the current solution exceeds the threshold. In addition to the parameter $\alpha$, Algorithm~\ref{alg:threshold_greedy} gets a quality control parameter $\eps \in (0, 1)$.
\begin{algorithm2e}[th]
\caption{\textsc{Fast Threshold Greedy}($\eps, \alpha$)} \label{alg:threshold_greedy}
Let $\Gamma$ be the output of Algorithm~\ref{alg:estimate}.\\
Let $\tau \gets 8\alpha \cdot \Gamma$, $h \gets 0$ and $S_0 \gets \varnothing$.\\
\While{$\tau > (1 - \eps) \Gamma / e$}
{
	\For{every element $u \in \cN$}
	{
		\If{$u \not \in S_h$, $c(u) + c(S_h) \leq 1$ and $\tfrac{f(u \mid S_h)}{c(u)} \geq \tau$\label{line:condition}}
		{
			Let $u_{h + 1} \gets u$ and $S_{h + 1} \gets S_h + u_{h + 1}$.\\
			Increase $h$ by $1$.
		}
	}
	Update $\tau \gets (1 - \eps)\tau$.
}
\Return{$S_h$}.
\end{algorithm2e}

We begin the analysis of Algorithm~\ref{alg:threshold_greedy} by bounding its time complexity.

\begin{observation} \label{obs:time_threshold_greedy}
The time complexity of Algorithm~\ref{alg:threshold_greedy} is $O(n \eps^{-1} \log \alpha)$.
\end{observation}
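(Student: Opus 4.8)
The plan is to split the running time into two parts: the cost of computing the estimate $\Gamma$, and the cost of the main \texttt{while} loop. For the first part, Proposition~\ref{prop:estimation} already guarantees that Algorithm~\ref{alg:estimate} runs in $O(n)$ time, which is dominated by the bound we are after; so the whole task reduces to analyzing the \texttt{while} loop, and for that I would bound (number of \texttt{while}-loop iterations) $\times$ (cost of one pass of the inner \texttt{for} loop over $\cN$).

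First I would count the iterations of the \texttt{while} loop. The threshold $\tau$ is initialized to $8\alpha \cdot \Gamma$ and is multiplied by $(1 - \eps)$ at the end of every iteration, and the loop halts once $\tau \le (1 - \eps)\Gamma/e$; the degenerate case $\Gamma = 0$ is trivial, since then the loop body is never entered (and $f(OPT) = 0$ by Proposition~\ref{prop:estimation}). So assume $\Gamma > 0$. A standard geometric argument then bounds the number of iterations by $1 + \log_{1/(1-\eps)}\!\big(8\alpha e/(1-\eps)\big)$, and since $\ln\frac{1}{1-\eps} \ge \eps$ for $\eps \in (0,1)$, this is $O(\eps^{-1}\log\alpha)$ (more precisely $O(\eps^{-1}(1+\log\alpha))$, with the additive constant $\ln(8e)$ absorbed as usual).

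Next I would show that a single iteration of the \texttt{while} loop, i.e., one full pass of the inner \texttt{for} loop over all of $\cN$, costs $O(n)$ time. To this end I would have the algorithm maintain, alongside the current solution $S_h$, the two scalars $c(S_h)$ and $f(S_h)$. Then, for each element $u$, the membership test $u \notin S_h$ costs $O(1)$ using a Boolean array indexed by the ground set, the feasibility test $c(u) + c(S_h) \le 1$ costs $O(1)$, and the density test $f(u \mid S_h)/c(u) \ge \tau$ costs $O(1)$ as well, because $f(u \mid S_h) = f(S_h + u) - f(S_h)$ needs only the single value-oracle query $f(S_h + u)$ once $f(S_h)$ is cached (we count each oracle query as $O(1)$, consistently with the rest of the paper). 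When all three conditions hold, updating $u_{h+1}$, $S_{h+1}$, $h$, $c(S_h)$ and $f(S_h)$ is also $O(1)$. Hence the inner loop performs $O(1)$ work per element and $O(n)$ per pass.

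Combining, the \texttt{while} loop runs in $O(n) \cdot O(\eps^{-1}\log\alpha) = O(n\eps^{-1}\log\alpha)$ time, and adding the $O(n)$ for computing $\Gamma$ yields the stated bound. There is no genuinely hard step here: the only points that require care are (i) carrying $f(S_h)$ and $c(S_h)$ incrementally so that each density check is $O(1)$ rather than $O(|S_h|)$, and (ii) bounding the number of thresholds by the geometric/logarithmic argument rather than naively. If I had to name the "main obstacle" it would be (i) — making sure the algorithm is implemented so that a marginal-value query really is constant time.
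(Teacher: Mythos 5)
Your proof is correct and takes essentially the same route as the paper: bound the number of outer-loop iterations by a geometric/logarithmic argument using $\ln\frac{1}{1-\eps}\ge\eps$, and charge $O(n)$ per pass of the inner loop, adding the $O(n)$ cost of Algorithm~\ref{alg:estimate}. The only difference is that you spell out the constant-time-per-element implementation (caching $f(S_h)$ and $c(S_h)$) and the $\Gamma=0$ degeneracy, which the paper leaves implicit.
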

\begin{proof}
Recall that Algorithm~\ref{alg:estimate} runs in $O(n)$ time by Proposition~\ref{prop:estimation}, and observe that every iteration of the outer loop of Algorithm~\ref{alg:threshold_greedy} runs in $O(n)$ time. Therefore, to prove the lemma, it suffices to argue that the number of iterations made by this outer loop is $O(\eps^{-1} \log \alpha)$. Formally, the number of iterations made by this loop is upper bounded by
\[
	2 + \log_{(1 - \eps)^{-1}} \left(\frac{8\alpha \cdot \Gamma}{(1 - \eps)\Gamma / e}\right)
	=
	3 - \frac{\ln(8\alpha \cdot e)}{\ln(1 - \eps)}
	\leq
	3 + \frac{4 + \ln \alpha}{\eps}
	=
	O(\eps^{-1} \log \alpha)
	\enspace.
	\qedhere
\]
\end{proof}

Our next objective is to analyze the approximation guarantee of Algorithm~\ref{alg:threshold_greedy}. We do this by describing two lower bounds on the performance of this algorithm. The following lemma gives the simpler among these bounds. Let $\ell$ be the value of $k$ when Algorithm~\ref{alg:threshold_greedy} terminated.
\begin{lemma} \label{lem:unfilled_bound}
For every set $T \subseteq \cN$, if $\max_{u \in T} c(u) \leq 1 - c(S_\ell)$, then $f(S_\ell) \geq f(T) - \frac{c(T)}{e} \cdot f(OPT)$. In particular, $\max_{u \in OPT} c(u) \leq 1 - c(S_\ell)$ implies $f(S_\ell) \geq (1 - 1/e) \cdot f(OPT)$.
\end{lemma}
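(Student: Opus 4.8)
The plan is to adapt the classical threshold-greedy analysis to the knapsack setting, keeping careful track of the ``density'' guarantees enforced by the thresholds. Let me write $S_\ell$ for the final solution, and let $u_1, u_2, \dots, u_\ell$ be the elements of $S_\ell$ in the order they were added, with $S_i = \{u_1, \dots, u_i\}$ as in the algorithm. The key structural fact is that each $u_{i+1}$ was added while some threshold $\tau_i$ was active, so $f(u_{i+1} \mid S_i)/c(u_{i+1}) \geq \tau_i$, and moreover no element remaining feasible at that point had density exceeding $\tau_i / (1-\eps)$ (it would have been picked up in a previous, higher threshold round, or earlier in the same round). I would first fix an element $v \in T$ and bound $f(v \mid S_i)$ for each $i$: as long as $v$ could still have been added (which, under the hypothesis $\max_{u\in T} c(u) \le 1 - c(S_\ell)$, is guaranteed throughout since $c(S_i) \le c(S_\ell)$), its density is at most $\tau_i/(1-\eps) \le e \cdot \Gamma / (1-\eps) \cdot (\text{something})$; more usefully, summing over $v \in T$ gives $f(T \mid S_i) \le \sum_{v \in T} f(v \mid S_i) \le \frac{1}{1-\eps}\sum_{v \in T} c(v) \cdot \tau_i' \le \frac{c(T)}{1-\eps}\cdot \frac{f(u_{i+1}\mid S_i)}{c(u_{i+1})} \cdot (\dots)$, the standard telescoping comparison.

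Concretely, I would carry out the following steps in order. (1) Observe that because $c(u_{i+1}) \le 1 - c(S_\ell) $ is needed and $c(S_i) \le c(S_\ell)$, every element of $T$ satisfies the feasibility test ``$c(u) + c(S_i) \le 1$'' at every step $i \le \ell$; hence the only reason an element $v \in T$ was not added during the round with threshold $\tau$ is that its density was below $\tau$ at that moment. (2) Deduce that for each $i$, $f(v \mid S_i)/c(v) < \tau_i/(1-\eps)$ where $\tau_i$ is the threshold active when $u_{i+1}$ was added (accounting for the $(1-\eps)$ factor between consecutive rounds and the within-round ordering), and therefore $f(T \mid S_i) \le \frac{c(T)}{1-\eps} \cdot \frac{f(u_{i+1} \mid S_i)}{c(u_{i+1})}$. (3) Also handle the final threshold: after the loop ends, every $v \in T$ has density below $(1-\eps)\Gamma/e$, so $f(T \mid S_\ell) \le \frac{c(T)}{e}\,(1-\eps)\Gamma \le \frac{c(T)}{e} f(OPT)$ using $\Gamma \le f(OPT)$ from Proposition~\ref{prop:estimation}. (4) Combine: write $f(T) - f(S_\ell) \le f(T \mid S_\ell) + \sum_{i=0}^{\ell-1}\bigl(\text{the gain bound}\bigr)$ — more precisely, use the standard inductive argument showing $f(OPT') - f(S_i)$ decreases geometrically. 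Actually the cleanest route is the one-line submodular-greedy induction: by submodularity $f(T) \le f(S_i) + f(T \mid S_i) \le f(S_i) + \frac{c(T)}{1-\eps}\cdot\frac{f(S_{i+1}) - f(S_i)}{c(u_{i+1})}$, and then the usual manipulation (treating each added element as ``worth'' its cost fraction) yields, after telescoping over all $\ell$ steps and adding the leftover term from step (3), the bound $f(S_\ell) \ge (1 - e^{-c(S_\ell)/(1-\eps)})\cdot(\text{stuff}) \ge f(T) - \frac{c(T)}{e} f(OPT)$ up to the $\eps$-dependent slack, which in this lemma statement is absorbed because the lemma as stated has no $\eps$ — so I should be careful: the clean $f(S_\ell) \ge f(T) - \frac{c(T)}{e}f(OPT)$ must come out of treating the thresholds as exactly geometric and using $\Gamma/e$ as the stopping point.

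For the ``in particular'' clause: apply the general bound with $T = OPT$. The hypothesis $\max_{u \in OPT} c(u) \le 1 - c(S_\ell)$ is exactly what is needed to invoke it, and since $OPT$ is feasible we have $c(OPT) \le 1$, so $f(S_\ell) \ge f(OPT) - \frac{c(OPT)}{e} f(OPT) \ge (1 - 1/e) f(OPT)$.

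The main obstacle I anticipate is getting the constant exactly right through the discretization of thresholds: the $(1-\eps)$ gaps between rounds introduce slack, and one must verify that starting the thresholds at $8\alpha\Gamma$ (comfortably above $f(OPT) \ge \Gamma$, indeed above $f(OPT)$ by Proposition~\ref{prop:estimation} since $f(OPT) \le 8\Gamma$) means no high-density element is ever missed, while stopping just above $(1-\eps)\Gamma/e$ guarantees the residual term is at most $\frac{c(T)}{e} f(OPT)$ rather than something slightly larger. In other words, the delicate part is confirming that the telescoping sum of the per-step density bounds plus the post-loop residual collapses to precisely the claimed $f(T) - \frac{c(T)}{e}f(OPT)$, without an extra $O(\eps)$ term — which presumably is why the thresholds range down to $(1-\eps)\Gamma/e$ and not simply $\Gamma/e$, the $(1-\eps)$ buying back exactly the discretization loss. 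I would also double-check the edge case $c(T) = 0$ or $T = \varnothing$, where the statement is trivial, and the case where the while-loop executes zero iterations (which cannot happen since $8\alpha\Gamma > (1-\eps)\Gamma/e$ always).
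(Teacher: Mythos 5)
Your steps (1) and (3) already contain the entire proof, and the paper's argument is exactly that and nothing more: under the hypothesis $\max_{u\in T} c(u) \leq 1 - c(S_\ell)$, every $u \in T \setminus S_\ell$ passes the feasibility test $c(u) + c(S_u) \leq 1$ in every iteration (since $c(S_u) \leq c(S_\ell)$), so in particular in the \emph{last} executed iteration of the outer loop it must have failed the density test. The threshold in that last executed iteration is at most $\Gamma/e$ (the loop exits once $\tau \leq (1-\eps)\Gamma/e$, so the previously executed $\tau$ is at most $\Gamma/e$; your claim that post-loop densities are below $(1-\eps)\Gamma/e$ is slightly too strong, but the correct bound $\Gamma/e$ is all that is needed). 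By submodularity $f(u \mid S_\ell) \leq f(u \mid S_u) \leq c(u)\cdot \Gamma/e$, and then one application of monotonicity and submodularity gives $f(T) \leq f(T\cup S_\ell) \leq f(S_\ell) + \sum_{u \in T\setminus S_\ell} f(u\mid S_\ell) \leq f(S_\ell) + \tfrac{c(T)}{e}\,\Gamma \leq f(S_\ell) + \tfrac{c(T)}{e}\, f(OPT)$, using $\Gamma \leq f(OPT)$ from Proposition~\ref{prop:estimation}. The ``in particular'' clause then follows as you say by taking $T = OPT$ and $c(OPT)\leq 1$.

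The genuine gap is in your combination steps (2) and (4): the telescoping/geometric-decay route cannot deliver the stated bound, and your worry about an extra $O(\eps)$ term is well founded. That route yields an estimate depending on $\eps$ and on $c(S_\ell)$ (roughly $f(T) - f(S_\ell) \leq e^{-(1-\eps)\,c(S_\ell)/c(T)}\cdot(\cdots)$), which is vacuous when $c(S_\ell)$ is small --- yet this lemma is exactly the tool the paper invokes in the case where the algorithm does \emph{not} exhaust the budget, and it must hold even when $\ell = 0$. So, as written, the proposal does not close; it hedges on precisely the point where it would fail. The fix is simply to discard the per-step machinery entirely (that analysis belongs to Lemma~\ref{lem:greedy_bound} and to the proofs of the theorems, where the $\eps$ loss is genuinely incurred) and rely only on the post-loop residual argument sketched in your step (3), which by itself is the paper's proof.
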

\begin{proof}
The second part of the lemma follows from the first part by plugging $T = OPT$ since $OPT$, as a feasible set, obeys $c(OPT) \leq 1$. Therefore, we concentrate below on proving the first part of the lemma.

Since elements are only ever added by Algorithm~\ref{alg:threshold_greedy} to its solution set $S_k$, the condition of the lemma implies that whenever an element $u \in T \setminus S_\ell$ was considered on Line~\ref{line:condition} of Algorithm~\ref{alg:threshold_greedy}, its density was too low compared to the value of the threshold $\tau$ at time. In particular, this is true for the threshold $\tau \leq \Gamma / e$ in the last iteration of the outer loop of Algorithm~\ref{alg:threshold_greedy}, and therefore,
\[
	\frac{f(u \mid S_u)}{c(u)}
	\leq
	\frac{\Gamma}{e}
	\qquad
	\forall\; e \in T \setminus S_\ell
	\enspace,
\]
where $S_u$ represents here the set $S_k$ immediately before $u$ is processed by Algorithm~\ref{alg:threshold_greedy} in the last iteration of its outer loop.

By the submodularity of $f$, the last inequality holds also when $S_u$ is replaced with $S_\ell$, and therefore,
\[
	f(T)
	\leq
	f(T \cup S_\ell)
	\leq
	f(S_{\ell}) + \sum_{u \in T \setminus S_{\ell}} \mspace{-9mu} f(u \mid S_\ell)
	\leq
	f(S_{\ell}) + \frac{\Gamma}{e} \cdot c(T \setminus S_{\ell})
	\leq
	f(S_{\ell}) + \frac{c(T) \cdot \Gamma}{e}
	\enspace,
\]
where the first inequality follows from the monotonicity of $f$, and the second inequality follows from the submodularity of $f$. The lemma now follows from the last inequality by recalling that Proposition~\ref{prop:estimation} guarantees $\Gamma \leq f(OPT)$.
\end{proof}

The following lemma gives the other lower bound we need on the performance of Algorithm~\ref{alg:threshold_greedy}. %We note that this lower bound is based on an arbitrary set $T \subseteq \cN$, which may be either $OPT$ or a different subset of $\cN$ in applications of the lemma.
\begin{lemma} \label{lem:greedy_bound}
For every set $\varnothing \neq T \subseteq \cN$ and integer $0 \leq h < \ell$, if $\max_{u \in T} c(u) \leq 1 - c(S_h)$, then \[\frac{f(S_{h + 1}) - f(S_h)}{c(u_{h + 1})} \geq \min\left\{(1 - \eps) \cdot \frac{f(T \mid S_h)}{c(T)}, \alpha \cdot f(OPT)\right\}\enspace.\]
\end{lemma}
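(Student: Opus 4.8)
The plan is to analyze the moment at which element $u_{h+1}$ was added to the solution, i.e., the $(h+1)$-st addition overall. Let $\tau$ be the value of the threshold during the iteration of the outer loop in which $u_{h+1}$ was added, and let me distinguish two cases according to whether this happened in the very first iteration (where $\tau = 8\alpha\Gamma$) or in a later iteration.

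First I would handle the case where $u_{h+1}$ was added while $\tau$ still equals the initial value $8\alpha\Gamma$ — or more generally, I would simply use that the addition condition on Line~\ref{line:condition} guarantees $\frac{f(u_{h+1}\mid S_h)}{c(u_{h+1})} \geq \tau$, and that in the first iteration $\tau = 8\alpha\Gamma \geq \alpha f(OPT)$ by Proposition~\ref{prop:estimation} (which gives $f(OPT) \leq 8\Gamma$). Since $f(S_{h+1}) - f(S_h) = f(u_{h+1}\mid S_h)$, this immediately yields $\frac{f(S_{h+1})-f(S_h)}{c(u_{h+1})} \geq \alpha f(OPT)$, which dominates the minimum. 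So in that case we are done, and in the remaining case we may assume $u_{h+1}$ was added in an outer-loop iteration whose threshold $\tau$ is not the first one, meaning the previous threshold $\tau/(1-\eps)$ was already used in a completed pass over $\cN$.

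The heart of the argument is this remaining case. Here I want to exploit that every element $u \in T$ was examined in the \emph{previous} outer-loop iteration, with threshold $\tau' \triangleq \tau/(1-\eps)$, at a point when the current partial solution was some prefix $S_j$ of $S_h$ (note $j \leq h$ since $u_{h+1}$ belongs to a later iteration, so all of $S_h$ was already built by the time that earlier pass finished — actually all elements $u_1,\dots,u_h$ were added in iterations up to and including the current one, and at least $S_h$'s membership is fixed before $u_{h+1}$). The key point: for each $u \in T$, either $u$ was \emph{not} added during that earlier pass, in which case its density at examination time — hence, by submodularity and monotonicity, its density with respect to the larger set $S_h$ — was below $\tau' = \tau/(1-\eps)$, i.e. $f(u\mid S_h)/c(u) < \tau/(1-\eps)$; or $u$ \emph{was} added, in which case $u \in S_h$ (it was added before $u_{h+1}$) and $f(u \mid S_h) = 0$ by submodularity, so the same bound holds trivially. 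The feasibility hypothesis $\max_{u\in T} c(u) \leq 1 - c(S_h)$ is what guarantees that none of these elements was skipped merely for violating the knapsack constraint at the relevant time — I would need to be a little careful that the solution only grows, so $c(S_j) \le c(S_h) \le 1 - c(u)$ at every earlier examination point too. Summing $f(u\mid S_h) \leq \frac{\tau}{1-\eps}\,c(u)$ over $u \in T$ and using submodularity,
\[
  f(T \mid S_h) \leq \sum_{u \in T} f(u \mid S_h) \leq \frac{\tau}{1-\eps} \cdot c(T),
\]
hence $\tau \geq (1-\eps)\, f(T\mid S_h)/c(T)$. Combining with $\frac{f(S_{h+1})-f(S_h)}{c(u_{h+1})} = \frac{f(u_{h+1}\mid S_h)}{c(u_{h+1})} \geq \tau$ gives the first term of the minimum, completing the proof.

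The main obstacle I anticipate is bookkeeping the indices carefully: making sure that when $u_{h+1}$ is added in a given pass, the set $S_h$ is indeed already entirely determined, and that every element of $T$ really did get examined in the immediately preceding pass with a partial solution contained in $S_h$ — so that submodularity lets me push the density bound from the examination-time set up to $S_h$. There is also the boundary subtlety that $T$ might contain $u_{h+1}$ itself or elements added after it, but since such elements are in $S_h$ only if added earlier, and otherwise contribute a density bound at threshold $\tau$ anyway, this does not cause trouble; still, it is the place where the write-up must be precise. The use of Proposition~\ref{prop:estimation} for the $\alpha f(OPT)$ branch and of monotonicity/submodularity for the density-monotonicity argument are both routine once the case split is set up correctly.
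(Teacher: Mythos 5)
Your proof is correct and follows essentially the same structure as the paper's: split on whether $\tau_{h+1}$ is the initial threshold $8\alpha\Gamma$ (handled via $8\Gamma \geq f(OPT)$ from Proposition~\ref{prop:estimation}), and otherwise argue that every $u \in T \setminus S_h$ failed the density test at the preceding threshold $\tau_{h+1}/(1-\eps)$ — crucially using the hypothesis $\max_{u\in T} c(u) \leq 1 - c(S_h)$ together with the fact that the solution only grows to rule out rejection due to the knapsack check — and push the resulting density bound up to $S_h$ by submodularity. One small cosmetic difference: you sum $f(u\mid S_h) \le \frac{\tau_{h+1}}{1-\eps}c(u)$ over all of $T$ (noting $f(u\mid S_h)=0$ for $u\in T\cap S_h$), which avoids the paper's separate handling of the $T\subseteq S_h$ case and the division by $c(T\setminus S_h)$; the paper instead sums over $T\setminus S_h$ and treats $T\subseteq S_h$ as an immediate base case via monotonicity. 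Both are sound, and yours is arguably a slight streamlining.
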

\begin{proof}
Let $\tau_{h + 1}$ be the value of $\tau$ when Algorithm~\ref{alg:threshold_greedy} selects $u_{h + 1}$. If $\tau_{h + 1} = 8\alpha \cdot \Gamma \geq \alpha \cdot f(OPT)$, then the lemma follows immediately from the condition used on Line~\ref{line:condition} of the algorithm. Therefore, we assume below that $\tau_{h + 1} < 8\alpha \cdot \Gamma$, which implies that no element of $T \setminus S_h$ was added to the solution of Algorithm~\ref{alg:threshold_greedy} when the threshold $\tau$ was equal to $(1 - \eps)^{-1}\tau_{h + 1}$. Therefore, since elements are only ever added by Algorithm~\ref{alg:threshold_greedy} to its solution, the submodularity of $f$ guarantees
\[
	\frac{f(u \mid S_h)}{c(u)}
	<
	\frac{\tau_{h + 1}}{1 - \eps}
	\qquad
	\forall\; u \in T \setminus S_h
	\enspace.
\]

We now observe that the lemma follows immediately from the monotonicity of $f$ when $T \subseteq S_h$. Therefore, we need to consider only the case of $T \not \subseteq S_h$, and in this case the previous inequality implies
\begin{align*}
	\frac{f(S_{h + 1}) - f(S_h)}{c(u_{h + 1})}
	={} &
	\frac{f(u_{h + 1} \mid S_h)}{c(u_{h + 1})}
	\geq
	\tau_{h + 1}
	=
	\frac{\sum_{u \in T \setminus S_h} c(u) \cdot \tau_{h + 1}}{c(T \setminus S_h)}\\
	>{} &
	(1 - \eps) \cdot \frac{\sum_{u \in T \setminus S_h} f(u \mid S_h)}{c(T \setminus S_h)}
	\geq
	(1 - \eps) \cdot \frac{f(T \mid S_h)}{c(T)}
	\enspace,
\end{align*}
where the last inequality follows from the submodularity of $f$.
\end{proof}

We are now ready to prove the following theorem, which implies Theorem~\ref{thm:cardinality_alg}.
\begin{theorem}
If we choose $\alpha = 1$, then, for every $\eps \in (0, 1)$, Algorithm~\ref{alg:threshold_greedy} runs in $O(n/\eps)$ time and guarantees $(1 - 1/e - \eps)$-approximation for ${\SMC}$.
\end{theorem}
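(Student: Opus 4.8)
The plan is to handle the running time directly from the earlier analysis and then to split the approximation argument into two standard cases. For the running time, note that with $\alpha = 1$ the proof of Observation~\ref{obs:time_threshold_greedy} shows that the outer loop of Algorithm~\ref{alg:threshold_greedy} runs $3 + 4/\eps = O(1/\eps)$ times; since each iteration costs $O(n)$ and Algorithm~\ref{alg:estimate} costs $O(n)$ by Proposition~\ref{prop:estimation}, the total is $O(n/\eps)$. Before turning to the approximation ratio I would record the features specific to {\SMC}: every element has cost $1/k$, so $|S_h| = h$ for all $h$, the feasibility check $c(u) + c(S_h) \le 1$ on Line~\ref{line:condition} is just $|S_h| \le k - 1$, and consequently the terminal index $\ell$ satisfies $\ell \le k$. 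I may also assume $OPT \neq \varnothing$ (otherwise $f(S_\ell) \ge f(\varnothing) = f(OPT)$ trivially), so that $\max_{u \in OPT} c(u) = 1/k$.

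I would then distinguish two cases. If $\ell < k$, then $\max_{u \in OPT} c(u) = 1/k \le (k - \ell)/k = 1 - c(S_\ell)$, so Lemma~\ref{lem:unfilled_bound} applied with $T = OPT$ gives $f(S_\ell) \ge (1 - 1/e) f(OPT) \ge (1 - 1/e - \eps) f(OPT)$ immediately. If instead $\ell = k$, then for every $0 \le h < \ell$ we have $|S_h| = h \le k - 1$, hence $\max_{u \in OPT} c(u) = 1/k \le 1 - c(S_h)$ and Lemma~\ref{lem:greedy_bound} applies with $T = OPT$ and $\alpha = 1$. The point I would emphasize is that submodularity and non-negativity of $f$ give $f(OPT \mid S_h) = f(OPT \cup S_h) - f(S_h) \le f(OPT)$, so \emph{both} arguments of the minimum in Lemma~\ref{lem:greedy_bound} are at least $(1 - \eps) f(OPT \mid S_h)$; combining this with $c(u_{h+1}) = 1/k$ and monotonicity ($f(OPT \mid S_h) \ge f(OPT) - f(S_h)$) yields
\[
	f(S_{h+1}) - f(S_h) \ge \frac{1 - \eps}{k}\bigl(f(OPT) - f(S_h)\bigr)
	\enspace.
\]

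Writing $\delta_h \triangleq f(OPT) - f(S_h)$, this rearranges to $\delta_{h+1} \le \bigl(1 - \tfrac{1-\eps}{k}\bigr)\delta_h$, and iterating from $h = 0$ (where $\delta_0 \le f(OPT)$ by non-negativity) through $h = \ell - 1 = k - 1$ gives $\delta_\ell \le \bigl(1 - \tfrac{1-\eps}{k}\bigr)^k f(OPT) \le e^{-(1-\eps)} f(OPT)$. I would then finish with the elementary estimate $e^{-(1-\eps)} = e^{-1} e^{\eps} \le e^{-1} + \eps$ for $\eps \in (0, 1)$, which follows from $e^{\eps} - 1 \le (e - 1)\eps$ on this interval; this gives $f(S_\ell) = f(OPT) - \delta_\ell \ge (1 - 1/e - \eps) f(OPT)$ and closes the second case. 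I do not anticipate a genuine obstacle, as this is essentially the standard threshold-greedy analysis, but two points need care: making sure the $\alpha \cdot f(OPT)$ branch of the minimum in Lemma~\ref{lem:greedy_bound} is never the binding one when $\alpha = 1$ (this is exactly where $f(OPT \mid S_h) \le f(OPT)$ enters), and converting the multiplicative factor $(1 - \eps)$ in the exponent into an additive $\eps$ loss in the approximation ratio via $e^{\eps - 1} \le e^{-1} + \eps$.
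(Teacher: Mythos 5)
Your proof is correct and follows essentially the same route as the paper's: the same time-complexity citation, the same two-case split on $\ell < k$ versus $\ell = k$, Lemma~\ref{lem:unfilled_bound} for the first case and Lemma~\ref{lem:greedy_bound} for the second, with the same observation that $f(OPT \mid S_h) \le f(OPT)$ makes the $\alpha f(OPT)$ branch non-binding. The only cosmetic difference is that you specialize immediately to $c(u_{h+1}) = 1/k$ and run the recurrence on $\delta_h = f(OPT) - f(S_h)$, whereas the paper keeps $c(u_{h+1})$ symbolic and combines the factors $e^{-(1-\eps)c(u_{h+1})}$ using $\sum_h c(u_{h+1}) = 1$; both yield the same $e^{-(1-\eps)}$ bound.
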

\begin{proof}
The time complexity of Algorithm~\ref{alg:threshold_greedy} was already proved in Observation~\ref{obs:time_threshold_greedy}. Therefore, we concentrate here on analyzing the approximation guarantee of this algorithm. Let $k$ be the maximum cardinality allowed by the cardinality constraint (\ie, $c(u) = 1 / k$ for every $u \in \cN$). We need to consider two cases depending on the value of $\ell$. If $\ell < k$, \ie, Algorithm~\ref{alg:threshold_greedy} outputs a solution that is smaller than the maximum cardinality allowed, then the condition $\min_{u \in OPT} c(u) \leq 1 - c(S_\ell)$ of Lemma~\ref{lem:unfilled_bound} holds (because $c(S_\ell) = \ell / k \leq 1 - 1 / k$), and therefore,
\[
	f(S_\ell) \geq (1 - 1/e) \cdot f(OPT)
	\enspace.
\]

Assume now that $\ell = k$. In this case we need to use Lemma~\ref{lem:greedy_bound}. Since the condition of this lemma holds for every integer $0 \leq h < \ell$ and $T = OPT$ (we may assume that $OPT \neq \varnothing$ because otherwise Algorithm~\ref{alg:threshold_greedy} is clearly optimal), the lemma implies, for every such $k$,
\[
	\frac{f(S_{h + 1}) - f(S_h)}{c(u_{h + 1})}
	\geq
	\min\left\{(1 - \eps) \cdot \frac{f(OPT \mid S_h)}{c(OPT)}, f(OPT)\right\}
	\geq
	(1 - \eps) \cdot f(OPT \mid S_h)
	\enspace,
\]
where the second inequality holds because the feasibility of $OPT$ implies $c(OPT) \leq 1$, and the submodularity and non-negativity of $f$ imply $f(OPT \mid S_h) \leq f(OPT \mid \varnothing) \leq f(OPT)$. Rearranging the last inequality yields
\begin{align*}
	[1 - (1 - \eps) &{}\cdot c(u_{h + 1})] \cdot [f(OPT) - f(S_h)]\\
	\geq{} &
	[1 - (1 - \eps) \cdot c(u_{h + 1})] \cdot f(OPT) + (1 - \eps) \cdot c(u_{h + 1}) \cdot f(OPT \cup S_h) - f(S_{h + 1})
	\enspace,
\end{align*}
which by the monotonicity of $f$, the observation that $f(S_{h}) \leq f(OPT)$ because $S_{h}$ is a feasible solution, and the inequality $1 - x \leq e^{-x}$ (which holds for every $x \in \bR$) implies
\[
	e^{-(1 - \eps) \cdot c(u_{k + 1})} \cdot [f(OPT) - f(S_k)]
	\geq
	f(OPT) - f(S_{k + 1})
	\enspace.
\]
Finally, by combining the last inequality for all integers $0 \leq h < \ell$, we get
\begin{align*}
	f(OPT) - f(S_{\ell})
	\leq{} &
	e^{-(1 - \eps) \cdot \sum_{0 \leq h < \ell} c(u_{h + 1})} \cdot [f(OPT) - f(S_0)]\\
	={} &
	e^{-(1 - \eps)} \cdot [f(OPT) - f(S_0)]
	\leq
	e^{-(1 - \eps)} \cdot f(OPT)
	\leq
	(e^{-1} + \eps) \cdot f(OPT)
	\enspace,
\end{align*}
where the equality holds since $\sum_{0 \leq h < \ell} c(u_{h + 1}) = c(S_\ell) = 1$ by the definition of the case we consider, and the penultimate inequality follows from the non-negativity of $f$. The theorem now follows by rearranging the last inequality.
\end{proof}

\subsection{General Knapsack Constraint in Clean Linear Time} \label{sec:plus}

In this section we prove Theorem~\ref{thm:knapsack}. Unfortunately, Algorithm~\ref{alg:threshold_greedy} by itself does not guarantee any constant approximation guarantee for general knapsack constraints.\footnote{To see this, consider an input instance consisting of an element $u$ of size $\delta > 0$, an additional element $w$ of size $1$ and an objective function $f(S) = 2\delta \cdot |S \cap \{u\}| + |S \cap \{w\}|$. For small enough $\eps$ and $\delta$ values, Algorithm~\ref{alg:threshold_greedy} will pick the solution $\{u\}$ of value $2\delta$ instead of the much more valuable solution $\{w\}$ whose value is $1$.} However, this can be fixed using the post-processing step described by Algorithm~\ref{alg:post_processing}. Recall that Algorithm~\ref{alg:threshold_greedy} maintains a solution $S_k$ that grows as the algorithm progresses. The post-processing step takes $O(\eps^{-1} \log \eps^{-1})$ snapshots of this solution at various times during the execution of Algorithm~\ref{alg:threshold_greedy}, and then tries to augment each one of these snapshots with a single element.
\begin{algorithm2e}
\caption{\textsc{Fast Threshold Greedy + Post-Processing}$(\eps)$} \label{alg:post_processing}
Execute Algorithm~\ref{alg:threshold_greedy} with $\alpha = \eps^{-1}$. Store all the sets $S_0, S_1, \dotsc, S_\ell$ produced by this execution.\\
\For{$i = 0$ \KwTo $\lfloor \log_{1 + \eps} \eps^{-1} \rfloor$}
{
	Let $S^{(i)}$ be the set $S_k$ for the maximal $k$ value for which $c(S_k) \leq \eps (1 + \eps)^i$.\\
	\If{there exists an element $u \in \cN$ such that $c(u) \leq 1 - c(S^{(i)})$}
	{
		Let $u^{(i)}$ be an element maximizing $f(u^{(i)} \mid S^{(i)})$ among all the elements obeying the condition on the previous line.\\
		Let $S^{(i)+} \gets S^{(i)} + u^{(i)}$.
	}
	\Else
	{
		Let $S^{(i)+} \gets S^{(i)}$.
	}
}
\Return{the set maximizing $f$ in $\{S_\ell\} \cup \{\{u\} \mid u \in \cN\} \cup \{S^{(i)+} \mid 0 \leq i  \leq \lfloor \log_{1 + \eps} \eps^{-1} \rfloor \}$}.\label{line:output}
\end{algorithm2e}

We begin the analysis of Algorithm~\ref{alg:post_processing} by analyzing its time complexity.
\begin{observation}
Algorithm~\ref{alg:post_processing} runs in $O(n \eps^{-1} \log \eps^{-1})$ time.
\end{observation}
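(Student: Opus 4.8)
The statement to prove is just the time complexity of Algorithm~\ref{alg:post_processing}, namely $O(n\eps^{-1}\log\eps^{-1})$.

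The plan is to account for the cost of each stage of Algorithm~\ref{alg:post_processing} separately and sum them up. First I would invoke Observation~\ref{obs:time_threshold_greedy} with the parameter choice $\alpha = \eps^{-1}$ made on the first line of Algorithm~\ref{alg:post_processing}: this shows that the call to Algorithm~\ref{alg:threshold_greedy} runs in $O(n\eps^{-1}\log\alpha) = O(n\eps^{-1}\log\eps^{-1})$ time. Storing the $O(\ell) = O(n)$ snapshot sets $S_0,\dots,S_\ell$ does not change this bound asymptotically, especially if the sets are stored implicitly (each $S_{h+1}$ differs from $S_h$ by a single element, so the whole sequence can be represented in $O(n)$ space and each $S^{(i)}$ located in $O(\log n)$ or even $O(1)$ amortized time by scanning the list of costs $c(S_0) \le c(S_1) \le \dots \le c(S_\ell)$ once).

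Next I would bound the cost of the \textbf{for} loop. It runs for $1 + \lfloor \log_{1+\eps}\eps^{-1}\rfloor = O(\eps^{-1}\log\eps^{-1})$ iterations, since $\log_{1+\eps}\eps^{-1} = \ln(\eps^{-1})/\ln(1+\eps) \le \ln(\eps^{-1})/( \eps/2 ) = O(\eps^{-1}\log\eps^{-1})$ for $\eps \in (0,1)$. In each iteration, identifying $S^{(i)}$ (the last snapshot whose cost is at most $\eps(1+\eps)^i$) takes at most $O(n)$ time by a linear scan over the stored costs — and in fact, since the thresholds $\eps(1+\eps)^i$ are increasing in $i$, a single joint scan over all iterations suffices, but $O(n)$ per iteration is already good enough. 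Then searching for an element $u$ with $c(u) \le 1 - c(S^{(i)})$ that maximizes $f(u \mid S^{(i)})$ requires one pass over $\cN$, i.e. $O(n)$ function evaluations and $O(n)$ time. Hence the loop costs $O(n \cdot \eps^{-1}\log\eps^{-1})$ in total.

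Finally, the return step on Line~\ref{line:output} computes $f$ on the $O(\eps^{-1}\log\eps^{-1})$ sets of the form $S^{(i)+}$, on the $n$ singletons $\{u\}$, and on $S_\ell$, and takes the maximum; this is another $O(n + \eps^{-1}\log\eps^{-1}) \subseteq O(n\eps^{-1}\log\eps^{-1})$ function evaluations and comparisons. Summing the contributions of the four stages — the call to Algorithm~\ref{alg:threshold_greedy}, the snapshot bookkeeping, the \textbf{for} loop, and the output selection — all of which are $O(n\eps^{-1}\log\eps^{-1})$, yields the claimed bound. There is no real obstacle here; the only mild subtlety is making sure the loop-iteration count $\lfloor\log_{1+\eps}\eps^{-1}\rfloor$ is correctly bounded by $O(\eps^{-1}\log\eps^{-1})$ using $\ln(1+\eps) \ge \eps/2$, and ensuring that storing and repeatedly accessing the snapshots does not secretly cost more than $O(n)$ per iteration, which follows from the fact that consecutive snapshots differ by one element and their costs are monotonically nondecreasing.
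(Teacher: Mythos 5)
Your proposal is correct and follows essentially the same route as the paper: bound the call to Algorithm~\ref{alg:threshold_greedy} by Observation~\ref{obs:time_threshold_greedy} with $\alpha = \eps^{-1}$, bound each loop iteration by $O(n)$ (the paper writes this as $O(n+\ell)=O(n)$ using $\ell \leq n$), and bound the number of iterations by $1+\log_{1+\eps}\eps^{-1} = O(\eps^{-1}\log\eps^{-1})$ via $\ln(1+\eps)\geq \eps/2$. The extra bookkeeping details you give about snapshots and the output line are fine but not needed beyond what the paper's argument already covers.
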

\begin{proof}
Algorithm~\ref{alg:threshold_greedy} runs in $O(n \eps^{-1} \log \eps^{-1})$ time by Observation~\ref{obs:time_threshold_greedy}. Additionally, every iteration of the loop of Algorithm~\ref{alg:post_processing} runs in $O(n + \ell) = O(n)$ time, where the equality holds because the fact that some element is added to the solution of Algorithm~\ref{alg:threshold_greedy} whenever the index $k$ increases guarantees that $\ell$ is at most $n$. Therefore, to prove the observation it suffices to show that the loop of Algorithm~\ref{alg:post_processing} iterates only $O(\eps^{-1} \log \eps^{-1})$ times. Formally, the number of iterations of this loop is upper bounded by
\[
	1 + \log_{1 + \eps} \eps^{-1}
	=
	1 + \frac{\ln \eps^{-1}}{\ln(1 + \eps)}
	\leq
	1 + \frac{\ln \eps^{-1}}{\eps / 2}
	=
	O(\eps^{-1} \ln \eps^{-1})
	\enspace.
	\qedhere
\]
\end{proof}

In the rest of this section we concentrate on analyzing the approximation ratio guaranteed by Algorithm~\ref{alg:post_processing}. Consider the following inequality.
\begin{equation} \label{eq:assumption}
	f(S^{(i)+}) < \nicefrac{1}{2} \cdot f(OPT) \qquad \forall\; 0 \leq i \leq \log_{1 + \eps} \eps^{-1}
	\enspace.
\end{equation}
If this inequality does not hold for some integer $i$, then Algorithm~\ref{alg:post_processing} clearly obtains at least $\nicefrac{1}{2}$-approx\-imation. Therefore, it remains to consider the case in which Inequality~\eqref{eq:assumption} applies for every integer $0 \leq i \leq \log_{1 + \eps} \eps^{-1}$. One can observe that this inequality implies that $OPT$ is non-empty because $f(S^{(0)}) \geq f(\varnothing)$ by the monotonicity of $f$, and therefore, we can define $r$ to be an element of $OPT$ maximizing $c(r)$. Additionally, we assume below that $r$ is not the only element of $OPT$---if this assumption does not hold, then Algorithm~\ref{alg:post_processing} trivially returns an optimal solution.
\begin{lemma} \label{lem:huge_r}
If $c(r) \geq 1 - \eps$, then $f(S_\ell) \geq (\nicefrac{1}{2} - \eps) \cdot f(OPT)$.
\end{lemma}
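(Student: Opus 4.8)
The plan is to split on how the heavy element $r$ relates to the final greedy set $S_\ell$, disposing of the ``easy'' configurations with Lemma~\ref{lem:unfilled_bound} and isolating the hard one. Throughout, write $O' \triangleq OPT \setminus \{r\}$. Two facts drive everything. First, feasibility of $OPT$ gives $c(O') = c(OPT) - c(r) \le 1 - c(r) \le \eps$, so $O'$ is cost‑negligible. Second, since every singleton $\{u\}$ is among the candidate outputs of Algorithm~\ref{alg:post_processing}, we may assume (exactly as we did for Inequality~\eqref{eq:assumption}) that $f(\{u\}) < \tfrac12 f(OPT)$ for all $u$; in particular $f(\{r\}) < \tfrac12 f(OPT)$, and then submodularity of $f$ yields $f(\{r\}) + f(O') \ge f(OPT) + f(\varnothing) \ge f(OPT)$, hence $f(O') > \tfrac12 f(OPT)$.

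With these in hand I would argue as follows. \textbf{Case 1: $c(r) \le 1 - c(S_\ell)$.} Since $c(r) = \max_{u \in OPT} c(u)$, the second part of Lemma~\ref{lem:unfilled_bound} applied with $T = OPT$ already gives $f(S_\ell) \ge (1 - 1/e) f(OPT) \ge (\nicefrac{1}{2} - \eps) f(OPT)$. \textbf{Case 2: $c(r) > 1 - c(S_\ell)$} (so $r$ does not fit alongside $S_\ell$), \emph{and additionally $\max_{u \in O'} c(u) \le 1 - c(S_\ell)$.} Then the first part of Lemma~\ref{lem:unfilled_bound} with $T = O'$ gives $f(S_\ell) \ge f(O') - \tfrac{c(O')}{e} f(OPT) > \bigl(\tfrac12 - \tfrac{\eps}{e}\bigr) f(OPT) \ge (\nicefrac12 - \eps) f(OPT)$, using $c(O') \le \eps$ and $f(O') > \tfrac12 f(OPT)$.

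The one remaining configuration is Case~2 together with $\max_{u \in O'} c(u) > 1 - c(S_\ell)$; since $\max_{u \in O'} c(u) \le c(O') \le \eps$, this forces $c(S_\ell) > 1 - \eps$, i.e.\ the greedy solution almost exhausts the budget. Here I would stop relying on $f(O')$ and instead prove directly that $f(S_\ell) \ge (1 - 1/e - O(\eps)) f(OPT) \ge (\nicefrac12 - \eps) f(OPT)$. The argument combines the per‑step guarantee of Lemma~\ref{lem:greedy_bound}, telescoped in the same way as in the proof of Theorem~\ref{thm:cardinality_alg} over the prefix of the execution during which the chosen target still ``fits'' (so that the accumulated exponent equals $(1-\eps)$ times the cost added, which is $\ge (1-\eps)(1-\eps) - O(\eps)$ by near‑fullness), together with the last‑iteration threshold bound underlying Lemma~\ref{lem:unfilled_bound}: every element of $OPT$ that greedy skipped while it still fit was skipped for marginal density below $\Gamma/e \le f(OPT)/e$, and the total cost of such skipped elements of $OPT$ is at most $c(OPT) \le 1$ (and of those in $O'$ at most $\eps$), so their aggregate marginal over $S_\ell$ is at most $\tfrac{1}{e} f(OPT)$, an $O(\eps)$‑correction being the only loss coming from $O'$.

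I expect this last configuration to be the crux. The subtlety is that neither $OPT$ (because $r$ is too large to fit for all but an $O(\eps)$‑cost initial stretch of the execution, so $OPT$ is an admissible target of Lemma~\ref{lem:greedy_bound} only there) nor $O'$ (because $f(O')$ may exceed $\tfrac12 f(OPT)$ by only a vanishing margin, so a clean telescoping against $O'$ caps out near $(1-1/e)f(O') \approx \tfrac{1}{2}(1-1/e)f(OPT) < (\nicefrac12-\eps)f(OPT)$) is, by itself, a target against which the factor $\nicefrac12 - \eps$ is reached; the proof has to use simultaneously that $r$ carries almost all of $OPT$'s budget, that $c(S_\ell) > 1 - \eps$, and the threshold control on the marginals of the elements greedy passed over.
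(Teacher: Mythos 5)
Your preliminary reductions are fine and match the paper's: using the singleton candidates to assume $f(\{r\}) < \tfrac12 f(OPT)$, deducing $f(O') > \tfrac12 f(OPT)$ and $c(O') \le \eps$ for $O' = OPT - r$, handling $c(r) \le 1 - c(S_\ell)$ via the second part of Lemma~\ref{lem:unfilled_bound}, and handling the sub-case where all of $O'$ still fits via its first part. The problem is the remaining configuration ($c(S_\ell) > 1 - \eps$), which you correctly identify as the crux but then attack with a plan that does not work, after dismissing the approach that does. Your dismissal rests on a quantitative misconception: you write that telescoping against $O'$ ``caps out near $(1-1/e)f(O')$.'' It does not, because Lemma~\ref{lem:greedy_bound} normalizes the gain by $c(T)$. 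With $T = O'$ and $c(O') \le \eps$, the per-step bound is $\frac{f(S_{h+1})-f(S_h)}{c(u_{h+1})} \ge \frac{1-\eps}{\eps}\, f(O' \mid S_h)$ (the $\alpha = \eps^{-1}$ choice makes the other branch of the $\min$ harmless), valid for every $h$ with $c(S_h) < 1-\eps$ since then $\max_{u \in O'} c(u) \le c(O') \le \eps \le 1 - c(S_h)$. Telescoping up to the first index $\bar h$ with $c(S_{\bar h}) \ge 1-\eps$ gives $f(O') - f(S_{\bar h}) \le e^{-\eps^{-1}(1-\eps)\cdot c(S_{\bar h})} f(O') \le e^{-(1-\eps)^2/\eps} f(O') \le 2\eps\, f(O')$, hence $f(S_\ell) \ge (1-2\eps) f(O') \ge (\nicefrac12 - \eps) f(OPT)$. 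The $(1-1/e)$ ceiling only appears when the target's cost is comparable to the cost the algorithm accumulates; here the target is tiny, and that is exactly what the paper exploits in this case (one also needs the harmless WLOG $f(S_h) \le f(O')$ for the monotonicity step, which the paper handles by assuming $f(S_\ell) \le f(O')$ up front).

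By contrast, your substitute plan --- proving $f(S_\ell) \ge (1 - 1/e - O(\eps)) f(OPT)$ directly --- cannot be completed as sketched. Lemma~\ref{lem:greedy_bound} with target $OPT$ requires $\max_{u \in OPT} c(u) = c(r) \le 1 - c(S_h)$, i.e.\ it is only available while $c(S_h) \le 1 - c(r) \le \eps$; telescoping over a cost-$\eps$ prefix of the run closes only a $1 - e^{-O(\eps)}$ fraction of the gap to $f(OPT)$, essentially nothing. And the ``skipped elements'' argument behind Lemma~\ref{lem:unfilled_bound} needs every element of the target to fit alongside $S_\ell$, which in this configuration fails by definition for some element of $O'$ (and for $r$), so you cannot bound the aggregate skipped marginal by $\Gamma/e$ times their cost as claimed. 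You have neither established the intermediate bound nor provided a route to $(\nicefrac12 - \eps) f(OPT)$ in the crux case, so the proof as proposed has a genuine gap precisely where the paper's argument (the $O'$-telescoping you set aside) is needed.
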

\begin{proof}
If $f(\{r\}) \geq \nicefrac{1}{2} \cdot f(OPT)$, then the lemma follows immediately since $\{r\}$ is one of the sets considered for the output of Algorithm~\ref{alg:post_processing} on Line~\ref{line:output}. Therefore, we may assume that $f(\{r\}) \leq \nicefrac{1}{2} \cdot f(OPT)$, which implies $f(OPT - r) \geq f(OPT) - f(\{r\}) \geq \nicefrac{1}{2} \cdot f(OPT)$ by the submodularity of $f$. Since $c(OPT) \leq 1$ because $OPT$ is a feasible set, we get that $OPT - r$ is a set with a lot of value taking a very small part of the budget allowed (specifically, $c(OPT - r) = c(OPT) - c(r) \leq 1 - (1 - \eps) = \eps$). Below we show that the existence of such a set implies that $f(S_\ell)$ is large. We also assume below $f(S_\ell) \leq f(OPT - r)$ because otherwise the lemma follows immediately since $f(OPT - r) \geq \nicefrac{1}{2} \cdot f(OPT)$.

Consider first the case in which $c(S_\ell) \leq 1 - \eps$. Since $\max_{u \in OPT - r} c(u) \leq c(OPT - r) \leq \eps \leq 1 - c(S_\ell)$ in this case, Lemma~\ref{lem:unfilled_bound} gives us immediately, for $T = OPT - r$,
\[
	f(S_\ell)
	\geq
	f(OPT - r) - \frac{c(OPT - r)}{e} \cdot f(OPT)
	\geq
	\frac{f(OPT)}{2} - \frac{\eps \cdot f(OPT)}{e}
	\geq
	(\nicefrac{1}{2} - \eps) \cdot f(OPT)
	\enspace.
\]
It remains to consider the case of $c(S_\ell) \geq 1 - \eps$. Let $\bar{h}$ be the minimal $h$ such that $c(S_h) \geq 1 - \eps$. For every integer $0 \leq h < \bar{h}$, we have $\max_{u \in OPT - r} \leq c(OPT - r) \leq \eps \leq 1 - c(S_h)$, and therefore, applying Lemma~\ref{lem:greedy_bound} with $T = OPT - r$ yields
\[
	\frac{f(S_{h + 1}) - f(S_h)}{c(u_{h + 1})} 
	\geq
	\min\left\{(1 - \eps) \cdot \frac{f(OPT - r \mid S_h)}{c(OPT - r)}, \eps^{-1} \cdot f(OPT)\right\}
	\geq
	\frac{1 - \eps}{\eps} \cdot f(OPT - r \mid S_h)
	\enspace,
\]
where the second inequality holds because $c(OPT - r) \leq \eps$, and the monotonicity, submodularity and non-negativity of $f$ imply $f(OPT - r \mid S_h) \leq f(OPT \mid S_h) \leq f(OPT \mid \varnothing) \leq f(OPT)$. Rearranging the last inequality yields
\begin{align*}
	\mspace{36mu}&\mspace{-36mu}
	\left[1 - \frac{1 - \eps}{\eps} {}\cdot c(u_{h + 1})\right] \cdot [f(OPT - r) - f(S_h)]\\
	\geq{} &
	\left[1 - \frac{1 - \eps}{\eps} \cdot c(u_{h + 1})\right] \cdot f(OPT - r) + \frac{1 - \eps}{\eps} \cdot c(u_{h + 1}) \cdot f((OPT - r) \cup S_h) - f(S_{h + 1})
	\enspace,
\end{align*}
which by the monotonicity of $f$, our assumption that $f(S_h) \leq f(S_\ell) \leq f(OPT - r)$ and the inequality $1 - x \leq e^{-x}$ (which holds for every $x \in \bR$) implies
\[
	e^{-\eps^{-1}(1 - \eps) \cdot c(u_{h + 1})} \cdot [f(OPT - r) - f(S_h)]
	\geq
	f(OPT - r) - f(S_{h + 1})
	\enspace.
\]
Finally, by combining the last inequality for all integers $0 \leq h < \bar{h}$, we get
\begin{align*}
	f(OPT - r) - f(S_{\bar{h}})
	\leq{} &
	e^{-\eps^{-1}(1 - \eps) \cdot \sum_{0 \leq h < \bar{h}} c(u_{h + 1})} \cdot [f(OPT - r) - f(S_0)]\\
	\leq{} &
	e^{-\eps^{-1}(1 - \eps)^2} \cdot [f(OPT - r) - f(S_0)]\\
	\leq{} &
	e^{-\eps^{-1}(1 - \eps)^2} \cdot f(OPT - r)
	\leq
	2\eps \cdot f(OPT - r)
	\enspace,
\end{align*}
where the second inequality holds since $\sum_{0 \leq h < \bar{h}} c(u_{h + 1}) = c(S_{\bar{h}}) \geq 1 - \eps$ by the definition of $\bar{h}$ and the penultimate inequality follows from the non-negativity of $f$.

Rearranging the last inequality, we now get, using the monotonicity of $f$,
\[
	f(S_\ell)
	\geq
	f(S_{\bar{h}})
	\geq
	(1 - 2\eps) \cdot f(OPT - r)
	\geq
	(\nicefrac{1}{2} - \eps) \cdot f(OPT)
	\enspace.
	\qedhere
\]
\end{proof}

Notice that so far we have proved that the approximation ratio of Algorithm~\ref{alg:post_processing} is at least $\nicefrac{1}{2} - \eps$ when either Inequality~\eqref{eq:assumption} is violated or $c(r) \geq 1 - \eps$. The following lemma proves that the same approximation guarantee applies also when neither of these cases applies, which completes the proof of Theorem~\ref{thm:knapsack}.

\begin{lemma} \label{lem:bad_r_addition}
If $c(r) \leq 1 - \eps$, then Inequality~\eqref{eq:assumption} implies $f(S_\ell) \geq (\nicefrac{1}{2} - \eps) \cdot f(OPT)$.
\end{lemma}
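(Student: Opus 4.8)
The plan is to exploit the element $r$ of maximal cost in $OPT$ much like we used it in Lemma~\ref{lem:huge_r}, but now $r$ is ``small'' ($c(r) \leq 1 - \eps$), so the snapshots $S^{(i)}$ rather than $S_\ell$ alone will carry the argument. The key observation is that the family of thresholds $\eps(1+\eps)^i$ for $0 \leq i \leq \log_{1+\eps}\eps^{-1}$ covers the interval $[\eps, 1]$ in a fine multiplicative mesh, so there is an index $i^\star$ with $c(S^{(i^\star)}) \le 1 - c(r)$ but $c(S^{(i^\star+1)}) > 1 - c(r)$ (or $S^{(i^\star)}$ is already the last snapshot); the first inequality means the post-processing step was allowed to add $r$ to $S^{(i^\star)}$, so $f(S^{(i^\star)+}) \ge f(S^{(i^\star)} + r) = f(S^{(i^\star)}) + f(r \mid S^{(i^\star)})$, and by the choice of $u^{(i^\star)}$ this is the best single-element augmentation.

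The main estimate is a lower bound on $f(S^{(i^\star)}) + f(OPT \setminus (S^{(i^\star)} \cup \{r\}) \mid S^{(i^\star)})$. First I would apply Lemma~\ref{lem:greedy_bound} with $T = OPT$ across all the steps $h$ up to the one producing $S^{(i^\star)}$: since every such $S_h$ has $c(S_h) \le c(S^{(i^\star)}) \le 1 - c(r)$ and $r$ is the costliest element of $OPT$, the hypothesis $\max_{u \in OPT} c(u) = c(r) \le 1 - c(S_h)$ holds, so the greedy gain inequality is available and, after the same rearrangement and $1 - x \le e^{-x}$ manipulation used in the proof of Theorem~\ref{thm:cardinality_alg} and in Lemma~\ref{lem:huge_r}, gives
\[
	f(OPT) - f(S^{(i^\star)}) \leq e^{-(1-\eps) \cdot c(S^{(i^\star)})} \cdot f(OPT)\enspace.
\]
Here the exponent $c(S^{(i^\star)})$ is at least $1 - c(r) - O(\eps)$: either $S^{(i^\star)}$ is the final snapshot, forcing $c(S^{(i^\star)}) = c(S_\ell)$ which one argues is $\ge 1 - c(r)$ using that no element of $OPT \setminus S_\ell$ (in particular $r$, if $r \notin S_\ell$) could be appended; or $S^{(i^\star+1)}$ exists with $c(S^{(i^\star+1)}) > 1 - c(r)$, and since consecutive snapshots differ in cost by a factor at most $1+\eps$ we get $c(S^{(i^\star)}) \ge (1-c(r))/(1+\eps) \ge (1-c(r))(1-\eps)$. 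Combining, $f(S^{(i^\star)}) \ge (1 - e^{-(1-\eps)^2(1-c(r))}) f(OPT) \ge \big((1-\eps)^2(1-c(r)) - O(\eps)\big) f(OPT)$ for this regime, or just use $1 - e^{-x} \ge x - x^2/2$ more carefully; in any case $f(S^{(i^\star)}) \gtrsim (1 - c(r)) f(OPT)$ up to $O(\eps)$ losses.

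Next I would lower bound the augmentation gain. Because Inequality~\eqref{eq:assumption} says $f(S^{(i^\star)+}) < \tfrac12 f(OPT)$ and $f(S^{(i^\star)+}) \ge f(S^{(i^\star)}) + f(r \mid S^{(i^\star)})$, we also have $f(S^{(i^\star)+}) \ge f(S^{(i^\star)}) + f(u \mid S^{(i^\star)})$ for every feasible single element $u$, and in particular, summing over $OPT \setminus (S^{(i^\star)} \cup \{r\})$ and using submodularity, $f(OPT \mid S^{(i^\star)}) \le f(r \mid S^{(i^\star)}) + \sum_{u} f(u \mid S^{(i^\star)})$; but a cleaner route is: each element $u \in OPT$, $u \neq r$, has $c(u) \le c(r) \le 1 - c(r) \le 1 - c(S^{(i^\star)})$ wait—that needs $c(u) \le 1 - c(S^{(i^\star)})$, which holds since $c(u) \le c(r)$ and $c(S^{(i^\star)}) \le 1 - c(r)$. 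So every such $u$ was a candidate, giving $f(S^{(i^\star)+}) - f(S^{(i^\star)}) \ge f(u \mid S^{(i^\star)})$, hence by submodularity $f(S^{(i^\star)+}) - f(S^{(i^\star)}) \ge f(OPT \setminus \{r\} \mid S^{(i^\star)}) / |OPT \setminus \{r\}|$ — too weak. Instead I would add $f(r \mid S^{(i^\star)})$ (available since $r$ is feasible to append) plus $\max_u f(u \mid S^{(i^\star)})$ and note $f(r \mid S^{(i^\star)}) + f(OPT\setminus\{r\} \mid S^{(i^\star)}) \ge f(OPT \mid S^{(i^\star)})$, while $f(OPT\setminus\{r\} \mid S^{(i^\star)}) \le \max_u f(u\mid S^{(i^\star)}) \cdot$ ... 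The cleanest bookkeeping is: $2[f(S^{(i^\star)+}) - f(S^{(i^\star)})] \ge f(r\mid S^{(i^\star)}) + f(u^\star \mid S^{(i^\star)}) \ge f(OPT \mid S^{(i^\star)})$ is false in general, so the real plan is to combine $f(S^{(i^\star)+}) \ge f(S^{(i^\star)}) + f(r \mid S^{(i^\star)})$ with the Lemma~\ref{lem:unfilled_bound}-style bound applied to $T = OPT \setminus \{r\}$ at the snapshot $S^{(i^\star)}$ (whose remaining budget accommodates every $u \ne r$). Adding these two inequalities yields $f(S^{(i^\star)+}) + f(S_\ell) \ge f(S^{(i^\star)}) + f(r \mid S^{(i^\star)}) + f(OPT\setminus\{r\}) - \tfrac{c(OPT\setminus\{r\})}{e} f(OPT) \ge f(OPT) - \tfrac{1}{e} f(OPT)$ plus the already-established $f(S^{(i^\star)}) \gtrsim (1-c(r))f(OPT)$; since $f(S^{(i^\star)+}) < \tfrac12 f(OPT)$ by assumption, this forces $f(S_\ell) \ge (\tfrac12 - O(\eps)) f(OPT)$ after rescaling $\eps$ by a constant, which is the claim.

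The main obstacle I anticipate is the careful handling of the snapshot index $i^\star$ and the two sub-cases (final snapshot versus an interior one): one must verify that the multiplicative mesh $\{\eps(1+\eps)^i\}$ really does straddle the value $1 - c(r) \in [\eps, 1]$ — this is exactly why the loop in Algorithm~\ref{alg:post_processing} runs $i$ up to $\lfloor \log_{1+\eps}\eps^{-1}\rfloor$ — and that the $(1+\eps)$ gap between consecutive snapshots only costs an $O(\eps)$ factor in the exponent of the $e^{-(1-\eps)c(S^{(i^\star)})}$ bound. A secondary subtlety is that Lemma~\ref{lem:greedy_bound} and Lemma~\ref{lem:unfilled_bound} require the cost bound to hold for the specific set they are applied to; since $r$ is chosen as the maximum-cost element of $OPT$, the bound $\max_{u \in OPT} c(u) = c(r)$ and the interior-snapshot inequality $c(S_h) \le c(S^{(i^\star)}) \le 1 - c(r)$ are precisely what make both lemmas applicable, so the proof hinges on never losing track of which snapshot has enough spare budget.
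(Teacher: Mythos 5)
Your setup matches the paper's: you correctly identify the snapshot index (the paper's $i_r$, the maximal $i$ with $\eps(1+\eps)^{i} \leq 1 - c(r)$), and you correctly observe that since $c(S^{(i_r)}) \leq 1 - c(r)$ the element $r$ was an eligible augmentation, so Inequality~\eqref{eq:assumption} forces $f(S^{(i_r)} + r) \leq f(S^{(i_r)+}) < \nicefrac{1}{2}\cdot f(OPT)$. But from there your argument has two genuine gaps. First, your claim that $c(S^{(i^\star)}) \geq (1-c(r))/(1+\eps)$ (``consecutive snapshots differ in cost by a factor at most $1+\eps$'') is false: the snapshots are greedy prefixes, and a single added element can make the cost jump arbitrarily far past a threshold, so $c(S^{(i)})$ can be far below $\eps(1+\eps)^i$ (e.g.\ prefix costs $0.05$, then $0.85$: the snapshot for every threshold in $[0.05,0.85)$ has cost only $0.05$). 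Consequently your exponential-decay bound $f(OPT)-f(S^{(i^\star)}) \leq e^{-(1-\eps)c(S^{(i^\star)})}f(OPT)$ has an exponent you cannot lower bound, and the intermediate conclusion $f(S^{(i^\star)}) \gtrsim (1-c(r))f(OPT)$ does not follow. Second, your final combination invokes Lemma~\ref{lem:unfilled_bound} with $T = OPT - r$ applied to $S_\ell$, but its hypothesis $\max_{u \in OPT - r} c(u) \leq 1 - c(S_\ell)$ need not hold here (unlike in Lemma~\ref{lem:huge_r}, where $c(OPT-r)\leq\eps$): elements of $OPT - r$ can have cost up to $\min\{c(r),1-c(r)\}$ while $c(S_\ell)$ may be close to $1$. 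Even granting it, the resulting inequality $f(S^{(i^\star)+}) + f(S_\ell) \geq (1 - \tfrac{c(OPT-r)}{e})f(OPT)$ only yields $f(S_\ell) \geq (\tfrac{1}{2} - \tfrac{1}{e})f(OPT)$ when $c(r)$ is small, far short of the claim.

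The mechanism the paper uses, and which is missing from your proposal, is to turn \eqref{eq:assumption} into a \emph{uniform per-unit-cost} gain rather than a decay estimate: for every $0 \leq h \leq h_r$ (where $S_{h_r} = S^{(i_r)}$, and assuming $h_r < \ell$; the case $h_r = \ell$ is dispatched by Lemma~\ref{lem:unfilled_bound} with $T = OPT$), Lemma~\ref{lem:greedy_bound} with $T = OPT - r$ combined with submodularity, monotonicity and $f(S_{h_r}+r) < \nicefrac{1}{2}\cdot f(OPT)$ gives
\[
	\frac{f(S_{h+1}) - f(S_h)}{c(u_{h+1})}
	\geq
	(1-\eps)\cdot\frac{\nicefrac{1}{2}\cdot f(OPT)}{1 - c(r)}
	\enspace.
\]
Summing over $h = 0,\dotsc,h_r$ converts this into $f(S_{h_r+1}) \geq (1-\eps)\cdot\frac{f(OPT)/2}{1-c(r)}\cdot c(S_{h_r+1})$, and the point that rescues the cost lower bound is that $c(S_{h_r+1})$ --- the set one step \emph{past} the snapshot --- strictly exceeds the threshold $\eps(1+\eps)^{i_r} \geq \frac{1-c(r)}{1+\eps}$ by the maximality in the snapshot's definition, which is exactly the quantity your argument tried (incorrectly) to extract from $c(S^{(i^\star)})$ itself. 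This yields $f(S_\ell) \geq f(S_{h_r+1}) \geq (\nicefrac{1}{2}-\eps)\cdot f(OPT)$ directly, with no need for the exponential bound or for Lemma~\ref{lem:unfilled_bound} with $T = OPT - r$.
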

\begin{proof} 
Let $i_r$ be the maximal integer such that $\eps (1 + \eps)^{i_r} \leq 1 - c(r)$. Since we assume that $c(r) \leq 1 - \eps$, $i_r$ is non-negative. On the other hand, we also have $i_r \leq \log_{1 + \eps} \eps^{-1} $ because otherwise $\eps (1 + \eps)^{i_r} > 1 \geq 1 - c(r)$. Therefore, the set $S^{(i_r)}$ is well-defined, and one can observe that
\begin{equation} \label{eq:bad_r_addition}
	f(S^{(i_r)} + r)
	=
	f(S^{(i_r)}) + f(r \mid S^{(i_r)})
	\leq
	f(S^{(i_r)}) + f(u^{(i_r)} \mid S^{(i_r)})
	=
	f(S^{(i_r)+})
	<
	\nicefrac{1}{2} \cdot f(OPT)
	\enspace,
\end{equation}
where the first inequality follows from the way in which Algorithm~\ref{alg:post_processing} selects $u^{(i_r)}$ because $c(S^{(i_r)}) \leq \eps (1 + \eps)^{i_r} \leq 1 - c(r)$, and the second inequality follows from Inequality~\eqref{eq:assumption}.

Let $h_r$ be the index for which $S_{h_r} = S^{(i_r)}$. If $h_r = \ell$, then $f(S_\ell) \geq (1 - 1/e) \cdot f(OPT)$ by Lemma~\ref{lem:unfilled_bound} (for $T = OPT$) because $\max_{u \in OPT} c(u) = c(r) \leq 1 - c(S^{(i_r)}) = 1 - c(S_\ell)$. Therefore, we may assume $h_r < \ell$ in the rest of this proof. In particular, this implies that for every $0 \leq h \leq h_r$, by plugging $T = OPT - r$ into Lemma~\ref{lem:greedy_bound}, we get
\begin{align*}
	\frac{f(S_{h + 1}) - f(S_h)}{c(u_{h + 1})}
	\geq{} &
	\min\left\{(1 - \eps) \cdot \frac{f(OPT - r \mid S_h)}{c(OPT - r)}, \eps^{-1} \cdot f(OPT)\right\}\\
	\geq{} &
	\min\left\{(1 - \eps) \cdot \frac{f(OPT \mid S_h + r)}{c(OPT - r)}, \eps^{-1} \cdot f(OPT)\right\}\\
	\geq{} &
	\min\left\{(1 - \eps) \cdot \frac{f(OPT) - f(S_{h_r} + r)}{c(OPT - r)}, \eps^{-1} \cdot f(OPT)\right\}\\
	\geq{} &
	(1 - \eps) \cdot \frac{f(OPT) - f(S_{h_r} + r)}{1 - c(r)}
	\enspace,
\end{align*}
where the second inequality follows from the submoduarity of $f$; the third inequality follows from $f$'s monotonicity; and the last inequality holds since $f(S_{h_r} + r) \geq 0$, and $\max\{c(OPT - r), \eps\} \leq 1 - c(r)$ by the condition of the lemma and the fact that $OPT$ is a feasible solution. Plugging now Inequality~\eqref{eq:bad_r_addition} into the last inequality (and recalling that $S_{h_r} = S^{(i_r)}$) yields
\[
	\frac{f(S_{h + 1}) - f(S_h)}{c(u_{h + 1})}
	\geq
	(1 - \eps) \cdot \frac{\nicefrac{1}{2} \cdot f(OPT)}{1 - c(r)}
	\enspace.
\]

Summing up the last inequality for all integers $0 \leq h \leq h_r$ now gives
\begin{align*}
	f(S_{h_r + 1}) - f(S_0)
	\geq{} &
	\sum_{h = 0}^{h_r} [f(S_{h + 1}) - f(S_h)]
	\geq
	(1 - \eps) \cdot \frac{\nicefrac{1}{2} \cdot f(OPT)}{1 - c(r)} \cdot \sum_{h = 0}^{h_r} c(u_{h + 1})\\
	={} &
	(1 - \eps) \cdot \frac{\nicefrac{1}{2} \cdot f(OPT)}{1 - c(r)} \cdot c(S_{h_r + 1})
	\geq
	(1 - \eps) \cdot \frac{\nicefrac{1}{2} \cdot f(OPT)}{1 - c(r)} \cdot \eps (1 + \eps)^{i_r}\\
	\geq{} &
	(1 - \eps) \cdot \frac{\nicefrac{1}{2} \cdot f(OPT)}{1 - c(r)} \cdot \frac{1 - c(r)}{1 + \eps}
	\geq
	(\nicefrac{1}{2} - \eps) \cdot f(OPT)
	\enspace,
\end{align*}
where the third inequality holds since the definition of $S^{(i_r)}$ implies that $h_r$ is the maximal index for which $c(S_{h_r}) \leq \eps (1 + \eps)^{i_r}$, and the penultimate inequality follows from the definition of $i_r$.
\end{proof}

\section{Information-Theoretic Inapproximability Results} \label{sec:inapproximability}

In this section we prove our inapproximability results. These results assume that the algorithm is able to access the objective function $f$ only through a value oracle that given a set $S \subseteq \cN$ returns $f(S)$. This assumption allows us to prove inapproximability results that are based on information theoretic arguments, and therefore, are independent of any complexity assumptions.

The first result that we show is inapproximability for {\SMC} (see Section~\ref{sec:SMK} for the definition of this problem) showing  that, when the number $k$ of elements allowed in the solution is constant, one cannot obtain a constant approximation ratio for this problem using $o(n)$ value oracle queries.
\begin{restatable}{theorem}{thmConstraintCardinalityLowerBound} \label{thm:constant_cardinality_lower_bound}
Any (possibly randomized) algorithm guaranteeing an approximation ratio of $\alpha \in (0, 1]$ for {\SMCFull} (\SMC) must use $\Omega(\alpha n / k)$ value oracle queries. In particular, this implies that the algorithm must make $\Omega(n)$ value oracle queries when $\alpha$ and $k$ are constants.
\end{restatable}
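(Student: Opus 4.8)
The plan is to prove the bound with Yao's minimax principle together with an information‑hiding argument, in the spirit of the classical value‑oracle lower bounds for submodular maximization. By Yao, it suffices to construct a distribution over {\SMC} instances on which every \emph{deterministic} algorithm making at most $c\alpha n/k$ value‑oracle queries fails, with probability bounded away from $0$, to return an $\alpha$‑approximate solution (here $c>0$ is an absolute constant). I would take $\cN=[n]$ with the cardinality constraint that allows up to $k$ elements, draw a planted set $R\subseteq\cN$ with $|R|=k$ uniformly at random, and define $f_R(S)=\min\{\,|S\cap R|+\lambda|S|\,,\ k\,\}$, where $\lambda\in(0,\alpha)$ is a parameter to be tuned. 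Since $|S\cap R|+\lambda|S|=\sum_{u\in S}(\mathbf{1}[u\in R]+\lambda)$ is a non‑negative modular function, $f_R$ is non‑negative, monotone and submodular for every realization of $R$; moreover $f_R(R)=k$ while $f_R(S)\le k$ for all $S$, so $f(OPT)=k$.

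The first observation is that an $\alpha$‑approximate solution must be strongly correlated with the hidden set: if $T$ is feasible and $f_R(T)\ge\alpha k$, then $|T\cap R|\ge\alpha k-\lambda|T|\ge(\alpha-\lambda)k$, so the output must contain a constant fraction of $R$'s elements. The second, and central, observation is that the oracle leaks almost no information about $R$. Call a query $S$ \emph{masked} if $S\cap R=\varnothing$ or $\lambda|S|\ge k$; on a masked query $f_R(S)$ equals $\min\{\lambda|S|,k\}$, which does not depend on $R$. Hence running the deterministic algorithm on $f_R$ generates exactly the same transcript as running it on the fixed reference function $\bar f(S)=\min\{\lambda|S|,k\}$, up to (and excluding) the first \emph{unmasked} query; in particular the algorithm's query sequence $Q_1,\dots,Q_t$ and output $T_0$ under $\bar f$ are deterministic and known in advance. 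A query $Q_i$ can be unmasked only when $|Q_i|<k/\lambda$ and $Q_i\cap R\neq\varnothing$, an event of probability at most $|Q_i|\cdot k/n< k^2/(\lambda n)$ over the random choice of $R$.

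Combining these, a union bound over $Q_1,\dots,Q_t$ shows that if $t$ is below the target threshold then with constant probability no query is unmasked, so the algorithm is forced to output the fixed set $T_0$; but a fixed set of size at most $k$ meets a uniformly random $k$‑subset in at least $(\alpha-\lambda)k$ elements only with vanishing probability, since its expected intersection with $R$ is $|T_0|k/n\le k^2/n$ and Markov's inequality suffices once $n$ is large. Therefore the algorithm errs with constant probability, and choosing $\lambda=\Theta(\alpha)$ and unwinding the inequalities yields a lower bound of the form $\Omega(\mathrm{poly}(\alpha)\cdot n/k^{O(1)})$ on $t$; specializing to constant $\alpha$ and $k$ immediately gives the advertised $\Omega(n)$ bound of the ``in particular'' clause.

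The main obstacle is the quantitative bookkeeping needed to sharpen this to exactly $\Omega(\alpha n/k)$, rather than a weaker power of $\alpha$ and $k$. The crude ``$Q_i$ meets $R$ at all'' event used above already suffices for the headline constant‑parameter statement, but to obtain the stated dependence one should (i) take $\lambda$ as large as the required $1/\alpha$ multiplicative gap permits and track the precise threshold at which $f_R$ saturates, and (ii) refine the argument so that it charges not merely ``some query meets $R$'' but the fact that locating the $\Theta(\alpha k)$ planted elements needed by an $\alpha$‑approximation requires $\Omega(\alpha k)$ \emph{separate} informative queries — each unmasked query below the cap reveals only $|Q_i\cap R|$, typically a tiny number, and pinning down its elements costs still more queries. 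A minor point worth stating explicitly is that $f_R$ must be verified monotone and submodular for \emph{every} realization of $R$; the $\min\{\text{modular},\,k\}$ form makes this immediate.
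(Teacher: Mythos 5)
Your approach is in the same family as the paper's---Yao's minimax principle applied to a planted ``min of a non‑negative modular function and a constant'' objective---but the concrete construction differs in a way that matters quantitatively, and you yourself flag that it falls short of the stated $\Omega(\alpha n/k)$ bound.

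The paper plants a \emph{single} element $u$ and uses $f_u(S)=\min\{t,|S|\}$ when $u\notin S$ and $f_u(S)=t$ when $u\in S$, with cap $t=2k/\alpha$. A query $S$ leaks information only if $|S|<t$ and $u\in S$, an event of probability at most $t/n=O(k/(\alpha n))$ per query, so $o(\alpha n/k)$ queries leave the algorithm forced to output a set of value at most $k=(\alpha/2)\cdot f(OPT)$. You instead plant a $k$-subset $R$ and use $f_R(S)=\min\{|S\cap R|+\lambda|S|,\,k\}$ with cap $k$. Now a query can be unmasked whenever its size is below $k/\lambda$ and it merely \emph{touches} $R$, which happens with probability up to $(k/\lambda)\cdot(k/n)=k^2/(\lambda n)$ per query --- a factor $k$ more often than in the paper's construction. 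Pushing the union bound through, your argument caps the query count at $O(\lambda n/k^2)=O(\alpha n/k^2)$, which is weaker than $\Omega(\alpha n/k)$ by precisely that factor of $k$. Your verification that $f_R$ is non‑negative, monotone and submodular for every $R$ is fine, the masking bookkeeping is correct, and the Markov step on $|T_0\cap R|$ works; the entire loss is the size of the planted secret.

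So the ``in particular'' conclusion ($\Omega(n)$ queries for constant $k$ and $\alpha$) does follow from your construction, but the headline $\Omega(\alpha n/k)$ does not. The refinement you gesture at --- arguing that an $\alpha$-approximation must locate $\Theta(\alpha k)$ of the planted elements and that each unmasked query only reveals a small intersection count, so many ``informative'' queries are needed --- is not worked out, and it is not clear it closes the gap without essentially reorganizing the argument. The paper sidesteps the issue entirely by planting a singleton: when $OPT$ has a single needle, a small query collides with the secret only $O(1/n)$-per-element rather than $O(k/n)$-per-element, which is exactly where the extra factor of $k$ disappears. If you want to salvage your $k$-set construction, you would need an argument that distinguishes ``touching $R$'' from ``learning enough of $R$ to cover $\Theta(\alpha k)$ of it,'' since a single unmasked query returning a small intersection count does not pin down which elements of the query are in $R$; as written, your union bound charges a full failure for any touch and that is what costs the factor of $k$.
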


See Section~\ref{ssc:constant_cardinality_lower_bound} for the proof of Theorem~\ref{thm:constant_cardinality_lower_bound}. Let us now consider the approximability of {\SMC} in a different regime allowing larger values of $k$. Specifically, we consider the regime in which the ratio between $n$ and $k$ is some constant $\beta$. In this regime one can obtain $\beta$-approximation by simply outputting a uniformly random subset of $\cN$ of size $k$ (which requires no value oracle queries); thus, this regime cannot admit an extremely pessimistic inapproximability result like the one given by Theorem~\ref{thm:constant_cardinality_lower_bound}. Nevertheless, the following theorem shows that improving over the above-mentioned ``easy'' approximation guarantee requires a nearly-linear query complexity.

\begin{theorem} \label{thm:cardinality_lower_bound}
For any rational constant $\beta \in (0, 1)$ and constant $\eps > 0$, every (possibly randomized) algorithm for {\SMCFull} (\SMC) that guarantees an approximation ratio of $\beta + \eps$ for instances obeying $k = \beta n$ must use $\Omega(\tfrac{n}{\log n})$ value oracle queries. Moreover, this is true even when the objective function $f$ of {\SMC} is guaranteed to be a linear function.
\end{theorem}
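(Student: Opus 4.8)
The plan is to prove the lower bound via Yao's principle, using a family of hard instances built around a hidden, uniformly random ``good'' set. Fix $k = \beta n$, let $H$ be a uniformly random $k$-subset of $\cN$, and take the linear objective $f_H(S) = |S \cap H|$ (equivalently, every element of $H$ has weight $1$ and every other element has weight $0$). This $f_H$ is non-negative, monotone and submodular --- indeed linear --- so it is a legitimate instance of the strengthened statement; and since at most $k$ elements may be selected while exactly $k$ elements have positive weight, $f_H(OPT) = k$. A uniformly random feasible set $S$ satisfies $\mathbb{E}[f_H(S)] = \beta k = \beta \cdot f_H(OPT)$, so the job of an algorithm with approximation ratio $\beta + \eps$ is precisely to use its queries to output a feasible set $S$ with $|S \cap H|$ noticeably larger than the ``typical'' value $\beta k$.

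The key observation is that a value-oracle query to $f_H$ on a set $S$ returns $|S \cap H|$, an integer in $\{0, 1, \dots, n\}$; hence a deterministic algorithm making $q$ queries has a decision tree with at most $(n+1)^q$ leaves, and its output therefore ranges over a set $\mathcal{S}$ of at most $(n+1)^q$ feasible sets that does \emph{not} depend on $H$. For any fixed feasible set $S$ (so $|S| \le k$), $|S \cap H|$ is hypergeometrically distributed with mean $|S|\,k/n \le \beta k$, and Hoeffding's inequality for sampling without replacement gives $\Pr[\,|S \cap H| \ge (\beta + \eps/2)k\,] \le e^{-\eps^2 k/2}$. A union bound over $\mathcal{S}$ yields $\Pr\bigl[\exists S \in \mathcal{S}:\ |S \cap H| \ge (\beta + \eps/2)k\bigr] \le (n+1)^q e^{-\eps^2 k/2}$, which is $o(1)$ whenever $q \le \tfrac{\eps^2 \beta n}{4\ln(n+1)}$. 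In that regime $\mathbb{E}_H\bigl[\max_{S \in \mathcal{S}} |S \cap H|\bigr] \le (\beta + \eps/2)k + k\cdot o(1) < (\beta + \eps)k$ for large $n$, so no deterministic $q$-query algorithm can produce an $S$ with $\mathbb{E}_H[f_H(S)] \ge (\beta + \eps)\,f_H(OPT)$, since $f_H(S) \le \max_{S' \in \mathcal{S}}|S' \cap H|$ pointwise.

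Finally I would invoke Yao's minimax principle --- or, equivalently, simply average the previous bound over the internal coins of a randomized algorithm --- to conclude that any randomized algorithm guaranteeing ratio $\beta + \eps$ on all instances with $k = \beta n$ must make at least $\tfrac{\eps^2 \beta n}{4\ln(n+1)} = \Omega(n/\log n)$ queries; since $\beta$ is rational, such instances exist for every $n$ divisible by the denominator of $\beta$, which is enough for the asymptotic claim. The only step that needs genuine care is the second one: one must be sure the leaf count is really $(n+1)^q$ --- this relies on the oracle answers living in an $O(n)$-size set, which is exactly what makes the bound $n/\log n$ rather than $n$ --- and that the hypergeometric tail is applied with the right parameters; the rest is bookkeeping. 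If one prefers all weights to be strictly positive, replacing the weights $0$ and $1$ by $1$ and a suitable constant $c = c(\eps,\beta)$ gives an instance with integer oracle values in $[0, cn]$ that behaves identically, after translating the ratio $\beta + \eps$ into a constant-fraction overlap requirement $|S \cap H| \ge (\beta + \eps')k$ via the relation $\eps' = \tfrac{\eps c - (1-\beta)}{c-1}$.
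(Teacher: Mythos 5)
Your proof is correct, and it takes a genuinely different route from the paper's despite sharing the same core construction (a uniformly random hidden $k$-subset with a $\{0,1\}$-weighted linear objective) and the same hypergeometric tail bound (your $e^{-\eps^2 k/2}$ matches Inequality~\eqref{eq:fraction_large_intersection}). The paper reaches Theorem~\ref{thm:cardinality_lower_bound} through the intermediate \SI{} problem, proving Proposition~\ref{prop:set_identification} via an \emph{adaptive adversary} that maintains a list $\cL$ of candidate hidden sets consistent with all answers so far, answers each query so as to shrink $\cL$ by at most a factor of $k+1$ (Observation~\ref{obs:query_reduction}), and then invokes Lemma~\ref{lem:bad_set_guarantee} to show that a large surviving $\cL$ must contain a set meeting the algorithm's output in few elements. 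You instead bound the number of distinct outputs of a $q$-query deterministic decision tree by roughly $(n+1)^q$ and take a union bound over those outputs against a truly random $H$; this is a dual, average-case view of the same information bound, which is more self-contained and avoids the \SI{} abstraction entirely, but is correspondingly less modular --- the paper's $\cC$-parameterized version of Proposition~\ref{prop:set_identification} is exactly what gets reused for the \USM{} lower bound in Theorem~\ref{thm:unconstrained_lower_bound}. Your treatment of the randomized case by simply averaging over internal coins is also a bit lighter than the paper's Markov-style reduction in Proposition~\ref{prop:set_identification_randomized}, and equally valid. Two small technical remarks: the decision tree has at most $\sum_{i=0}^q (n+1)^i < 2(n+1)^q$ leaves (paths may halt early) rather than exactly $(n+1)^q$, and since the oracle's answers in fact lie in $\{0,\dots,k\}$ one can sharpen to $(k+1)^q$ as the paper does; neither change affects the $\Omega(n/\log n)$ conclusion.
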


We prove in Section~\ref{sec:cardinality_lower_deterministic} a restricted version of Theorem~\ref{thm:cardinality_lower_bound} that applies only to deterministic algorithms. Then, Section~\ref{sec:cardinality_lower_randomized} extends the proof to randomized algorithms, which implies Theorem~\ref{thm:cardinality_lower_bound}. Table~\ref{tbl:cardinality_state} summarizes our knowledge about the number of value oracle queries necessary for obtaining various approximation guarantees for {\SMC}.

\begin{table}
\begin{center}
\caption{Number of value oracle queries that are required and sufficient to obtain any given approximation ratio for {\SMCFull} (\SMC). The variable $\eps$ should be understood as an arbitrarily small constant.} \label{tbl:cardinality_state}
\medskip
\begin{tabular}{c|c|m{9cm}}
\multirow{2}{*}{\textbf{Regime}} & \textbf{Approximation Ratio} & \multicolumn{1}{c}{\textbf{Query Complexity}}\\
& \textbf{Range} &\\
\hline
\multirow[b]{2}{*}[-1mm]{Constant $k$} & $[\eps, 1 - 1/e - \eps]$ & Can be obtained using $O(n)$ queries by Theorem~\ref{thm:cardinality_alg} or~\cite{mirzasoleiman2015lazier}, and requires $\Omega(n)$ queries by Theorem~\ref{thm:constant_cardinality_lower_bound}. \\[5mm]
& $[1 - 1/e, 1]$ & Exact solution can be obtained using $O(n^k)$ queries via brute force; $\Omega(n)$ queries are necessary by Theorem~\ref{thm:constant_cardinality_lower_bound}. \\
\hline
\multirow[b]{2}{*}[-12mm]{$k = \beta n$} & $[0, \beta]$ & Can be obtained by outputting a uniformly random subset of the right size ($0$ queries).\\[5mm]
& $[\beta + \eps, 1 - 1/e - \eps]$ & Can be obtained using $O(n)$ queries by Theorem~\ref{thm:cardinality_alg} or~\cite{mirzasoleiman2015lazier}, and requires $\Omega(\frac{n}{\log n})$ queries by Theorem~\ref{thm:cardinality_lower_bound}.\\[5mm]
& $1 - 1/e$ & Can be obtained by the greedy algorithm using $O(nk)$ queries~\cite{nemhauser1078analysis}; requires $\Omega(\frac{n}{\log n})$ queries by Theorem~\ref{thm:cardinality_lower_bound}.\\[5mm]
& $[1 - 1/e + \eps, 1]$ & Requires exponentially many queries~\cite{nemhauser1978best}.
\end{tabular}
\end{center}
\end{table}

Using the technique of Theorem~\ref{thm:cardinality_lower_bound}, we can also prove another interesting result. In the {\USMFull} problem (\USM), the input consists of a non-negative (not necessarily monotone) submodular function $f\colon 2^\cN \to \nnR$. The objective is to output a set $S \subseteq \cN$ maximizing $f$. Our result for {\USM} is given by the next theorem. This theorem nearly completes our understanding of the minimum query complexity necessary to obtain various approximation ratios for {\USM} (see Table~\ref{tbl:usm_results} for more detail).
\begin{restatable}{theorem}{ThmUnconstrainedLowerBound} \label{thm:unconstrained_lower_bound}
For any constant $\eps > 0$, every algorithm for {\USMFull} (\USM) that guarantees an approximation ratio of $\nicefrac{1}{4} + \eps$ must use $\Omega(\tfrac{n}{\log n})$ value oracle queries.
\end{restatable}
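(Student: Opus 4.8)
The plan is to mirror the hiding construction behind Theorem~\ref{thm:cardinality_lower_bound}: plant a random structure whose identification is information‑theoretically expensive, and show that without identifying it one cannot beat the trivial guarantee. For {\USM} the trivial guarantee is $\nicefrac{1}{4}$, achieved with zero queries by outputting a uniformly random subset of $\cN$ (the known fact that a uniformly random set is a $\nicefrac{1}{4}$‑approximation for non‑negative submodular maximization). Concretely, assume $n$ is even, let the hidden object be a uniformly random balanced partition $\cN = A \cup B$ with $|A| = |B| = n/2$, and define $f_A$ to be the directed‑cut function of the complete bipartite digraph whose arcs all point from $A$ to $B$, i.e., $f_A(S) = |S \cap A| \cdot |B \setminus S|$. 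A direct check shows $f_A$ is non‑negative and submodular: adding an element of $A$ to $S$ has marginal $|B \setminus S|$ and adding an element of $B$ has marginal $-|S \cap A|$, and both marginals only weaken as $S$ grows. Moreover $f_A(OPT) = |A| \cdot |B| = n^2/4$, attained at $S = A$; and a uniformly random subset $R$ has $\mathbb{E}[f_A(R)] = \mathbb{E}[|R \cap A|] \cdot \mathbb{E}[|B \setminus R|] = n^2/16$, so $\nicefrac{1}{4}$ is exactly the threshold at stake. The key structural fact is that $f_A(S) = |S \cap A| \cdot (n/2 - |S| + |S \cap A|)$ depends on $S$ only through $|S|$ and $|S \cap A|$; since the algorithm knows $|S|$, a single query reveals at most $\log_2(n/2 + 1) = O(\log n)$ bits about the partition.

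For deterministic algorithms I would run an adversary/decision‑tree counting argument as in Section~\ref{sec:cardinality_lower_deterministic}. A $q$‑query algorithm induces a decision tree with branching factor at most $n/2 + 1$ (the number of possible values of $|S \cap A|$), hence at most $(n/2 + 1)^q$ leaves, so some leaf $\ell$ is consistent with a family $\cC_\ell$ of at least $\binom{n}{n/2} / (n/2+1)^q$ balanced sets. Let $S^{*}$ be the set output at $\ell$ and put $s = |S^{*}|$. The inequality $f_A(S^{*}) \geq (\nicefrac{1}{4} + \eps) f_A(OPT)$ reads $|S^{*} \cap A| \cdot (n/2 - s + |S^{*} \cap A|) \geq (\nicefrac{1}{4} + \eps) \tfrac{n^2}{4}$; since the value at $|S^{*} \cap A| = s/2$ is at most $n^2/16$ by AM--GM and the rate at which the value grows in $|S^{*} \cap A|$ is $O(n)$, this forces $|S^{*} \cap A| \geq s/2 + \Omega(\eps n)$, uniformly in $s$. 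But $s/2$ is exactly the mean of the hypergeometric random variable $|S^{*} \cap A|$ obtained from a uniformly random balanced $A$, so a Hoeffding‑type tail bound shows that at most $\binom{n}{n/2} \cdot e^{-c(\eps) n}$ balanced sets make $f_A(S^{*})$ this large, where $c(\eps) = \Omega(\eps^2)$. Comparing the two counts, whenever $q \leq c(\eps) n / (2 \log n)$ we get $\binom{n}{n/2}/(n/2+1)^q > \binom{n}{n/2} e^{-c(\eps) n}$, so $\cC_\ell$ contains a balanced $A$ with $f_A(S^{*}) < (\nicefrac{1}{4} + \eps) f_A(OPT)$; feeding the algorithm this instance defeats it, proving the $\Omega(n / \log n)$ bound.

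The extension to randomized algorithms follows Section~\ref{sec:cardinality_lower_randomized} via Yao's principle: it suffices to fool deterministic algorithms against the distribution picking $A$ uniformly among balanced sets (note $f_A(OPT) = n^2/4$ for all such instances, so ratios average cleanly). Call a leaf \emph{successful} if $\mathbb{E}_{A \sim \mathrm{Unif}(\cC_\ell)}[f_A(S^{*}_\ell)] > (\nicefrac{1}{4} + \eps) f_A(OPT)$; an averaging argument shows a successful leaf must have an $\Omega(\eps)$‑fraction of its instances in the ``rich'' deviation range for $S^{*}_\ell$, whence $|\cC_\ell| = O(\eps^{-1}) \cdot \binom{n}{n/2} e^{-c(\eps) n}$ and the probability of reaching $\ell$ is $O(\eps^{-1}) e^{-c(\eps) n}$. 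Summing over the at most $(n/2+1)^q$ leaves, the total probability of reaching a successful leaf is $o(1)$ once $q \leq c(\eps) n / (2 \log n)$, so the expected approximation ratio is at most $\nicefrac{1}{4} + \eps + o(1)$; rescaling $\eps$ completes the proof. The step I expect to be the main obstacle is the quantitative core of the second paragraph: converting the approximation inequality into a clean ``$|S^{*} \cap A|$ deviates from its mean by $\Omega(n)$'' statement that holds uniformly over the algorithm‑chosen cardinality $s = |S^{*}|$, and pairing it with a tail bound strong enough (exponentially small in $n$) that the resulting query threshold is $\Omega(n / \log n)$ and not something weaker; the remaining bookkeeping is parallel to the proof of Theorem~\ref{thm:cardinality_lower_bound}.
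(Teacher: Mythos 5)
Your proposal is correct, and it reaches the theorem by a genuinely different route than the paper. You use the same hard family of instances --- the directed cut functions $f_A(S)=|S\cap A|\cdot(n/2-|S\setminus A|)$ indexed by a hidden balanced set $A$ --- but you argue directly on {\USM}: a decision-tree/counting argument (branching factor at most $n/2+1$ since each answer is determined by $|S\cap A|$), the quantitative observation that $f_A(S^*)\geq(\nicefrac{1}{4}+\eps)\cdot n^2/4$ forces $|S^*\cap A|\geq |S^*|/2+\Omega(\eps n)$ uniformly in $|S^*|$ (your AM--GM plus the expansion $\delta n/2+\delta^2$ does give $\delta\geq \eps n/4$, and on the feasible range the value is monotone in $|S^*\cap A|$, so the step is sound), a Hoeffding tail for the hypergeometric overlap, and a Yao/averaging step for randomized algorithms and expectation guarantees. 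The paper instead proves the theorem by a black-box reduction to the {\SI} problem (Algorithm~\ref{alg:usm_si_reduction}): it runs the assumed {\USM} algorithm on the same cut function, replaces the output by its complement if that is better, randomly pads or subsamples it to size $n/2$, shows the result has expected overlap at least $(\nicefrac{1}{2}+\nicefrac{\eps}{4})\cdot n/2$ with $C^*$, and then invokes the already-established randomized {\SI} lower bound (Lemma~\ref{lem:SI_hardness_parameters}), which itself contains the adversary, Stirling/hypergeometric counting, and the deterministic-to-randomized averaging you re-derive. What the paper's reduction buys is reuse of the {\SI} machinery (shared with Theorem~\ref{thm:cardinality_lower_bound}) and a very short USM-specific argument that handles expectation guarantees natively; what your direct argument buys is a self-contained proof that avoids the complementation/rounding post-processing entirely and makes the $\eps$-dependence explicit as $\Omega(\eps^2 n/\log n)$, matching the paper's bound. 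No gaps beyond constants; just note that the deterministic paragraph alone would not suffice for the theorem as stated, but your third paragraph covers the randomized/expectation case correctly via the easy direction of Yao's principle.
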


The proof of Theorem~\ref{thm:unconstrained_lower_bound} can be found in Section~\ref{ssc:unconstrained_lower_bound}. The necessity of the $\log n$ term in Theorems~\ref{thm:cardinality_lower_bound} and~\ref{thm:unconstrained_lower_bound} is discussed in Appendix~\ref{app:technique_power}. Specifically, we show in this appendix an algorithm that can be used to solve (exactly) the hard problem underlying these theorems using $O(n / \log n)$ oracle queries. This algorithm is not efficient in terms of its time complexity, but it shows that one cannot prove a lower bound stronger than $\Omega(n / \log n)$ for this problem based on information theoretic arguments alone.

\begin{table}
\begin{center}
\caption{Number of value oracle queries required and sufficient to obtain any given approximation ratio for {\USMFull} (\USM). The variable $\eps$ should be understood as an arbitrarily small constant.} \label{tbl:usm_results}
\medskip
\begin{tabular}{c|c|c}
\textbf{Approximation Ratio} & \textbf{Required Query} & \textbf{Sufficient Query}\\
\textbf{Range} & \textbf{Complexity} & \textbf{Complexity}\\
\hline
$[0, \nicefrac{1}{4}]$ & $0$ & \multicolumn{1}{m{2.1in}}{\hfill$0$\hspace*{\fill}\newline An approximation ratio of $\nicefrac{1}{4}$ can be obtained by outputting a uniformly random subset of $\cN$~\cite{feige2011maximizing}.}\\
\hline
$[\nicefrac{1}{4} + \eps, \nicefrac{1}{2}]$ & $\Omega(\frac{n}{\log n})\quad$(Theorem~\ref{thm:unconstrained_lower_bound}) & \multicolumn{1}{m{2.1in}}{\hfill$O(n)$\hspace*{\fill}\newline Obtained by the Double Greedy algorithm of~\cite{buchbinder2015tight}.}\\
\hline
$[\nicefrac{1}{2} + \eps, 1]$ & \multicolumn{1}{m{2.1in}|}{Requires an exponential number of queries~\cite{feige2011maximizing}.} & --
\end{tabular}
\end{center}
\end{table}

\subsection{Inapproximability of {\SMC} for Constant \texorpdfstring{$k$}{k} Values} \label{ssc:constant_cardinality_lower_bound}

In this section we prove Theorem~\ref{thm:constant_cardinality_lower_bound}, which we repeat here for convinience.
\thmConstraintCardinalityLowerBound*
\begin{proof}
Let $t = 2k / \alpha$, and for every element $u \in \cN$, consider the set function $f_u\colon \cN \to \nnR$ defined, for every set $S \subseteq \cN$, as
\[
	f_u(S) =
	\begin{cases}
		\min\{t, |S|\} & \text{if $u \not \in S$} \enspace,\\
		t & \text{if $u \in S$} \enspace.
	\end{cases}
\]
Below we consider a distribution $\cD$ over instances of {\SMC} obtained by choosing $u$ uniformly at random out of $\cN$, and we show that this distribution is hard for every deterministic algorithm that uses $o(\alpha n / k)$ value oracle queries, which implies the theorem by Yao's principle.

Let $ALG$ be any deterministic algorithm using $o(\alpha n / k)$ value oracle queries, and let $S_1, S_2, \dotsc, S_\ell$ be the sets on which $ALG$ queries the value oracle of the objective function when this function is chosen to be $\min\{t, |S|\}$. We assume without loss of generality that $S_{\ell}$ is the output set of $ALG$. Let $E$ be the event that $u$ does not belong to any of sets $S_1, S_2, \dotsc, S_\ell$ that are of size less than $t$. 
Since $ALG$ is deterministic, whenever the event $E$ happens, $ALG$ makes the same set of value oracle queries when given either $\min\{t, |S|\}$ or $f_u$ as the objective function, and outputs $f_u(S_\ell) = \min\{t, |S|\} \leq |S_\ell| \leq k$ in both cases. Moreover, even when the event $E$ does not happen, $ALG$ still outputs a set of value at most $t$ since $f_u$ never returns larger value. Finally, observe that the probability of the event $E$ is at least $1 - \ell t / n$, and therefore, when the input instance is drawn from the distribution $\cD$, the expected value of the output of $ALG$ is at most
\[
	\frac{\ell t}{n} \cdot t + \left(1 - \frac{\ell t}{n} \right) \cdot k
	=
	\frac{\ell t^2}{n} + k - \frac{\ell tk}{n}
	\leq
	\frac{\ell t^2}{n} + k
	\enspace.
\]

It remains to observe that the optimal solution for every instance in the support of the distribution $\cD$ is $\{u\}$, whose value according to $f_u$ is $t$. Therefore, the approximation ratio of $ALG$ with respect to an instance of {\SMC} drawn from $\cD$ is no better than
\[
	\frac{\ell t^2/n + k}{t}
	=
	\frac{\ell t}{n} + \frac{k}{t}
	=
	\frac{2\ell k}{\alpha n} + \frac{\alpha}{2}
	=
	o(1) + \frac{\alpha}{2}
	\enspace,
\]
where the last equality holds since $ALG$ makes $o(\alpha n / k)$ queries, and $\ell$ is the number of queries that it makes given a particular input. Hence, $\cD$ is a hard distribution in the sense that $ALG$ is not an $\alpha$-approximation algorithm against an instance of {\SMC} drawn from $\cD$.
\end{proof}

\subsection{Inapproximability of {\SMC} for Deterministic Algorithms and Large \texorpdfstring{$k$}{k} Values} \label{sec:cardinality_lower_deterministic}

We begin this section by considering the following problem, termed {\SI}, which has a rational parameter $\beta \in (0, 1)$. An algorithm for this problem has access to a ground set $\cN$ of size $n$ and a non-empty collection $\cC$ of subsets of $\cN$ of size $k = \beta n$ (formally, $\cC \subseteq \{S \subseteq \cN \mid |S| = k\})$. Each instance of {\SI} also includes some hidden set $C^* \in \cC$, which the algorithm can access only via an oracle that answers the following type of queries: given a set $S \subseteq \cN$, the oracle returns the size of $S \cap C^*$. The objective of the algorithm is to output a set $S$ of size $k$ whose intersection with $C^*$ is as large as possible.

The majority of this section is devoted to proving the following proposition regarding deterministic algorithms for {\SI}. We then show that this proposition implies a version of Theorem~\ref{thm:cardinality_lower_bound} for deterministic algorithms.
\begin{proposition} \label{prop:set_identification}
For every $\alpha \in (\beta, 1]$ and set $\cC$, any deterministic $\alpha$-approximation algorithm for {\SI} must make $\frac{(k + 1) \cdot [2(\alpha - \beta)^2 + \ln \beta + (\beta^{-1} - 1) \ln (1 - \beta) - k^{-1}\ln n] + \ln |\cC|}{\ln (k + 1)}$ oracle queries.
\end{proposition}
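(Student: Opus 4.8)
The plan is to prove Proposition~\ref{prop:set_identification} via an information-theoretic / adversary argument: a deterministic algorithm making too few queries cannot distinguish enough candidate sets in $\cC$, so an adversary can force the output to have small intersection with the hidden set $C^*$. Concretely, each oracle query returns $|S \cap C^*|$, an integer in $\{0, 1, \dotsc, k\}$, so a query has at most $k+1$ possible answers. Hence after $q$ queries the algorithm has followed one of at most $(k+1)^q$ possible answer-transcripts, and its output set is a deterministic function of the transcript; so the algorithm can produce at most $(k+1)^q$ distinct output sets over the choice of $C^* \in \cC$. The adversary strategy is standard: answer queries consistently with as large a surviving subfamily of $\cC$ as possible (always pick the most popular answer), so that after $q$ queries at least $|\cC| / (k+1)^q$ candidates remain consistent with the transcript; for the algorithm to be an $\alpha$-approximation it must output a set $S$ with $|S \cap C| \geq \alpha k$ for \emph{every} surviving candidate $C$, and we will show this is impossible unless the surviving family is small, forcing $q$ to be large.

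The heart of the argument is therefore a counting/packing bound: how many $k$-subsets $C$ of an $n$-set can simultaneously satisfy $|S \cap C| \geq \alpha k$ for one fixed $k$-set $S$? I would bound this by a union over the possible ``overlap patterns'': the number of $k$-subsets $C$ with $|S \cap C| = j$ is $\binom{k}{j}\binom{n-k}{k-j}$, and we sum this over $j \geq \alpha k$. Using the fact that $\alpha > \beta$, so that $\alpha k$ is a larger-than-typical overlap, this sum is dominated by its first term and can be bounded (via standard binomial/entropy estimates, $\binom{m}{\theta m} \leq e^{m H(\theta)}$ with $\ln(1+x)\le x$ type slack) by something of the form $n \cdot \exp\big(-k[2(\alpha-\beta)^2 + \ln\beta + (\beta^{-1}-1)\ln(1-\beta)] + O(1)\big)$ — one recognizes here exactly the quantity $2(\alpha-\beta)^2 + \ln\beta + (\beta^{-1}-1)\ln(1-\beta)$ that appears in the statement, which is essentially a KL-divergence / Chernoff exponent for a hypergeometric tail (the probability that a random $k$-subset meets a fixed $k$-subset in at least $\alpha k$ elements), and the extra $\tfrac{1}{k}\ln n$ is the slack absorbing the polynomial prefactor. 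Let me denote this bound by $M$. Then a single output set $S$ ``covers'' at most $M$ members of $\cC$ in the sense of being an $\alpha$-approximation for them.

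Combining the two halves: the algorithm's at most $(k+1)^q$ possible outputs can collectively be $\alpha$-approximate for at most $(k+1)^q \cdot M$ members of $\cC$. Since the algorithm must be $\alpha$-approximate for \emph{all} of $\cC$ — or, running the adversary argument properly, for the at least $|\cC|/(k+1)^q$ members that survive while also being forced onto a single output — we need $(k+1)^q \cdot M \geq |\cC|$, equivalently $(k+1)^q \geq |\cC| / M$. Taking logarithms base $k+1$ and plugging in the bound on $M$ gives
\[
	q \;\geq\; \frac{\ln|\cC| - \ln M}{\ln(k+1)} \;\geq\; \frac{\ln|\cC| + (k+1)\big[2(\alpha-\beta)^2 + \ln\beta + (\beta^{-1}-1)\ln(1-\beta) - k^{-1}\ln n\big]}{\ln(k+1)}\enspace,
\]
which is exactly the claimed lower bound (the $k$ in the exponent is replaced by $k+1$ to harmlessly absorb the constant slack and make the expression match). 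The main obstacle I anticipate is making the binomial-tail estimate for $M$ tight enough to land precisely the constant $2(\alpha-\beta)^2 + \ln\beta + (\beta^{-1}-1)\ln(1-\beta)$ rather than some looser exponent: one has to carefully use the Chernoff/KL bound for the hypergeometric distribution (or equivalently, a careful Stirling estimate of $\binom{k}{j}\binom{n-k}{k-j}/\binom{n}{k}$ at $j = \alpha k$), track that the geometric-series tail beyond the leading term only costs a constant factor, and verify that the remaining polynomial-in-$n$ and constant factors are all dominated by the advertised $-\tfrac{1}{k}\ln n$ term and the passage from $k$ to $k+1$. The adversary bookkeeping (most-popular-answer, surviving subfamily) and the ``one output covers $\le M$ candidates'' step are routine by comparison.
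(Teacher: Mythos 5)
Your proposal follows essentially the same route as the paper's proof: a transcript/adversary argument showing that a deterministic algorithm making $q$ queries can be forced into at most $(k+1)^q$ distinguishable answer sequences, combined with a hypergeometric-tail (Hoeffding) bound plus a Stirling estimate of $\binom{n}{k}$ to bound how many members of $\cC$ a single output set can $\alpha$-approximate, and then solving for $q$. The only cosmetic difference is that your direct pigeonhole yields the bracketed term multiplied by $k$ rather than $k+1$; since that bracket is nonpositive (the number of $k$-sets meeting a fixed $k$-set in at least $\alpha k$ elements is at least one, which forces $2(\alpha-\beta)^2 k \leq \ln\binom{n}{k} \leq \ln n - k\ln\beta - (n-k)\ln(1-\beta)$), your bound is at least as strong as the stated one, so the substitution is indeed harmless.
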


Let $ALG$ be an arbitrary deterministic algorithm for {\SI}. To prove Proposition~\ref{prop:set_identification}, we need to design an adversary that selects a hidden set $C^*$ that is ``bad'' for $ALG$. Our adversary does not immediately commit to a particular set $C^*$. Instead, it maintains a list $\cL$ of candidate sets that are consistent with all the oracle queries that $ALG$ has performed so far. In other words, the set $\cL$ originally includes all the sets of $\cC$. Then, whenever $ALG$ queries the oracle on some set $S$, the adversary returns some answer $a$ and eliminates from $\cL$ every set $C$ for which $|S \cap C| \neq a$. To fully define the adversary, we still need to explain how it chooses the answer $a$ for the oracle query. As the adversary wishes to keep the algorithm in the dark for as long as possible, it chooses the answer $a$ to be the one that reduces the list $\cL$ by the least amount.

Let $\cL_i$ be the list $\cL$ after the adversary answers $i$ queries.
\begin{observation} \label{obs:query_reduction}
For every integer $i \geq 0$, $|\cL_i| \geq |\cC| / (k + 1)^i$.
\end{observation}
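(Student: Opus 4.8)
The plan is to analyze a single query step and argue that the adversary's greedy choice of answer cannot shrink $\cL$ by more than a factor of $k+1$; the observation then follows by a trivial induction on $i$, with the base case $|\cL_0| = |\cC|$ being immediate from the definition of the adversary.

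First I would fix an integer $i \geq 1$ and consider the $i$-th query, say on a set $S \subseteq \cN$. For each possible answer value $a$, let $\cL_{i-1}^{(a)} \subseteq \cL_{i-1}$ be the collection of sets $C \in \cL_{i-1}$ with $|S \cap C| = a$. The key point is that the possible answers are limited: since every $C \in \cC$ has $|C| = k$, we have $|S \cap C| \in \{0, 1, \dots, k\}$, so there are at most $k+1$ distinct values of $a$ that can occur. (If $|S| < k$ the range is even smaller, but the bound $k+1$ always holds.) Hence the sets $\cL_{i-1}^{(0)}, \dots, \cL_{i-1}^{(k)}$ partition $\cL_{i-1}$ into at most $k+1$ parts, so by averaging (pigeonhole) the largest part has size at least $|\cL_{i-1}|/(k+1)$.

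Next I would invoke the definition of the adversary: it returns the answer $a$ that \emph{minimizes} the reduction of $\cL$, i.e.\ that keeps $\cL$ as large as possible. After answering $a$, the new list is exactly $\cL_i = \cL_{i-1}^{(a)}$, which is the surviving set of candidates consistent with all answers so far. Choosing $a$ to be the index of the largest part gives $|\cL_i| = |\cL_{i-1}^{(a)}| \geq |\cL_{i-1}|/(k+1)$. Unwinding this inequality from $i$ down to $0$ and using $|\cL_0| = |\cC|$ yields $|\cL_i| \geq |\cC|/(k+1)^i$, which is the claimed bound.

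I do not anticipate a serious obstacle here — the only thing to be careful about is pinning down precisely that the number of possible oracle answers on a query set $S$ is at most $k+1$ (this uses the crucial structural fact that all candidate sets have the exact same size $k$), and that the adversary's stated rule is exactly "return the answer whose consistent sublist is largest," so that the surviving list after the query is the largest of the at most $k+1$ parts. Once those two points are in place the argument is a one-line pigeonhole estimate plus induction.
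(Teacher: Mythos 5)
Your argument is correct and is essentially the paper's own proof: both observe that each query admits at most $k+1$ possible answers (since every candidate set has size exactly $k$), apply the pigeonhole principle to conclude the adversary can always keep at least a $1/(k+1)$ fraction of the list, and finish by induction from $|\cL_0| = |\cC|$. No gaps; your extra care in spelling out the partition into classes $\cL_{i-1}^{(a)}$ just makes explicit what the paper states in one line.
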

\begin{proof}
Whenever the adversary has to answer a query regarding some set $S$, the answer corresponding to each set in the list $\cL$ is some number in $\{0, 1, \dotsc, k\}$. Therefore, at least one of these answers is guaranteed to reduce the list only by a factor of $|\{0, 1, \dotsc, k\}| = k + 1$. Thus, after $i$ queries have been answered, the size of the list $\cL$ reduces by at most a factor of $(k + 1)^i$. To complete the proof of the observation, we recall that the original size of $\cL$ is $|\cC|$.
\end{proof}

Once $ALG$ terminates and outputs some set $S$ of size $k$, our adversary has to decide what set of $\cL$ is the ``real'' hidden set $C^*$, which it does by simply choosing the set $C^* \in \cL$ with the least intersection with $S$. If the list $\cL$ is still large at this point, then the adversary is guaranteed to be able to find in it a set with a low intersection with $S$. The following lemma quantifies this observation.
\begin{lemma} \label{lem:bad_set_guarantee}
For every set $S \subseteq \cN$ of size $k$, a non-empty subset $\cL \subseteq \cC$ must include a set $C \in \cL$ such that $|S \cap C| < \beta k + k\sqrt{(k^{-1}\ln n - \ln \beta - (\beta^{-1} - 1) \ln (1 - \beta) - \tfrac{1}{k + 1}\ln |\cL|) / 2}$.
\end{lemma}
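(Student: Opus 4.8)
The plan is to prove the contrapositive by a counting argument. Fix a set $S\subseteq\cN$ with $|S|=k$, and write $\delta$ for the square‑root term appearing in the statement (that the radicand is positive, so that $\delta$ is well defined, will drop out of the computation below). Suppose toward a contradiction that \emph{every} set $C\in\cL$ satisfies $|S\cap C|\ge \beta k+k\delta$. Then $\cL$ is contained in the family $\cF:=\{C\subseteq\cN:\ |C|=k,\ |S\cap C|\ge \beta k+k\delta\}$, so it suffices to show $|\cF|<|\cL|$, which will contradict $\cL\subseteq\cF$.

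The key observation is that $|\cF|$ is the upper tail of a hypergeometric distribution. If $C$ is chosen uniformly at random among the $\binom nk$ subsets of $\cN$ of size $k$, then $|S\cap C|$ is a sum of $k$ indicators drawn without replacement from a population of $n$ items of which exactly $k=\beta n$ lie in $S$; hence $\mathbb{E}[|S\cap C|]=k\cdot\frac kn=\beta k$. Applying Hoeffding's inequality for sampling without replacement to the $k$ draws, with deviation $k\delta$ from the mean, gives $\Pr[\,|S\cap C|\ge \beta k+k\delta\,]\le e^{-2k\delta^2}$, and therefore $|\cF|=\binom nk\cdot\Pr[\,|S\cap C|\ge\beta k+k\delta\,]\le \binom nk\, e^{-2k\delta^2}$.

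It remains to substitute the definition of $\delta$ and simplify, using $k=\beta n$ throughout. First, the inequality $\binom nk p^k(1-p)^{n-k}\le 1$ applied with $p=\beta$ gives $\binom nk\le \beta^{-k}(1-\beta)^{-(n-k)}$. Second, multiplying the radicand by $2k$ and using $k(\beta^{-1}-1)=n-k$ yields $2k\delta^2=\ln n-k\ln\beta-(n-k)\ln(1-\beta)-\frac{k}{k+1}\ln|\cL|$, so $e^{-2k\delta^2}=\frac1n\,\beta^{k}(1-\beta)^{n-k}\,|\cL|^{k/(k+1)}$. Combining the two estimates, $|\cF|\le\binom nk\,e^{-2k\delta^2}\le\frac1n|\cL|^{k/(k+1)}\le\frac1n|\cL|<|\cL|$, where the penultimate step uses $|\cL|\ge1$ together with $\frac{k}{k+1}<1$, and the final step uses $n\ge2$ (which follows from $k\ge1$ and $\beta<1$). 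This contradicts $\cL\subseteq\cF$, so some $C\in\cL$ must have $|S\cap C|<\beta k+k\delta$. As a byproduct, the same chain shows $e^{-2k\delta^2}=\frac1n\beta^{k}(1-\beta)^{n-k}|\cL|^{k/(k+1)}\le\frac1n<1$ (using $\binom nk\beta^k(1-\beta)^{n-k}\le1$ and $|\cL|\le|\cC|\le\binom nk$), hence $2k\delta^2>0$ and $\delta$ was indeed well defined.

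I expect the main obstacle to be purely bookkeeping: recognizing that the opaque expression in the statement is engineered precisely so that $\binom nk e^{-2k\delta^2}$ collapses to $\frac1n|\cL|^{k/(k+1)}$ after substituting $k=\beta n$, and being careful that Hoeffding's inequality does apply to sampling without replacement. If one prefers not to invoke that, the same tail bound $|\cF|\le\binom nk e^{-2k\delta^2}$ can instead be obtained by bounding $\sum_{j\ge \beta k+k\delta}\binom kj\binom{n-k}{k-j}$ directly via standard entropy‑type estimates on binomial coefficients, at the cost of a longer calculation.
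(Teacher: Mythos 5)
Your proof is correct and follows essentially the same route as the paper's: bound the number of size-$k$ sets meeting $S$ in at least $\beta k + k\delta$ elements via the Hoeffding-type tail bound for the hypergeometric distribution, combine it with an upper bound on $\binom{n}{k}$, and observe that the count falls below $|\cL|$. The only (harmless, even slightly cleaner) deviation is that you bound $\binom{n}{k} \leq \beta^{-k}(1-\beta)^{-(n-k)}$ via $\binom{n}{k}\beta^k(1-\beta)^{n-k}\leq 1$ instead of Stirling's approximation, which also lets your final strict inequality go through even when $|\cL|=1$.
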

\begin{proof}
Let $h = \beta k + k \cdot \sqrt{(k^{-1}\ln n - \ln \beta - (\beta^{-1} - 1) \ln (1 - \beta) - \tfrac{1}{k + 1}\ln |\cL|) / 2}$. We would like to upper bound the probability of a uniformly random subset $R \subseteq \cN$ of size $k$ to intersect at least $h$ elements of $S$. Since the expected size of the intersection of $R$ and $S$ is
\[
	\frac{|R| \cdot |S|}{|\cN|}
	=
	\frac{k^2}{n}
	=
	\beta k
	\enspace,
\]
and the distribution of $R$ is hyper-geometric, we get by Inequality~(10) in~\cite{skala2020hypergeometric} (developed based on the works of~\cite{chvatal1979tail,hoeffding1963probability}) that
\begin{equation} \label{eq:fraction_large_intersection}
	\Pr[|R \cap S| \geq h]
	=
	\Pr[|R \cap S| \geq \bE[|R \cap S|] + h - \beta k]
	\leq 
	e^{-2((h - \beta k) / k)^2 \cdot k}
	=
	e^{-2(h - \beta k)^2 / k}
	\enspace.
\end{equation}

By Stirling's approximation, the total number of subsets of $\cN$ whose size is $k$ is
\begin{align*}
	\binom{n}{k}
	={} &
	\frac{n!}{k! \cdot (n - k)!}
	\leq
	\frac{en(n/e)^n}{e(k/e)^k \cdot e((n - k)/e)^{n - k}}\\
	={} &
	\frac{n}{e\beta^k \cdot (1 - \beta)^{n - k}}
	\leq
	\frac{n}{\beta^k (1 - \beta)^{n - k}}
	\enspace.
\end{align*}
Since Inequality~\eqref{eq:fraction_large_intersection} implies that at most a fraction of $e^{-2(h - \beta k)^2 / k}$ out of these sets has intersection with $S$ of size at least $h$, we get that the number of subsets of $\cN$ of size $k$ whose intersection with $S$ is of size at least $h$ can be upper bounded by
\[
	e^{-2(h - \beta k)^2 / k} \cdot \binom{n}{k}
	\leq
	\frac{ne^{-2(h - \beta k)^2 / k}}{\beta^k (1 - \beta)^{n - k}}
	=
	\frac{ne^{-(\ln n - k \ln \beta - (n - k) \cdot \ln(1 - \beta) - \tfrac{k}{k + 1}\ln |\cL|)}}{\beta^k (1 - \beta)^{n - k}}
	=
	|\cL|^{\frac{k}{k + 1}}
	<
	|\cL|
	\enspace.
\]
The last bound implies that $|\cL|$ must include at least one set whose intersection with $S$ is less than $h$.
\end{proof}

We are now ready to prove Proposition~\ref{prop:set_identification}.
\begin{proof}[Proof of Proposition~\ref{prop:set_identification}]
By Observation~\ref{obs:query_reduction} and Lemma~\ref{lem:bad_set_guarantee}, if $ALG$ makes $i$ oracle queries, our adversary will be able to choose a hidden set $C^*$ such that the intersection between the output set $S$ of $ALG$ and $C^*$ is less than
\begin{align*}
	&
	\beta k + k\sqrt{(k^{-1}\ln n - \ln \beta - (\beta^{-1} - 1) \ln (1 - \beta) - \tfrac{1}{k + 1}\ln |\cL_i|) / 2}\\
	\leq{} &
	\beta k + k \cdot \sqrt{\frac{1}{2} \cdot \left(k^{-1}\ln n - \ln \beta - (\beta^{-1} - 1) \ln(1 - \beta) + \frac{i \cdot \ln(k + 1) - \ln |\cC|}{k + 1}\right)}
	\enspace.
\end{align*}
Since the optimal solution for {\SI} is the set $C^*$ itself, whose intersection with itself is $k$, the above inequality implies that the approximation ratio of $ALG$ is worse than
\[
	\beta + \sqrt{\frac{1}{2} \cdot \left(k^{-1}\ln n - \ln \beta - (\beta^{-1} - 1) \ln(1 - \beta) + \frac{i \cdot \ln(k + 1) - \ln |\cC|}{k + 1}\right)}
	\enspace.
\]

If the approximation ratio of $ALG$ is at least $\alpha$, then we get the inequality
\[
	\alpha \leq \beta + \sqrt{\frac{1}{2} \cdot \left(k^{-1}\ln n - \ln \beta - (\beta^{-1} - 1) \ln(1 - \beta) + \frac{i \cdot \ln(k + 1) - \ln |\cC|}{k + 1}\right)}
	\enspace,
\]
which implies that the number $i$ of oracle queries made by $ALG$ obeys
\[
	i
	\geq
	\frac{(k + 1) \cdot [2(\alpha - \beta)^2 + \ln \beta + (\beta^{-1} - 1) \ln (1 - \beta) - k^{-1}\ln n] + \ln |\cC|}{\ln (k + 1)}
	\enspace.
	\qedhere
\]
\end{proof}

Using Proposition~\ref{prop:set_identification}, we can now prove the promised version of Theorem~\ref{thm:cardinality_lower_bound} for deterministic algorithms.
\begin{corollary} \label{cor:cardinality_deterministic_lower_bound}
For any rational constant $\beta \in (0, 1)$ and (not necessarily constant) $\eps = \omega(\sqrt{\frac{\log n}{n}})$, every deterministic algorithm for {\SMC} that guarantees an approximation ratio of $\beta + \eps$ for instances obeying $k = \beta n$ must use $\Omega(\tfrac{\eps^2 n}{\log n})$ value oracle queries. Moreover, this is true even when the objective function $f$ of {\SMC} is guaranteed to be a linear function.
\end{corollary}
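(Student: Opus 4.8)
The plan is to reduce the {\SI} problem to {\SMC} with a linear (modular) objective and then invoke Proposition~\ref{prop:set_identification}. Given an instance of {\SI} with ground set $\cN$, a collection $\cC$, and a hidden set $C^* \in \cC$ of size $k = \beta n$, define $f(S) = |S \cap C^*|$, which is a non-negative monotone modular (in particular, linear and submodular) function. A value-oracle query for $f$ on a set $S$ returns exactly $|S \cap C^*|$, i.e., precisely the answer the {\SI} oracle provides; hence any algorithm for {\SMC} making $q$ value-oracle queries yields, verbatim, an algorithm for {\SI} making $q$ oracle queries. Moreover $OPT = C^*$ is a feasible solution with $f(OPT) = k$, so a $(\beta + \eps)$-approximate feasible set $S$ — which we may assume has size exactly $k$, since padding it with arbitrary elements only helps by the monotonicity of $f$ — satisfies $|S \cap C^*| = f(S) \geq (\beta + \eps)k$, i.e., it is a $(\beta + \eps)$-approximate solution for {\SI}. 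Since the objective used is linear, this reduction also establishes the ``moreover'' clause of the corollary.

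Next I would take $\cC$ to be the collection of \emph{all} size-$k$ subsets of $\cN$, so that $|\cC| = \binom{n}{k}$ is as large as possible (here I use that $\beta$ is rational, so $k = \beta n$ is an integer for every $n$ divisible by the denominator of $\beta$; the asymptotic statement is understood for such $n$). We may also assume $\beta + \eps \le 1$, as otherwise the target approximation ratio is unattainable and the claim is vacuous, so that $\alpha := \beta + \eps \in (\beta, 1]$ and Proposition~\ref{prop:set_identification} applies. Substituting this $\alpha$ and $|\cC| = \binom{n}{k}$ into the proposition shows that any deterministic $(\beta+\eps)$-approximation algorithm for this {\SI} instance — and therefore for the corresponding {\SMC} instance — must make at least
\[
	\frac{(k + 1) \cdot [2\eps^2 + \ln \beta + (\beta^{-1} - 1) \ln (1 - \beta) - k^{-1}\ln n] + \ln \binom{n}{k}}{\ln (k + 1)}
\]
oracle queries.

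It remains to verify that this expression is $\Omega(\tfrac{\eps^2 n}{\log n})$ whenever $\eps = \omega(\sqrt{\log n / n})$. The denominator is $\ln(k+1) = \ln(\beta n + 1) = \Theta(\log n)$. In the numerator, the term $(k+1)\cdot 2\eps^2 = 2\beta n \eps^2 + O(\eps^2) = \Theta(n\eps^2)$ is the dominant positive contribution, and the point is that the remaining $\Theta(n)$-sized terms cancel: by Stirling's approximation one has $\ln\binom{n}{k} \geq -n\beta\ln\beta - n(1-\beta)\ln(1-\beta) - O(\log n)$, while $(k+1)[\ln\beta + (\beta^{-1}-1)\ln(1-\beta)] = \beta n\ln\beta + n(1-\beta)\ln(1-\beta) + O(1)$, so their sum is only $-O(\log n)$; combined with $-(k+1)k^{-1}\ln n = -O(\log n)$, the whole numerator is $2\beta n\eps^2 - O(\log n)$. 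Since $\eps = \omega(\sqrt{\log n/n})$ exactly says $n\eps^2 = \omega(\log n)$, for all sufficiently large $n$ the numerator is $\Omega(n\eps^2)$, and dividing by $\ln(k+1) = \Theta(\log n)$ gives the stated $\Omega(\tfrac{\eps^2 n}{\log n})$ lower bound. The main obstacle is making this last cancellation rigorous with explicit constants: one needs a clean two-sided Stirling estimate for $\ln\binom{n}{k}$ and careful bookkeeping of the $\Theta(\log n)$ lower-order terms, in order to conclude that they are dominated by $n\eps^2$ precisely under the hypothesis $\eps = \omega(\sqrt{\log n/n})$. Everything else is the routine reduction described above.
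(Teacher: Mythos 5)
Your proposal is correct and matches the paper's own proof essentially step for step: it takes $\cC$ to be all size-$\beta n$ subsets, applies Proposition~\ref{prop:set_identification} with $\alpha = \beta + \eps$, uses a Stirling lower bound on $\ln\binom{n}{\beta n}$ to show the $\Theta(n)$ terms cancel against $(k+1)[\ln\beta + (\beta^{-1}-1)\ln(1-\beta)]$ leaving $2\beta n\eps^2 - O(\log n)$ in the numerator, and reduces {\SMC} to {\SI} via $f(S) = |S\cap C^*|$ with padding of the output to size $k$. The cancellation you flag as the remaining obstacle is exactly the calculation the paper performs, and your bookkeeping of the $O(\log n)$ lower-order terms already suffices under $\eps = \omega(\sqrt{\log n / n})$, so there is no gap.
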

\begin{proof}
Let $\cC$ be the set of all subsets of $\cN$ of size $\beta n$, and let $k = \beta n$. By Stirling's approximation, the size of the set $\cC$ is at least
\[
	\binom{n}{\beta n}
	\geq
	\frac{e(n/e)^n}{e\beta n(\beta n / e)^{\beta n} \cdot e(1 - \beta) n((1 - \beta)n / e)^{(1 - \beta)n}}
	=
	\frac{1}{en^2 \beta^{\beta n + 1} \cdot (1 - \beta)^{(1 - \beta)n + 1}}
	\enspace.
\]
Therefore, the number of oracle queries made by any deterministic $(\beta + \eps)$-approximation algorithm for {\SI} with the above choice of parameters $\cC$ and $k$ is at least
\begin{align*}
	&
	\frac{(\beta n + 1) \cdot [2\eps^2 + \ln \beta + (\beta^{-1} - 1) \ln (1 - \beta) - (\beta n)^{-1}\ln n] + \ln |\cC|}{\ln (\beta n + 1)}\\
	\geq{} &
	\frac{(\beta n + 1) \cdot [2\eps^2 + \ln \beta + (\beta^{-1} - 1) \ln (1 - \beta) - (\beta n)^{-1}\ln n]}{\ln (\beta n + 1)} \\&- \frac{1 + 2 \ln n + (\beta n + 1)\ln \beta + ((1 - \beta)n + 1) \ln (1 - \beta)}{\ln (\beta n + 1)}\\
	={} &
	\frac{(\beta n + 1) \cdot [2\eps^2 - (\beta n)^{-1}\ln n] - 1 - 2 \ln n + (\beta^{-1} - 2) \ln (1 - \beta)}{\ln (\beta n + 1)}\\
	={} &
	\frac{\eps^2 \cdot \Theta(n) - \Theta(\log n)}{\Theta(\log n)}
	=
	\Omega\left(\frac{\eps^2 n}{\log n}\right)
	\enspace,
\end{align*}
where the last equality holds since we assume $\eps = \omega(\sqrt{\frac{\log n}{n}})$.

Assume now that we have an algorithm $ALG$ for {\SMC} with the guarantee stated in the corollary. We construct an algorithm $ALG'$ for {\SI} that consists of the following three steps.
\begin{compactenum}
	\item Construct an instance of {\SMC} by setting $k = \beta n$ and $f(S) = |S \cap C^*|$.
	\item Since $f$ is a linear function, we can use $ALG$ to find a set $S \subseteq \cN$ of size at most $\beta n$ such that $f(S) \geq (\beta + \eps) \cdot f(C^*) = (\beta + \eps)|C^*|$.
	\item Return any subset $S' \subseteq \cN$ of size $\beta n$ that includes $S$. Clearly, $|S' \cap C^*| \geq |S \cap C^*| = f(S) \geq (\beta + \eps)|C^*|$.
\end{compactenum}
The algorithm $ALG'$ obtains $(\beta + \eps)$-approximation for {\SI}, and therefore, by the above discussion, it must use $\Omega(\frac{\eps^2 n}{\log n})$ queries to the oracle of {\SI}. However, since $ALG'$ queries the oracle of {\SI} only once for every time that $ALG$ queries the value oracle of $f$, we get that $ALG$ must be using $\Omega(\frac{\eps^2 n}{\log n})$ value oracle queries.
\end{proof}

\subsection{Inapproximability of {\SMC} for Randomized Algorithms and Large \texorpdfstring{$k$}{k} Values} \label{sec:cardinality_lower_randomized}

In this section we extend the inapproximability result from Section~\ref{sec:cardinality_lower_randomized} to randomized algorithms. To do that, we start by proving the following version of Proposition~\ref{prop:set_identification} for randomized algorithms.
\begin{proposition} \label{prop:set_identification_randomized}
For every $\alpha \in (\beta, 1]$ and set $\cC$, any (possibly randomized) $\alpha$-approximation algorithm for {\SI} must make \[\frac{(k + 1) \cdot [(\alpha - \beta)^2/2 + \ln \beta + (\beta^{-1} - 1) \ln (1 - \beta) - k^{-1}\ln n] + \ln (|\cC|) + \ln (\frac{\alpha - \beta}{2 - \alpha - \beta})}{\ln (k + 1)}\] oracle queries in the worst case.\footnote{One can also prove in a similar way a version of this proposition bounding the expected number of oracle queries.}
\end{proposition}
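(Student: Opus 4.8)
The plan is to lift the deterministic lower bound of Proposition~\ref{prop:set_identification} to randomized algorithms via Yao's principle and then rerun the counting argument behind it in an \emph{averaged} rather than \emph{adversarial} form. Fix the distribution $\cD$ over instances of {\SI} that picks the hidden set $C^*$ uniformly at random from $\cC$. An $\alpha$-approximation randomized algorithm satisfies $\bE[\,|S \cap C^*|\,] \geq \alpha k$ for \emph{every} hidden set (where $S$ is its random output and the optimum is always $k$); averaging this over $C^* \sim \cD$ and exchanging the two expectations produces a single fixing of the algorithm's random bits --- that is, a deterministic algorithm $ALG$ that still makes at most $q$ oracle queries in the worst case (where $q$ is the worst-case query complexity of the randomized algorithm) and satisfies $\bE_{C^* \sim \cD}[\,|S \cap C^*|\,] \geq \alpha k$. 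Hence it suffices to prove the claimed lower bound on $q$ for every such deterministic $ALG$.

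I would then view $ALG$ as a decision tree whose internal nodes are oracle queries with answer alphabet $\{0,1,\dots,k\}$, so that it has depth at most $q$ and at most $(k+1)^q$ leaves. For a leaf $v$, let $\cL_v \subseteq \cC$ be the set of candidates consistent with the path to $v$ and let $S_v$ be the size-$k$ set output there; under $\cD$, leaf $v$ is reached with probability $|\cL_v|/|\cC|$. Put $\gamma = (\alpha+\beta)/2$, which lies strictly between $\beta$ and $\alpha$. Two bounds drive the proof. First, the counting estimate already proved inside Lemma~\ref{lem:bad_set_guarantee} (the hypergeometric tail bound of~\cite{skala2020hypergeometric} together with the Stirling estimate $\binom{n}{k} \leq n/(\beta^k (1-\beta)^{n-k})$) shows that, for any fixed $k$-subset $S_v$, at most $e^{-2(\gamma-\beta)^2 k}\binom{n}{k}$ of the $k$-subsets of $\cN$ intersect $S_v$ in at least $\gamma k$ elements; summing this over the at most $(k+1)^q$ leaves yields
\[
	\Pr_{C^* \sim \cD}[\,|S \cap C^*| \geq \gamma k\,] \leq \frac{(k+1)^q \cdot e^{-2(\gamma-\beta)^2 k}\binom{n}{k}}{|\cC|} \enspace .
\]
Second, since $0 \leq |S \cap C^*| \leq k$ always and $\bE_{C^* \sim \cD}[\,|S \cap C^*|\,] \geq \alpha k$, the reverse Markov inequality gives $\Pr_{C^* \sim \cD}[\,|S \cap C^*| \geq \gamma k\,] \geq (\alpha - \gamma)/(1 - \gamma) = (\alpha-\beta)/(2-\alpha-\beta)$, and the choice $\gamma = (\alpha+\beta)/2$ is exactly what turns $(\alpha-\gamma)/(1-\gamma)$ into this expression.

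Chaining the two inequalities gives $\tfrac{\alpha-\beta}{2-\alpha-\beta} \leq \tfrac{(k+1)^q e^{-2(\gamma-\beta)^2 k}\binom{n}{k}}{|\cC|}$; taking logarithms, using $2(\gamma-\beta)^2 k = (\alpha-\beta)^2 k/2$, applying $\ln\binom{n}{k} \leq \ln n - k\ln\beta - (n-k)\ln(1-\beta)$ with $n = k/\beta$ (hence $n-k = (\beta^{-1}-1)k$), and solving for $q$ produces a lower bound of exactly the stated shape. Matching the coefficient $k+1$ appearing in the bracketed term of the statement (rather than the $k$ the crude union bound produces) is done, as in Proposition~\ref{prop:set_identification}, by carrying the $|\cL_v|^{k/(k+1)}$ bookkeeping of Lemma~\ref{lem:bad_set_guarantee} through the per-leaf estimate instead; the footnote's variant bounding the \emph{expected} number of queries follows by first applying Markov's inequality to the query count.

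The main obstacle is the first of the two bounds above: one cannot simply invoke the adversary of Proposition~\ref{prop:set_identification}, because an adversary cannot answer adaptively against an algorithm whose queries are themselves random, so the argument must instead control \emph{all} $(k+1)^q$ leaves simultaneously and trade the leaf count $(k+1)^q$ off against the per-leaf intersection probability $e^{-2(\gamma-\beta)^2 k}\binom{n}{k}/|\cC|$. It is precisely this balancing that forces both the exponent $(\alpha-\beta)^2/2$ --- a factor of $4$ smaller than the $2(\alpha-\beta)^2$ of the deterministic bound, reflecting that $\gamma$ effectively halves the gap $\alpha-\beta$ --- and the $\ln\tfrac{\alpha-\beta}{2-\alpha-\beta}$ correction term. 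A secondary point requiring care is fixing the exact meaning of ``$\alpha$-approximation'' and ``worst case'' for a randomized algorithm so that the averaging in the first step is valid.
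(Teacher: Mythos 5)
Your proposal is correct in substance, and its first half (derandomizing by averaging over a uniform $C^*$, plus the reverse-Markov step at threshold $(\alpha+\beta)k/2$) coincides with the paper's proof. Where you genuinely diverge is in the second half: the paper never reopens the counting argument. Having fixed a deterministic algorithm $ALG_D$ in the support whose output intersects $C^*$ in at least $\tfrac{(\alpha+\beta)k}{2}$ elements with probability at least $\tfrac{\alpha-\beta}{2-\alpha-\beta}$ over a uniform $C^*$, it notes that $ALG_D$ is a $\tfrac{\alpha+\beta}{2}$-approximation for the {\SI} instance whose collection is the sub-collection $\cC'\subseteq\cC$ of hidden sets on which it succeeds, with $|\cC'|\geq\tfrac{\alpha-\beta}{2-\alpha-\beta}\cdot|\cC|$, and then invokes Proposition~\ref{prop:set_identification} as a black box; this is how the stated constants, including the coefficient $k+1$, are inherited verbatim. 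You instead rerun the counting estimate of Lemma~\ref{lem:bad_set_guarantee} as a static union bound over the at most $(k+1)^q$ leaves of the decision tree, which buys an argument that avoids the adaptive adversary of Section~\ref{sec:cardinality_lower_deterministic} altogether in the randomized setting, at the price of the constant: your chain yields the stated formula with $k$ rather than $k+1$ multiplying the bracketed term.

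Two remarks on that last point. First, the discrepancy is harmless, but not for the reason you give: the bracket $(\alpha-\beta)^2/2+\ln\beta+(\beta^{-1}-1)\ln(1-\beta)-k^{-1}\ln n$ is always nonpositive (indeed $(\alpha-\beta)^2/2\leq(1-\beta)^2/2<-\ln\beta$, while the remaining two terms are nonpositive), so multiplying it by $k$ instead of $k+1$ only increases your lower bound, which therefore implies the proposition as stated; you should make this comparison explicit. Second, the ``bookkeeping'' fix you sketch for recovering the exact coefficient is doubtful as written: in Lemma~\ref{lem:bad_set_guarantee} the threshold $h$ is tuned to $|\cL|$ precisely so that the count of bad sets comes out to $|\cL|^{k/(k+1)}$, whereas in your argument the threshold is pinned at $(\alpha+\beta)k/2$ by the reverse-Markov step and cannot be recalibrated leaf by leaf. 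Fortunately, by the first remark, no such fix is needed.
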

\begin{proof}
Let $ALG$ be an algorithm of the kind described in the proposition, assume that the set $C^*$ is chosen uniformly at random out of $\cC$ and let $S_R$ be the output set of $ALG$. Using a Markov like argument, we get that since $ALG$ has an approximation guarantee of $\alpha$ and $|S_R \cap C^*|$ is always at most $|C^*| = k$,
\[
	\Pr\left[|S_R \cap C^*| \geq \frac{(\alpha + \beta)k}{2}\right]
	\geq
	\frac{\alpha - \beta}{2 - \alpha - \beta}
	\enspace,
\]
where the probability is over both the random choice of $C^*$ and the randomness of $ALG$ itself.

Recall now that $ALG$, as a randomized algorithm, can be viewed as a distribution over deterministic algorithms. This means that there must be at least one (deterministic) algorithm $ALG_D$ in the support of this distribution such that its output set $S_D$ obeys
\[
	\Pr\left[|S_D \cap C^*| \geq \frac{(\alpha + \beta)k}{2}\right]
	\geq
	\Pr\left[|S_R \cap C^*| \geq \frac{(\alpha + \beta)k}{2}\right]
	\geq
	\frac{\alpha - \beta}{2 - \alpha - \beta}
	\enspace.
\]
However, since $ALG_D$ is a deterministic algorithm, $S_D$ is a function of the set $C^*$ alone, which implies that there exists a subset $\cC' \subseteq \cC$ of size at least $\frac{\alpha - \beta}{2 - \alpha - \beta} \cdot |\cC|$ such that the algorithm $ALG_D$ guarantees $\frac{\alpha + \beta}{2}$-approximation whenever $C^* \in \cC'$.

By Proposition~\ref{prop:set_identification}, the number of oracle queries used by $ALG_D$ must be at least
\begin{align*}
	&
	\frac{(k + 1) \cdot [2(\frac{\alpha + \beta}{2} - \beta)^2 + \ln \beta + (\beta^{-1} - 1) \ln (1 - \beta) - k^{-1}\ln n] + \ln |\cC'|}{\ln (k + 1)}\\
	\geq{} &
	\frac{(k + 1) \cdot [(\alpha - \beta)^2/2 + \ln \beta + (\beta^{-1} - 1) \ln (1 - \beta) - k^{-1}\ln n] + \ln |\cC| + \ln (\frac{\alpha - \beta}{2 - \alpha - \beta})}{\ln (k + 1)}
	\enspace.
\end{align*}
The proposition now follows since the number of oracle queries made by $ALG_D$ is a lower bound on the number of oracle queries made by $ALG$ in the worst case.
\end{proof}

\begin{corollary} \label{cor:cardinality_randomized_lower_bound}
For any rational constant $\beta \in (0, 1)$ and (not necessarily constant) $\eps = \omega(\sqrt{\frac{\log n}{n}})$, every (possibly randomized) algorithm for {\SMC} that guarantees an approximation ratio of $\beta + \eps$ for instances obeying $k = \beta n$ must use $\Omega(\tfrac{\eps^2 n}{\log n})$ value oracle queries. Moreover, this is true even when the objective function $f$ of {\SMC} is guaranteed to be a linear function.
\end{corollary}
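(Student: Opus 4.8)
The plan is to run the proof of Corollary~\ref{cor:cardinality_deterministic_lower_bound} almost verbatim, with two substitutions: replace Proposition~\ref{prop:set_identification} by its randomized counterpart Proposition~\ref{prop:set_identification_randomized}, and upgrade the \SI-to-\SMC\ reduction so that it transforms a (possibly randomized) \SMC\ algorithm into a (possibly randomized) \SI\ algorithm. Concretely, I would again take $\cC$ to be the family of all $\binom{n}{\beta n}$ subsets of $\cN$ of size $k = \beta n$, reuse the same Stirling bound $|\cC| \geq \tfrac{1}{e n^2 \beta^{\beta n + 1}(1-\beta)^{(1-\beta)n+1}}$, and plug $\alpha = \beta + \eps$ into Proposition~\ref{prop:set_identification_randomized}. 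Expanding $\ln|\cC|$ cancels the $(k+1)\ln\beta$ and $(k+1)(\beta^{-1}-1)\ln(1-\beta)$ terms exactly as in the deterministic computation, leaving a numerator of the shape $\eps^2\cdot\Theta(n) - \Theta(\log n) + \ln\!\big(\tfrac{\eps}{2-2\beta-\eps}\big)$ over a denominator $\ln(\beta n+1) = \Theta(\log n)$. (The randomized bound carries $(\alpha-\beta)^2/2 = \eps^2/2$ where the deterministic one carried $2(\alpha-\beta)^2 = 2\eps^2$, but this affects only the hidden constant.)

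The only genuinely new ingredient is controlling the extra additive term $\ln\!\big(\tfrac{\alpha-\beta}{2-\alpha-\beta}\big) = \ln\!\big(\tfrac{\eps}{2-2\beta-\eps}\big)$ present in Proposition~\ref{prop:set_identification_randomized} but absent from Proposition~\ref{prop:set_identification}. Since $\alpha \le 1$ forces $0 < \eps \le 1-\beta$, the quantity $2-2\beta-\eps$ is bounded away from $0$ by a constant, so $\big|\ln\!\big(\tfrac{\eps}{2-2\beta-\eps}\big)\big| = O(|\ln\eps|)$; and because $\eps = \omega(\sqrt{\log n/n})$ implies $\eps \ge n^{-1/2}$ for all large $n$, we get $|\ln\eps| = O(\log n)$. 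Hence this term is absorbed into the $\Theta(\log n)$ already sitting in the numerator. The hypothesis $\eps = \omega(\sqrt{\log n/n})$ is precisely what makes $\eps^2\cdot\Theta(n) = \omega(\log n)$ dominate that $\Theta(\log n)$, so the whole expression is $\Omega(\eps^2 n/\log n)$, giving the claimed worst-case query lower bound for randomized $(\beta+\eps)$-approximation algorithms for \SI\ with these parameters.

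Finally I would carry through the reduction probabilistically. Given a (possibly randomized) $(\beta+\eps)$-approximation algorithm $ALG$ for \SMC\ valid on linear objectives, construct $ALG'$ for \SI\ which builds the \SMC\ instance with $k=\beta n$ and $f(S) = |S\cap C^*|$ (a linear, hence monotone submodular, function), runs $ALG$ to get a set $S$ of size at most $\beta n$, and outputs any size-$\beta n$ superset $S' \supseteq S$. Since $f(C^*) = k$ equals the optimum of both problems and $|S'\cap C^*| \ge |S\cap C^*| = f(S)$, taking expectations gives $\mathbb{E}[|S'\cap C^*|] \ge \mathbb{E}[f(S)] \ge (\beta+\eps)k$, so $ALG'$ is a $(\beta+\eps)$-approximation algorithm for \SI. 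Each value-oracle query of $ALG$ becomes exactly one query to the \SI\ oracle, so $ALG'$ and $ALG$ have the same query complexity; combining with the bound from the previous paragraph shows $ALG$ must make $\Omega(\eps^2 n/\log n)$ value-oracle queries.

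The main obstacle is essentially bookkeeping: verifying that the extra $\ln\!\big(\tfrac{\alpha-\beta}{2-\alpha-\beta}\big)$ term never degrades the asymptotics below $\Omega(\eps^2 n/\log n)$ across the whole range $\eps = \omega(\sqrt{\log n/n})$ (including the delicate case where $\eps$ is a slowly vanishing function of $n$), and that the cancellation of the Stirling terms goes through identically to the deterministic case. Everything else is a direct transcription of Corollary~\ref{cor:cardinality_deterministic_lower_bound}'s proof with ``deterministic'' replaced by ``possibly randomized'' and Proposition~\ref{prop:set_identification} replaced by Proposition~\ref{prop:set_identification_randomized}.
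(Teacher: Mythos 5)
Your proposal matches the paper's proof of Corollary~\ref{cor:cardinality_randomized_lower_bound} essentially verbatim: the paper likewise takes $\cC$ to be all size-$\beta n$ subsets, reuses the Stirling bound from Corollary~\ref{cor:cardinality_deterministic_lower_bound}, plugs $\alpha = \beta + \eps$ into Proposition~\ref{prop:set_identification_randomized}, bounds the extra additive term by $\ln(\eps/2) = -\Theta(\log\eps^{-1})$ which is absorbed by the $\eps^2\cdot\Theta(n)$ term under $\eps = \omega(\sqrt{\log n/n})$, and then states that the reduction from {\SI} to {\SMC} carries over to the randomized setting with expectation signs added. Your handling of the $\ln\bigl(\tfrac{\alpha-\beta}{2-\alpha-\beta}\bigr)$ term is slightly more explicit than the paper's (which simply lower-bounds it by $\ln(\eps/2)$ and folds it into a $\Theta(\log\eps^{-1})$ term), but the substance is identical.
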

\begin{proof}
Let $\cC$ be the set of all subsets of $\cN$ of size $\beta n$, and let $k = \beta n$. As was proved in the proof of Corollary~\ref{cor:cardinality_deterministic_lower_bound}, 
\[
	\binom{n}{\beta n}
	\geq
	\frac{1}{en^2 \beta^{\beta n + 1} \cdot (1 - \beta)^{(1 - \beta)n + 1}}
	\enspace,
\]
and therefore, by Proposition~\ref{prop:set_identification_randomized}, the number of oracle queries made by any $(\beta + \eps)$-approximation algorithm for {\SI} with the above choice of parameters $\cC$ and $k$ is at least
\begin{align*}
	&
	\frac{(\beta n + 1) \cdot [\eps^2/2 + \ln \beta + (\beta^{-1} - 1) \ln (1 - \beta) - (\beta n)^{-1}\ln n] + \ln |\cC| + \ln (\frac{\eps}{2 - 2\beta - \eps})}{\ln (\beta n + 1)}\\
	\geq{} &
	\frac{(\beta n + 1) \cdot [\eps^2/2 + \ln \beta + (\beta^{-1} - 1) \ln (1 - \beta) - (\beta n)^{-1}\ln n] + \ln (\eps / 2)}{\ln (\beta n + 1)} \\&- \frac{1 + 2 \ln n + (\beta n + 1)\ln \beta + ((1 - \beta)n + 1) \ln (1 - \beta)}{\ln (\beta n + 1)}\\
	={} &
	\frac{(\beta n + 1) \cdot [\eps^2/2 - (\beta n)^{-1}\ln n] - 1 - 2 \ln n + (\beta^{-1} - 2) \ln (1 - \beta) + \ln(\eps / 2)}{\ln (\beta n + 1)}\\
	={} &
	\frac{\eps^2 \cdot \Theta(n) - \Theta(\log n) - \Theta(\log \eps^{-1})}{\Theta(\log n)}
	=
	\Omega\left(\frac{\eps^2 n}{\log n}\right)
	\enspace,
\end{align*}
where the last equality holds since we assume $\eps = \omega(\sqrt{\frac{\log n}{n}})$.

The rest of the proof is completely identical to the corresponding part in the proof of Corollary~\ref{cor:cardinality_deterministic_lower_bound} (up to the need to add some expectations signs), and therefore, we omit it.
\end{proof}

Theorem~\ref{thm:cardinality_lower_bound} is the special case of Corollary~\ref{cor:cardinality_randomized_lower_bound} in which $\eps$ is a positive constant, and therefore, obeys $\eps = \omega(\sqrt{\frac{\log n}{n}})$.

\subsection{Inapproximability for {\USM}} \label{ssc:unconstrained_lower_bound}

In this section we reuse the machinery developed in Section~\ref{sec:cardinality_lower_randomized} to get a query complexity lower bound for {\USM} and prove Theorem~\ref{thm:unconstrained_lower_bound}. Towards this goal, we assume throughout the section the following parameters for {\SI}. The parameter $k$ is set to be $n/2$, and $\cC$ is the collection of all subsets of $\cN$ of size $k$. One can verify that the proof of Corollary~\ref{cor:cardinality_randomized_lower_bound} also implies the following lemma (as a special case for $\beta = 1/2$).
\begin{lemma} \label{lem:SI_hardness_parameters}
Given the above parameters, for any $\eps = \omega(\sqrt{\frac{\log n}{n}})$, every (possibly randomized) $(\nicefrac{1}{2} + \eps)$-approximation algorithm for {\SI} must use $\Omega(\frac{\eps^2 n}{\log n})$ oracle queries.
\end{lemma}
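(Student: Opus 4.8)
The plan is to obtain this lemma directly from Proposition~\ref{prop:set_identification_randomized}, which already lower-bounds the worst-case query complexity of any (possibly randomized) $\alpha$-approximation algorithm for {\SI} in terms of $\beta$, $k$, and $|\cC|$. I would instantiate that proposition with $\beta = 1/2$, $\alpha = \nicefrac{1}{2} + \eps$ (which lies in $(\beta, 1]$ for the relevant range of $\eps$), $k = n/2$, and $\cC$ equal to the collection of all $\binom{n}{n/2}$ subsets of $\cN$ of size $k$ --- exactly the parameters fixed at the start of the section.

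First I would record a lower bound on $|\cC|$. Using the Stirling estimate already employed in the proof of Corollary~\ref{cor:cardinality_deterministic_lower_bound} and specializing it to $\beta = 1/2$, one gets $\ln|\cC| = \ln \binom{n}{n/2} \geq n\ln 2 + 2\ln 2 - 1 - 2\ln n$. Next I would substitute into the bound of Proposition~\ref{prop:set_identification_randomized} and simplify the $\beta$-dependent constants: for $\beta = 1/2$ we have $\ln\beta + (\beta^{-1} - 1)\ln(1 - \beta) = -2\ln 2 = -\ln 4$, so the term $-(k + 1)\ln 4 = -(n/2 + 1)\cdot 2\ln 2 = -n\ln 2 - 2\ln 2$ cancels the leading $n\ln 2$ contribution of $\ln|\cC|$, leaving a residue that is only $-O(\log n)$. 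After this cancellation the numerator of the bound reduces to $\tfrac{n + 2}{2}\cdot\tfrac{\eps^2}{2} - O(\log n) + \ln\!\bigl(\tfrac{\eps}{3/2 - \eps}\bigr)$.

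It then remains to observe that $\eps = \omega(\sqrt{\log n/n})$ forces both $\eps^2 n = \omega(\log n)$ and $\eps \gg 1/n$ (so that $\ln\eps^{-1} = O(\log n)$ and the additive $\ln\!\bigl(\tfrac{\eps}{3/2 - \eps}\bigr)$ term is $-O(\log n)$); hence the $\Theta(\eps^2 n)$ term dominates the numerator. Dividing by the denominator $\ln(k + 1) = \ln(n/2 + 1) = \Theta(\log n)$ yields the claimed $\Omega(\eps^2 n/\log n)$ lower bound. The whole derivation is routine arithmetic that mirrors the proof of Corollary~\ref{cor:cardinality_randomized_lower_bound}; the one place that deserves care is verifying the exact cancellation of the $\Theta(n)$ terms $(k + 1)[\ln\beta + (\beta^{-1} - 1)\ln(1 - \beta)]$ and $\ln|\cC|$, since this cancellation is precisely what exposes the main $\tfrac{(\alpha - \beta)^2}{2}(k + 1)$ term.
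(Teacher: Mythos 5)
Your proposal is correct and follows essentially the same route as the paper, which obtains this lemma precisely as the $\beta = 1/2$ special case of the computation in the proof of Corollary~\ref{cor:cardinality_randomized_lower_bound}, i.e., instantiating Proposition~\ref{prop:set_identification_randomized} with $k = n/2$ and $\cC$ the family of all $n/2$-subsets, and letting the Stirling lower bound on $\ln|\cC|$ cancel the $(k+1)\ln 4$ term so that the $(k+1)(\alpha-\beta)^2/2 = \Theta(\eps^2 n)$ term dominates. (One minor slip: with $\alpha = \nicefrac{1}{2} + \eps$ and $\beta = \nicefrac{1}{2}$ the additive term is $\ln\bigl(\tfrac{\eps}{1 - \eps}\bigr)$ rather than $\ln\bigl(\tfrac{\eps}{3/2 - \eps}\bigr)$, but this is $-O(\log n)$ either way and does not affect the conclusion.)
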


Our next objective is to show a reduction from {\SI} to {\USM}. Let $ALG$ be a $(\nicefrac{1}{4} + \eps)$-approximation algorithm for {\USM}. Using $ALG$, one can design the algorithm for {\SI} (with the parameters we assume) that appears as Algorithm~\ref{alg:usm_si_reduction}. It is important to observe that the value oracle of the set function $f$ defined on Line~\ref{line:function} of the algorithm can be implemented using one query to the oracle of {\SI}. Furthermore, $f$ is the cut-function of a directed graph, and therefore, it is non-negative and submodular. This allows Algorithm~\ref{alg:usm_si_reduction} to use $ALG$ to construct a set $T$ of large expected value. The set $T$ is then replaced by it complement $\cN \setminus T$ if this increases the value of $f(T)$. Naturally, this replacement can only increase the expected value of the set $T$, and also gives it some deterministic properties that we need. Algorithm~\ref{alg:usm_si_reduction} completes by converting the set $T$ into an output set of size $n/2$ in one of two ways. If $T$ is too small, then a uniformly random subset of $\cN \setminus T$ of the right size is added to it. Otherwise, if $T$ is too large, then a uniformly random subset of it of the right size is picked.
\begin{algorithm2e}
\caption{Reduction from {\SI} to {\USM}} \label{alg:usm_si_reduction}
\DontPrintSemicolon
Define a function $f\colon 2^\cN \to \nnR$ by $f(S) = |S \cap C^*| \cdot (\tfrac{n}{2} - |S \setminus C^*|)$.\label{line:function}\\
Use $ALG$ to find a set $T$ that such that $\bE[f(T)] \geq (\nicefrac{1}{4} + \eps) \cdot \max_{S \subseteq \cN} f(S) = (\nicefrac{1}{4} + \eps) \cdot n^2/4$.\\
\lIf{$f(T) < f(\cN \setminus T)$}{Update $T \gets \cN \setminus T$. \label{line:manual_switch}}
\If{$|T| \leq n/2$}
{
	Pick a uniformly at random subset $R$ of $\cN \setminus T$ of size $n/2 - |T|$.\label{line:increase_T}\\
	\Return {$T \cup R$}.
}
\Else
{
	Pick a uniformly at random subset $R$ of $T$ of size $n/2$.\label{line:decrease_T}\\
	\Return{$R$}.
}
\end{algorithm2e}

Let $T$ and $\tilde{T}$ denote the values of the set $T$ before and after Line~\ref{line:manual_switch} of Algorithm~\ref{alg:usm_si_reduction}. The following observation states the properties of the set $\tilde{T}$ that we need below.
\begin{observation} \label{obs:basic_properties}
The set $\tilde{T}$ obeys
\begin{compactitem}
	\item $\bE[f(\tilde{T})] \geq (\nicefrac{1}{4} + \eps) \cdot n^2/4$.
	\item $|\tilde{T} \cap C^*| \geq |\tilde{T} \setminus C^*|$.
\end{compactitem}
\end{observation}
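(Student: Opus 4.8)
The plan is to treat the two bullets separately, both resting on the single fact that Line~\ref{line:manual_switch} guarantees $f(\tilde{T}) = \max\{f(T), f(\cN \setminus T)\}$; in particular $f(\tilde{T}) \geq f(T)$ and $f(\tilde{T}) \geq f(\cN \setminus \tilde{T})$ for every realization of the randomness of $ALG$.

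For the first bullet I would simply observe that $f(\tilde{T}) \geq f(T)$ holds pointwise, so taking expectations and invoking the guarantee $\bE[f(T)] \geq (\nicefrac14 + \eps) \cdot n^2/4$ established on the preceding line of Algorithm~\ref{alg:usm_si_reduction} immediately gives $\bE[f(\tilde{T})] \geq (\nicefrac14 + \eps) \cdot n^2/4$.

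The second bullet requires the one genuine computation in the proof: comparing $f$ on a set with $f$ on its complement. For an arbitrary $S \subseteq \cN$, write $a = |S \cap C^*|$ and $b = |S \setminus C^*|$; using $|C^*| = k = n/2$ and hence $|\cN \setminus C^*| = n/2$, we have $f(S) = a\,(\tfrac n2 - b)$, while $|(\cN \setminus S) \cap C^*| = \tfrac n2 - a$ and $|(\cN \setminus S) \setminus C^*| = \tfrac n2 - b$, so $f(\cN \setminus S) = (\tfrac n2 - a)\,b$. Subtracting, $f(S) - f(\cN \setminus S) = \tfrac n2 (a - b)$, so $f(S) \geq f(\cN \setminus S)$ if and only if $|S \cap C^*| \geq |S \setminus C^*|$. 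Applying this with $S = \tilde{T}$ and using $f(\tilde{T}) \geq f(\cN \setminus \tilde{T})$ (which holds since $\tilde{T}$ is whichever of $T$, $\cN \setminus T$ carries the larger $f$-value) yields exactly $|\tilde{T} \cap C^*| \geq |\tilde{T} \setminus C^*|$.

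I do not expect a real obstacle here; the only point deserving a line of care is the tie case, where $f(T) = f(\cN \setminus T)$ and the algorithm keeps $\tilde{T} = T$: then $f(\tilde{T}) = f(\cN \setminus \tilde{T})$ still forces $a \geq b$ (with equality), which is why the statement asserts only the weak inequality rather than a strict one.
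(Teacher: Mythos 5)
Your proposal is correct and follows essentially the same route as the paper: both bullets are deduced from the guarantee of Line~\ref{line:manual_switch}, with the first following by taking expectations of $f(\tilde{T}) \geq f(T)$ and the second by expanding $f(\tilde{T}) \geq f(\cN \setminus \tilde{T})$ via the definition of $f$ and cancelling to get $\tfrac{n}{2}|\tilde{T} \cap C^*| \geq \tfrac{n}{2}|\tilde{T} \setminus C^*|$. Your explicit computation of the difference $f(S) - f(\cN \setminus S) = \tfrac{n}{2}(|S \cap C^*| - |S \setminus C^*|)$ is just a slightly more verbose form of the paper's rearrangement, so there is nothing to add.
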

\begin{proof}
Line~\ref{line:manual_switch} of Algorithm~\ref{alg:usm_si_reduction} guarantees that $f(\tilde{T}) \geq f(T)$, which implies
\[
	\bE[f(\tilde{T})]
	\geq
	\bE[f(T)]
	\geq
	(\nicefrac{1}{4} + \eps) \cdot n^2/4
	\enspace.
\]

To see that the second part of the observation holds as well, note that Line~\ref{line:manual_switch} also guarantees $f(\tilde{T}) \geq f(\cN \setminus \tilde{T})$, which, by the definition of $f$, implies
\[
	|\tilde{T} \cap C^*|(\tfrac{n}{2} - |\tilde{T} \setminus C^*|)
	\geq
	|\tilde{T} \setminus C^*|(\tfrac{n}{2} - |\tilde{T} \cap C^*|)
	\enspace,
\]
Rearranging this inequality now gives
\[
	\tfrac{n}{2}|\tilde{T} \cap C^*|
	\geq
	\tfrac{n}{2}|\tilde{T} \setminus C^*|
	\enspace,
\]
and dividing this inequality by $n/2$ yields the required inequality.
\end{proof}

One consequence of the last lemma is given by the next corollary.
\begin{corollary} \label{cor:difference_expectation}
$\bE[|\tilde{T} \cap C^*| - |\tilde{T} \setminus  C^*|] \geq n\eps/2$.
\end{corollary}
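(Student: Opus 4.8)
The plan is to bound $f(\tilde{T})$ from above by an \emph{affine} function of the quantity $|\tilde{T} \cap C^*| - |\tilde{T} \setminus C^*|$, and then take expectations so that the first bullet of Observation~\ref{obs:basic_properties} can be invoked. Throughout, abbreviate $a = |\tilde{T} \cap C^*|$ and $b = |\tilde{T} \setminus C^*|$. Since $|C^*| = |\cN \setminus C^*| = n/2$ and the second bullet of Observation~\ref{obs:basic_properties} gives $a \geq b$, we have $0 \leq b \leq a \leq n/2$, and in particular $0 \leq a - b \leq n/2$.

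First I would apply the AM--GM inequality to the two non-negative numbers $2a$ and $n - 2b$ (here $n - 2b \geq 0$ because $b \leq n/2$):
\[
	f(\tilde{T}) = a\bigl(\tfrac{n}{2} - b\bigr) = \tfrac{1}{4}(2a)(n - 2b) \leq \tfrac{1}{4}\biggl(\frac{2a + (n - 2b)}{2}\biggr)^2 = \tfrac{1}{4}\bigl(\tfrac{n}{2} + (a - b)\bigr)^2 .
\]
Expanding the square yields $f(\tilde{T}) \leq \tfrac{n^2}{16} + \tfrac{n}{4}(a - b) + \tfrac{1}{4}(a - b)^2$, and the crude bound $(a - b)^2 \leq \tfrac{n}{2}(a - b)$ --- which is exactly where the inequalities $0 \leq a - b \leq n/2$ are used --- converts this into the affine estimate $f(\tilde{T}) \leq \tfrac{n^2}{16} + \tfrac{3n}{8}(a - b)$.

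Finally I would take the expectation of this inequality and compare it with the first bullet of Observation~\ref{obs:basic_properties}, namely $\bE[f(\tilde{T})] \geq (\tfrac{1}{4} + \eps)\tfrac{n^2}{4} = \tfrac{n^2}{16} + \tfrac{\eps n^2}{4}$. Cancelling $\tfrac{n^2}{16}$ and rearranging gives $\bE[a - b] \geq \tfrac{2\eps n}{3} \geq \tfrac{\eps n}{2}$, which is the claimed inequality (in fact with a constant factor to spare). I do not expect any real obstacle here; the one point that requires care is the \emph{order} of operations: one must replace the bilinear/quadratic expression for $f$ by an \emph{upper} bound that is affine in $a - b$ \emph{before} taking expectations, since $\bE[(a-b)^2]$ cannot be controlled directly --- and this is precisely the reason the deterministic inequality $a \geq b$ furnished by Observation~\ref{obs:basic_properties} (rather than a mere in-expectation statement) is needed.
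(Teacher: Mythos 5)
Your proof is correct and follows essentially the same route as the paper's: a pointwise (deterministic) upper bound on $f(\tilde{T})$ that is affine in $|\tilde{T} \cap C^*| - |\tilde{T} \setminus C^*|$ with constant term $\tfrac{n^2}{16}$, relying on the second bullet of Observation~\ref{obs:basic_properties}, followed by taking expectations and comparing with the first bullet. The only difference is the algebra that produces the affine bound --- you use AM--GM together with $(a-b)^2 \leq \tfrac{n}{2}(a-b)$, whereas the paper writes $a(\tfrac{n}{2}-b) = (a-b)(\tfrac{n}{2}-b) + b(\tfrac{n}{2}-b)$ and bounds the two terms by $\tfrac{n}{2}(a-b)$ and $\tfrac{n^2}{16}$ respectively --- and your variant even gives the marginally stronger constant $\tfrac{2\eps n}{3}$.
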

\begin{proof}
Observe that
\begin{align*}
	&
	\tfrac{n}{2} \cdot \bE[|\tilde{T} \cap C^*| - |\tilde{T} \setminus  C^*|]
	\geq
	\bE[(|\tilde{T} \cap C^*| - |\tilde{T} \setminus  C^*|)(\tfrac{n}{2} - |\tilde{T} \setminus  C^*|)]\\\
	\geq{} &
	\bE[(|\tilde{T} \cap C^*| - |\tilde{T} \setminus  C^*|)(\tfrac{n}{2} - |\tilde{T} \setminus  C^*|)] + \bE[|\tilde{T} \setminus  C^*|(\tfrac{n}{2} - |\tilde{T} \setminus  C^*|)] - \frac{n^2}{16}\\
	={} &
	\bE[|\tilde{T} \cap C^*|(\tfrac{n}{2} - |\tilde{T} \setminus  C^*|)] - \frac{n^2}{16}
	=
	\bE[f(\tilde{T})] - \frac{n^2}{16}
	\geq
	\frac{n^2\eps}{4}
	\enspace,
\end{align*}
where the first inequality holds since Observation~\ref{obs:basic_properties} guarantees that $|\tilde{T} \cap C^*| - |\tilde{T} \setminus  C^*|$ is always non-negative, and the last inequality follows also from Observation~\ref{obs:basic_properties}. The corollary now follows by dividing this inequality by $n/2$.
\end{proof}

%We now need the following known lemma.
%\begin{lemma}[Lemma~2.2 of~\cite{feige2011maximizing}] \label{lem:sampling}
%Given a submodular function $g \colon 2^\cN \to \bR$, a set $A \subseteq \cN$ and a random subset $A(p) \subseteq A$ that includes every element of $A$ with probability $p$ (not necessarily independently),
%\[
	%\bE[f(A(p))]
	%\geq
	%(1 - p) \cdot f(\varnothing) + p \cdot f(A)
	%\enspace.
%\]
%\end{lemma}

We are now ready to prove the following lemma, which analyzes the performance guarantee of Algorithm~\ref{alg:usm_si_reduction}.
\begin{lemma}
If $T'$ is the output set of Algorithm~\ref{alg:usm_si_reduction}, then $\bE[|T' \cap C^*|] \geq (\nicefrac{1}{2} + \nicefrac{\eps}{4}) \cdot n/2$, where the expectation is over the randomness of $ALG$ and Algorithm~\ref{alg:usm_si_reduction}.
\end{lemma}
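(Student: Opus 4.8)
The plan is to condition on the set $\tilde T$ (the value of $T$ after Line~\ref{line:manual_switch}) and show the single conditional bound
$\bE[|T' \cap C^*| \mid \tilde T] \ge n/4 + d(\tilde T)/4$, where $d(\tilde T) \triangleq |\tilde T \cap C^*| - |\tilde T \setminus C^*|$, which is non-negative by Observation~\ref{obs:basic_properties}. Taking the outer expectation over the randomness of $ALG$ and invoking Corollary~\ref{cor:difference_expectation}, which gives $\bE[d(\tilde T)] \ge n\eps/2$, then yields $\bE[|T' \cap C^*|] \ge n/4 + n\eps/8 = (\nicefrac12 + \nicefrac{\eps}{4}) \cdot n/2$, as desired.

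To prove the conditional bound, write $a = |\tilde T \cap C^*|$ and $b = |\tilde T \setminus C^*|$, so that $d(\tilde T) = a - b$ and $|\tilde T| = a + b$, and split into the two cases of Algorithm~\ref{alg:usm_si_reduction}. If $|\tilde T| \le n/2$, the output is $\tilde T \cup R$ with $R$ a uniformly random subset of $\cN \setminus \tilde T$ of size $n/2 - |\tilde T|$; since $\cN \setminus \tilde T$ contains exactly $n/2 - a$ elements of $C^*$ out of $n - a - b$ elements, the hypergeometric mean gives $\bE[|T' \cap C^*| \mid \tilde T] = a + (n/2 - a - b)\cdot\frac{n/2 - a}{n - a - b}$. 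Putting everything over the common denominator $n - a - b$, a short cancellation collapses $a + (n/2 - a - b)\cdot\frac{n/2 - a}{n - a - b} - n/4$ to $\frac{(n/2)\,d(\tilde T)}{2(n - a - b)}$, and since $0 \le a + b \le n/2$ forces $n/2 \le n - a - b \le n$, this is at least $d(\tilde T)/4$. If instead $|\tilde T| > n/2$, the output is a uniformly random size-$(n/2)$ subset $R$ of $\tilde T$, so $\bE[|T' \cap C^*| \mid \tilde T] = \tfrac n2 \cdot \tfrac{a}{a+b} = \tfrac n4 + \tfrac{(n/2)\,d(\tilde T)}{2(a+b)}$, which is at least $\tfrac n4 + d(\tilde T)/4$ because $a + b = |\tilde T| \le n$. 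In both cases we obtain $\bE[|T' \cap C^*| \mid \tilde T] \ge n/4 + d(\tilde T)/4$.

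Combining the two cases and taking the expectation, linearity together with Corollary~\ref{cor:difference_expectation} gives $\bE[|T' \cap C^*|] \ge n/4 + \tfrac14\bE[d(\tilde T)] \ge n/4 + n\eps/8 = (\nicefrac12 + \nicefrac{\eps}{4})\cdot n/2$, completing the proof.

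The main obstacle is the bookkeeping in the first case: one must correctly identify the hypergeometric parameters governing the random completion $R$ and then carry out the algebraic simplification that reduces $a + (n/2 - a - b)\tfrac{n/2 - a}{n - a - b} - n/4$ to the clean form $\tfrac{(n/2)\,d(\tilde T)}{2(n - a - b)}$; once the conditional inequality $\bE[|T' \cap C^*| \mid \tilde T] \ge n/4 + d(\tilde T)/4$ is established, everything else is immediate. A subtle point worth flagging is that Corollary~\ref{cor:difference_expectation} must be applied to the \emph{unconditional} expectation — the event $\{|\tilde T| \le n/2\}$ is not independent of $\tilde T$ — which is precisely why it matters that the same bound $n/4 + d(\tilde T)/4$ holds in both branches rather than arguing separately within each branch.
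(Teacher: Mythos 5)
Your proof is correct and follows essentially the same route as the paper's: fix $\tilde T$, compute the conditional expectation in each of the two branches via the hypergeometric/uniform-sampling mean, lower-bound both by $\tfrac{n}{4} + \tfrac{1}{4}\bigl(|\tilde T \cap C^*| - |\tilde T \setminus C^*|\bigr)$ using Observation~\ref{obs:basic_properties}, and then apply the law of total expectation together with Corollary~\ref{cor:difference_expectation}. The algebraic simplifications you carry out (and the bound itself) match the paper's argument, just written with the $n/4$ term subtracted off rather than factored as $\tfrac{n}{4}\bigl(1 + \tfrac{d(\tilde T)}{n}\bigr)$.
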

\begin{proof}
Let us begin this proof by fixing the set $\tilde{T}$. In other words, until we unfix this set, all the expectations we use are assumed to be only over the randomness of Lines~\ref{line:increase_T} and~\ref{line:decrease_T} of Algorithm~\ref{alg:usm_si_reduction}. Whenever Line~\ref{line:decrease_T} of the algorithm is used, we have
\begin{align*}
	\bE[|T' \cap C^*|]
	={} &
	\frac{n/2}{|\tilde{T}|} \cdot |\tilde{T} \cap C^*|
	=
	\frac{n}{4} \cdot \left(1 + \frac{2|\tilde{T} \cap C^*| - |\tilde{T}|}{|\tilde{T}|}\right)\\
	={} &
	\frac{n}{4} \cdot \left(1 + \frac{|\tilde{T} \cap C^*| - |\tilde{T} \setminus C^*|}{|\tilde{T}|}\right)
	\geq
	\frac{n}{4} \cdot \left(1 + \frac{|\tilde{T} \cap C^*| - |\tilde{T} \setminus C^*|}{n}\right)
	\enspace,
\end{align*}
where the inequality holds since Observation~\ref{obs:basic_properties} guarantees that $|\tilde{T} \cap C^*| - |\tilde{T} \setminus  C^*| \geq 0$. Similarly, whenever Line~\ref{line:increase_T} of Algorithm~\ref{alg:usm_si_reduction} is used, we get
\begin{align*}
	\bE[|T' \cap C^*|]
	={} &
	|\tilde{T} \cap C^*| + \bE[|R \cap C^*|]
	=
	|\tilde{T} \cap C^*| + \frac{\tfrac{n}{2} - |\tilde{T}|}{|\cN \setminus \tilde{T}|} \cdot |C^* \setminus \tilde{T}|\\
	={} &
	|\tilde{T} \cap C^*| + \frac{\tfrac{n}{2} - |\tilde{T}|}{n - |\tilde{T}|} \cdot (\tfrac{n}{2} - |\tilde{T} \cap C^*|)
	=
	\frac{\tfrac{n}{2} |\tilde{T} \cap C^*| + \frac{n}{2}(\tfrac{n}{2} - |\tilde{T}|)}{n - |\tilde{T}|}\\
	={} &
	\frac{n}{4} \cdot \left(1 + \frac{|\tilde{T} \cap C^*| - |\tilde{T} \setminus C^*|}{n - |\tilde{T}|}\right)
	\geq
	\frac{n}{4} \cdot \left(1 + \frac{|\tilde{T} \cap C^*| - |\tilde{T} \setminus C^*|}{n}\right)
	\enspace.
\end{align*}
Therefore, even if we unfix $\tilde{T}$, and take expectation also over the randomness of this set, we still get by the law of total expectation that
\[
	\bE[|T' \cap C^*|]
	\geq
	\frac{n}{4} \cdot \left(1 + \frac{\bE[|\tilde{T} \cap C^*| - |\tilde{T} \setminus C^*|]}{n}\right)
	\geq
	\frac{n}{4} \cdot \left(1 + \frac{n\eps/2}{n}\right)
	=
	\frac{n}{2} \cdot \left(\frac{1}{2} + \frac{\eps}{4}\right)
	\enspace,
\]
where the second inequality follows from Corollary~\ref{cor:difference_expectation}.
\end{proof}

We can now prove Theorem~\ref{thm:unconstrained_lower_bound}, which we repeat here for convenience.
\ThmUnconstrainedLowerBound*
\begin{proof}
By Lemma~\ref{lem:SI_hardness_parameters}, Algorithm~\ref{alg:usm_si_reduction} must use $\Omega(\frac{\eps^2 n}{\log n})$ queries to the oracle of {\SI}. However, aside from the queries used by $ALG$, Algorithm~\ref{alg:usm_si_reduction} queries this oracle only a constant number of times, which implies that $ALG$ must be using $\Omega(\frac{\eps^2 n}{\log n})$ such queries. Recall now that $ALG$, as an algorithm for {\USM}, queries the oracle of {\SI} only by querying the value oracle of $f$, and every query to this value oracle results in a single query to the oracle of {\SI}. Therefore, $ALG$ must be using $\Omega(\frac{\eps^2 n}{\log n})$ value oracle queries. This completes the proof of the theorem since $ALG$ was chosen as an arbitrary $(\nicefrac{1}{4} + \eps)$-approximation algorithm for {\USM}.
\end{proof}
\section{Set System and Multiple Knapsacks Constraints} \label{sec:SMKS}

In this section we consider the {\SMKSFull} problem (\SMKS). In this problem we are given a non-negative monotone submodular function $f\colon 2^\cN \to \nnR$, a $p$-set system $\cM = (\cN, \cI)$, $d \geq 1$ non-negative cost functions $c_i\colon \cN \to \nnR$ (one function for every integer $1 \leq i \leq d$) and $d$ positive values $B_1, B_2, \dotsc, B_d$. We say that a set $S \subseteq \cN$ is \emph{feasible} if it is independent in $\cM$ and also obeys $c_i(S) \leq B_i$ for every $1 \leq i \leq d$ (where $c_i(S) \triangleq \sum_{u \in S} c_i(u)$). The objective of the problem is to find a set maximizing $f$ among all feasible sets $S \subseteq \cN$. Below, we denote by $r$ the rank of the set system $\cM$. We also make a few simplifying assumptions.
\begin{itemize}
	\item We assume that $\cM$ does not include any self-loops or elements $u \in \cN$ such that $c_i(u) > B_i$ for some integer $1 \leq i \leq d$. Any element violating these assumptions can be simply discarded since it cannot belong to any feasible set.
	\item We assume that the sum $\sum_{i = 1}^d c_i(u)$ is strictly positive for every element $u \in \cN$. Elements violating this assumption can be added to any solution, and therefore, it suffices to solve the problem without such elements, and then add them to the obtained solution at the every end of the algorithm.
	\item We assume that $B_i = 1$ for every integer $1 \leq i \leq d$. This can be guaranteed by scaling the cost functions $c_i$.
\end{itemize}

Recall that Section~\ref{ssc:related_work} demonstrated a tradeoff between the time complexity and approximation guarantee of state-of-the-art algorithms for {\SMKS}. The next theorem further studies this tradeoff, and improves over the state-of-the-art approximation for nearly-linear time algorithms. The $\tilde{O}$ notation in this theorem suppresses factors that are poly-logarithmic in $n$, $d$ and $\eps^{-1}$.
\begin{theorem} \label{thm:SMKS}
For every $\eps > 0$, there exist algorithms that achieve $[(1 + O(\eps))(p + 1 + \tfrac{7}{4}d)]^{-1}$-approximation and $(p + 1.5556 + \tfrac{13}{9}d + \eps)^{-1}$ for {\SMKSFull} (\SMKS) in $\tilde{O}(nd + n/\eps)$ and $\tilde{O}(n^2 + nd)$ time, respectively.
\end{theorem}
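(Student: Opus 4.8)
The plan is to lift the thresholded density-greedy of Section~\ref{sec:greedy}, augmented with a snapshot-and-single-element post-processing in the spirit of Section~\ref{sec:plus}, to the simultaneous presence of a $p$-set system and $d$ knapsacks. First I would obtain a constant-factor estimate $\Gamma$ of $f(OPT)$; this can be done in $O(nd)$ time either by a variant of Algorithm~\ref{alg:estimate} whose density test uses the aggregate cost $\bar c(u) \triangleq \sum_{i=1}^{d} c_i(u)$ (strictly positive by assumption) and which only inserts elements keeping the solution independent in $\cM$, or, more crudely, by trying the $O(\eps^{-1}\log n)$ geometric guesses lying between $\max_{u}f(\{u\})$ and $n\cdot\max_{u}f(\{u\})$. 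The core routine then sweeps an exponentially decreasing sequence of thresholds $\tau$; for each $\tau$ it scans $\cN$ once and inserts an element $u$ into the current set $S$ whenever $S+u\in\cI$, adding $u$ leaves every knapsack within budget, and $f(u\mid S)/\bar c(u)\ge\tau$. Precomputing all $\bar c(u)$ once and maintaining each $c_i(S)$ incrementally, a single pass costs $O(nd)$, and the number of passes is $O(\eps^{-1}\log\alpha)$ as in Observation~\ref{obs:time_threshold_greedy}, yielding the claimed $\tilde O(nd+n/\eps)$ running time for the first algorithm.

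For the approximation guarantee I would classify every $o\in OPT\setminus S$ by the reason it was not inserted. If $o$ was blocked because $S+o\notin\cI$, a Fisher-style exchange over the $p$-set system (as in~\cite{fisher1978analysis}) charges $o$ to at most $p$ elements of $S$ whose density, hence marginal value, was at least as large when they entered, so that summed up these give a contribution of the form $(p+1)\cdot f(S)$. If $o$ was blocked because inserting it would overflow some knapsack $i$, then knapsack $i$ is filled to within $\bar c(o)$ of its budget, so the density guarantees of the inserted elements (each at least the threshold at its insertion time) force $f(S)$ to be a constant fraction of $\Gamma$. Finally, if $o$ survived all feasibility tests but its density dropped below the last threshold $\tau$, then $f(o\mid S)\le\bar c(o)\cdot\tau$, and since $OPT$ obeys all $d$ knapsacks, $\sum_{o}\bar c(o)=\bar c(OPT)\le d$, so these residual contributions total at most a constant times $d\cdot\tau\le O(d)\cdot f(OPT)$. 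Combining the three cases already reproduces a denominator of shape $p+1+2d$, matching~\cite{badanidiyuru2014fast}. The improvement of the knapsack coefficient from $2$ to $7/4$ comes from the post-processing: following Algorithm~\ref{alg:post_processing}, I would store $O(\eps^{-1}\log\eps^{-1})$ snapshots $S^{(j)}$ taken at geometrically spaced levels of consumed budget, augment each with the best single element still fitting all knapsacks, and also keep the best singleton. A snapshot taken just before a knapsack crosses a high occupancy level is cheap in budget yet already valuable by the density bounds, and balancing ``how full a blocking knapsack is allowed to be'' against ``how much value is guaranteed'' --- the free parameter living in the threshold schedule and the snapshot spacing --- optimizes to $7/4$ on the $d$ term while leaving the set-system constant at $1$ (absorbed into the $1+O(\eps)$).

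For the second algorithm, the larger $\tilde O(n^2+nd)$ budget allows replacing the thresholded scan by an exact per-step density greedy: at each of at most $r\le n$ steps pick the feasible element of maximum current density in $O(n+nd)$ time. Removing the $(1-\eps)$ discretization slack lets the barrier accounting above be carried out against the true greedy trajectory, which tightens the knapsack coefficient to $13/9$; the price is a slightly more conservative exchange bookkeeping that raises the additive constant from $1$ to $1.5556$, and the remaining $+\eps$ absorbs the estimation error for $\Gamma$ together with the error incurred by handling the $O(1)$ per knapsack ``expensive'' $OPT$ elements separately, analogously to the element $r$ in Lemma~\ref{lem:huge_r}.

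The step I expect to be the main obstacle is the design of a single potential/barrier function that makes the three rejection cases commensurable with a $p$-set system and all $d$ knapsacks present at once, and then the bookkeeping needed to extract the exact fractions $7/4$ and $13/9$: obtaining the crude $p+1+2d$ bound is essentially the argument of~\cite{badanidiyuru2014fast}, but the refined constants require a careful joint optimization over the threshold schedule, the snapshot levels, and the split of $OPT$ into its expensive and cheap elements, all while keeping every per-element operation $O(d)$ so that the nearly-linear running time is preserved.
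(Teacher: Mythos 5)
There is a genuine gap. Your sketch plausibly reproduces the known $(p+2d+1+\eps)^{-1}$-type guarantee of~\cite{badanidiyuru2014fast}, but the mechanisms you invoke for the improved constants are not the ones that actually produce them, and as described they would fail. The crucial missing idea is the split of $\cN$ into \emph{big} elements (those with $c_i(u) > \lambda^{-1}$ for some knapsack $i$) and small elements. Without it, your ``knapsack-blocked'' case breaks: if the blocking element $o \in OPT$ has $c_i(o)$ close to $1$, the fact that knapsack $i$ is within $\bar c(o)$ of its budget says nothing (the solution may be nearly empty in that knapsack), so the density argument gives no lower bound on $f(S)$. The paper handles this by running the density greedy only on small elements, so that upon overflow a procedure (\SetExtract) can extract a feasible subset whose \emph{summed} knapsack mass is at least $\lambda/(\lambda+1)$, hence value at least $\lambda\rho/(\lambda+1)$; the big elements (of which $OPT$ contains at most $d(\lambda-1)$) are covered by a separate brute-force routine --- best singleton for the first algorithm, best pair (via the sampling lemma of~\cite{feige2011maximizing}) for the second. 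Your substitute, augmenting $O(\eps^{-1}\log\eps^{-1})$ snapshots by a single element as in Algorithm~\ref{alg:post_processing}, cannot compensate for up to $\Theta(d\lambda)$ large $OPT$ elements spread over $d$ knapsacks, and you give no argument that any balancing of snapshot levels ``optimizes to $7/4$.''

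Relatedly, your explanation of the second algorithm is not a workable route to the stated constants. Replacing the thresholded scan by exact per-step density greedy only removes a $(1+O(\eps))$ multiplicative slack; it cannot move the knapsack coefficient from $7/4$ to $13/9$, nor does any ``more conservative exchange bookkeeping'' produce the additive $1.5556$. In the paper these numbers come from a concrete optimization: the inverse ratio has the form $p+1+\alpha^{-1}(1-\lambda^{-1}-\lambda^{-2})+d(1+\lambda^{-1})$, where $\alpha$ is the guarantee of the big-element routine and $|OPT\cap B|\le d(\lambda-1)$; choosing $\lambda=2$ with best-singleton ($\alpha^{-1}\le d(\lambda-1)$) gives $d(\lambda-1+\lambda^{-1}+\lambda^{-2})=\tfrac74 d$, while $\lambda=3$ with best-pair ($\alpha^{-1}\le\max\{d(\lambda-1)/2,1\}$) gives $\tfrac{13}{9}d$, and the constant $1.5556=\tfrac{14}{9}$ arises from the case $|OPT\cap B|=1$ as $p+2-\lambda^{-1}-\lambda^{-2}$. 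The extra $\tilde O(n^2)$ time in the second algorithm pays for enumerating pairs of big elements (and an exact greedy over at most $r$ steps), not for removing discretization error. A further missing piece is that the small-element greedy must be run with a density parameter $\rho$ close to an unknown optimum $\rho^*$ depending on $|OPT\cap B|$; the paper recovers it by a binary search that exploits the monotone structure of the overflow event, a step your fixed geometric threshold schedule does not replace. Since you yourself flag the extraction of $7/4$ and $13/9$ as the open obstacle, the proposal as it stands does not prove the theorem.
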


In Section~\ref{ssc:general_algorithm} we present and analyze a basic version of the algorithm that we use to prove Theorem~\ref{thm:SMKS} (in Appendix~\ref{app:nearly_linear_main} we explain how to make this basic algorithm nearly-linear). The basic version of our algorithm assumes access to an estimate $\rho$ of the density of the small elements of an optimal solution for the problem. In Section~\ref{ssc:rho_search} we explain how the dependence of the algorithm on $\rho$ can be dropped without increasing the time complexity of the algorithm by too much. Finally, in Section~\ref{ssc:big_elements_algorithms} we show how our algorithm can be used to derive the results stated in Theorem~\ref{thm:SMKS}.

\subsection{Basic Algorithm for {\SMKS}} \label{ssc:general_algorithm}

In this section we present and analyze a basic version algorithm of the algorithm that we use to prove Theorem~\ref{thm:SMKS}. This algorithm is given as Algorithm~\ref{alg:main}, and it gets two parameters. The first of these parameters is an integer $\lambda \geq 1$. Elements that have a value larger than $\lambda^{-1}$ with respect to at least one function $c_i$ are considered big elements, and are stored in the set $B$ of the algorithm. The other elements of $\cN$ are considered small elements. The algorithm never considers any solution that includes both big and small elements. Instead, it creates one candidate solution $S_B$ from the big elements, and one candidate solution from the small elements, and then outputs the better among the two (technically, in some cases the algorithm outputs directly the candidate solution based on the small elements without comparing it to $S_B$). The candidate solution $S_B$ is constructed using a procedure called $\BigAlg$ that gets the set $B$ as input and outputs a feasible set whose value is at least $\alpha \cdot f(OPT \cap B)$, where $\alpha$ is a some value in $(0, 1]$ and $OPT$ is an arbitrary optimal solution. At this point we ignore the implementation of $\BigAlg$, and leave the value of $\alpha$ unspecified. These gaps are filled in Section~\ref{ssc:big_elements_algorithms}.

Most of Algorithm~\ref{alg:main} is devoted to constructing the candidate solution out of small elements, which we refer to below as the ``small elements solution''. In the construction of this solution, Algorithm~\ref{alg:main} uses its second parameter, which is a value $\rho \geq 0$ that intuitively should represent the density of the small elements of $OPT$. The algorithm initializes the small elements solution to be empty, and then iteratively adds to it the element with the largest marginal contribution among the small elements that have two properties: (i) their addition to the solution does not make it dependent in $\cM$, and (ii) their density (the ratio between their marginal contribution and cost according to the linear constraints) is at least $\rho$. This process of growing the small elements solution can end in one of two ways. One option is that the process ends because no additional elements can be added to the solution (in other words, no element has the two properties stated above). In this case the better among $S_B$ and the small elements solution obtained $S_k$ is returned. The other way in which the process of growing the small elements solution can end is when it starts violating at least one linear constraint. When this happens, Algorithm~\ref{alg:main} uses a procedure called {\SetExtract} to get a subset of the small elements solution that is feasible and also has a good value, and this subset is returned.

\begin{algorithm2e}
\DontPrintSemicolon
\caption{\texttt{Basic Algorithm}$(\lambda, \rho)$} \label{alg:main}
\tcp{Build the set of big elements, and find a candidate solution based on them.}
Let $B \gets \{u \in \cN \mid \exists_{1 \leq i \leq d}\; c_i(u) > \lambda^{-1}\}$.\\
Let $S_B$ be the output set of $\BigAlg(B)$.\\

\BlankLine

\tcp{Construct a solution from the small elements.}
Let $S_0 \gets \varnothing$, $k \gets 0$.\\
\While{there exists an element $u \in \cN \setminus (S_k \cup B)$ such that $S_k + u \in \cI$ and $f(u \mid S_k) \geq \rho \cdot \sum_{i = 1}^d c_i(u)$ \label{line:candidates_condition}}
{
	Let $v_{k + 1}$ be an element maximizing $f(u \mid S_k)$ among all the elements obeying the condition of the loop.\\
	Let $S_{k + 1} \gets S_k + v_{k + 1}$.\\
	\lIf{$\max_{1 \leq i \leq d} c_i(S_{k + 1}) \leq 1$ \label{line:budget_condition}}
	{
		Increase $k$ by $1$.
	}
	\lElse
	{
		\Return{the output set of {\SetExtract}$(\lambda, S_{k + 1})$}.\label{line:exceed_budget}
	}
}
\Return{the better set among $S_B$ and $S_k$}.
\end{algorithm2e}

Let us now describe the procedure {\SetExtract}, which appears as Algorithm~\ref{alg:set_extract}. As explained above, this procedure gets a set $S$ of small elements that violates at least one of the linear constraints. Its objective is to output a subset $T$ of $S$ that does not violate any linear constraint, but is not very small in terms of the linear constraints. Since the set $S$ passes by Algorithm~\ref{alg:main} to {\SetExtract} contains only elements of density at least $\rho$, this implies that the output set $T$ of {\SetExtract} has a significant value. Algorithm~\ref{alg:set_extract} does its job by constructing $\lambda + 1$ subsets $T_1, T_2, \dotsc, T_{\lambda + 1}$ of $S$, and then outputting the subset with the maximum size with respect to the linear constraint. The first subset $T_1$ is constructed by starting with the empty set, and then simply adding elements of $S$ to $T_1$ one by one, in an arbitrary order, until some element $u_1$ cannot be added because adding it will result in a set that violates some linear constraint. The algorithm then constructs the second subset $T_2$ in essentially the same way, but makes sure to include $u_1$ in it by starting with the set $\{u_1\}$ and then adding elements of $S$ to $T_2$ one by one in an arbitrary order, until some element $u_2$ cannot be added because adding it will result in a set that violates some linear constraint. The set $T_3$ is then constructed in the same way starting from the set $\{u_1, u_2\}$, and in general the set $T_j$ is constructed by starting from the set $\{u_1, u_2, \dotsc, u_{j - 1}\}$ and then adding to it elements of $S$ in an arbitrary order until some element $u_j$ cannot be added because adding it will result in a set that violates some linear constraint. Intuitively, this method of constructing the subsets $T_1, T_2, \dotsc, T_{\lambda + 1}$ guarantees that every element $u_j$ is rejected at most once from a subset due to the linear constraints.

\begin{algorithm2e}
\caption{\SetExtract$(\lambda, S)$} \label{alg:set_extract}
\DontPrintSemicolon
\For{$j = 1$ \KwTo $\lambda + 1$ \label{line:output_set_loop}}
{
	Let $T_j \gets \{u_1, u_2, \dotsc, u_{j - 1}\}$.\\
	\For{every element $u \in S$ \label{line:scan_loop}}
	{
		\lIf{$\max_{1 \leq i \leq d} c_i(T_j + u) \leq 1$ \label{line:knapsack_check_main}}{Add $u$ to $T_j$.}
		\lElse{Denote the element $u$ by $u_j$ from this point on, and exit the loop of Line~\ref{line:scan_loop}. \label{line:exit_loop}}
	}
}
\Return{the set maximizing $\sum_{i = 1}^d c_i(T)$ among all sets $T \in \{T_1, T_2, \dotsc, T_{\lambda + 1}\}$}.
\end{algorithm2e}

A formal statement of the guarantee of {\SetExtract} is given by the next lemma.
\begin{lemma} \label{lem:set_extract}
Assuming the input set $S$ of {\SetExtract} obeys
\begin{compactitem}
	\item $c_i(S) > 1$ for some integer $1 \leq i \leq d$ and
	\item $\max_{1 \leq i \leq d} c_i(u) \leq \lambda^{-1}$ for every element $u \in S$,
\end{compactitem}
then the output set $T$ is a subset of $S$ such that $\max_{1 \leq i \leq d} c(T) \leq 1$, but $\sum_{i = 1}^d c_i(T) \geq \frac{\lambda}{\lambda + 1}$.
\end{lemma}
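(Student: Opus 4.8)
My plan is to analyze the sets $T_1, T_2, \dotsc, T_{\lambda+1}$ produced by \SetExtract\ and argue that their total linear size (summed over the $d$ knapsack constraints) is large in aggregate, so that the best of them is large individually. First I would record the easy half: every $T_j$ satisfies $\max_{1 \leq i \leq d} c_i(T_j) \leq 1$ by construction, since the inner loop on Line~\ref{line:knapsack_check_main} only adds an element $u$ to $T_j$ when doing so keeps all $c_i(T_j + u)$ at most $1$, and the starting seed $\{u_1,\dots,u_{j-1}\}$ is a subset of the previous $T_{j}$ built the same way (so it too is within budget). Also each $T_j \subseteq S$. So the real content is the lower bound $\sum_{i=1}^d c_i(T) \geq \frac{\lambda}{\lambda+1}$ for the output set $T$.

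The key combinatorial observation is that the ``rejected'' elements $u_1, u_2, \dotsc$ are threaded through the construction so that each $u_j$ is rejected at most once: $u_j$ is the element that overflowed $T_j$, but it is then forcibly inserted as part of the seed of $T_{j+1}, T_{j+2}, \dots$, so it never gets rejected again. Now I would fix an index $i^*$ with $c_{i^*}(S) > 1$ (exists by hypothesis). For each $j$, at the moment $u_j$ was rejected from $T_j$ we had $c_i(T_j) + c_i(u_j) > 1$ for \emph{some} $i = i(j)$, hence $\sum_{i=1}^d c_i(T_j) > 1 - c_{i(j)}(u_j) \geq 1 - \lambda^{-1} = \frac{\lambda-1}{\lambda}$, using the second hypothesis that $\max_i c_i(u) \leq \lambda^{-1}$ for all $u \in S$. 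That already shows every $T_j$ that actually triggered a rejection has $\sum_i c_i(T_j) > \frac{\lambda-1}{\lambda}$, which is weaker than what we want; to get $\frac{\lambda}{\lambda+1}$ I need the threading argument more carefully, summing over all $\lambda+1$ sets.

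The cleaner route: consider $\sum_{j=1}^{\lambda+1} \sum_{i=1}^d c_i(T_j \cup \{u_j\})$ where $u_j$ is the element that overflowed $T_j$ (if some $T_j$ never overflows, then $T_j \supseteq S$ after exhausting the scan loop, but then $c_i(S) \leq c_i(T_j) \leq 1$ for all $i$, contradicting the first hypothesis — so every $T_j$ does overflow and $u_j$ is well-defined). Each such $T_j \cup \{u_j\}$ has $\sum_i c_i(T_j \cup \{u_j\}) > 1$. The elements counted across all these $\lambda+1$ sets: the seed elements $u_1,\dots,u_{j-1}$ plus the scanned-in elements plus $u_j$. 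Because each element of $S$ is rejected at most once (the threading), and each rejected-then-promoted element appears in all later seeds, I would show that $\sum_{j=1}^{\lambda+1}\big(\sum_i c_i(T_j \cup \{u_j\}) - \sum_i c_i(T_j)\big) = \sum_{j=1}^{\lambda+1}\sum_i c_i(u_j) \leq (\lambda+1)\lambda^{-1} \cdot \max$-ish — wait, more directly: $\sum_{j=1}^{\lambda+1}\sum_i c_i(u_j) \leq (\lambda+1)\cdot \lambda^{-1} = \frac{\lambda+1}{\lambda}$ since each $u_j$ has $\sum_i c_i(u_j) \le \max_i c_i(u_j) \cdot$? No — $\sum_i c_i(u_j)$ could be up to $d\lambda^{-1}$. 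So I must \emph{not} sum $\sum_i$ over the rejection coordinate; instead, for each $j$ pick the single overflowing coordinate $i(j)$ and use $c_{i(j)}(T_j) > 1 - c_{i(j)}(u_j) \geq 1 - \lambda^{-1}$, giving $\max_T \sum_i c_i(T) \geq \max_j c_{i(j)}(T_j) > 1 - \lambda^{-1} = \frac{\lambda-1}{\lambda}$. This gives $\frac{\lambda-1}{\lambda}$, not $\frac{\lambda}{\lambda+1}$; note $\frac{\lambda-1}{\lambda} < \frac{\lambda}{\lambda+1}$, so I genuinely need the averaging/threading refinement — and \textbf{this is the step I expect to be the main obstacle}: extracting the extra factor by a global counting argument over all $\lambda+1$ subsets rather than a single one.

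Here is how I would push it through. Let $a_j := \sum_{i=1}^d c_i(T_j)$ and let $b_j := \sum_{i=1}^d c_i(u_j)$, so that $a_j + b_j > 1$ for every $j \in \{1,\dots,\lambda+1\}$ (each $T_j$ overflows on some coordinate; the full $\sum_i$ only makes the left side bigger, so $a_j + b_j \geq \sum_i c_i(T_j \cup \{u_j\}) > 1$ still holds). Summing, $\sum_{j=1}^{\lambda+1} a_j > (\lambda+1) - \sum_{j=1}^{\lambda+1} b_j$. Now I bound $\sum_{j=1}^{\lambda+1} b_j = \sum_{j=1}^{\lambda+1}\sum_i c_i(u_j)$: the distinct elements among $u_1,\dots,u_{\lambda+1}$ are all in $S$ and are distinct (threading: once $u_j$ is promoted into the seed it is never re-rejected, so the $u_j$'s are pairwise distinct), and each contributes $\sum_i c_i(u_j)$. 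Moreover $\{u_1,\dots,u_{\lambda+1}\} \subseteq T_{\lambda+1} \cup \{u_{\lambda+1}\}$ up to the last one, and more usefully, for any coordinate $i$ the elements $u_1,\dots,u_\lambda$ all lie in $T_{\lambda+1}$ so $\sum_{j=1}^{\lambda} c_i(u_j) \leq c_i(T_{\lambda+1}) \leq 1$; hence $\sum_{j=1}^{\lambda} b_j = \sum_i \sum_{j=1}^\lambda c_i(u_j) \leq \sum_i 1 = d$, and $b_{\lambda+1} \leq d\lambda^{-1}$, giving $\sum_{j=1}^{\lambda+1} b_j \leq d + d\lambda^{-1}$. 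That bound has a stray $d$ and doesn't close either — which tells me the intended argument keeps the per-coordinate view throughout rather than collapsing with $\sum_i$. So in the writeup I would instead argue: the returned set $T$ maximizes $\sum_i c_i(\cdot)$ among the $T_j$, and I claim $\sum_i c_i(T) \geq \frac{\lambda}{\lambda+1}$ by showing $\sum_{j=1}^{\lambda+1}\big[1 - \sum_i c_i(T_j)\big] \leq 1$; the left side, call it $\Delta$, counts for each $j$ the ``room left'' in $T_j$, and since $u_j$ would have overflowed we get room$_j < b_j$ only in the overflow coordinate — \emph{no}. I will commit in the final text to the following clean version: each element of $S$ is excluded from at most one $T_j$; therefore $\sum_{j=1}^{\lambda+1} c_i(S \setminus T_j) \leq c_i(S)$ is false in general, but $\sum_{j=1}^{\lambda+1} \mathbf{1}[u \notin T_j] \leq 1 + (\text{number of }j\text{ with }j \leq (\text{index where }u\text{ enters a seed}))$ — this needs the precise structural claim that $u \in T_j$ for all $j$ strictly greater than the unique index at which $u$ was rejected. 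Granting that structural claim, for each element $u \in S$ it is absent from at most one of the $\lambda+1$ sets, so $\sum_{j=1}^{\lambda+1}\big(\sum_i c_i(S) - \sum_i c_i(T_j)\big) = \sum_i\sum_{u \in S}\big(\lambda+1 - |\{j : u \in T_j\}|\big)c_i(u) \leq \sum_i \sum_{u\in S} 1\cdot c_i(u)$? still has the $d$. I conclude that the lemma as stated must be using $c_i(T) \le 1$ only for the single relevant $i$, i.e. reading ``$\max_{1\le i\le d} c(T) \le 1$'' literally, and that the quantity $\sum_{i=1}^d c_i(T) \ge \frac{\lambda}{\lambda+1}$ is proven by: pick the coordinate $i^*$ with $c_{i^*}(S) > 1$; among $T_1,\dots,T_{\lambda+1}$, the element $u_j$ rejected from $T_j$ satisfies (by submodularity-free, purely additive counting on coordinate $i^*$, using that every $u$ is rejected at most once so the intervals of "missing from $T_j$" are singletons) $\sum_{j=1}^{\lambda+1} c_{i^*}(T_j) \geq (\lambda+1)c_{i^*}(S) - c_{i^*}(S) = \lambda\, c_{i^*}(S) > \lambda$, hence some $T_j$ has $c_{i^*}(T_j) > \frac{\lambda}{\lambda+1}$, and therefore $\sum_i c_i(T_j) \geq c_{i^*}(T_j) > \frac{\lambda}{\lambda+1}$ for that $j$ — and the output $T$, being the maximizer of $\sum_i c_i(\cdot)$, does at least as well. \textbf{The crux, and the place I'd be most careful, is justifying $\sum_{j=1}^{\lambda+1} c_{i^*}(T_j) \geq \lambda\, c_{i^*}(S)$ from the ``each element rejected at most once'' property}; concretely, $c_{i^*}(S) - c_{i^*}(T_j) = c_{i^*}(S \setminus T_j)$ and an element $u$ lies in $S \setminus T_j$ only if $j$ equals the unique index at which $u$ was the rejected element (before that index $u$ hadn't been scanned or was scanned-in; at and after the next seed it's forced in), so $\sum_{j=1}^{\lambda+1} c_{i^*}(S\setminus T_j) = \sum_{u \in S} c_{i^*}(u)\cdot|\{j : u \notin T_j\}| \leq \sum_{u\in S} c_{i^*}(u) = c_{i^*}(S)$, which rearranges to exactly the displayed bound. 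I will present the easy budget-feasibility half first, then this counting argument, and conclude by the max-selection in the return line.
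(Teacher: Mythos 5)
Your final committed argument has a genuine gap: the structural claim that every element of $S$ is absent from at most one of the sets $T_1,\dotsc,T_{\lambda+1}$ is false. When the scan loop of {\SetExtract} rejects $u_j$ it \emph{exits immediately} (Line~\ref{line:exit_loop}), so every element of $S$ coming after $u_j$ in the scan order is also missing from $T_j$ even though it was never ``rejected''; such an element can be missing from many (even all) of the sets. Concretely, take $d=1$, $\lambda=2$, $S=\{a,b,c,e\}$ with $c_1(a)=c_1(b)=c_1(c)=c_1(e)=1/2$, scanned in the order $a,b,c,e$: then $T_1=\{a,b\}$ (with $u_1=c$), $T_2=\{c,a\}$ (with $u_2=b$), $T_3=\{c,b\}$ (with $u_3=a$), and $e$ lies in none of them. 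Here $\sum_{j=1}^{3}c_1(S\setminus T_j)=3>2=c_1(S)$ and $\sum_{j=1}^{3}c_1(T_j)=3<4=\lambda\,c_1(S)$, so both your counting inequality and the bound $\sum_{j}c_{i^*}(T_j)\geq\lambda\,c_{i^*}(S)$ derived from it fail; the latter cannot hold in general anyway, since its left-hand side is at most $\lambda+1$ while the hypotheses place no upper bound on $c_{i^*}(S)$. A smaller flaw: your justification that each seed $\{u_1,\dotsc,u_{j-1}\}$ is within budget (``a subset of the previous $T_j$'') is incorrect, because $u_{j-1}\notin T_{j-1}$; the right reason is simply that the seed has at most $\lambda$ elements, each of cost at most $\lambda^{-1}$ in every coordinate.

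Ironically, you wrote down the two facts that the paper's proof actually combines and then abandoned them. For each $j\leq\lambda$, the rejection of $u_j$ gives $\sum_i c_i(T_j)=\sum_i[c_i(T_j+u_j)-c_i(u_j)]\geq\max_i c_i(T_j+u_j)-\sum_i c_i(u_j)>1-\sum_i c_i(u_j)$; and since $T_{\lambda+1}$ is seeded with all of $u_1,\dotsc,u_\lambda$, we have $\sum_i c_i(T_{\lambda+1})\geq\sum_{j=1}^{\lambda}\sum_i c_i(u_j)$. Summing these $\lambda+1$ lower bounds, the loss terms $\sum_i c_i(u_j)$ cancel exactly, so $\sum_{j=1}^{\lambda+1}\sum_i c_i(T_j)\geq\lambda$, and the returned set, being the maximizer of $\sum_i c_i(\cdot)$, satisfies $\sum_i c_i(T)\geq\lambda/(\lambda+1)$. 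There is no need to bound $\sum_j\sum_i c_i(u_j)$ in absolute terms (that is where your stray factor of $d$ came from), nor to work coordinate by coordinate: the whole point is that the value of $T_{\lambda+1}$ pays back the losses of the first $\lambda$ sets.
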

\begin{proof}
For any $1 \leq j \leq \lambda + 1$, {\SetExtract} initializes the set $T_j$ to contain $j - 1 \leq \lambda$ elements, which implies that immediately after this initialization the set $T_j$ obeyed $\max_{1 \leq i \leq d} c_i(T_j) \leq 1$ because of the second condition of the lemma. After the initialization of $T_j$, Algorithm~\ref{alg:set_extract} grows it by adding to it only elements whose addition does not make $\max_{1 \leq i \leq d} c_i(T_j)$ exceed $1$. This method of growth guarantees that the set $T_j$ keeps obeying $\max_{1 \leq i \leq d} c_i(T_j) \leq 1$ throughout the execution of the algorithm. Therefore, since $T$ is chosen as the set $T_j$ for some integer $j$, it obeys $\max_{1 \leq i \leq d} c_i(T) \leq 1$.

Consider any iteration of the loop on Line~\ref{line:output_set_loop} of {\SetExtract}. If this loop never reaches Line~\ref{line:exit_loop}, then we are guaranteed that all the elements of $S$ are added to $T_j$, which contradicts the inequality $\max_{1 \leq i \leq d} c_i(T) \leq 1$ that we have proved above because we are guaranteed that $c_i(S) > 1$ for some integer $1 \leq i \leq d$. Therefore, {\SetExtract} reaches Line~\ref{line:exit_loop} in every iteration of the outter loop. Specifically, for any integer $1 \leq j \leq \lambda$, since the algorithm reached Line~\ref{line:exit_loop} in iteration number $j$ of this loop, we must have $\max_{1 \leq i \leq d} c_i(T_j + u_j) > 1$. Hence,
\[
	\sum_{i = 1}^d c_i(T_j)
	=
	\sum_{i = 1}^d [c_i(T_j + u_j) - c_i(u_j)]
	\geq
	\max_{1 \leq i \leq d} c_i(T_j + u_j) - \sum_{i = 1}^d c_i(u_j)
	>
	1 - \sum_{i = 1}^d c_i(u_j)
	\enspace.
\]
Using this inequity and the observation that $T_{\lambda + 1}$ includes all the elements $u_1, u_2, \dotsc, u_\lambda$, we get
\begin{align*}
	\sum_{i = 1}^d c_i(T)
	={} &
	\max_{1 \leq j \leq \lambda + 1} \sum_{i = 1}^d c_i(T_j)
	\geq
	\frac{\sum_{j = 1}^{\lambda + 1} \sum_{i = 1}^d c_i(T_j)}{\lambda + 1}\\
	\geq{} &
	\frac{\sum_{j = 1}^{\lambda} [1 - \sum_{i = 1}^d c_i(u_j)] + \sum_{j = 1}^\lambda \sum_{i = 1}^d c_i(u_j)}{\lambda + 1}
	=
	\frac{\lambda}{\lambda + 1}
	\enspace.
	\qedhere
\end{align*}
\end{proof}

We now get to the analysis of the full Algorithm~\ref{alg:main}. Let $\ell$ be the final value of the variable $k$ of Algorithm~\ref{alg:main}.
\begin{observation} \label{obs:feasible}
Algorithm~\ref{alg:main} outputs a feasible set.
\end{observation}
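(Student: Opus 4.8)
The plan is to verify feasibility separately for each of the two ways in which Algorithm~\ref{alg:main} can return a set. First I would observe that the output is always one of three things: the set $S_B$ produced by $\BigAlg$, the set $S_k$ (specifically $S_\ell$) built from the small elements, or the set produced by a call to {\SetExtract}$(\lambda, S_{k+1})$ on Line~\ref{line:exceed_budget}. Feasibility means independence in $\cM$ together with $c_i(\cdot) \leq 1$ for every $1 \leq i \leq d$ (recall we scaled so that $B_i = 1$), so I would check these two properties for each case.

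For the case where the output is $S_B$, feasibility is immediate from the assumed specification of $\BigAlg$, which is stated to output a feasible set. For the case where the output is $S_\ell$, I would argue by induction on $k$ that every set $S_k$ maintained by the main loop is feasible: $S_0 = \varnothing$ is trivially feasible; the loop condition on Line~\ref{line:candidates_condition} only picks an element $v_{k+1}$ with $S_k + v_{k+1} \in \cI$, so independence in $\cM$ is preserved; and the index $k$ is increased (so that $S_{k+1}$ becomes a ``kept'' solution) only when the test on Line~\ref{line:budget_condition} confirms $\max_{1 \leq i \leq d} c_i(S_{k+1}) \leq 1$. Hence the final $S_\ell$ is independent and respects all knapsack budgets.

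The remaining case is the one returned by {\SetExtract} on Line~\ref{line:exceed_budget}. Here I would first note that the call happens only when $S_{k+1} = S_k + v_{k+1}$ violates some knapsack, i.e. $c_i(S_{k+1}) > 1$ for some $i$, so the first hypothesis of Lemma~\ref{lem:set_extract} is met; and since every element that ever enters the small-elements solution is chosen from $\cN \setminus (S_k \cup B)$ and $B$ contains exactly the elements with $c_i(u) > \lambda^{-1}$ for some $i$, every element of $S_{k+1}$ satisfies $\max_{1 \leq i \leq d} c_i(u) \leq \lambda^{-1}$, meeting the second hypothesis. Lemma~\ref{lem:set_extract} then guarantees the returned set $T \subseteq S_{k+1}$ obeys $\max_{1 \leq i \leq d} c_i(T) \leq 1$. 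For independence in $\cM$, I would use downward closure of $\cI$: $T$ is a subset of $S_{k+1} = S_k + v_{k+1}$, which is independent by the loop condition, so $T \in \cI$ as well. Thus $T$ is feasible, completing all cases.

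I do not expect any serious obstacle here; the only point requiring a little care is making sure the hypotheses of Lemma~\ref{lem:set_extract} are genuinely satisfied at the call site (the ``small element'' bound $c_i(u) \leq \lambda^{-1}$ following from the definition of $B$, and the budget-violation hypothesis following from the \textbf{else} branch being taken), and that downward closure of $\cI$ is invoked to transfer independence from $S_{k+1}$ to the extracted subset.
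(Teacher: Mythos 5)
Your proposal is correct and follows essentially the same route as the paper's proof: the same case split into $S_B$, $S_\ell$, and the {\SetExtract} output, with $S_B$ handled by the assumed guarantee of $\BigAlg$, $S_\ell$ by the loop conditions on Lines~\ref{line:candidates_condition} and~\ref{line:budget_condition}, and the extracted set handled by verifying the two hypotheses of Lemma~\ref{lem:set_extract} plus downward closedness of $\cI$ for independence. No gaps.
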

\begin{proof}
One can observe that the set $S_\ell$ is feasible because the conditions of Lines~\ref{line:candidates_condition} and~\ref{line:budget_condition} of Algorithm~\ref{alg:main} guarantee that it assigns to $S_k$ only feasible sets for any $0 \leq k \leq \ell$. Furthermore, $S_B$ is feasible by the properties we assume for the procedure {\BigAlg}. Therefore, to prove the observation it only remains to show that the output set $T$ of {\SetExtract}$(\lambda, S_{\ell + 1})$ is feasible when Line~\ref{line:exceed_budget} is executed.

To see that this is indeed the case, we note that $T$ is a subset of $S_{\ell + 1}$, which is an independent set of $\cM$, and therefore, $T$ is also independent in $\cM$. Additionally, Lemma~\ref{lem:set_extract} implies that $\max_{1 \leq i \leq d} c_i(T) \leq 1$ because (i) we are guaranteed by the condition of Line~\ref{line:budget_condition} that there exists an integer $1 \leq d \leq k$ such that $c_i(S_{\ell + 1}) = c_i(S_\ell + v_{\ell + 1}) > 1$ when Line~\ref{line:exceed_budget} is executed, and (ii) we are guaranteed that $\max_{u \in S_{\ell + 1}} \max_{1 \leq i \leq d} c_i(u) \leq \lambda^{-1}$ since $S_{\ell + 1}$ contains only small elements.
\end{proof}

Next, we analyze the time complexity of Algorithm~\ref{alg:main}. 
\begin{lemma}
Algorithm~\ref{alg:main} has a time complexity of $O(\lambda n d + nr + T_B)$, where $T_B$ is the time complexity of $\BigAlg$.
\end{lemma}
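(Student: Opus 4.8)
The plan is to charge the running time of Algorithm~\ref{alg:main} to its five natural components and bound each separately: (i) the construction of the set $B$ of big elements; (ii) the call to $\BigAlg(B)$; (iii) the \texttt{while} loop that builds the small-elements solution $S_k$; (iv) the (at most one) invocation of {\SetExtract} on Line~\ref{line:exceed_budget}; and (v) the final comparison of $S_B$ against $S_\ell$. Components (i) and (v) are immediate: building $B$ requires one pass over $\cN$ testing the $d$ inequalities $c_i(u) > \lambda^{-1}$ per element, hence $O(nd)$ time, and the final comparison takes $O(1)$ oracle queries. Component (ii) contributes exactly $T_B$ by definition of this quantity.

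For component (iii), I would first observe that the body of the \texttt{while} loop is executed at most $r + 1$ times: each execution either increments $k$ or returns, and $k$ can never exceed $r$ because the loop condition forces every $S_k$ to be independent in $\cM$ while $|S_k| = k$, so $k \le \operatorname{rank}(\cM) = r$. The subtle point --- and the main (mild) obstacle --- is that a naive accounting of each loop body gives $O(nd)$ per iteration (an $O(d)$-cost density test per element), which would only yield an $O(nrd)$ bound rather than the claimed $O(nr + nd)$. The fix is to note that the only per-element quantity whose computation costs $\Omega(d)$, namely $\sum_{i=1}^d c_i(u)$, does not depend on the current solution $S_k$; hence I would precompute all these sums once, in $O(nd)$ time, at the start of the algorithm. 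After this precomputation, each execution of the loop body performs a single scan of $\cN$ in which each element $u$ costs only $O(1)$ (a membership test in $\cN \setminus (S_k \cup B)$ maintained via a boolean array, one independence query $S_k + u \in \cI$, one value query for $f(S_k + u)$ with $f(S_k)$ cached, and one comparison of $f(u \mid S_k)$ against $\rho \cdot \sum_i c_i(u)$), plus $O(d)$ work per iteration to maintain the running values $c_i(S_{k+1})$ and test the budget condition on Line~\ref{line:budget_condition}. Summing over the at most $r+1$ iterations and using $r \le n$, this gives $O(nd) + O\bigl((r+1)(n+d)\bigr) = O(nr + nd)$.

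For component (iv), I would use the fact that {\SetExtract} is invoked only on $S_{\ell + 1}$, whose size is $\ell + 1 \le r + 1 \le n$, and that each of the sets $T_1, \dotsc, T_{\lambda+1}$ constructed inside {\SetExtract} is a subset of $S_{\ell+1}$ (the seed elements $u_1, \dotsc, u_{j-1}$ are themselves rejected elements of $S_{\ell+1}$), so $|T_j| \le |S_{\ell+1}| \le n$. Consequently every one of the $\lambda + 1$ outer iterations of {\SetExtract} does $O(d)$ work per element it touches --- on $O(n)$ elements: building the seed and then scanning $S_{\ell+1}$ while maintaining the running values $c_i(T_j)$ --- so {\SetExtract} runs in $O(\lambda n d)$ time, and since it is called at most once this is its total contribution. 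Adding the five contributions, $O(nd) + T_B + O(nr + nd) + O(\lambda n d) + O(1) = O(\lambda n d + nr + T_B)$, which is the claimed bound. I expect the only place requiring genuine care to be the precomputation trick in component (iii); the remaining bounds are routine once the number of loop iterations is capped using the rank of $\cM$ and once one observes that the $T_j$'s never grow past the size of the set passed to {\SetExtract}.
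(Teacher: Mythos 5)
Your proposal is correct and follows essentially the same route as the paper's proof: bound the number of loop iterations by the rank $r$, precompute the sums $\sum_{i=1}^d c_i(u)$ once in $O(nd)$ time so each iteration costs $O(n+d)$, and charge the single {\SetExtract} call $O(\lambda n d)$ (the paper writes this as $O(\lambda r d)$, which is the same up to the final bound). No gaps to report.
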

\begin{proof}
The construction of the set $B$ requires $O(dn)$ time, and therefore, the time complexity required for the entire Algorithm~\ref{alg:main} except for the loop starting on Line~\ref{line:candidates_condition} is $O(dn + T_B)$. In the rest of this proof we show that this loop requires $O(\lambda n d + nr)$ time, which implies the lemma.

The loop starting on Line~\ref{line:candidates_condition} of Algorithm~\ref{alg:main} runs at most $r$ times because the size of the set $S_k$ grows by at least one after every such iteration (and this set always remains feasible in $\cM$). To understand the time complexity of the iterations of the last loop, we can observe that, assuming we maintain the values $c_i(S_k)$, each such iteration takes $O(n + d)$ time, with the exception of the following operations.
\begin{itemize}
	\item In Line~\ref{line:candidates_condition} we need to calculate the sum $\sum_{i = 1}^d c_i(u)$ for multiple elements $u$, which takes $O(d)$ time per element. However, we can pre-calculate this sum for all the elements of $\cN$ in $O(nd)$ time.
	\item Executing $\SetExtract(\lambda, S_{k + 1})$ requires $O(\lambda |S_{k + 1}| d) = O(\lambda r d)$ time. However, this procedure is executed at most once by Algorithm~\ref{alg:main}.
\end{itemize}
Combining all the above, we get that the loop starting on Line~\ref{line:tau_loop} of Algorithm~\ref{alg:main} requires only $O(nd + \lambda r d + r(n + d)) = O(\lambda n d + nr)$ time.
\end{proof}

Our next objective is to analyze the approximation ratio of Algorithm~\ref{alg:main}. Let $E$ be the event that Algorithm~\ref{alg:main} returns through Line~\ref{line:exceed_budget}. We begin the analysis of Algorithm~\ref{alg:main} by looking separately at the case in which the event $E$ happens and at the case in which it does not happen. When the event $E$ happens, the output set of Algorithm~\ref{alg:main} is the output set of {\SetExtract}. This set contains only high density elements and is large in terms of the linear constraints (by Lemma~\ref{lem:set_extract}), which provides a lower bound on its value. The following lemma formalizes this argument.
\begin{lemma} \label{lem:E_guarantee}
If the event $E$ happens, then Algorithm~\ref{alg:main} returns a solution of value at least $\frac{\lambda \rho}{\lambda + 1}$.
\end{lemma}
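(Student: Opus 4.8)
The plan is to trace through what happens when the event $E$ occurs, and combine the size guarantee of \SetExtract{} (Lemma~\ref{lem:set_extract}) with the density lower bound that every element added to the small-elements solution satisfies. Recall that when $E$ happens, Algorithm~\ref{alg:main} returns the output set $T$ of $\SetExtract(\lambda, S_{\ell+1})$, where $S_{\ell+1} = S_\ell + v_{\ell+1}$ is the first small-elements set to violate some linear constraint. First I would verify that the two hypotheses of Lemma~\ref{lem:set_extract} are met for the input set $S_{\ell+1}$: the first hypothesis ($c_i(S_{\ell+1}) > 1$ for some $i$) holds precisely because the \textbf{else} branch on Line~\ref{line:exceed_budget} is taken, meaning the condition on Line~\ref{line:budget_condition} failed; the second hypothesis ($\max_{1\le i\le d} c_i(u) \le \lambda^{-1}$ for all $u \in S_{\ell+1}$) holds because $S_{\ell+1}$ consists only of small elements (the loop condition on Line~\ref{line:candidates_condition} restricts to $u \in \cN \setminus (S_k \cup B)$, and $B$ contains exactly the big elements). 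Hence Lemma~\ref{lem:set_extract} gives $\sum_{i=1}^d c_i(T) \ge \frac{\lambda}{\lambda+1}$ and $T \subseteq S_{\ell+1}$.

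Next I would exploit the density condition. Every element $v_{j+1}$ added to the small-elements solution was chosen subject to the loop guard on Line~\ref{line:candidates_condition}, which requires $f(v_{j+1} \mid S_j) \ge \rho \cdot \sum_{i=1}^d c_i(v_{j+1})$; the same holds for $v_{\ell+1}$, since it too passed the guard before the budget check rejected it. Therefore, ordering the elements of $T$ as they were inserted into $S_{\ell+1}$ and telescoping, submodularity (and the fact that elements are only ever added) gives
\[
	f(T) \ge \sum_{u \in T} f(u \mid S_{u}) \ge \rho \cdot \sum_{u \in T} \sum_{i=1}^d c_i(u) = \rho \cdot \sum_{i=1}^d c_i(T),
\]
where $S_u$ denotes the small-elements solution just before $u$ was added, and the first inequality is the standard telescoping-plus-submodularity bound (marginals only shrink as the set grows, so summing the marginals at insertion time lower-bounds $f(T) - f(\varnothing) = f(T)$ by non-negativity). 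Combining this with the size bound $\sum_{i=1}^d c_i(T) \ge \frac{\lambda}{\lambda+1}$ yields $f(T) \ge \frac{\lambda \rho}{\lambda+1}$, which is exactly the claimed bound.

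The main obstacle — though it is a minor one — is making the telescoping step airtight: one must be careful that the ordering of $T$'s elements used in the sum is consistent with the order of insertion into $S_{\ell+1}$ (so that $S_u \subseteq S_{\ell+1}$ and submodularity applies in the right direction), and that the density inequality indeed holds for \emph{every} element of $T$, including $v_{\ell+1}$ if it happens to survive into $T$. Since $T \subseteq S_{\ell+1}$ and each element of $S_{\ell+1}$ satisfied the density guard at its own insertion time, this is fine. One should also note that the output of Algorithm~\ref{alg:main} in this case is literally $T$ (Line~\ref{line:exceed_budget} returns it directly, without comparing to $S_B$), so no further case analysis is needed. Everything else is routine.
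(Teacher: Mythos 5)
Your proposal is correct and follows essentially the same route as the paper's proof: verify that Lemma~\ref{lem:set_extract} applies to $S_{\ell+1}$, telescope $f(T)$ over the insertion order, use submodularity to lower bound each marginal by $f(v_k \mid S_{k-1}) \geq \rho \cdot \sum_{i=1}^d c_i(v_k)$ via the density guard on Line~\ref{line:candidates_condition}, and finish with $\sum_{i=1}^d c_i(T) \geq \frac{\lambda}{\lambda+1}$. The extra care you take about insertion-order consistency, the element $v_{\ell+1}$, and the hypotheses of Lemma~\ref{lem:set_extract} only makes explicit what the paper leaves implicit.
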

\begin{proof}
Let $T$ be the output set of {\SetExtract}$(\lambda, S_{\ell + 1})$. We need to show that $f(T) \geq \frac{\lambda \rho}{\lambda + 1}$. Since $T$ is a subset of $S_{\ell + 1} = \{v_1, v_2, \dotsc, v_{\ell + 1}\}$,
\begin{align*}
	f(T)
	={} &
	\sum_{v_{k} \in T} f(v_{k} \mid T \cap S_{k - 1})
	\geq
	\sum_{v_{k} \in T} f(v_{k} \mid S_{k - 1})\\
	\geq{} &
	\sum_{v_{k} \in T} \left(\rho \cdot \sum_{i = 1}^d c_i(v_{k})\right)
	=
	\rho \cdot \sum_{i = 1}^d c_i(T)
	\geq
	\frac{\lambda \rho}{\lambda + 1}
	\enspace,
\end{align*}
where the first inequality follows from the submodularity of $f$, the second inequality follows from the definition of $v_{k}$, and the last inequality follows from the guarantee of Lemma~\ref{lem:set_extract}.
\end{proof}

Handling the case in which the event $E$ does not happen is somewhat more involved. Towards this goal, let us recursively define a set $O_k$ for every $0 \leq k \leq \ell$. The base of the recursion is that for $k = \ell$ we define $O_\ell = OPT \setminus (B \cup \{u \in OPT \mid f(u \mid S_\ell) < \rho \cdot \sum_{i = 1}^d c_i(u)\})$. Assuming $O_{k + 1}$ is already defined for some $0 \leq k < \ell$, we define $O_k$ as follows. Let $D_k = \{u \in O_{k + 1} \setminus S_k \mid S_k + u \in \cI\}$. If $|D_k| \leq p$, we define $O_k = O_{k + 1} \setminus D_k$. Otherwise, we let $D'_k$ be an arbitrary subset of $D_k$ of size $p$, and we define $O_k = O_{k + 1} \setminus D'_k$.
\begin{lemma} \label{lem:O_null_empty}
Assuming $E$ does not happen, $O_0 = \varnothing$.
\end{lemma}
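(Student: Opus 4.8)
The plan is to prove, by downward induction on $k$, the cardinality bound $|O_k| \le pk$ for every $0 \le k \le \ell$; taking $k = 0$ then yields $O_0 = \varnothing$. Before starting, I would record two routine facts used throughout: since $O_k \subseteq O_{k+1} \subseteq \dots \subseteq O_\ell \subseteq OPT$ and $OPT$ is feasible (hence independent) and $\cI$ is downward closed, every $O_k$ is independent in $\cM$; and since the greedy loop adds exactly one genuinely new element $v_{k+1} \in \cN \setminus (S_k \cup B)$ per iteration and only when $S_k + v_{k+1} \in \cI$, we have $|S_k| = k$ and $S_k \in \cI$ for all $0 \le k \le \ell$.

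For the base case $k = \ell$, I would show that $S_\ell$ is a base of $S_\ell \cup O_\ell$. Take any $u \in O_\ell \setminus S_\ell$. By the definition of $O_\ell$ we have $u \notin B$ and $f(u \mid S_\ell) \ge \rho \cdot \sum_{i=1}^d c_i(u)$, and also $u \notin S_\ell$, so $u$ satisfies the first and third conditions of the while-loop on Line~\ref{line:candidates_condition} taken at $k = \ell$. Since $E$ does not happen, the algorithm leaves this loop through its failing condition rather than through Line~\ref{line:exceed_budget}, so no element simultaneously satisfies all three loop conditions at $k = \ell$; therefore $S_\ell + u \notin \cI$. Hence $S_\ell$ is a maximal independent subset of $X \triangleq S_\ell \cup O_\ell$. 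Now invoke the $p$-set-system property on $X$: the independent set $O_\ell$ extends to some base $\hat O$ of $X$, and since $S_\ell$ is also a base of $X$, we get $|O_\ell| \le |\hat O| \le p \cdot (\text{smallest base size of } X) \le p|S_\ell| = p\ell$.

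The inductive step for $k < \ell$ splits on $|D_k|$. If $|D_k| \le p$, then $O_k = O_{k+1} \setminus D_k$, and since $D_k$ contains \emph{every} element of $O_{k+1} \setminus S_k$ that can be added to $S_k$ while remaining independent, no element of $O_k \setminus S_k$ can be added to $S_k$; exactly as in the base case this makes $S_k$ a base of $S_k \cup O_k$, and the $p$-set-system property gives $|O_k| \le p|S_k| = pk$ (here the induction hypothesis is not even needed). If instead $|D_k| > p$, then $O_k = O_{k+1} \setminus D'_k$ with $D'_k \subseteq O_{k+1}$ and $|D'_k| = p$, so $|O_k| = |O_{k+1}| - p \le p(k+1) - p = pk$ by the induction hypothesis. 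This closes the induction, and setting $k = 0$ gives $|O_0| \le 0$, i.e.\ $O_0 = \varnothing$.

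The one genuinely non-routine point — and the place I expect to spend the most care — is choosing the right invariant. A per-element statement of the form ``every surviving element of $O_k$ is already in $S_k$ or cannot be added to $S_k$'' fails to survive the step in which only $p$ of the possibly many candidates in $D_k$ are deleted; so one must instead carry the cardinality bound $|O_k|\le pk$ and justify it in two different ways depending on whether $D_k$ is removed in full (use the rank property of the $p$-set system, after noting $S_k$ is then a base of $S_k\cup O_k$) or only truncated to size $p$ (use the recursion together with the inductive hypothesis). Everything else is bookkeeping.
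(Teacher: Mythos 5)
Your proof is correct and follows essentially the same route as the paper: a downward induction on the stronger claim $|O_k| \le pk$, using the fact that $S_k$ becomes a maximal independent subset (base) of $S_k \cup O_k$ in the cases where the full candidate set $D_k$ is removed (including the base case $k = \ell$, which relies on $E$ not occurring), and falling back to the induction hypothesis together with $|D'_k| = p$ in the truncation case. The only differences are presentational — you spell out more explicitly why $S_\ell + u \notin \cI$ for $u \in O_\ell \setminus S_\ell$, and you note that the induction hypothesis is unused in the $|D_k| \le p$ case — neither of which changes the substance.
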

\begin{proof}
We prove by a downward induction the stronger claim that $|O_k| \leq pk$ for every $0 \leq k \leq \ell$. To prove this inequality for $k = \ell$, we note that the fact that $E$ did not happen and still Algorithm~\ref{alg:main} terminated after $\ell$ iterations implies that no element of $O_\ell \setminus S_\ell$ can be added to $S_\ell$ without violating independence in $\cM$. Therefore, $S_\ell$ is a base of $S_\ell \cup O_\ell$ in this $p$-system, which implies $|O_\ell| \leq p|S_\ell| = p\ell$ by the definition of a $p$-system since $O_\ell \subseteq OPT$ is independent in $\cM$.

Assume now that the inequality $|O_{k + 1}| \leq p(k + 1)$ holds for some $0 \leq k < \ell$, and let us prove $|O_k| \leq pk$. There are two cases to consider. If $O_k = O_{k + 1} \setminus D'_k$, then $|O_k| = |O_{k + 1}| - |D'_k| \leq p(k + 1) - p = pk$. Otherwise, if $O_k = O_{k + 1} \setminus D_k$, then, by the definition of $D_k$, there are no elements of $O_k \setminus S_k$ that can be added to $S_k$ without violating independence in $\cM$. Therefore, $S_k$ is a base of $S_k \cup O_k$ in this $p$-system, which implies $|O_k| \leq p|S_k| = pk$ by the definition of a $p$-system since $O_k \subseteq O_\ell \subseteq OPT$ is independent in $\cM$.
\end{proof}
\begin{corollary} \label{cor:p-system_analysis}
Assuming $E$ does not happen, $f(S_\ell) \geq \frac{f(O_\ell \cup S_\ell)}{p + 1}$.
\end{corollary}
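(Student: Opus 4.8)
The plan is to prove the inequality by a telescoping argument along the sequence of pairs $(O_0,S_0),(O_1,S_1),\dots,(O_\ell,S_\ell)$, bounding each single-step increment of $f(O_k\cup S_k)$ by $(p+1)$ times the corresponding greedy gain $f(S_{k+1})-f(S_k)$. Since $O_0=\varnothing$ by Lemma~\ref{lem:O_null_empty} and $S_0=\varnothing$ by construction, we have $f(O_0\cup S_0)=f(\varnothing)$, and I would begin by writing
\[
	f(O_\ell\cup S_\ell)-f(\varnothing)=\sum_{k=0}^{\ell-1}\bigl[f(O_{k+1}\cup S_{k+1})-f(O_k\cup S_k)\bigr].
\]

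Fix $0\le k<\ell$ and let $\Delta_k\triangleq O_{k+1}\setminus O_k$. By the recursive definition of $O_k$ we have $|\Delta_k|\le p$ and $\Delta_k\subseteq D_k$, so every $u\in\Delta_k$ satisfies $u\in O_{k+1}\setminus S_k$ and $S_k+u\in\cI$. I would split the $k$-th increment as
\[
	f(O_{k+1}\cup S_{k+1})-f(O_k\cup S_k)=\bigl[f(O_{k+1}\cup S_{k+1})-f(O_k\cup S_{k+1})\bigr]+\bigl[f(O_k\cup S_{k+1})-f(O_k\cup S_k)\bigr].
\]
The second bracket equals the marginal contribution of $v_{k+1}$ to $O_k\cup S_k$ (and is $0$ if $v_{k+1}\in O_k$), hence by monotonicity and submodularity it is at most $f(v_{k+1}\mid S_k)=f(S_{k+1})-f(S_k)$. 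For the first bracket, since $S_k\subseteq O_k\cup S_{k+1}$, submodularity gives
\[
	f(O_{k+1}\cup S_{k+1})-f(O_k\cup S_{k+1})=f(\Delta_k\mid O_k\cup S_{k+1})\le\sum_{u\in\Delta_k}f(u\mid S_k).
\]
The crucial point is that each $u\in\Delta_k$ was a legal candidate in iteration $k$ of the loop of Algorithm~\ref{alg:main}: it is a small element (as $\Delta_k\subseteq O_{k+1}\subseteq O_\ell$ and $O_\ell$ excludes $B$), it lies outside $S_k$, it keeps $S_k$ independent in $\cM$, and its density obeys $f(u\mid S_k)\ge f(u\mid S_\ell)\ge\rho\cdot\sum_{i=1}^d c_i(u)$ — the first inequality by submodularity since $S_k\subseteq S_\ell$, and the second because $u\in O_\ell$ rules out exactly the elements violating this density bound at time $\ell$. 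Consequently the greedy choice of $v_{k+1}$ forces $f(u\mid S_k)\le f(v_{k+1}\mid S_k)=f(S_{k+1})-f(S_k)$, and together with $|\Delta_k|\le p$ this shows the $k$-th increment is at most $(p+1)\cdot[f(S_{k+1})-f(S_k)]$.

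Summing over $k$ yields $f(O_\ell\cup S_\ell)-f(\varnothing)\le(p+1)\cdot[f(S_\ell)-f(\varnothing)]$, and since $f$ is non-negative this gives $f(O_\ell\cup S_\ell)\le(p+1)\cdot f(S_\ell)$, which is the claim after dividing by $p+1$. The main obstacle I anticipate is precisely the bookkeeping in the previous paragraph: one must carefully check that every element removed from $O_{k+1}$ in forming $O_k$ is still a valid greedy candidate at the earlier iteration $k$ (in particular transferring the density condition from time $\ell$ back to time $k$ via submodularity), and one must dispatch the mild degeneracy where $v_{k+1}$ already lies in $O_k$, in which case the second bracket is $0$ and the bound holds trivially. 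Everything else is routine given monotonicity, submodularity, and the $p$-system counting already packaged in Lemma~\ref{lem:O_null_empty}.
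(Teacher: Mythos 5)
Your proof is correct and follows essentially the same approach as the paper: the paper phrases the argument as an induction on $k$ showing $f(S_k) \geq f(O_k \cup S_k)/(p+1)$, whereas you unfold the same increment bound $f(O_{k+1} \cup S_{k+1}) - f(O_k \cup S_k) \leq (p+1)\,[f(S_{k+1}) - f(S_k)]$ as a telescoping sum, but the key steps — each $u \in \Delta_k$ being a valid greedy candidate at iteration $k$ (independence preserved, small element, density transferred from $\ell$ back to $k$ via submodularity), hence dominated by $v_{k+1}$'s gain, combined with $|\Delta_k| \leq p$ from Lemma~\ref{lem:O_null_empty}'s construction — are identical. The only cosmetic difference is that the paper's induction base avoids the final $f(\varnothing) \geq 0$ cleanup you need after telescoping.
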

\begin{proof}
We prove by induction the stronger claim that, for every integer $0 \leq k \leq \ell$,
\begin{equation} \label{eq:guarantee_k}
	f(S_k)
	\geq
	\frac{f(O_k \cup S_k)}{p + 1}
	\enspace.
\end{equation}
For $k = 0$ this inequality follows from the non-negativity of $f$ since $S_0 = O_0 = \varnothing$ by Lemma~\ref{lem:O_null_empty}. Assume now that Inequality~\eqref{eq:guarantee_k} holds for some value $k - 1$ obeying $0 \leq k - 1 < \ell$, and let us prove it for $k$.

Consider the set $\Delta_k = O_k \setminus O_{k - 1}$. The construction of $O_{k - 1}$ guarantees that every element of $\Delta_k$ can be added to $S_{k - 1}$ without violating independence in $\cM$. Furthermore, since $\Delta_k \subseteq O_\ell$, every element $u \in \Delta_k$ also obeys $f(u \mid S_k) \geq f(u \mid S_\ell) \geq \rho \cdot \sum_{i = 1}^d c_i(u)$. Therefore, every element of $\Delta_k$ obeys the condition of the loop on Line~\ref{line:candidates_condition} of Algorithm~\ref{alg:main} in the $k$-th iteration of the loop, and by the definition of $v_k$, this implies
\begin{align*}
	f(S_k)
	={} &
	f(S_{k - 1}) + f(v_k \mid S_{k - 1})
	\geq
	f(S_{k - 1}) + \frac{f(v_k \mid S_{k - 1}) + \sum_{u \in \Delta_k} f(u \mid S_{k - 1})}{|\Delta_k| + 1}\\
	\geq{} &
	\frac{f(O_{k - 1} \cup S_{k - 1})}{p + 1} + \frac{f(\Delta_k + v_k \mid S_{k - 1})}{|\Delta_k| + 1}\\
	\geq{} &
	\frac{f(O_{k - 1} \cup S_{k - 1})}{p + 1} + \frac{f(\Delta_k + v_k \mid S_{k - 1})}{p + 1}
	\geq
	\frac{f(O_{k} \cup S_{k})}{p + 1}
	\enspace,
\end{align*}
where the second inequality follows from the induction hypothesis and the submodularity of $f$, the penultimate inequality follows from the monotonicity of $f$ and the observation that the construction of $O_{k - 1}$ guarantees $|\Delta_k| \leq p$, and the last inequality follows again from the submodularity of $f$.
\end{proof}

To use the last corollary, we need a lower bound on $O_\ell \cup S_\ell$, which is given by the next lemma.
\begin{lemma} \label{lem:Oell_lower_bound}
$f(O_\ell \cup S_\ell) \geq f(OPT) - f(S_B) / \alpha - \rho \cdot \left[d - \frac{|OPT \cap B|}{\lambda} \right]$.
\end{lemma}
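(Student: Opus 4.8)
The plan is to compare $f(O_\ell \cup S_\ell)$ with $f(OPT \cup S_\ell)$, which is at least $f(OPT)$ by monotonicity, and to show that the gap between the two is controlled by the two subtracted terms. Recall that, by the base of the recursion, $OPT \setminus O_\ell = (OPT \cap B) \cup L$, where $L := \{u \in OPT \setminus B \mid f(u \mid S_\ell) < \rho \cdot \sum_{i=1}^d c_i(u)\}$ is the set of ``low‑density'' small elements of $OPT$. Since $S_\ell \subseteq O_\ell \cup S_\ell$, submodularity (both the fact that marginals only shrink as the base grows, and subadditivity of marginals over a union) gives
\[
	f(OPT \cup S_\ell) - f(O_\ell \cup S_\ell)
	=
	f\bigl((OPT \cap B) \cup L \mid O_\ell \cup S_\ell\bigr)
	\leq
	f(OPT \cap B \mid S_\ell) + \sum_{u \in L} f(u \mid S_\ell)
	\enspace.
\]

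For the first term, apply submodularity once more to get $f(OPT \cap B \mid S_\ell) \leq f(OPT \cap B \mid \varnothing) = f(OPT \cap B)$, and then use the defining guarantee of $\BigAlg$, namely $f(S_B) \geq \alpha \cdot f(OPT \cap B)$, to conclude $f(OPT \cap B \mid S_\ell) \leq f(S_B)/\alpha$. For the second term, the definition of $L$ yields $\sum_{u \in L} f(u \mid S_\ell) \leq \rho \cdot \sum_{u \in L} \sum_{i=1}^d c_i(u) = \rho \cdot \sum_{i=1}^d c_i(L)$. Since $L \subseteq OPT \setminus B$, we have $\sum_{i=1}^d c_i(L) \leq \sum_{i=1}^d c_i(OPT \setminus B) = \sum_{i=1}^d c_i(OPT) - \sum_{i=1}^d c_i(OPT \cap B)$, and the feasibility of $OPT$ bounds the first sum by $d$ (as each $c_i(OPT) \leq 1$). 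For the last sum, note every $u \in OPT \cap B$ has $c_j(u) > \lambda^{-1}$ for some index $j$, so $\sum_{i=1}^d c_i(u) \geq c_j(u) > \lambda^{-1}$; summing over $u \in OPT \cap B$ gives $\sum_{i=1}^d c_i(OPT \cap B) > |OPT \cap B|/\lambda$. Hence $\sum_{u \in L} f(u \mid S_\ell) \leq \rho \cdot [d - |OPT \cap B|/\lambda]$.

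Putting the two bounds together, $f(OPT) \leq f(OPT \cup S_\ell) \leq f(O_\ell \cup S_\ell) + f(S_B)/\alpha + \rho \cdot [d - |OPT \cap B|/\lambda]$, and rearranging gives the claimed inequality. I do not expect any genuine obstacle here; the only points that need care are (i) applying submodularity in the right direction for each of the marginal terms, and (ii) the small counting argument that turns the ``big element'' threshold $\lambda^{-1}$ into the correction term $|OPT \cap B|/\lambda$ (and noting in passing that this correction term is at most $d$, so the right‑hand side is meaningful).
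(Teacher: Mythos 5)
Your proof is correct and takes essentially the same route as the paper: decompose $OPT$ into $O_\ell$, $OPT \cap B$, and the low-density small elements $L$, peel off the latter two marginals via submodularity and monotonicity, bound the big-element term by the $\BigAlg$ guarantee $f(OPT \cap B) \leq f(S_B)/\alpha$, and bound $\sum_{u\in L} f(u\mid S_\ell)$ using $OPT$'s feasibility together with the $\lambda^{-1}$ threshold defining big elements. The only cosmetic nit is that $f(OPT \cap B \mid \varnothing)$ equals $f(OPT \cap B) - f(\varnothing)$, so the step should read ``$\leq$'' (via non-negativity) rather than ``$=$'', but the argument is otherwise identical to the paper's.
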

\begin{proof}
Observe that
\begin{align} \label{eq:Oell_bound}
	f(O_\ell \cup S_\ell)
	={} &
	f(OPT \setminus (B \cup \{u \in OPT \mid f(u \mid S_\ell) < \rho \cdot \sum\nolimits_{i = 1}^d c_i(u)\}) \cup S_\ell)\\ \nonumber
	\geq{} &
	f(OPT) - f(OPT \cap B) - f(\{u \in OPT \setminus B \mid f(u \mid S_\ell) < \rho \cdot \sum\nolimits_{i = 1}^d c_i(u)\} \mid S_\ell)\\ \nonumber
	\geq{} &
	f(OPT) - f(S_B) / \alpha - f(\{u \in OPT \setminus B \mid f(u \mid S_\ell) < \rho \cdot \sum\nolimits_{i = 1}^d c_i(u)\} \mid S_\ell)
	\enspace,
\end{align}
where the first inequality follows from the submodularity and monotonicity of $f$, and the second inequality follows from the definition of $S_B$. To lower bound the rightmost side of the last inequality, we need to upper bound the last term in it.
\begin{align*}
	f(\{u \in{} & OPT \setminus B \mid f(u \mid S_\ell) < \rho \cdot \sum\nolimits_{i = 1}^d c_i(u)\} \mid S_\ell)
	\leq
	\sum_{\substack{u \in OPT \setminus B \\ f(u \mid S_\ell) < \rho \cdot \sum\nolimits_{i = 1}^d c_i(u)}} \mspace{-36mu} f(u \mid S_\ell)\\
	\leq{} &
	\rho \cdot \sum_{u \in OPT \setminus B} \sum_{i = 1}^d c_i(u)
	=
	\rho \cdot \left[\sum_{u \in OPT} \sum_{i = 1}^d c_i(u) - \sum_{u \in OPT \cap B} \sum_{i = 1}^d c_i(u)\right]
	\leq
	\rho \cdot \left[d - \frac{|OPT \cap B|}{\lambda} \right]
	\enspace,
\end{align*}
where the last inequality holds since $OPT$ is a feasible set and every element of $B$ is big. Plugging the last inequality into Inequality~\eqref{eq:Oell_bound} completes the proof of the lemma.
\end{proof}

We are now ready to lower bound the value of the solution produced by Algorithm~\ref{alg:main}. While reading the following proposition, it useful to keep in mind that $|OPT \cap B| \leq d(\lambda - 1)$ because (i) the feasibility of $OPT$ implies $c_i(OPT \cap B) \leq 1$ for every integer $1 \leq i \leq d$ and (ii) every element $u \in OPT \cap B$ obeys $c_i(u) > \lambda^{-1}$ for at least one such $i$.
\begin{proposition} \label{prop:main_guarantee}
Let
\[
	\rho^* = \frac{f(OPT)}{(p + 1 + \alpha^{-1}) / (1 + \lambda^{-1}) + d - |OPT \cap B| / \lambda}
	\enspace.
\]
If $\rho \geq \rho^*$ and the event $E$ happened, or $\rho \leq \rho^*$ and the event $E$ did not happen, then Algorithm~\ref{alg:main} outputs a solution of value at least $\lambda \rho^* / (\lambda + 1)$. Furthermore, if $\rho \in [(1 - \delta)\rho^*, \rho^*]$ for some value $\delta \in (0, 1]$, then, regardless of the realization of the event $E$, Algorithm~\ref{alg:main} outputs a solution of value at least $(1 - \delta)\lambda \rho^* / (\lambda + 1)$. Finally, Algorithm~\ref{alg:main} runs in $O(\lambda n d + nr + T_B)$ time, where $T_B$ is the time complexity of $\BigAlg$.
\end{proposition}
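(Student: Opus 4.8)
The plan is to establish the three assertions of the proposition in order, reusing the lemmas already in hand. The running-time bound is immediate: it is exactly the statement of the time-complexity lemma proven just above (the one showing Algorithm~\ref{alg:main} runs in $O(\lambda n d + nr + T_B)$ time), so no further work is needed there. All the effort goes into the value guarantees, and the engine is the combination of Lemma~\ref{lem:E_guarantee}, Corollary~\ref{cor:p-system_analysis}, and Lemma~\ref{lem:Oell_lower_bound}.

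For the first assertion I would split on the event $E$ according to the hypothesis. If $\rho \ge \rho^*$ and $E$ happens, Lemma~\ref{lem:E_guarantee} gives output value at least $\lambda\rho/(\lambda+1) \ge \lambda\rho^*/(\lambda+1)$, and we are done. The substantive case is $\rho \le \rho^*$ with $E$ not happening; here the algorithm returns the better of $S_B$ and $S_\ell$, so, writing $M = \max\{f(S_B), f(S_\ell)\}$, it suffices to lower bound $M$. Chaining Corollary~\ref{cor:p-system_analysis} into Lemma~\ref{lem:Oell_lower_bound} gives $(p+1)M \ge (p+1)f(S_\ell) \ge f(OPT) - f(S_B)/\alpha - \rho[d - |OPT\cap B|/\lambda]$; replacing $f(S_B)$ on the right by $M$ and rearranging yields
\[
	M \ge \frac{f(OPT) - \rho[d - |OPT\cap B|/\lambda]}{p+1+\alpha^{-1}} \enspace.
\]
To finish, I would invoke the remark preceding the proposition, $|OPT\cap B| \le d(\lambda-1)$, which gives $d - |OPT\cap B|/\lambda \ge d/\lambda > 0$; hence the right-hand side is nonincreasing in $\rho$, so it only decreases when we substitute $\rho^*$ for $\rho$. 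A one-line computation using the definition of $\rho^*$ — rewritten as $f(OPT) = \rho^*\bigl[(p+1+\alpha^{-1})/(1+\lambda^{-1}) + d - |OPT\cap B|/\lambda\bigr]$ — collapses $f(OPT) - \rho^*[d - |OPT\cap B|/\lambda]$ to $\rho^*(p+1+\alpha^{-1})/(1+\lambda^{-1})$, and dividing by $p+1+\alpha^{-1}$ leaves exactly $\rho^*/(1+\lambda^{-1}) = \lambda\rho^*/(\lambda+1)$, as claimed.

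For the second assertion, suppose $\rho \in [(1-\delta)\rho^*, \rho^*]$. If $E$ happens, Lemma~\ref{lem:E_guarantee} gives output value at least $\lambda\rho/(\lambda+1) \ge (1-\delta)\lambda\rho^*/(\lambda+1)$. If $E$ does not happen, the argument of the previous paragraph applies verbatim — it used only $\rho \le \rho^*$ — and produces value at least $\lambda\rho^*/(\lambda+1) \ge (1-\delta)\lambda\rho^*/(\lambda+1)$. Either way the bound holds.

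I expect the only place that needs care — rather than being routine algebra — is the bookkeeping in the non-$E$ case: one must use that the algorithm outputs $\max\{f(S_B),f(S_\ell)\}$ rather than $f(S_\ell)$ alone (so that the $f(S_B)/\alpha$ term on the right of the combined inequality can be absorbed into the quantity being bounded), and one must notice that the positivity of $d - |OPT\cap B|/\lambda$ is exactly what licenses replacing $\rho$ by $\rho^*$ when $\rho \le \rho^*$. Everything else reduces to rearranging linear inequalities and substituting the definition of $\rho^*$.
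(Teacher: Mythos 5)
Your proposal is correct and follows essentially the same route as the paper's proof: Lemma~\ref{lem:E_guarantee} handles the case where $E$ occurs, and the non-$E$ case chains Corollary~\ref{cor:p-system_analysis} with Lemma~\ref{lem:Oell_lower_bound} and then absorbs the $f(S_B)/\alpha$ term into $\max\{f(S_\ell), f(S_B)\}$ (the paper writes this as the convex combination $\max\{f(S_\ell), f(S_B)\} \geq \frac{(p+1)f(S_\ell) + \alpha^{-1} f(S_B)}{p+1+\alpha^{-1}}$, which is the same manipulation as your replacement of $f(S_B)$ by $M$). Your explicit remark that $d - |OPT \cap B|/\lambda > 0$ justifies the substitution of $\rho^*$ for $\rho$ is a point the paper leaves implicit, and is a welcome addition.
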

\begin{proof}
If the event $E$ happened, then Lemma~\ref{lem:E_guarantee} guarantees that the output of Algorithm~\ref{alg:main} is of value at least $\lambda \rho / (\lambda + 1)$. Therefore, to complete the proof of the proposition, it suffices to show that when the event $E$ does not happen and $\rho^* \geq \rho$, the value of the output of the Algorithm~\ref{alg:main} is at least $\lambda \rho^* / (\lambda + 1)$; and the rest of this proof is devoted to showing that this is indeed the case.

In the last case, Corollary~\ref{cor:p-system_analysis} and Lemma~\ref{lem:Oell_lower_bound} imply together
\[
	f(S_\ell)
	\geq
	\frac{f(OPT) - f(S_B) / \alpha - \rho \cdot \left[d - \frac{|OPT \cap B|}{\lambda}\right]}{p + 1}
	\enspace,
\]
and therefore, the output set of Algorithm~\ref{alg:main} is of value at least
\begin{align*}
	\max\{f(S_\ell), f(S_B)\}
	\geq{} &
	\frac{(p + 1) \cdot f(S_\ell) + \alpha^{-1} \cdot f(S_B)}{p + 1 + \alpha^{-1}}
	\geq
	\frac{f(OPT) - \rho(d - |OPT \cap B|/\lambda)}{p + 1 + \alpha^{-1}}\\
	\geq{} &
	\frac{f(OPT) - \rho^*(d - |OPT \cap B|/\lambda)}{p + 1 + \alpha^{-1}}
	=
	\frac{\lambda \rho^*}{\lambda + 1}
	\enspace.
	\qedhere
\end{align*}
\end{proof}

In Appendix~\ref{app:nearly_linear_main} we show a modified version of Algorithm~\ref{alg:main} that appears as Algorithm~\ref{alg:main_nearly_linear}, accepts a quality control parameter $\eps \in (0, 1/4)$ and employs the thresholding speedup technique due to~\cite{badanidiyuru2014fast}. Formally, the properties of Algorithm~\ref{alg:main_nearly_linear} are given by the following variant of Proposition~\ref{prop:main_guarantee}.

\begin{restatable}{proposition}{propMainGuaranteeNearlyLinear} \label{prop:main_guarantee_nearly_linear}
Let
\[
	\rho^* = \frac{(1 - \eps)f(OPT)}{((1 + \eps)p + 1 + \alpha^{-1}) / (1 + \lambda^{-1}) + d - |OPT \cap B| / \lambda}
	\enspace.
\]
There exists an event $\tilde{E}$ such that, if $\rho \geq \rho^*$ and the event $\tilde{E}$ happened, or $\rho \leq \rho^*$ and the event $\tilde{E}$ did not happen, then Algorithm~\ref{alg:main_nearly_linear} outputs a solution of value at least $\lambda \rho^* / (\lambda + 1)$. Furthermore, if $\rho \in [(1 - \delta)\rho^*, \rho^*]$ for some value $\delta \in (0, 1]$, then, regardless of the realization of the event $\tilde{E}$, Algorithm~\ref{alg:main_nearly_linear} outputs a solution of value at least $(1 - \delta)\lambda \rho^* / (\lambda + 1)$. Finally, Algorithm~\ref{alg:main_nearly_linear} runs in $O(\lambda n d + n\eps^{-1}(\log n + \log \eps^{-1}) + T_B)$ time, where $T_B$ is the time complexity of $\BigAlg$.
\end{restatable}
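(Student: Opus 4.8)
The plan is to obtain Algorithm~\ref{alg:main_nearly_linear} from Algorithm~\ref{alg:main} by replacing the exact ``pick the small element of largest marginal'' rule with the threshold-greedy sweep of~\cite{badanidiyuru2014fast}, and then to re-run the analysis of Section~\ref{ssc:general_algorithm} while tracking a $(1\pm\eps)$ slack. Concretely, after building $B$, computing $S_B \gets \BigAlg(B)$, and precomputing $w(u) := \sum_{i=1}^d c_i(u)$ exactly as before, the algorithm sets $\tau_{\max} := \max_{u \in \cN \setminus B} f(u \mid \varnothing)$ (the instance is trivial if this is $0$), $\tau_{\min} := \eps \tau_{\max}/n$, and runs a sequence of full sweeps with a threshold $\tau$ decreasing geometrically by a factor $1-\eps$ from $\tau_{\max}$ down to $\tau_{\min}$; in each sweep it adds every small element $u$ with $S + u \in \cI$, $f(u \mid S) \geq \rho\, w(u)$ (the density filter, kept unchanged), and $f(u \mid S) \geq \tau$. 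As in Algorithm~\ref{alg:main}, an addition that would make $\max_i c_i(S) > 1$ triggers a call to $\SetExtract(\lambda, S + u)$ whose output is returned; we let $\tilde E$ be the event that the algorithm returns in this way, and otherwise it returns the better of $S_B$ and the final set $S_\ell$.

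Feasibility is literally the argument of Observation~\ref{obs:feasible} (every $S_k$ is independent in $\cM$ and within budget, and $\SetExtract$'s output is feasible by Lemma~\ref{lem:set_extract}). For the running time, the only change from Algorithm~\ref{alg:main} is that the $O(r)$ rounds of the original loop are replaced by $O(\eps^{-1}\log(\tau_{\max}/\tau_{\min})) = O(\eps^{-1}\log(n/\eps)) = O(\eps^{-1}(\log n + \log \eps^{-1}))$ sweeps, each costing $O(n)$ once the $w(u)$'s and the $c_i(S)$'s are maintained; together with the $O(nd)$ preprocessing, the single $O(\lambda r d) = O(\lambda n d)$ call to $\SetExtract$, and the cost $T_B$ of $\BigAlg$, this gives the claimed $O(\lambda n d + n\eps^{-1}(\log n + \log \eps^{-1}) + T_B)$.

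For the approximation guarantee I follow the three steps of Section~\ref{ssc:general_algorithm}. The $\tilde E$ case is Lemma~\ref{lem:E_guarantee} verbatim: every element of the returned set $T$ was added with density at least $\rho$, so $f(T) \geq \rho \sum_i c_i(T) \geq \tfrac{\lambda\rho}{\lambda+1}$. When $\tilde E$ does not happen, I change the recursion seed to $O_\ell := OPT \setminus \big(B \cup \{u \in OPT : f(u \mid S_\ell) < \max(\rho\, w(u), \tau_{\min})\}\big)$, keeping the downward recursion $O_k = O_{k+1} \setminus D_k$ (or $D'_k$) unchanged. The key new ingredient is a $(1-\eps)$-approximate greedy choice: for every $k$ and $u \in \Delta_k \setminus S_{k-1}$, either the previous sweep (at threshold $\tau_k/(1-\eps)$, where $\tau_k$ is the threshold when $v_k$ was added) examined $u$ and rejected it --- and since $\Delta_k \subseteq O_\ell$ forces $S_{k-1} + u \in \cI$ (downward closure) and $f(u \mid S_{k-1}) \geq f(u \mid S_\ell) \geq \rho\, w(u)$ (submodularity and the definition of $O_\ell$), the rejection was because the marginal fell below threshold, so $f(u \mid S_{k-1}) < \tau_k/(1-\eps) \leq f(v_k \mid S_{k-1})/(1-\eps)$ --- or $\tau_k = \tau_{\max}$, in which case $f(u \mid S_{k-1}) \leq f(u \mid \varnothing) \leq \tau_{\max} \leq f(v_k \mid S_{k-1})$. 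Feeding $f(v_k \mid S_{k-1}) \geq (1-\eps) f(u \mid S_{k-1})$ and $|\Delta_k| \leq p$ into the averaging step of Corollary~\ref{cor:p-system_analysis} yields $f(S_\ell) \geq \tfrac{(1-\eps) f(O_\ell \cup S_\ell)}{p+1}$ by the same induction; Lemma~\ref{lem:O_null_empty} still holds because the new $O_\ell$ makes $S_\ell$ a base of $S_\ell \cup O_\ell$ (any addable element of $O_\ell \setminus S_\ell$ would have been added in the last sweep by the same case analysis); and Lemma~\ref{lem:Oell_lower_bound} carries over with an extra additive loss of at most $|OPT| \cdot \tau_{\min} \leq \eps \tau_{\max} \leq \eps f(OPT)$ (every singleton is feasible), giving $f(O_\ell \cup S_\ell) \geq (1-\eps) f(OPT) - f(S_B)/\alpha - \rho\,[d - |OPT \cap B|/\lambda]$. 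Combining these as in Proposition~\ref{prop:main_guarantee} --- via $\max\{f(S_\ell), f(S_B)\} \geq \tfrac{(p+1)(1-\eps)^{-1} f(S_\ell) + \alpha^{-1} f(S_B)}{(p+1)(1-\eps)^{-1} + \alpha^{-1}}$ together with $\rho \leq \rho^*$ and the definition of $\rho^*$, and absorbing $(p+1)(1-\eps)^{-1} \leq (1+\eps)p + 1$ after a constant rescaling of $\eps$ --- produces the stated bound $\lambda\rho^*/(\lambda+1)$; the $\rho \in [(1-\delta)\rho^*, \rho^*]$ refinement follows exactly as in Proposition~\ref{prop:main_guarantee}.

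I expect the main obstacle to be precisely the $(1-\eps)$-approximate greedy claim and the re-proof of Lemma~\ref{lem:O_null_empty} for the modified $O_\ell$: one must reason carefully about elements examined before versus after $v_k$ within a single sweep (using that $S$ only grows and $\cI$ is downward closed), treat the first and last sweeps separately --- which is exactly why $\tau_{\max}$ is taken to dominate every attainable marginal and $\tau_{\min}$ to be a $\mathrm{poly}(n,\eps^{-1})$ fraction of $\tau_{\max}$ --- and then check that the resulting $(1\pm\eps)$ factors land exactly where the statement's definition of $\rho^*$ places them. Everything else is the routine bookkeeping of threading these factors through the chain Observation~\ref{obs:feasible} $\to$ Lemma~\ref{lem:O_null_empty} $\to$ Corollary~\ref{cor:p-system_analysis} $\to$ Lemma~\ref{lem:Oell_lower_bound} $\to$ Proposition~\ref{prop:main_guarantee}.
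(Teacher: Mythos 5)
Your proposal follows the same route as the paper's proof: convert the exact greedy of Algorithm~\ref{alg:main} into a decreasing-threshold sweep, redefine $O_\ell$ to exclude elements whose marginal against $S_\ell$ drops below the minimum threshold (in addition to the density filter), re-prove the $O_0=\varnothing$ and $p$-system lemmas with the $(1\pm\eps)$-approximate greedy property that comes from the sweep structure, carry an extra $\eps \cdot f(OPT)$ loss through the lower bound on $f(O_\ell\cup S_\ell)$, and combine exactly as in Proposition~\ref{prop:main_guarantee}. The only substantive deviation is cosmetic: your sweep decrement and $p$-system inequality produce the denominator $(p+1)(1-\eps)^{-1}$ rather than the paper's $(1+\eps)p+1$ (the paper decrements $\tau \gets \tau/(1+\eps)$ and keeps the $(1+\eps)$ on the coefficient of $|\Delta_k|$ inside the averaging step, yielding $(1+\eps)p+1$ directly), and since $(p+1)(1-\eps)^{-1} > (1+\eps)p+1$ you correctly note that matching the stated $\rho^*$ requires running the sweeps with a constant-factor smaller $\eps'$ — which is valid and does not change the asymptotic running time, so this is a bookkeeping remark rather than a gap.
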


\subsection{Guessing \texorpdfstring{$\rho$}{Rho}} \label{ssc:rho_search}

To use Algorithm~\ref{alg:main} (or the faster Algorithm~\ref{alg:main_nearly_linear}), one must supply a value for $\rho$. Furthermore, according to Proposition~\ref{prop:main_guarantee}, it is best if the supplied value of $\rho$ is close to $\rho^*$ (throughout this section, unless stated explicitly otherwise, $\rho^*$ refers to its value as defined in Proposition~\ref{prop:main_guarantee}). In this section we present an algorithm that manages to supply such a value for $\rho$ using binary search. However, before presenting this algorithm, we first need to find a relatively small range which is guaranteed to include $\rho^*$. Let $\underline{\alpha}$ be a known lower bound on the value of $\alpha$.
\begin{observation} \label{obs:rho_star_limits}
It always holds that
\[
	\frac{1}{p + 1 + \underline{\alpha}^{-1} + d}
	\leq
	\frac{\rho*}{\max_{u \in \cN} f(u)}
	\leq
	\frac{2n}{p}
	\enspace.
\]
\end{observation}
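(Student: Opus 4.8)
The plan is to treat the two inequalities separately, and in each case bound the numerator $f(OPT)$ and the denominator $(p + 1 + \alpha^{-1})/(1 + \lambda^{-1}) + d - |OPT \cap B|/\lambda$ of the expression for $\rho^*$ from Proposition~\ref{prop:main_guarantee} in the appropriate direction. Write $M = \max_{u \in \cN} f(\{u\})$ (which is how the statement's $\max_{u\in\cN} f(u)$ should be read). I will use the standing facts recalled just before Proposition~\ref{prop:main_guarantee}, namely $\lambda \ge 1$ (hence $1 \le 1 + \lambda^{-1} \le 2$) and $0 \le |OPT \cap B| \le d(\lambda - 1)$, the latter giving $0 \le d - |OPT \cap B|/\lambda \le d$, together with $1 \le \alpha^{-1} \le \underline{\alpha}^{-1}$ (as $\alpha \in (0,1]$ and $\alpha \ge \underline{\alpha}$).

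For the lower bound, first I would observe that every singleton $\{u\}$ is feasible: by the simplifying assumptions of Section~\ref{sec:SMKS}, $\cM$ has no self-loops and every surviving element satisfies $c_i(u) \le 1 = B_i$ for all $i$. Hence $f(OPT) \ge \max_{u} f(\{u\}) = M$. On the denominator side, $1 + \lambda^{-1} \ge 1$ gives $(p + 1 + \alpha^{-1})/(1 + \lambda^{-1}) \le p + 1 + \alpha^{-1} \le p + 1 + \underline{\alpha}^{-1}$, and $d - |OPT \cap B|/\lambda \le d$, so the denominator is at most $p + 1 + \underline{\alpha}^{-1} + d$. Dividing the first bound by the second (the denominator being positive) yields $\rho^* \ge M/(p + 1 + \underline{\alpha}^{-1} + d)$, which is the left inequality after dividing through by $M$.

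For the upper bound, I would bound $f(OPT) \le nM$ using submodularity and monotonicity: $f(OPT) \le f(\varnothing) + \sum_{u \in OPT} f(u \mid \varnothing)$; when $OPT \ne \varnothing$ the $f(\varnothing)$ terms combine with a non-positive coefficient, so $f(OPT) \le \sum_{u \in OPT} f(\{u\}) \le |OPT|\, M \le nM$, and when $OPT = \varnothing$ monotonicity gives $f(OPT) = f(\varnothing) \le M \le nM$. For the denominator, $1 + \lambda^{-1} \le 2$ and $\alpha^{-1} \ge 1$ give $(p + 1 + \alpha^{-1})/(1 + \lambda^{-1}) \ge (p + 2)/2 \ge p/2$, while $d - |OPT \cap B|/\lambda \ge 0$, so the denominator is at least $p/2$. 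Therefore $\rho^* \le nM/(p/2) = 2nM/p$, which is the right inequality after dividing by $M$. The whole argument is elementary; the only spots needing a moment of care are justifying $f(OPT) \ge M$ from the feasibility of singletons and correctly handling a possibly non-zero $f(\varnothing)$ in the $f(OPT) \le nM$ bound, so I do not anticipate any genuine obstacle.
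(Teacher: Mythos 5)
Your proposal is correct and follows essentially the same route as the paper's proof: lower-bound $f(OPT)$ by $\max_{u \in \cN} f(\{u\})$ via feasibility of singletons and upper-bound it by $n \cdot \max_{u \in \cN} f(\{u\})$ via submodularity, while bounding the denominator of $\rho^*$ between $p/2$ and $p + 1 + \underline{\alpha}^{-1} + d$ using $\lambda \geq 1$ and $|OPT \cap B| \leq d(\lambda - 1)$. The only difference is cosmetic (which terms of the denominator you drop before invoking $1 + \lambda^{-1} \leq 2$), so no further comment is needed.
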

\begin{proof}
According to the definition of $\rho^*$,
\begin{align*}
	\rho^*
	={} &
	\frac{f(OPT)}{(p + 1 + \alpha^{-1}) / (1 + \lambda^{-1}) + d - |OPT \cap B| / \lambda}\\
	\leq{} &
	\frac{n \cdot \max_{u \in \cN} f(\{u\})}{p / (1 + \lambda^{-1}) + d - |OPT \cap B| / \lambda}
	\leq
	\frac{2n \cdot \max_{u \in \cN} f(\{u\})}{p}
	\enspace,
\end{align*}
where the first inequity follows from the submodularity and non-negativity of $f$, and the second inequality follows from the upper bound on $|OPT \cap B|$ given in the discussion before Proposition~\ref{prop:main_guarantee} and the inequality $\lambda \geq 1$.

Similarly,
\begin{align*}
	\rho^*
	={} &
	\frac{f(OPT)}{(p + 1 + \alpha^{-1}) / (1 + \lambda^{-1}) + d - |OPT \cap B| / \lambda}\\
	\geq{} &
	\frac{\max_{u \in \cN} f(\{u\})}{(p + 1 + \alpha^{-1}) / (1 + \lambda^{-1}) + d - |OPT \cap B| / \lambda}
	\geq
	\frac{\max_{u \in \cN} f(\{u\})}{p + 1 + \underline{\alpha}^{-1} + d}
	\enspace,
\end{align*}
where the first inequality holds since every singleton is a feasible set by our assumptions, and the second inequality holds since $1 + \lambda^{-1} \geq 1$ and $|OPT \cap B| / \lambda \geq 0$.
\end{proof}

We are now ready to present, as Algorithm~\ref{alg:complete}, the algorithm that avoids the need to guess $\rho$. This algorithm gets a quality control parameter $\delta \in (0, 1)$ in addition to the parameter $\lambda$  of Algorithm~\ref{alg:main}. In Algorithm~\ref{alg:complete} we use the shorthand $\rho(i) \triangleq (1 + \delta)^i \cdot \max_{u \in \cN} f(\{u\}) / (p + 1 + \underline{\alpha}^{-1} + d)$.
\begin{algorithm2e}
\DontPrintSemicolon
\caption{\texttt{$\rho$ Guessing Algorithm}$(\lambda, \delta)$} \label{alg:complete}
Let $\underline{i} \gets 0$, $\bar{i} \gets \left\lceil \log_{1 + \delta} \frac{2n}{p} - \log_{1 + \delta}  \frac{1}{p + 1 + \underline{\alpha}^{-1} + d} \right\rceil$ and $k \gets 0$.\\
\While{$\bar{i} - \underline{i} > 1$}
{
	Update $k \gets k + 1$.\\
	Let $i_k \gets \lceil (\underline{i} + \bar{i}) / 2 \rceil$.\\
	Execute Algorithm~\ref{alg:main} with $\rho = \rho(i_k)$. Let $A_k$ denote the output set of this execution of Algorithm~\ref{alg:main}, and let $E_k$ denote the event $E$ for the execution.\\
	\lIf{the event $E_k$ happened}{Update $\underline{i} \gets i_k$.}
	\lElse{Update $\bar{i} \gets i_k$.}
}
Execute Algorithm~\ref{alg:main} with $\rho = \rho(\underline{i})$. Let $A'$ denote the output set of this execution of Algorithm~\ref{alg:main_nearly_linear}.\\
\Return{the set maximizing $f$ in $\{A'\} \cup \{A_{k'} \mid 1 \leq k' \leq k\}$}. 
\end{algorithm2e}

\begin{proposition} \label{prop:complete_guarantee}
Algorithm~\ref{alg:complete} has a time complexity of $O((\lambda n d + nr + T_B) \cdot (\log \delta^{-1} + \log(\log n + \log (\underline{\alpha}^{-1} + d)))$ and outputs a set of value at least
\[
	\frac{(1 - \delta)\lambda \rho^*}{\lambda + 1}
	\geq
	\frac{(1 - \delta)\lambda \cdot f(OPT)}{\lambda(p + 1 + \alpha^{-1}) + (\lambda + 1)(d - |OPT \cap B|/\lambda)}
	\enspace.
\]
\end{proposition}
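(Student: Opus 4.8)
The plan is to read Algorithm~\ref{alg:complete} as a binary search for a good value of $\rho$, driving everything from the three-part guarantee of Proposition~\ref{prop:main_guarantee} together with Lemma~\ref{lem:E_guarantee} and Observation~\ref{obs:rho_star_limits}. I begin with the running time, which is the routine part. Each iteration of the \texttt{while} loop makes a single call to Algorithm~\ref{alg:main}, costing $O(\lambda nd + nr + T_B)$ by Proposition~\ref{prop:main_guarantee}, plus $O(1)$ bookkeeping. The loop is a binary search over the integer interval $[\underline i, \bar i]$: since the tested index $i_k = \lceil (\underline i + \bar i)/2 \rceil$ lies strictly between $\underline i$ and $\bar i$ whenever $\bar i - \underline i > 1$, each iteration at least halves $\bar i - \underline i$ while keeping it at least $1$, so the number of iterations is $O(\log \bar i)$, where $\bar i = \lceil \log_{1+\delta}(2n(p+1+\underline{\alpha}^{-1}+d)/p)\rceil$ denotes the value set at initialization. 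Using $\ln(1+\delta) \geq \delta/2$ for $\delta \in (0,1)$ and $p \leq n$, this initial value is $O(\delta^{-1}(\log n + \log(\underline{\alpha}^{-1}+d)))$, hence $\log \bar i = O(\log\delta^{-1} + \log(\log n + \log(\underline{\alpha}^{-1}+d)))$. Adding the final call to Algorithm~\ref{alg:main} and the $O(n)$ closing maximization gives the claimed time bound.

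For the approximation guarantee, the heart of the argument is an invariant, carried through the \texttt{while} loop, stating that \emph{either} (a) $\rho(\underline i) \leq \rho^* \leq \rho(\bar i)$, \emph{or} (b) one of the sets $A_{k'}$ computed so far has value at least $(1-\delta)\lambda\rho^*/(\lambda+1)$. Observation~\ref{obs:rho_star_limits} yields (a) at initialization, since $\rho(0) = \max_{u\in\cN} f(\{u\})/(p+1+\underline{\alpha}^{-1}+d) \leq \rho^*$ and $\rho(\bar i) \geq 2n \max_{u\in\cN} f(\{u\})/p \geq \rho^*$ by the choice of $\bar i$. For the inductive step I consider the call with $\rho = \rho(i_k)$ and split on the event $E_k$. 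If $E_k$ happened, Lemma~\ref{lem:E_guarantee} gives $f(A_k) \geq \lambda \rho(i_k)/(\lambda+1)$, so either $\rho(i_k) \geq (1-\delta)\rho^*$, in which case $A_k$ witnesses (b), or $\rho(i_k) < (1-\delta)\rho^* \leq \rho^*$, in which case the update $\underline i \gets i_k$ preserves (a). If $E_k$ did not happen, Proposition~\ref{prop:main_guarantee} gives $f(A_k) \geq \lambda\rho^*/(\lambda+1)$ as soon as $\rho(i_k) \leq \rho^*$, so either $\rho(i_k) \leq \rho^*$ and $A_k$ witnesses (b), or $\rho(i_k) > \rho^*$ and the update $\bar i \gets i_k$ preserves (a). Since computed sets are never discarded, (b) is monotone, and the invariant propagates.

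Finally I cash in the invariant at termination. Because $i_k$ is always strictly inside $(\underline i, \bar i)$, the loop terminates with $\bar i - \underline i = 1$ (and if the loop body is never entered, Observation~\ref{obs:rho_star_limits} gives $\rho(\underline i) = \rho(0) \in [\rho^*/(1+\delta), \rho^*]$ directly). If the invariant ends in case (b), the returned set is a maximizer of $f$ over a family containing the good $A_{k'}$, so it has value at least $(1-\delta)\lambda\rho^*/(\lambda+1)$. Otherwise case (a) holds, so $\rho(\underline i) \leq \rho^* \leq \rho(\bar i) = (1+\delta)\rho(\underline i)$, forcing $\rho(\underline i) \in [\rho^*/(1+\delta), \rho^*] \subseteq [(1-\delta)\rho^*, \rho^*]$ (using $(1-\delta)(1+\delta)\leq 1$); the concluding call to Algorithm~\ref{alg:main} with $\rho = \rho(\underline i)$ then outputs, by the last part of Proposition~\ref{prop:main_guarantee}, a set $A'$ of value at least $(1-\delta)\lambda\rho^*/(\lambda+1)$. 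Thus in every case the output has value at least $(1-\delta)\lambda\rho^*/(\lambda+1)$, and the displayed bound in terms of $f(OPT)$ follows by substituting the definition of $\rho^*$ and using the identity $(\lambda+1)/(1+\lambda^{-1}) = \lambda$, which rewrites $\tfrac{\lambda}{\lambda+1}\rho^*$ as $\tfrac{\lambda\, f(OPT)}{\lambda(p+1+\alpha^{-1}) + (\lambda+1)(d - |OPT\cap B|/\lambda)}$.

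I expect the only genuine obstacle to be getting the invariant right: one must notice that a run for which $E_k$ happened is automatically good whenever $\rho(i_k) \geq (1-\delta)\rho^*$ (this uses Lemma~\ref{lem:E_guarantee}, not Proposition~\ref{prop:main_guarantee} alone), so that the binary search may safely ``overshoot'' $\rho^*$ on the $E_k$ side. This is exactly why the invariant must take the disjunctive form ``(a) or (b)'' rather than the more natural-looking $\rho(\underline i) \leq \rho^* \leq \rho(\bar i)$, and why the closing step returns the maximum over all the $A_{k'}$ in addition to $A'$.
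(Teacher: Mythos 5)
Your proposal is correct and follows essentially the same route as the paper: a binary-search invariant $\rho(\underline i)\leq\rho^*\leq\rho(\bar i)$ driven by Proposition~\ref{prop:main_guarantee} (with Lemma~\ref{lem:E_guarantee} for the $E_k$ side and Observation~\ref{obs:rho_star_limits} for initialization), concluded via the $\bar i-\underline i\leq 1$ termination condition and the $\rho\in[(1-\delta)\rho^*,\rho^*]$ clause of Proposition~\ref{prop:main_guarantee}. The only difference is presentational: the paper assumes the approximation guarantee fails and derives a contradiction, whereas you carry the logically equivalent disjunctive invariant (``the window brackets $\rho^*$ or some stored $A_{k'}$ is already good'') directly.
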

\begin{proof}
The number of iterations done by Algorithm~\ref{alg:complete} is at most 
\begin{align*}
	&
	2 + \log_2 \left[\log_{1 + \delta} \frac{2n}{p} - \log_{1 + \delta}  \frac{1}{p + 1 + \underline{\alpha}^{-1} + d}\right]\\
	={} &
	2 + \log_2 [\ln (2n) - \ln(p) + \ln(p + 1 + \underline{\alpha}^{-1} + d)] - \log_2 \ln(1 + \delta)\\
	\leq{} &
	3 + \log_2 [\ln n + \ln 2 + \ln(2 + \underline{\alpha}^{-1} + d)] + \log_2 \delta^{-1}
	=
	O(\log \delta^{-1} + \log(\log n + \log (\underline{\alpha}^{-1} + d)))
	\enspace.
\end{align*}
The time complexity guaranteed by the proposition now follows by multiplying the last expression with the time complexity of Algorithm~\ref{alg:main} given by Proposition~\ref{prop:main_guarantee}.

It remains to prove the approximation guarantee stated in the proposition. Assume towards a contradiction that this approximation guarantee does not hold, and let us show that this results in a contradiction. Our first objective is to show that this assumption implies that the inequality
\begin{equation} \label{eq:good_local_search}
	\rho(\underline{i})
	\leq
	\rho^*
	\leq
	\rho(\bar{i})
\end{equation}
holds throughout the execution of Algorithm~\ref{alg:complete}. Immediately after the initialization of $\underline{i}$ and $\bar{i}$, this inequality holds by Observation~\ref{obs:rho_star_limits}. Assume now that Inequality~\eqref{eq:good_local_search} held before some iteration of the loop of Algorithm~\ref{alg:complete}, and let us explain why it holds also after the iteration. By Proposition~\ref{prop:main_guarantee}, our assumption (that the approximation guarantee does not hold) implies that, in every iteration of the loop of Algorithm~\ref{alg:complete}, the event $E_k$ happened if and only if $\rho(i_k) < \rho^*$, and therefore, Algorithm~\ref{alg:complete} updates $\underline{i}$ in the iteration when $\rho(i_k) < \rho^*$ and updates $\bar{i}$ in the iteration when $\rho(i_k) \geq \rho^*$.

By Inequality~\eqref{eq:good_local_search} and the observation that $\bar{i} - \underline{i} \leq 1$ when Algorithm~\ref{alg:complete} terminates, we get
\[
	(1 - \delta)\rho^*
	\leq
	(1 - \delta)\rho(\bar{i})
	\leq
	\rho(\underline{i})
	\leq
	\rho^*
	\enspace.
\]
However, the last inequality implies $f(A') \geq (1 - \delta)\lambda \rho^* / (\lambda + 1)$ by Proposition~\ref{prop:main_guarantee}, and hence, completes the proof.
\end{proof}

In Appendix~\ref{app:nearly_linear_complete} we give, as Algorithm~\ref{alg:complete_nearly_linear}, a modified version of Algorithm~\ref{alg:complete} that is based on Algorithm~\ref{alg:main_nearly_linear} instead of Algorithm~\ref{alg:main}. We prove for this algorithm the following proposition.

\begin{proposition} \label{prop:complete_guarantee_nearly_linear}
Algorithm~\ref{alg:complete_nearly_linear} has a time complexity of $O((\lambda n d + n\eps^{-1}(\log n + \log \eps^{-1}) + T_B) \cdot (\log \delta^{-1} + \log(\log n + \log (\underline{\alpha}^{-1} + d)))$ and outputs a set of value at least
\[
	\frac{(1 - \delta)\lambda \rho^*}{\lambda + 1}
	\geq
	\frac{(1 - \delta - \eps)\lambda \cdot f(OPT)}{\lambda((1 + \eps)p + 1 + \alpha^{-1}) + (\lambda + 1)(d - |OPT \cap B|/\lambda)}
	\enspace,
\]
where $\rho^*$ represents here the value stated in Proposition~\ref{prop:main_guarantee_nearly_linear}.
\end{proposition}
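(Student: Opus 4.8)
The plan is to mirror the proof of Proposition~\ref{prop:complete_guarantee}, substituting Algorithm~\ref{alg:main_nearly_linear} for Algorithm~\ref{alg:main} and the event $\tilde{E}$ of Proposition~\ref{prop:main_guarantee_nearly_linear} for the event $E$ of Proposition~\ref{prop:main_guarantee}. For the time complexity, I would first observe that Algorithm~\ref{alg:complete_nearly_linear} carries out a binary search over the same logarithmically-sized range of exponents $i$ as Algorithm~\ref{alg:complete}, so the number of outer iterations is again $O(\log \delta^{-1} + \log(\log n + \log(\underline{\alpha}^{-1} + d)))$; the only difference is that each iteration now invokes Algorithm~\ref{alg:main_nearly_linear}, whose running time by Proposition~\ref{prop:main_guarantee_nearly_linear} is $O(\lambda n d + n\eps^{-1}(\log n + \log \eps^{-1}) + T_B)$. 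Multiplying the two factors yields the claimed bound, with the final invocation on $\rho(\underline{i})$ absorbed into the same asymptotics.

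For the approximation guarantee, I would assume toward a contradiction that the output has value below $(1-\delta)\lambda \rho^*/(\lambda + 1)$, where $\rho^*$ now denotes the quantity from Proposition~\ref{prop:main_guarantee_nearly_linear}. The heart of the argument is maintaining the loop invariant $\rho(\underline{i}) \le \rho^* \le \rho(\bar{i})$. It holds at initialization: since $\eps < 1/4$, the factors $(1 - \eps)$ and $(1 + \eps)$ appearing in the new $\rho^*$ are bounded constants, so a minor variant of Observation~\ref{obs:rho_star_limits} still sandwiches $\rho^*/\max_{u \in \cN} f(\{u\})$ between a positive constant of the form $[(1+\eps)p + 1 + \underline{\alpha}^{-1} + d]^{-1}$ and $2n/p$, which is the range spanned by $\rho(\underline{i})$ and $\rho(\bar{i})$ right after initialization (up to the adjustment of the constant base built into Algorithm~\ref{alg:complete_nearly_linear}). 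For the inductive step, Proposition~\ref{prop:main_guarantee_nearly_linear} provides the same dichotomy as before: under the contradiction hypothesis, in each iteration the event $\tilde{E}_k$ happens exactly when $\rho(i_k) < \rho^*$ (otherwise $A_k$ would already have value at least $(1-\delta)\lambda \rho^*/(\lambda+1)$), so updating $\underline{i} \gets i_k$ when $\tilde{E}_k$ occurs and $\bar{i} \gets i_k$ otherwise preserves the invariant. At termination $\bar{i} - \underline{i} \le 1$, hence $(1-\delta)\rho^* \le (1-\delta)\rho(\bar{i}) \le \rho(\underline{i}) \le \rho^*$, i.e.\ $\rho(\underline{i}) \in [(1-\delta)\rho^*, \rho^*]$; Proposition~\ref{prop:main_guarantee_nearly_linear} then forces $f(A') \ge (1-\delta)\lambda \rho^*/(\lambda+1)$, contradicting the assumption.

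Finally, I would convert the bound $(1-\delta)\lambda \rho^*/(\lambda+1)$ into the stated $f(OPT)$-bound by expanding the definition of $\rho^*$ from Proposition~\ref{prop:main_guarantee_nearly_linear}: writing $1 + \lambda^{-1} = (\lambda+1)/\lambda$, the denominator of $\rho^*$ equals $\tfrac{1}{\lambda+1}\bigl[\lambda((1+\eps)p + 1 + \alpha^{-1}) + (\lambda+1)(d - |OPT \cap B|/\lambda)\bigr]$, so $\tfrac{(1-\delta)\lambda\rho^*}{\lambda+1} = \tfrac{(1-\delta)(1-\eps)\lambda \cdot f(OPT)}{\lambda((1+\eps)p + 1 + \alpha^{-1}) + (\lambda+1)(d - |OPT\cap B|/\lambda)}$, and $(1-\delta)(1-\eps) \ge 1 - \delta - \eps$ gives the claimed inequality. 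The main obstacle I anticipate is the bookkeeping in the initialization step: the new $\rho^*$ differs from the one covered by Observation~\ref{obs:rho_star_limits} by the $(1 \pm \eps)$ factors and by $(1+\eps)p$ in place of $p$, so I must verify that the endpoints $\underline{i}, \bar{i}$ and the constant base $\rho(0)$ chosen inside Algorithm~\ref{alg:complete_nearly_linear} are set so that $\rho(\underline{i}) \le \rho^* \le \rho(\bar{i})$ genuinely holds for this modified $\rho^*$. Everything else is a routine transcription of the proof of Proposition~\ref{prop:complete_guarantee}, since Proposition~\ref{prop:main_guarantee_nearly_linear} was deliberately phrased to expose the same interface (with $\tilde{E}$ in the role of $E$) that that proof relies on.
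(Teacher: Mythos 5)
Your proposal is correct and takes essentially the same route as the paper, which likewise transcribes the proof of Proposition~\ref{prop:complete_guarantee} with Proposition~\ref{prop:main_guarantee_nearly_linear} (and the event $\tilde{E}$) playing the role of Proposition~\ref{prop:main_guarantee} (and $E$). The initialization check you flag as the main obstacle is exactly what the paper isolates as Observation~\ref{obs:rho_star_limits_nearly_linear}, and it goes through because the $(1-2\eps)$ factor built into the modified $\rho(i)$ and the enlarged $\bar{i}$ compensate for the $(1-\eps)$ and $(1+\eps)p$ factors in the new $\rho^*$ (via $\frac{1-\eps}{(1+\eps)p + 1 + \underline{\alpha}^{-1} + d} \geq \frac{1-2\eps}{p + 1 + \underline{\alpha}^{-1} + d}$), with the resulting extra $-\ln(1-2\eps) \leq \ln 2$ term leaving the iteration-count bound asymptotically unchanged.
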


\subsection{Proof of Theorem~\ref{thm:SMKS}} \label{ssc:big_elements_algorithms}

In this section we use the machinery developed so far to prove Theorem~\ref{thm:SMKS}.
To use Algorithms~\ref{alg:complete} or~\ref{alg:complete_nearly_linear} one must choose the algorithm {\BigAlg}. A very simple choice is to use an algorithm that just returns the best singleton subset of $B$. This leads to the following theorem.
\begin{theorem} \label{thm:smks_nearly_linear}
For every value $\eps \in (0, 1/4]$, there is a polynomial time $[(1 + O(\eps))(p + 1 + \tfrac{7}{4}d)]^{-1}$-approximation algorithm for {\SMKS} that runs in $\tilde{O}(nd + n/\eps)$ time.
\end{theorem}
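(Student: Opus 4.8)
The plan is to instantiate the generic machinery of Sections~\ref{ssc:general_algorithm}--\ref{ssc:rho_search} with two concrete choices: fix the integer parameter $\lambda = 2$, and let $\BigAlg$ be the trivial procedure that returns a best singleton $\{u\}$ among $u \in B$ (and the empty set when $B = \varnothing$). Such a singleton is always feasible, since $\cM$ has no self-loops and every surviving element has $c_i(u) \le 1$ for all $i$; computing it costs $O(nd)$ time (to build $B$ and evaluate all the singletons), so $T_B = O(nd)$; and by non-negativity, monotonicity and submodularity of $f$ one has $f(OPT \cap B) \le \sum_{u \in OPT \cap B} f(\{u\}) \le |OPT \cap B| \cdot f(S_B)$, so $\BigAlg$ meets the requirement of the framework with $\alpha^{-1} = \max\{1, |OPT \cap B|\}$, and $\underline{\alpha} = 1/\max\{1, d(\lambda - 1)\} = 1/\max\{1, d\}$ is a valid uniform lower bound on $\alpha$ for the $\rho$-guessing step. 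With these choices I would run Algorithm~\ref{alg:complete_nearly_linear} with quality parameters $\delta = \eps$ and the given $\eps$, and read off the guarantee from Proposition~\ref{prop:complete_guarantee_nearly_linear}.

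For the running time, substituting $\lambda = 2$, $T_B = O(nd)$, $\underline{\alpha}^{-1} = O(d)$ and $\delta = \eps$ into Proposition~\ref{prop:complete_guarantee_nearly_linear} makes the leading term $O(nd + n\eps^{-1}(\log n + \log \eps^{-1}))$ and the extra multiplicative factor $O(\log \eps^{-1} + \log(\log n + \log d))$, so the total is $\tilde{O}(nd + n/\eps)$ as claimed. The work of the proof is therefore entirely in simplifying the approximation factor. Writing $b = |OPT \cap B|$ and recalling from the discussion before Proposition~\ref{prop:main_guarantee} that $b \le d(\lambda - 1) = d$, the denominator of the ratio guaranteed by Proposition~\ref{prop:complete_guarantee_nearly_linear} is $\lambda((1 + \eps)p + 1 + \alpha^{-1}) + (\lambda + 1)(d - b/\lambda)$; for $\lambda = 2$ and $\alpha^{-1} = b \ge 1$ this equals $2[(1 + \eps)p + 1 + \tfrac14 b + \tfrac32 d] \le 2(1 + \eps)(p + 1 + \tfrac74 d)$, using $b \le d$. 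Combined with the $(1 - \delta - \eps)$ factor in the numerator and $\delta = \eps$, this yields an approximation ratio of $[(1 + O(\eps))(p + 1 + \tfrac74 d)]^{-1}$. The value $\lambda = 2$ is not arbitrary: the coefficient of $d$ obtained this way works out to $\lambda - 1 + \lambda^{-1} + \lambda^{-2}$, which over integers $\lambda \ge 2$ is minimized (and equals $\tfrac74$) at $\lambda = 2$; this is exactly what pins down the constant $\tfrac74$.

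The one step I expect to require a little care is the borderline case $OPT \cap B = \varnothing$ (i.e.\ $b = 0$): plugging $\alpha^{-1} = 1$ into the denominator gives $2[(1 + \eps)p + 2 + \tfrac32 d]$, which is slightly larger than $2(1 + \eps)(p + 1 + \tfrac74 d)$ when $d$ is very small. This is resolved by revisiting the proof of Lemma~\ref{lem:Oell_lower_bound} and Proposition~\ref{prop:main_guarantee_nearly_linear} in this case: when $OPT$ contains no big element, the term $f(S_B)/\alpha$ is replaced by $f(\varnothing)$ (which is $0$ under the usual convention), so $f(O_\ell \cup S_\ell) \ge f(OPT) - \rho d$ and re-running the same computation gives the improved denominator $p + 1 + \tfrac32 d \le p + 1 + \tfrac74 d$; hence the claimed bound holds in all cases. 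Everything else --- feasibility of the output, polynomial time, and the reduction to the $\rho$-guessing of Section~\ref{ssc:rho_search} --- follows directly from the already-established properties of Algorithms~\ref{alg:main_nearly_linear} and~\ref{alg:complete_nearly_linear}.
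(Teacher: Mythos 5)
Your proposal matches the paper's proof essentially line for line: set $\lambda = 2$, take $\BigAlg$ to be the best singleton of $B$ (giving $\alpha^{-1} = |OPT \cap B|$ and $\underline{\alpha}^{-1} = d(\lambda-1) = d$), run Algorithm~\ref{alg:complete_nearly_linear} with $\delta = \eps$, and read the ratio off Proposition~\ref{prop:complete_guarantee_nearly_linear}, simplifying via $|OPT \cap B| \le d$. Your extra paragraph on the degenerate case $OPT \cap B = \varnothing$ is a point the paper silently glosses over (it just plugs $\alpha^{-1} = 0$, which is outside the declared range $\alpha \in (0,1]$); your instinct is right that this needs a sentence, though a cleaner way to close it than invoking ``$f(\varnothing)=0$'' (which the paper never assumes) is to observe that in Lemma~\ref{lem:Oell_lower_bound_fast} the term being bounded is really the marginal $f(OPT\cap B\mid \cdot)$, which is identically zero when $OPT\cap B=\varnothing$, so the $\alpha^{-1}$ contribution drops out and the bound $(1+\eps)p+1+\tfrac{3}{2}d$ holds regardless of $f(\varnothing)$.
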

\begin{proof}
Given the above choice for {\BigAlg}, we can set $\alpha = |OPT \cap B|^{-1}$ because the submodularity of $f$ implies
\[
	\max_{u \in B} f(\{u\})
	\geq
	\frac{\sum_{u \in OPT \cap B} f(\{u\})}{|OPT \cap B|}
	\geq
	\frac{f(OPT \cap B)}{|OPT \cap B|}
	\enspace.
\]
Thus, for $\lambda \geq 2$, it is valid to choose $\underline{\alpha}^{-1} = d(\lambda - 1)$ because of the upper bound on $|OPT \cap B|$ explained in the discussion before Proposition~\ref{prop:main_guarantee}.

If we now choose to use Algorithm~\ref{alg:complete_nearly_linear} and set $\lambda = 2$ and $\delta = \eps$, we get by Proposition~\ref{prop:complete_guarantee_nearly_linear} that the inverse of the approximation ratio of the algorithm we consider is at most
\begin{align*}
	&
	\frac{\lambda((1 + \eps)p + 1 + \alpha^{-1}) + (\lambda + 1)(d - |OPT \cap B|/\lambda)}{(1 - \delta - \eps)\lambda}\\
	={} &
	\frac{(1 + \eps)p + 1 + \alpha^{-1}(1 - \lambda^{-1} - \lambda^{-2}) + d(1 + \lambda^{-1})}{1 - \delta - \eps}\\
	\leq{} &
	\frac{(1 + \eps)p + 1 + d(\lambda - 1)(1 - \lambda^{-1} - \lambda^{-2}) + d(1 + \lambda^{-1})}{1 - \delta - \eps}
	=
	\frac{(1 + \eps)p + 1 + d(\lambda - 1 + \lambda^{-1} + \lambda^{-2})}{1 - \delta - \eps}\\
	={} &
	\frac{(1 + \eps)p + 1 + \frac{7}{4}d}{1 - \delta - \eps}
	\leq
	(1 + 3\delta + 3\eps)(p + 1 + \tfrac{7}{4}d)
	=
	(1 + 6\eps)(p + 1 + \tfrac{7}{4}d)
	\enspace,
\end{align*}
and its time complexity is
\begin{align*}
	&
	O((\lambda n d + n\eps^{-1}(\log n + \log \eps^{-1}) + T_B) \cdot (\log \delta^{-1} + \log(\log n + \log (\underline{\alpha}^{-1} + d)))\\
	={} &
	O((n d + n\eps^{-1} (\log n + \log \eps^{-1}) + T_B) \cdot (\log \eps^{-1} + \log(\log n + \log (2d)))\\
	={} &
	\tilde{O}(nd + n/\eps + T_B)
	=
	\tilde{O}(nd + n/\eps)
	\enspace,
\end{align*}
where the last equality holds because {\BigAlg} can be implemented by simply scanning all the $n$ possible singleton sets, which requires a time complexity of $T_B = O(n)$.
\end{proof}

A slightly more involved option for {\BigAlg} is an algorithm that enumerates over all sets of up to two big elements, and outputs the best such set that is feasible. This leads to the following theorem.
\begin{theorem} \label{thm:smks_quadratic}
For every value $\eps \in (0, 1/4]$, there is a polynomial time $(p + 1.5556 + \tfrac{13}{9}d + \eps)$-approximation algorithm for {\SMKS} that runs in $\tilde{O}(n^2 + nd)$ time.
\end{theorem}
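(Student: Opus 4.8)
The plan is to reuse the proof of Theorem~\ref{thm:smks_nearly_linear} almost verbatim, changing only the subroutine $\BigAlg$ and, accordingly, the values of $\lambda$, $\alpha$ and $\underline{\alpha}$. For $\BigAlg(B)$ I would take the procedure that enumerates every subset of $B$ of size at most $2$ and returns the feasible such subset of largest $f$-value. Since the target running time here is $\tilde{O}(n^2 + nd)$ rather than nearly-linear, there is no need for the thresholding speedup, so I would invoke Algorithm~\ref{alg:complete} and its guarantee (Proposition~\ref{prop:complete_guarantee}) rather than Algorithm~\ref{alg:complete_nearly_linear}; this is exactly what keeps the final bound free of a $(1 + \eps)p$ factor.

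The first substantive step is to identify a value $\alpha$ for which $f(S_B) \geq \alpha \cdot f(OPT \cap B)$. Partition $OPT \cap B$ arbitrarily into $\lceil |OPT \cap B| / 2 \rceil$ parts of size at most $2$. Each part is a subset of $OPT$, hence feasible (independent in $\cM$ and obeying every knapsack constraint), so $\BigAlg$ examines it and $f(S_B)$ is at least the $f$-value of the most valuable part; and by the subadditivity that follows from the monotonicity, submodularity and non-negativity of $f$, that most valuable part has value at least $f(OPT \cap B) / \lceil |OPT \cap B| / 2 \rceil$. Hence we may take $\alpha^{-1} = \lceil |OPT \cap B| / 2 \rceil$, and combining this with the bound $|OPT \cap B| \leq d(\lambda - 1)$ recalled before Proposition~\ref{prop:main_guarantee} shows that $\underline{\alpha}^{-1} = \lceil d(\lambda - 1) / 2 \rceil$ is a valid known upper bound on $\alpha^{-1}$ for Algorithm~\ref{alg:complete} to use when initializing its search over $\rho$.

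Next I would set $\lambda = 3$ and $\delta = \Theta(\eps)$ (small enough that the $1/(1 - \delta)$ blow-up below is absorbed by an additive $\eps$), and read the guarantee off Proposition~\ref{prop:complete_guarantee}: the reciprocal of the approximation ratio is at most
\[
	\frac{p + 1 + \alpha^{-1} + \tfrac{\lambda + 1}{\lambda} d - \tfrac{\lambda + 1}{\lambda^2} |OPT \cap B|}{1 - \delta}
	=
	\frac{p + 1 + \lceil |OPT \cap B| / 2 \rceil + \tfrac{4}{3} d - \tfrac{4}{9} |OPT \cap B|}{1 - \delta}
	\enspace.
\]
It then remains to check that the numerator never exceeds $p + \tfrac{14}{9} + \tfrac{13}{9} d$ over the admissible range $|OPT \cap B| \in \{0, 1, \dotsc, 2d\}$; splitting according to the parity of $|OPT \cap B|$ makes the numerator affine in $|OPT \cap B|$ within each case, so each case attains its maximum at an endpoint of its range, and comparing the two endpoint values yields the uniform bound. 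Folding the factor $1/(1 - \delta)$ into an additive $\eps$ then gives the claimed $(p + 1.5556 + \tfrac{13}{9}d + \eps)^{-1}$-approximation for {\SMKS}. For the running time, Proposition~\ref{prop:complete_guarantee} bounds it by $O(\lambda n d + n r + T_B)$ up to a factor poly-logarithmic in $n$, $d$ and $\eps^{-1}$; with $\lambda = 3 = O(1)$, $r \leq n$, and $T_B = \tilde{O}(n^2 + nd)$ for the pair-enumeration $\BigAlg$, the whole thing runs in $\tilde{O}(n^2 + nd)$ time.

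The part I expect to be most delicate is this last optimization: pinning the constants $\tfrac{14}{9}$ and $\tfrac{13}{9}$ down precisely requires committing to $\lambda = 3$ (rather than $\lambda = 2$ or $\lambda \geq 4$) and realizing that the worst case over $|OPT \cap B|$ is an interior odd value rather than one of the obvious endpoints $0$ or $2d$, since the ceiling, the coefficient $\tfrac{\lambda + 1}{\lambda^2}$, and the constraint $|OPT \cap B| \leq d(\lambda - 1)$ all pull in different directions. Everything else is a routine specialization of the machinery already in place.
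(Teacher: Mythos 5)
Your overall plan---enumerate pairs for $\BigAlg$, set $\lambda = 3$ and $\delta = \Theta(\eps)$, and plug into Algorithm~\ref{alg:complete} via Proposition~\ref{prop:complete_guarantee}---matches the paper's proof. Where you diverge is how $\alpha$ is obtained: the paper applies the sampling lemma (Lemma~\ref{lem:sampling}, i.e., Lemma~2.2 of~\cite{feige2011maximizing}) to a uniformly random $2$-subset of $OPT \cap B$, yielding $\alpha = \min\{2/|OPT\cap B|, 1\}$; you instead partition $OPT\cap B$ into $\lceil |OPT\cap B|/2 \rceil$ parts of size at most $2$ and invoke subadditivity, yielding $\alpha^{-1} = \lceil |OPT\cap B|/2 \rceil$. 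Your bound is slightly weaker for odd $|OPT\cap B|$ (e.g.\ $\alpha^{-1}=2$ vs.\ $3/2$ at $|OPT\cap B|=3$), but this slack does not affect the final constants, so the alternative route is fine for $|OPT\cap B|\geq 1$.

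The genuine gap is at the endpoint $|OPT\cap B| = 0$. Your formula gives $\alpha^{-1} = \lceil 0/2 \rceil = 0$, which is not an admissible approximation factor ($\alpha$ must lie in $(0,1]$, and the bound $f(OPT\cap B)\le f(S_B)/\alpha$ used in Lemma~\ref{lem:Oell_lower_bound} requires finite $\alpha$), and your partition argument also degenerates since there are no parts. The honest admissible choice at $|OPT\cap B| = 0$ is $\alpha^{-1} = 1$, and then the numerator of the inverse approximation ratio becomes $p + 2 + \tfrac{4}{3}d$, which strictly exceeds $p + \tfrac{14}{9} + \tfrac{13}{9}d$ for every $d \leq 3$ (for $d=1$: $p+\tfrac{30}{9} > p+\tfrac{27}{9}$). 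So the ``remaining check'' you defer to does not actually pass. The paper closes this case with a separate argument: when $OPT\cap B = \varnothing$ the analysis of Proposition~\ref{prop:complete_guarantee} can be repeated ignoring $S_B$ altogether, which legitimately produces the numerator $p + 1 + \tfrac{4}{3}d$ (coincidentally the number your invalid $\alpha^{-1}=0$ would give). You need to add that extra case analysis for $OPT \cap B = \varnothing$; without it, the stated constant $1.5556 = \tfrac{14}{9}$ is not met for small $d$. As a secondary point, with your $\alpha^{-1}$ the worst case among $|OPT\cap B|\geq 1$ is in fact $|OPT\cap B|=2d-1$ giving $p + \tfrac{13}{9} + \tfrac{13}{9}d$, not $\tfrac{14}{9}$; the paper's $\tfrac{14}{9}$ constant comes from its (also separately handled) $|OPT\cap B| = 1$ case.
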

\begin{proof}
In this proof we need the following known lemma.
\begin{lemma}[Lemma~2.2 of~\cite{feige2011maximizing}] \label{lem:sampling}
Let $f\colon 2^\cN \to \bR$ be a submodular function, let $A$ be an arbitrary subset of $\cN$, and let $A(p)$ be a random subset of $A$ containing every element of $A$ with probability $p$ (not necessarily independently). Then,
\[
	\bE[f(A(p))]
	\geq
	(1 - p) \cdot f(\varnothing) + p \cdot f(A)
	\enspace.
\]
\end{lemma}

Next, we show that one is allowed to choose $\alpha = \min\{2 / |OPT \cap B|, 1\}$ for the above described {\BigAlg}. If $|OPT \cap B| \leq 2$ then this is trivial. Otherwise, by choosing $R$ to be a uniformly random subset of $OPT \cap B$ of size $2$, we get that the value of the output of {\BigAlg} is at least
\[
	\bE[f(R)]
	\geq
	\left(1 - \frac{2}{|OPT \cap B|}\right) \cdot f(\varnothing) + \frac{2}{|OPT \cap B|} \cdot f(OPT \cap B)
	\geq
	\frac{2}{|OPT \cap B|} \cdot f(OPT \cap B)
	\enspace,
\]
where the first inequality follows from Lemma~\ref{lem:sampling} and the observation that $R$ includes every element of $OPT \cap B$ with probability exactly $2 / |OPT \cap B|$. Hence, it is valid to set $\underline{\alpha}^{-1} = \max\{d(\lambda - 1) / 2, 1\}$ because of the upper bound on $|OPT \cap B|$ explained in the discussion before Proposition~\ref{prop:main_guarantee}.

Consider now the algorithm obtained by plugging $\delta = \frac{\eps}{2p + 4d + 4}$, $\lambda = 3$ and the above choice for {\BigAlg} into Algorithm~\ref{alg:complete}. We would like to show that this algorithm has all the properties guaranteed by the theorem. By Proposition~\ref{prop:complete_guarantee}, the time complexity of this algorithm is
\begin{align*}
	&
	O((\lambda n d + nr + T_B) \cdot (\log \delta^{-1} + \log(\log n + \log (\underline{\alpha}^{-1} + d)))\\
	={} &
	O((n d + nr + T_B) \cdot (\log (p + d) + \log \eps^{-1} + \log(\log n + \log d))\\
	={} &
	\tilde{O}(nr + nd + T_B)
	=
	\tilde{O}(n^2 + nd)
	\enspace,
\end{align*}
where the last equality holds since a brute force implementation of {\BigAlg} runs in $O(n^2)$ time.

It remains to analyze the approximation ratio of our suggested algorithm. Towards this goal, we need to consider a few cases. If $|OPT \cap B| \geq 2$, then, by Proposition~\ref{prop:complete_guarantee}, the inverse of the approximation ratio of the above algorithm is no worse than
\begin{align*}
	&
	\frac{\lambda(p + 1 + \alpha^{-1}) + (\lambda + 1)(d - |OPT \cap B|/\lambda)}{(1 - \delta)\lambda}\\
	={} &
	\frac{p + 1 + |OPT \cap B| \cdot (1/2 - \lambda^{-1} - \lambda^{-2}) + d(1 + \lambda^{-1})}{1 - \delta}\\
	\leq{} &
	\frac{p + 1 + d(\lambda - 1) \cdot(1/2 - \lambda^{-1} - \lambda^{-2}) + d(1 + \lambda^{-1})}{1 - \delta}
	=
	\frac{p + 1 + d(\lambda/2 - 1/2 + \lambda^{-1} + \lambda^{-2})}{1 - \delta}\\
	={} &
	\frac{p + 1 + \frac{13}{9}d}{1 - \delta}
	\leq
	(1 + 2\delta)(p + 1 + \tfrac{13}{9}d)
	\leq
	p + 1 + \tfrac{13}{9}d + \eps
	\enspace.
\end{align*}
Otherwise, if $|OPT \cap B| \leq 1$, then, by the same proposition, the inverse of the approximation ratio of the above algorithm is at most
\begin{align*}
	&
	\frac{\lambda(p + 1 + \alpha^{-1}) + (\lambda + 1)(d - |OPT \cap B|/\lambda)}{(1 - \delta)\lambda}
	\leq
	\frac{p + 2 + d(1 + \lambda^{-1})}{1 - \delta}\\
	={} &
	\frac{p + 2 + \tfrac{4}{3}d}{1 - \delta}
	\leq
	(1 + 2\delta)(p + 2 + \tfrac{4}{3}d)
	\leq
	p + 2 + \tfrac{4}{3}d + \eps
	\enspace.
\end{align*}

The above inequalities prove an approximation ratio which is a bit weaker than the one guaranteed by the theorem. To get exactly the approximation ratio guaranteed by the theorem, we need to be a bit more careful with the last case. First, for $|OPT \cap B| = 1$,
\begin{align*}
	&
	\frac{\lambda(p + 1 + \alpha^{-1}) + (\lambda + 1)(d - |OPT \cap B|/\lambda)}{(1 - \delta)\lambda}
	=
	\frac{p + 2 - 1/\lambda - 1/\lambda^2 + d(1 + \lambda^{-1})}{1 - \delta}\\
	={} &
	\frac{p + \tfrac{14}{9} + \tfrac{4}{3}d}{1 - \delta}
	\leq
	(1 + 2\delta)(p + 1.5556 + \tfrac{4}{3}d)
	\leq
	p + 1.5556 + \tfrac{4}{3}d + \eps
	\enspace.
\end{align*}
Handling the case $OPT \cap B = \varnothing$ is a bit more involved. Since there are no big elements in this case in $OPT$, there is no need to take $S_B$ into account in the analysis of Algorithm~\ref{alg:complete}. It can be observed that by repeating the analysis, but ignoring this set, we can get that the value of the output set of this algorithm is also lower bounded by
\[
	\frac{(1 - \delta)\lambda \cdot f(OPT \setminus B)}{\lambda(p + 1) + (\lambda + 1)(d - |OPT \cap B|/\lambda)}
	\enspace,
\]
which in our case implies that the inverse of the approximation ratio of our algorithm is at most
\[
	\frac{\lambda(p + 1) + d(\lambda + 1)}{(1 - \delta)\lambda}
	=
	\frac{p + 1 + d(1 + \lambda^{-1})}{1 - \delta}
	=
	\frac{p + 1 + \tfrac{4}{3}d}{1 - \delta}
	\leq
	p + 1 + \tfrac{4}{3}d + \eps
	\enspace.
	\qedhere
\]
\end{proof}

Theorem~\ref{thm:SMKS} now follows immediately from Theorems~\ref{thm:smks_nearly_linear} and~\ref{thm:smks_quadratic}.

\noindent \textbf{Remark:} It is natural to consider also candidates for {\BigAlg} that enumerate over larger subsets of $B$. However, this will require $\Omega(n^3)$ time, and is, therefore, of little interest as one can obtain a clean $(p + d + 1)$-approximation for {\SMKS} in $\tilde{O}(n^3)$ time (see Section~\ref{ssc:related_work}).
\section{Experimental Results} \label{sec:experiments}
We have studied the submodular maximization problem subject to different constraints. 
In this section, we compare our proposed algorithms with the state-of-the-art algorithms under the following constraints: (i) a carnality constraint (\cref{sec:experiment-cardinality}), (ii) a single knapsack constraint (\cref{sec:experiment-knapsack}), and (iii) combination of a $p$-set system and $d$ knapsack constraints (\cref{sec:experiment-system}). 

\subsection{Cardinally Constraint} \label{sec:experiment-cardinality}
Cardinality constraint is the most widely studied setup in the submodular maximization literature. We compare our algorithm, \AlgFTG (abbreviated with \FTG in the figures) with \AlgLG and \AlgBR \cite{kuhnle2021quick} under the cardinality constraint. \AlgLG is an efficient implementation of the na\"ive \AlgG algorithm in which the diminishing returns property of submodular functions is used to avoid oracle queries that are known to provide little gain~\cite{mirzasoleiman2015lazier}. It is well-known that \AlgLG leads to several orders of magnitude speedups over \AlgG in practice.

In our first experiment, we implement a movie recommender system by finding a diverse set of movies for a user. We adopt the approach of \citet{lindgren2015sparse} to extract features for each movie by using ratings from the MovieLens dataset \cite{harper2015movielens}.
For a given set of movies $\cN$ $(|\cN| = n)$, let vector $v_i$ represent the attributes of the $i$-th movie. The similarity matrix $M_{n \times n}$ between movies is defined by $M_{ij} = e^{-\lambda \cdot  \text{dist}(v_i,v_j)}$, where $\text{dist}(v_i,v_j)$ is the euclidean distance between the vectors $v_i, v_j \in \cN$. 
For this application, we used the following monotone and submodular function to quantify the diversity of a given set of movies $S$: $f(S) = \log \det (\mathbf{I} + \alpha M_S)$, where $\mathbf{I}$ is the identity matrix and $M_S$ is a square sub-matrix of $M$ consisting of the rows and columns corresponding to the movies in the set $S$ \cite{herbrich2003fast}. Our objective is to maximize $f$ under the cardinality constraint $k$. In \cref{fig:LG-movie-utility}, we compare the utility of the algorithms on this instance. We observe that \AlgFTG with $\varepsilon \in \{0.1, 0.2 \}$ performs as good as \AlgLG, and $\AlgBR$ performs slightly worse. In \cref{fig:LG-movie-oracle}, we observe that the query complexity of \AlgFTG is significantly less than that of \AlgLG.
It is interesting to note that, as is guaranteed by our theoretical results, the number of oracle calls for \AlgFTG is (almost) not increasing with $k$. 

In our second experiment, we consider a location summarization application. The end goal is to find a representative summary of tens of thousands of Yelp business locations (from Charlotte, Edinburgh, Las Vegas, Madison, Phoenix and  Pittsburgh) by using their related attributes \cite{badanidiyuru2020submodular}.
To quantify how well a business at a location $i$ represents another business at a location $j$ we use a similarity measure $M_{i, j}= e^{-\lambda \cdot  \text{dist}(v_i,v_j)}$, where $v_i$ and $v_j$ are vectors of attributes of facilities at locations $i$ and $j$.
To select a good representation of all the locations $\cN = \{1, \dots , n \} $, we use the monotone and submodular facility location function \cite{krause12survey}
\begin{equation} \label{eq:facility-location}
f(S) =\frac{1}{n} \sum_{i=1}^{n} \max_{j \in S} M_{i,j} \enspace.
\end{equation}
In \cref{fig:LG-yelp-utility}, we observe that \AlgLG performs slightly better than \AlgFTG in this experiment. On the other hand, \cref{fig:LG-yelp-oracle} shows that the query complexity of \AlgFTG is much lower.

\begin{figure*}[htb!] 
	\centering  
	\subfloat[Movie Recommendation]{\includegraphics[height=32mm]{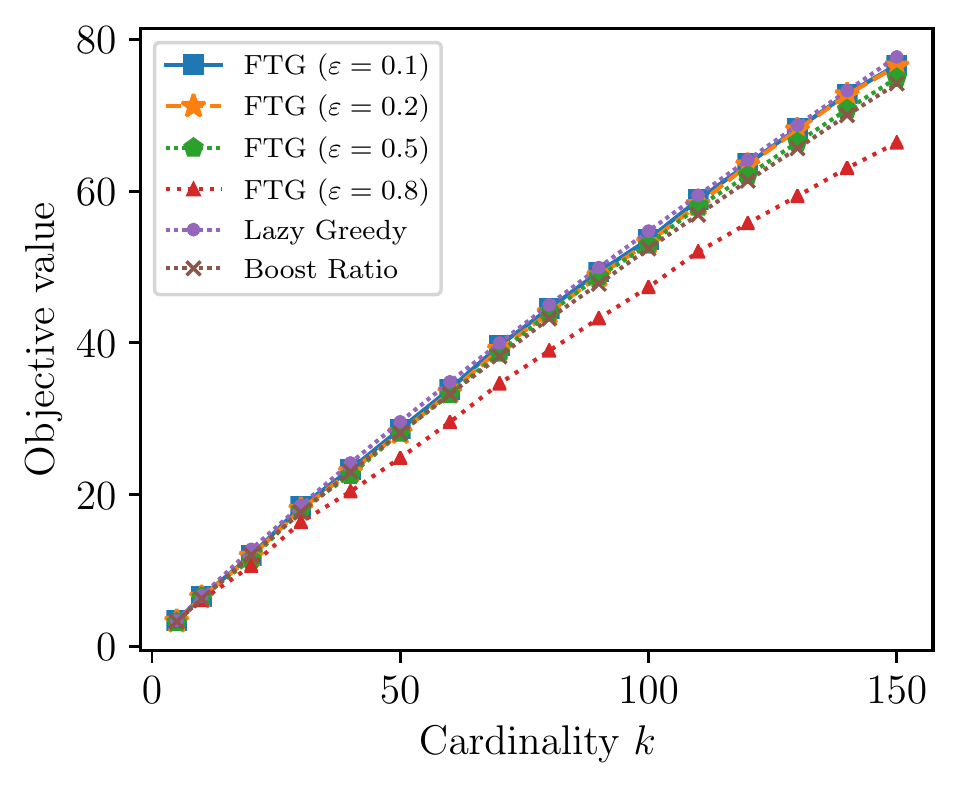}\label{fig:LG-movie-utility}}
	\subfloat[Movie Recommendation]{\includegraphics[height=32mm]{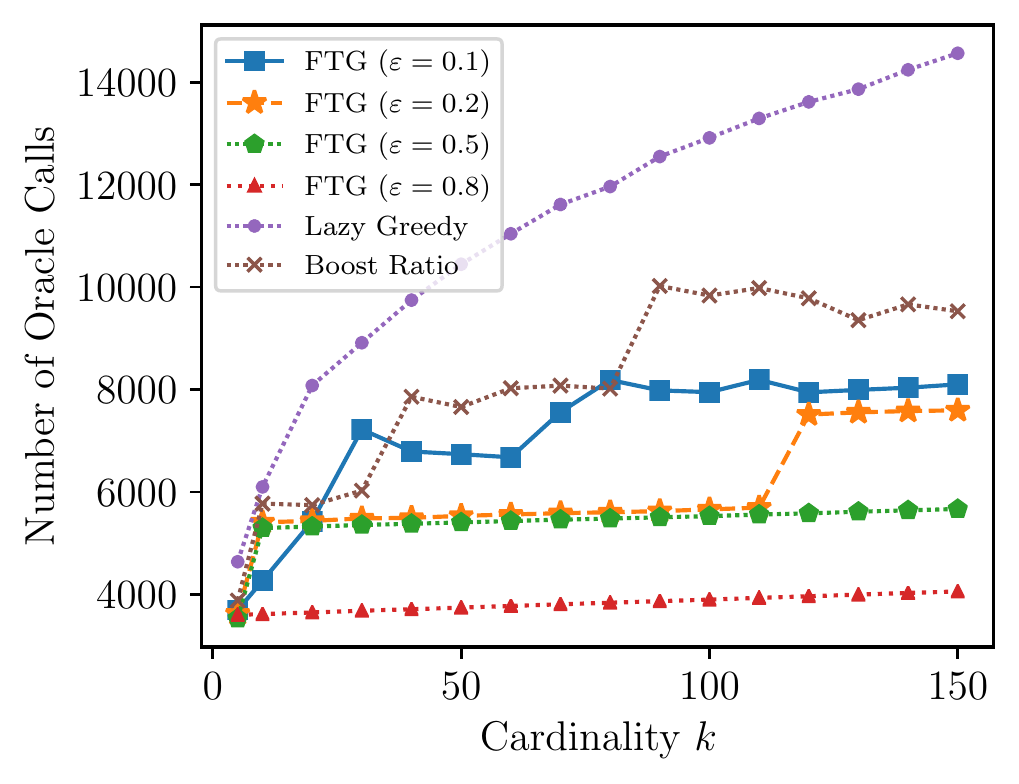}\label{fig:LG-movie-oracle}}
	\subfloat[Location Summarization]{\includegraphics[height=32mm]{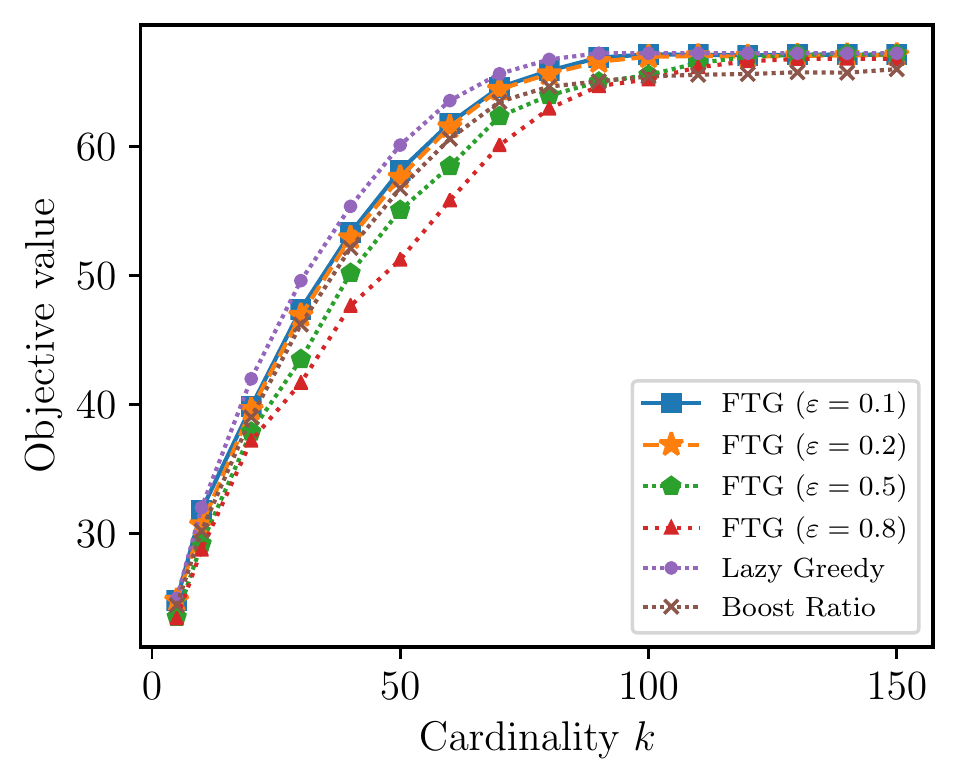}\label{fig:LG-yelp-utility}}
	\subfloat[Location Summarization]{\includegraphics[height=32mm]{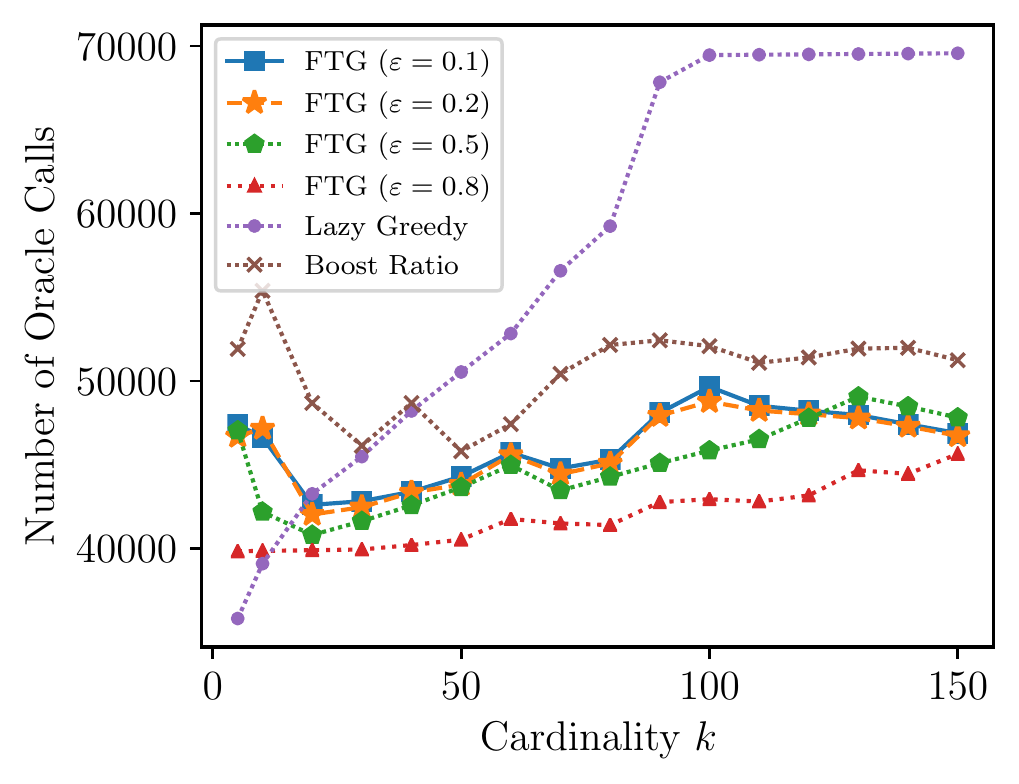}\label{fig:LG-yelp-oracle}}
	\caption{Comparing \AlgFTG (\cref{alg:threshold_greedy}) with \AlgLG and \AlgBR \cite{kuhnle2021quick} under a cardinality constraint.}
	\label{fig:LG}
\end{figure*}

\AlgSG is a fast but randomized approach for maximizing submodular functions subject to a cardinality constraint \cite{mirzasoleiman2015lazier}. In the next experiment, we compare \AlgFTG with \AlgSG under the cardinality constraint. We consider the monotone and submodular vertex cover function. This function is defined over vertices of a (directed) graph $G = (V, E)$. For a given vertex set $S \subseteq V$, let $N(S)$ be the set of vertices which are pointed to by $S$, i.e., $N(S) \triangleq \{v \in V \mid \exists u \in S \text{ s.t. } (u, v) \in E\}$. The vertex cover function $f:2^{V} \to \bR_{\geq 0}$ is then defined as $f(S) = | N(S) \cup S|$. We use two different graphs: (i) a random graph of $n = 10^6$ nodes with an average degree of $2$, where $20$ additional nodes with degrees $50$ are added (the neighbors of these high degree nodes are chosen randomly); (ii) Slashdot social network \cite{snapnets}.
In \cref{fig:sg}, we observe that the utility of \AlgFTG is significantly better than that of \AlgSG. We also observe a high variability in the utility of solutions returned by \AlgSG. Furthermore, \AlgFTG $(\varepsilon = 0.8)$ outperforms \AlgSG $(\varepsilon \in \{ 0.1, 0.2 \})$ in terms of both utility and query complexity.
Note that, while \AlgSG is performing quite well in many practical scenarios, the theoretical guarantee of this algorithm holds only in expectation, and there are cases resulting in high variance. In these high variance cases, one has to run \AlgSG multiple times, which diminishes the benefit from the algorithm. 

\begin{figure*}[htb!] 
	\centering  
	\subfloat[Random graph]{\includegraphics[height=31mm]{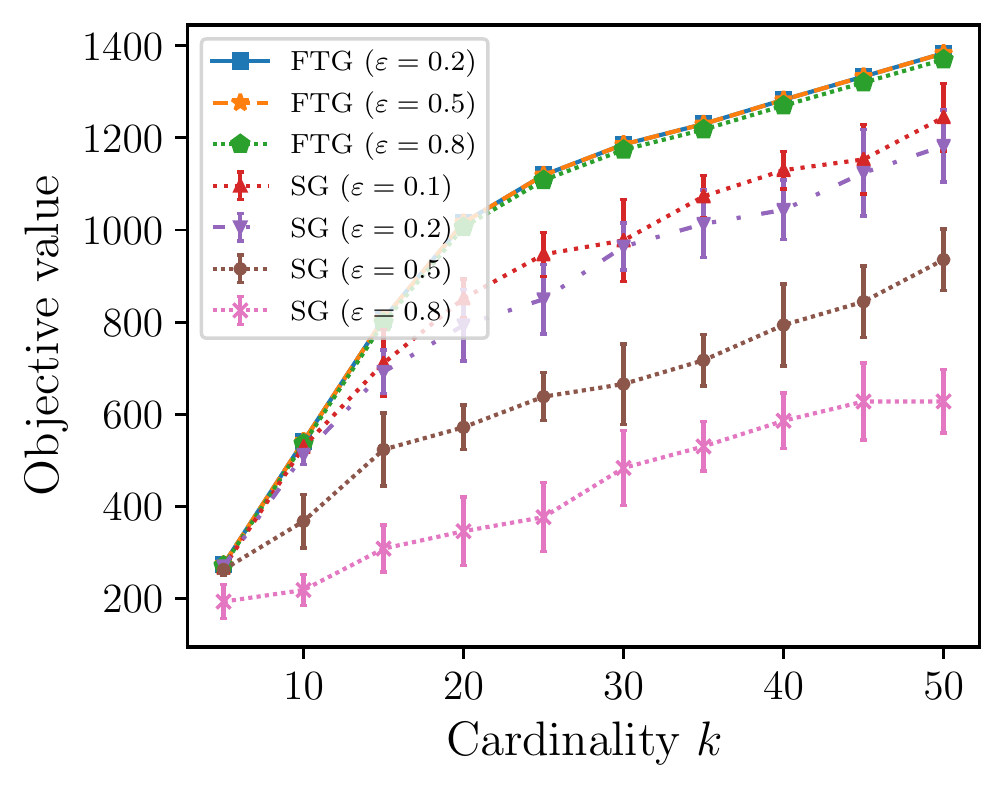}\label{fig:sg-random-utility}}
	\subfloat[Random graph]{\includegraphics[height=31mm]{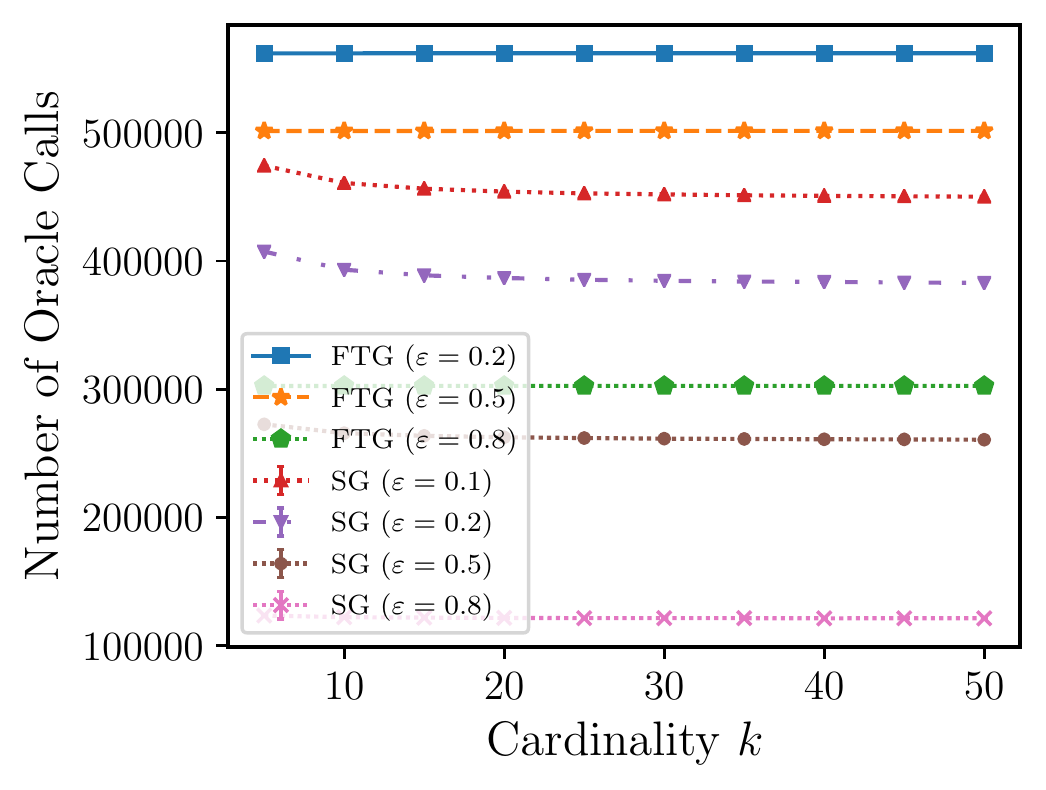}\label{fig:sg-random-call}}
	\subfloat[Slashdot network] {\includegraphics[height=31mm]{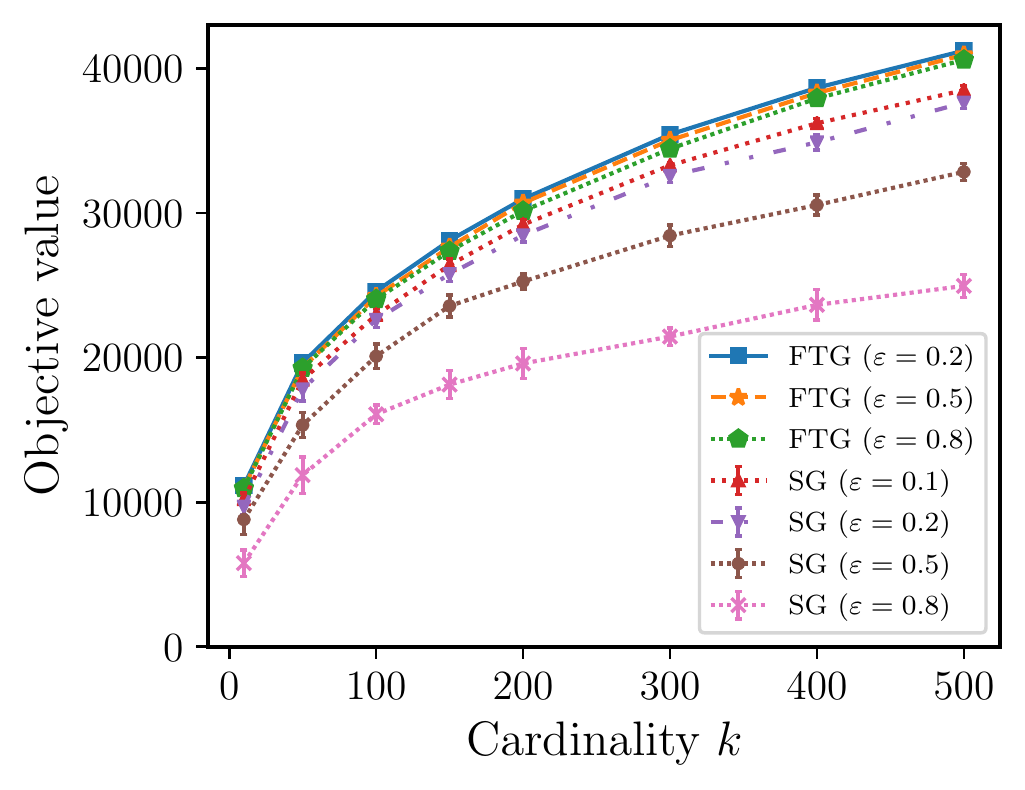}\label{fig:sg-slashdot-utility}}
	\subfloat[Slashdot network] {\includegraphics[height=31mm]{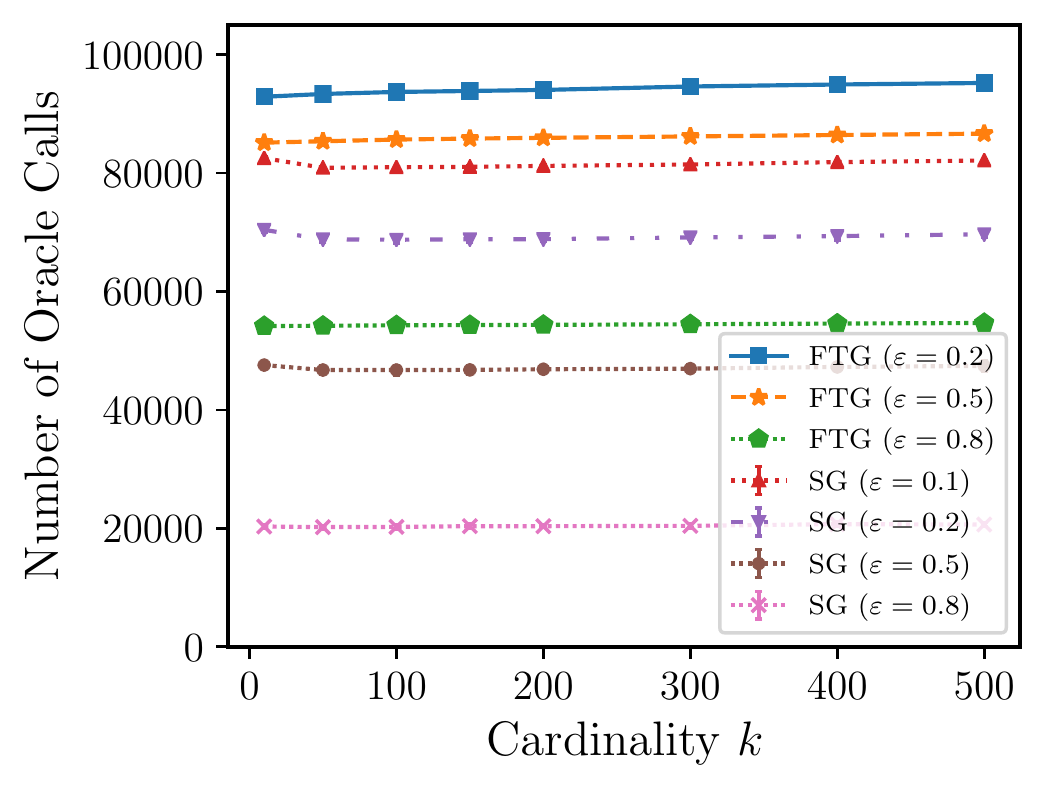}\label{fig:sg-slashdot-call}}
	\caption{Comparing \AlgFTG (\cref{alg:threshold_greedy}) with \AlgSG on vertex cover problems under a cardinality constraint.}
	\label{fig:sg}
\end{figure*}

\subsection{Single Knapsack Constraint} \label{sec:experiment-knapsack}
In this section, we evaluate the performance of \cref{alg:post_processing} (referred to as \FTGP in the figures) with that of \AlgMRT \cite{huang2021improved} and \AlgDG under a single knapsack constraint. \AlgDG greedily adds elements that
maximize the ratio between their marginal gain and knapsack cost (but ignoring elements whose addition will result in a violation of the knapsack constraint). In our first experiment for this constraint, we consider the movie recommendation application from \cref{sec:experiment-cardinality}. The cost of each movie is defined to be proportional to the absolute difference between the rating of that movie and 10 (the maximum rating in iMDB). In this application, the goal is to find a diverse set of movies while guaranteeing that the total rating of the picked movies is high enough \cite{badanidiyuru2020submodular}. From \cref{fig:knapsack-recomm-utiltiy,knapsack-recomm-recall}, we observe that \cref{alg:post_processing} significantly outperforms the other two algorithms with respect to both the utility and number of oracle calls metrics.

For our second experiment under the single knapsack constraint, we consider a Twitter text summarization application with the goal of producing a representative summary of Tweets around the first day of January 2019 \cite{haba2020streaming}. In this application, the objective is to find a diverse summary from the Twitter feeds of news agencies. The monotone and submodular function $f$ used in this task is defined over a ground set $\cN$ of tweets as follows: $f(S) = \sum_{w \in \cW} \sqrt{\sum_{e \in S} \text{score}(w,e)}$, where $\cW$ is the set of all the English words \cite{kazemi2019submodular}. If word $w$ is in Tweet $e$, then we have $\text{score}(w,e) = $ number of retweets $e$ has received. Otherwise, we define $\text{score}(w,e) = 0$.
The cost of each tweet is proportional to the time difference (in months) between the date of that Tweet and the first day of January 2019 \cite{haba2020streaming}. Again, from \cref{fig:knapsack-text-utiltiy,knapsack-text-recall}, it is evident that \cref{alg:post_processing} surpasses the baseline algorithms.

\begin{figure*}[htb!] 
	\centering  
	\subfloat[Movie Recommendation]{\includegraphics[height=32mm]{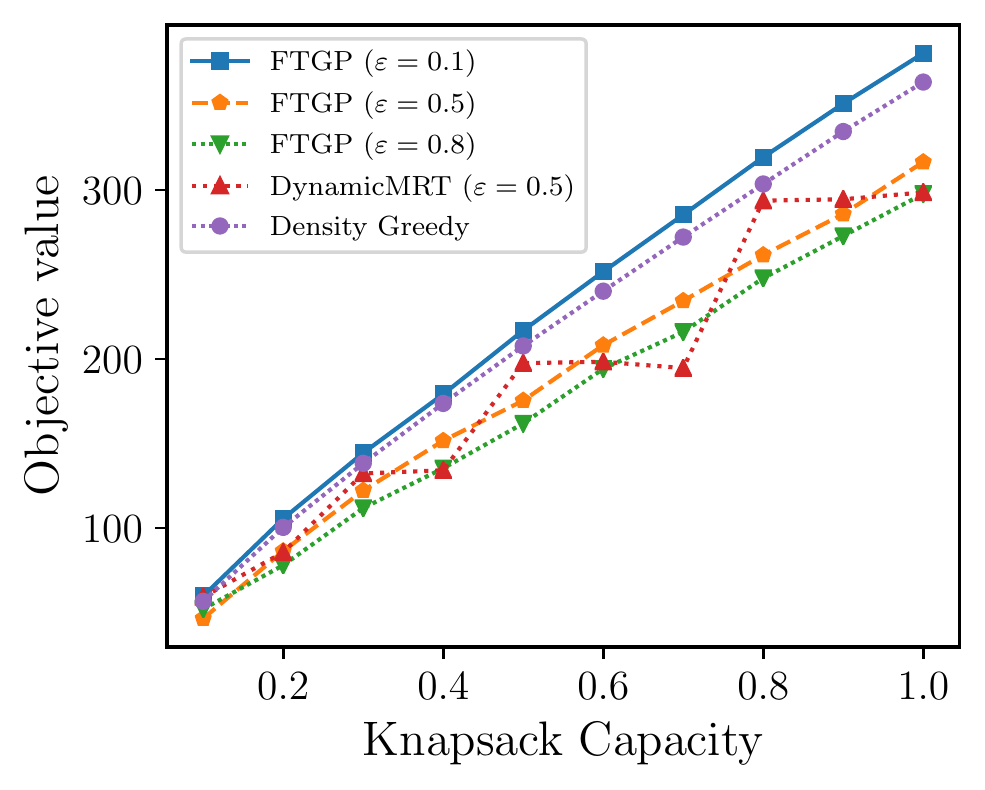}\label{fig:knapsack-recomm-utiltiy}}
	\subfloat[Movie Recommendation]{\includegraphics[height=32mm]{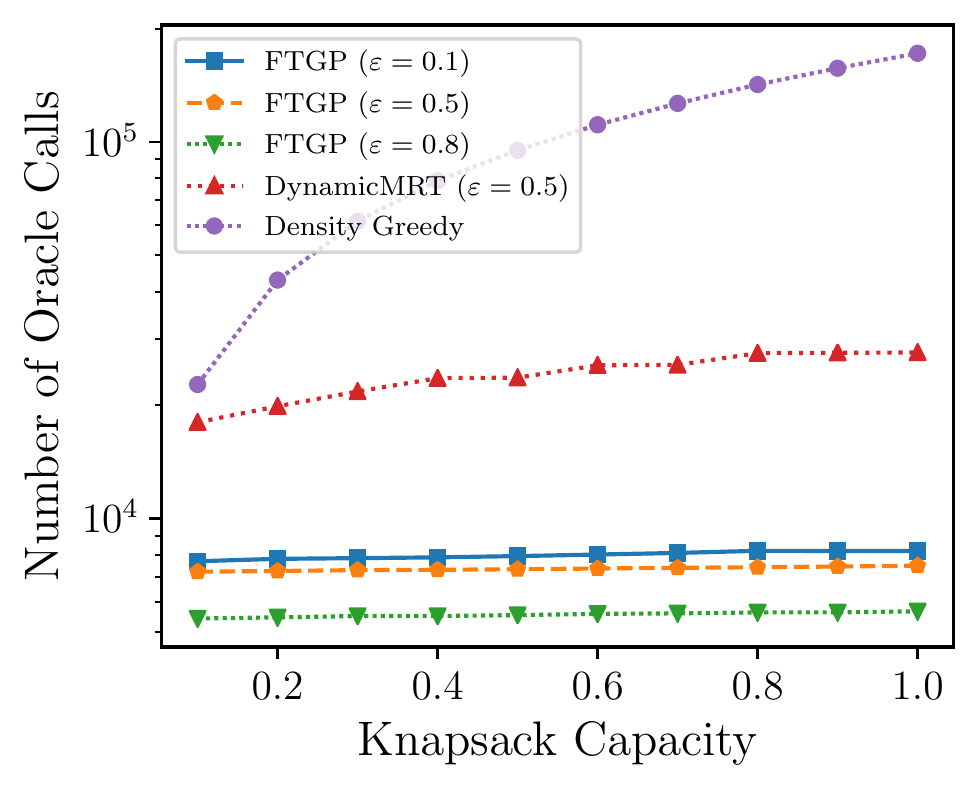}\label{knapsack-recomm-recall}}
	\subfloat[Text Summarization]{\includegraphics[height=32mm]{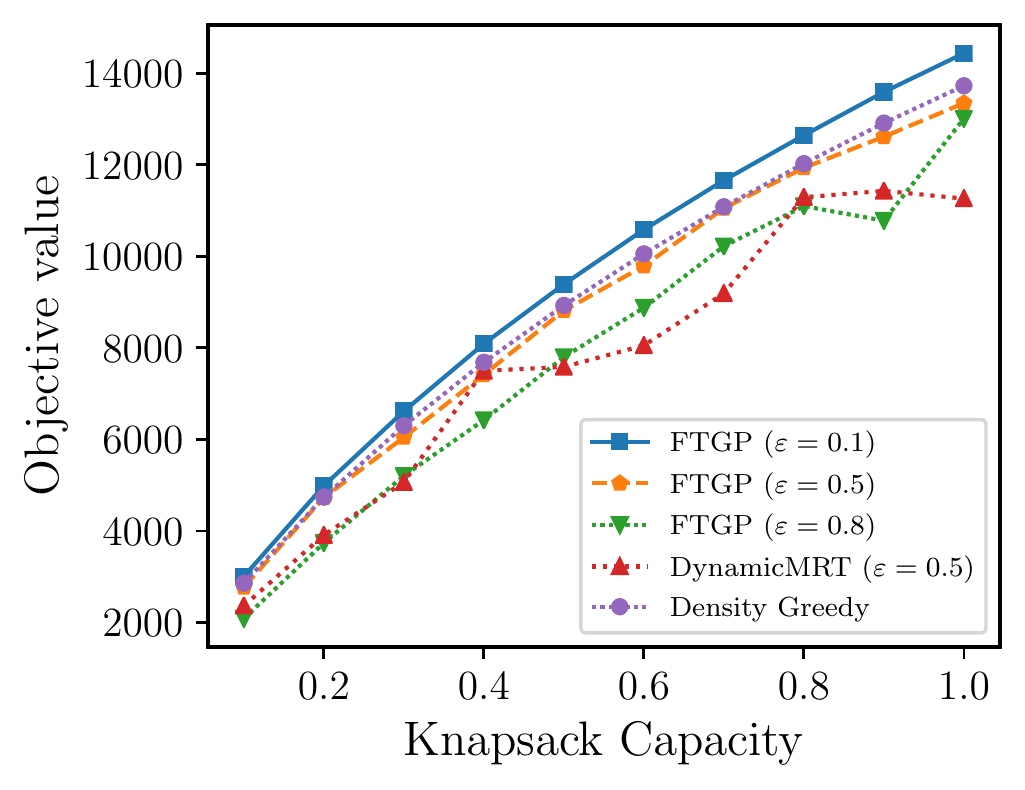}\label{fig:knapsack-text-utiltiy}}
	\subfloat[Text Summarization]{\includegraphics[height=32mm]{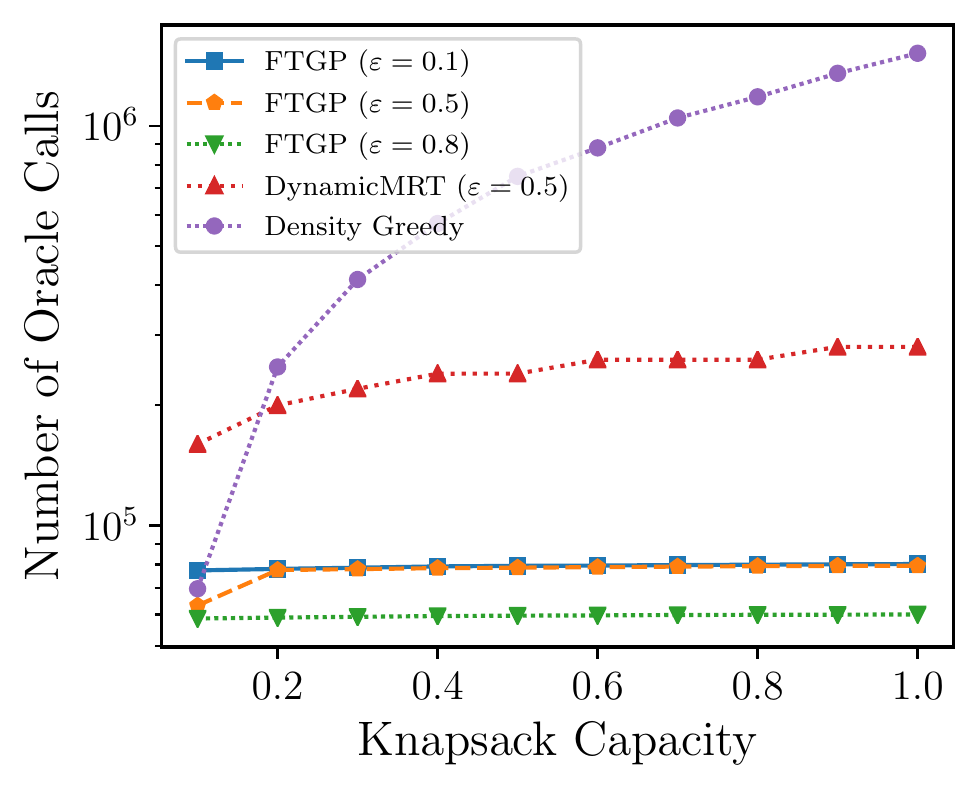}\label{knapsack-text-recall}}
	\caption{Comparing \cref{alg:post_processing} (referred to as \FTGP) with \AlgMRT \cite{huang2021improved} and \AlgDG under a single knapsack constraint.}
	\label{fig:knapsack}
\end{figure*}

\subsection{\texorpdfstring{$p$}{p}-Set System and \texorpdfstring{$d$}{d} Knapsack Constraints} \label{sec:experiment-system}

In the last set of experiments, we compare the performance of \cref{alg:complete_nearly_linear} with several other algorithms under the combination of a $p$-system and $d$ knapsack constraints. We consider \AlgBG \cite{badanidiyuru2020submodular}, \AlgFast \cite{badanidiyuru2014fast}, \AlgG and \AlgDG as our baselines. \AlgG keeps adding elements one by one according to their marginal gains as long as the $p$-system and knapsack constraints allow it. \AlgDG is very similar to \AlgG with the only difference being that it picks elements according to the ratio between their marginal gain and their total knapsack cost. Note that \AlgG and \AlgDG are heuristic algorithms without any theoretical guarantees for the setup of this experiment. 

In the first experiment of this section, we consider the location summarization application from \cref{sec:experiment-cardinality}. The goal is to maximize \eqref{eq:facility-location} subject to the combination of the following constraints: (i) the maximum number of locations from each city is $5$, (ii) the total size of the summary is at most 20, and (iii) two knapsacks $c_1$ and $c_2$, where $c_i(j) = \textrm{distance}(j, \textrm{POI}_i)$ is the normalized distance of location $j$ to a given point of interest in city $j$ (for $c_1$, $\textrm{POI}_1$ is the down-town; and for $c_2$, $\textrm{POI}_2$ is the international airport in that city). Note that the distances are normalized so that one unit of knapsack budget is equivalent to 100km. 
\cref{fig:SMKS-yelp-utility,fig:SMKS-yelp-call} compare the algorithms for varying values of knapsack budget. In terms of utility, we observe that \AlgBG is the best performing algorithm followed by our algorithm (\cref{alg:complete_nearly_linear}). Despite the competitive performance of \cref{alg:complete_nearly_linear}, its query complexity is almost an order of magnitude faster than \AlgBG.

The second application is a video summarization task over a collation of videos from the VSUMM dataset \cite{de2011vsumm}.\footnote{The dataset is available for download from \url{https://sites.google.com/site/vsummsite/}.} The features for each frame of a video are generated by a pre-trained ResNet-18 model \cite{he2016deep,kazemi2021regularized}. The similarity between two frames $i$ and $j$ is defined by $e^{-\lambda \cdot \textrm{dist}(x_i,x_j)}$, where $\textrm{dist}(x_i,x_j)$ is the Euclidean distance between the corresponding features of the two frames. 
Similarly to the movie recommendation applications of \cref{sec:experiment-cardinality,sec:experiment-knapsack}, the goal of this new summarization task is to maximize the monotone and submodular function $f(S) = \log \det (\mathbf{I} + \alpha M_S)$
subject to the combination of the following constraints: (i) the maximum allowed cardinality of the final summary is $k$ (in \cref{fig:SMKS-video-utility,fig:SMKS-video-call}, we compare algorithms by varying this value), (ii) the maximum number of allowed frames from each video is 5, and (iii) a single knapsack constraint. The knapsack cost for each frame $u$ is defined as $\nicefrac{\mathrm{H}(u)}{20}$, where $\mathrm{H}(u)$ is the entropy of $u$.. This extra knapsack constraint allows us to bound the required bits to encode the produced summary by using the entropy of each frame as a proxy for its encoding size. We again observe that \AlgBG and our algorithm (\cref{alg:complete_nearly_linear}) produces the summaries with the highest utilities (see \cref{fig:SMKS-video-utility}). Furthermore, in \cref{fig:SMKS-video-call}, we observe \cref{alg:complete_nearly_linear} is the fastest algorithm. Particularly, \cref{alg:complete_nearly_linear} is several orders of magnitudes faster than \AlgBG and \AlgFast.

\begin{figure*}[htb!] 
	\centering  
	\subfloat[Location Summarization]{\includegraphics[height=32mm]{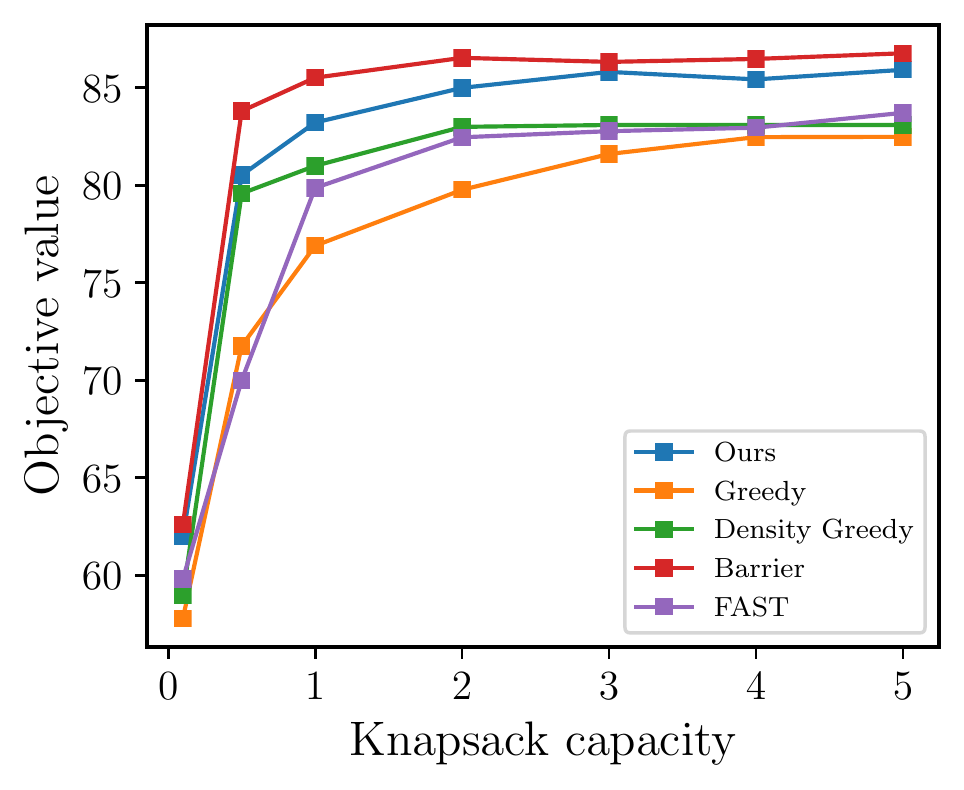}\label{fig:SMKS-yelp-utility}}
	\hspace{2pt}
	\subfloat[Location Summarization]{\includegraphics[height=32mm]{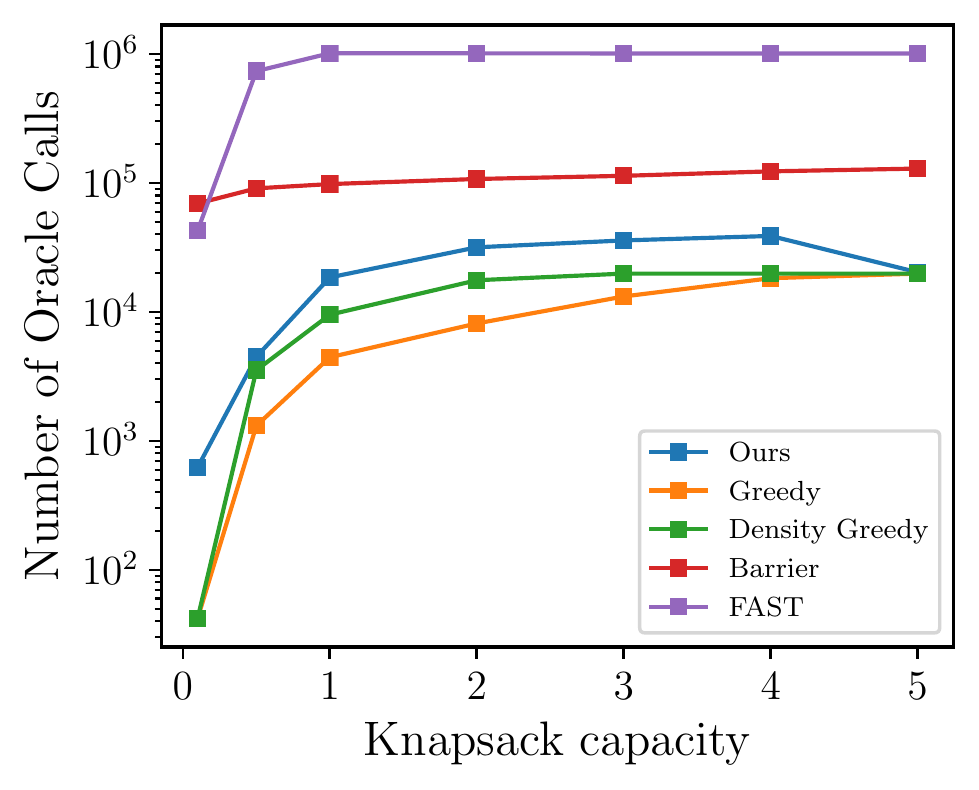}\label{fig:SMKS-yelp-call}}
	\hspace{2pt}
	\subfloat[Video Summarization]{\includegraphics[height=32mm]{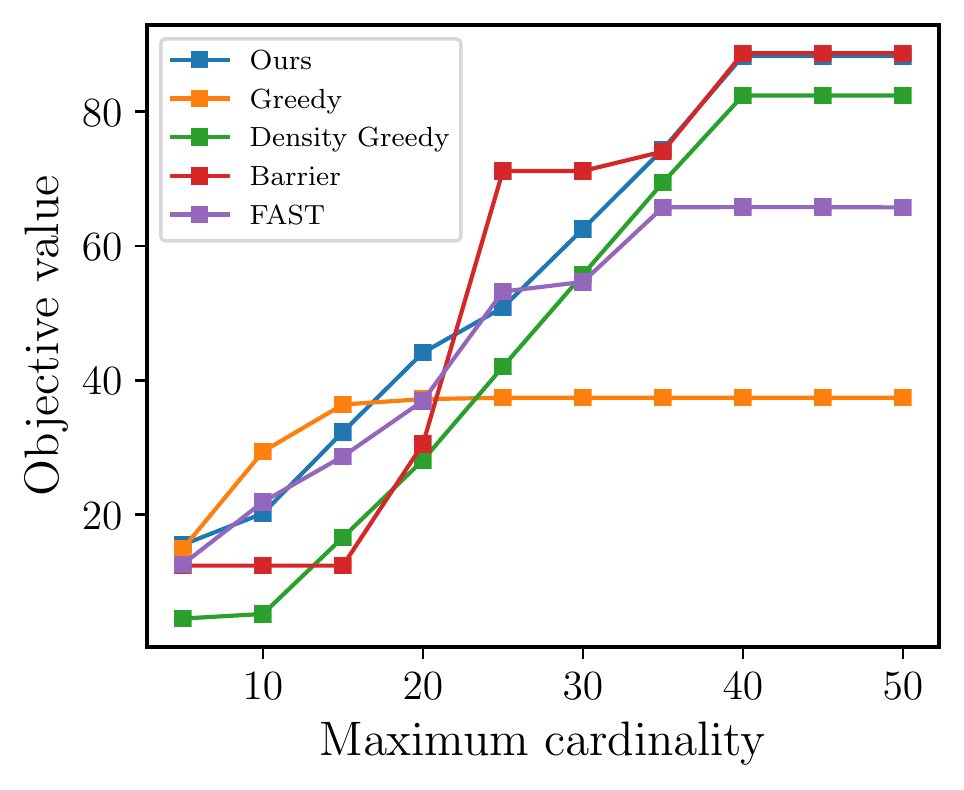}\label{fig:SMKS-video-utility}}
	\hspace{2pt}
	\subfloat[Video Summarization]{\includegraphics[height=32mm]{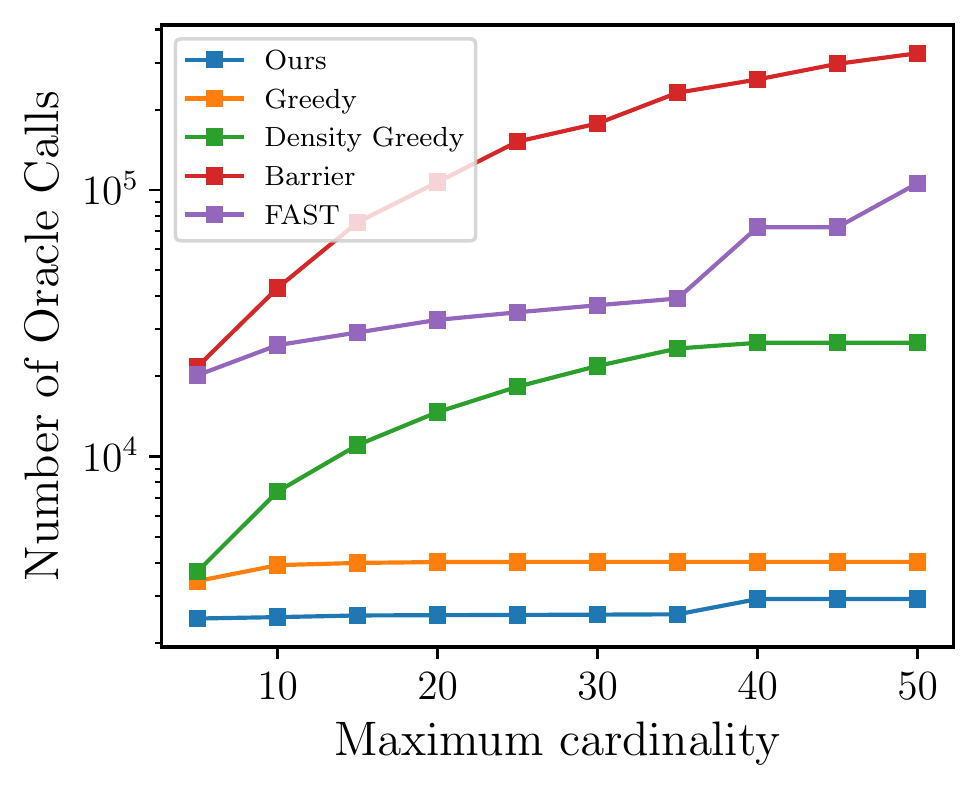}\label{fig:SMKS-video-call}}
	\caption{Comparing \cref{alg:complete_nearly_linear} with the state-of-the-art algorithms subject to the combination of a $p$-system and $d$ knapsack constraints.}
 	\label{fig:SMKS}
\end{figure*}

\section{Conclusion}

An algorithm for maximizing a submodular function is conventionally measured in terms of two quantities: a) approximation guarantee, i.e., how well it performs with respect to the  (exponential time) optimum  algorithm, and b) query complexity, i.e., how many function evaluations it requires. 
In their seminal work, \citet{nemhauser1978best} and \citet{nemhauser1078analysis} resolved part (a).   In this paper, after 44 years and building on a large body of prior work, we nearly resolved part (b) and portrayed a nearly complete picture of the landscape. Specifically, we developed a clean linear-time algorithm for maximizing a monotone submodular function subject to a cardinality constraint (and more generally a knapsack constraint). We also provided information-theoretic lower bounds on the  query complexity of both constrained and unconstrained (non-monotone) submodular maximization. Finally, we studied the tradeoff between the time complexity and approximation ratio for maximizing a monotone submodular function subject to a $p$-set system and $d$ knapsack constraints.

\bibliographystyle{plainnat}
\bibliography{tex/KnapsackPlusSystem}

\appendix

\section{Solving \texorpdfstring{\SI}{SI} using \texorpdfstring{$O(n / \log n)$}{O(n / log n)} Oracle Queries} \label{app:technique_power}

Recall that the inapproximability results proved in Section~\ref{sec:inapproximability} (namely, Theorems~\ref{thm:cardinality_lower_bound} and~\ref{thm:unconstrained_lower_bound}) are based on a reduction to a problem named {\SI} (defined in Section~\ref{sec:cardinality_lower_deterministic}). In Section~\ref{sec:cardinality_lower_randomized}, we prove that any (possibly randomized) algorithm for {\SI} must use $\Omega(n / \log n)$ oracle queries to significantly improve over some ``easy'' approximation ratio. In this section we show that this result is tight in the sense that $O(n / \log n)$ oracle queries suffice to solve {\SI} exactly. The algorithm we use for this purpose is given as Algorithm~\ref{alg:set_identification}. This algorithm is not efficient in terms of its time complexity. Nevertheless, it shows that one cannot prove an inapproximability result requiring $\omega(n / \log n)$ oracle queries for {\SI} based on information theoretic arguments only.

For simplicity, we assume in Algorithm~\ref{alg:set_identification} that the ground set $\cN$ is simply the set $[n]$. Given this assumption, we are able to base the algorithm on the following known lemma.
\begin{lemma}[Due to~\cite{lev2011size}] \label{lem:special_matrix}
If $q > (2 \log_2 3 + o(1))\frac{n}{\log n}$, then there exists a binary matrix $\mathbf{Q} \in \{0, 1\}^{q \times n}$ such that, for every vector $\mathbf{s}$, the equation $\mathbf{Q} \cdot \mathbf{x} = \mathbf{s}$ has at most one binary solution $\mathbf{x} \in \{0, 1\}^n$.
\end{lemma}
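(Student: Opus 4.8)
The plan is to prove the lemma by the probabilistic method, taking the entries of $\mathbf{Q}$ to be independent uniform bits. The first step is to reformulate the goal: the map $\mathbf{x}\mapsto\mathbf{Q}\mathbf{x}$ is injective on $\{0,1\}^n$ if and only if $\mathbf{Q}\mathbf{z}\neq\mathbf{0}$ for every non-zero $\mathbf{z}\in\{-1,0,1\}^n$, since two distinct binary solutions $\mathbf{x},\mathbf{y}$ produce exactly such a vector $\mathbf{z}=\mathbf{x}-\mathbf{y}$. There are fewer than $3^n$ such vectors $\mathbf{z}$, so it suffices to show that $\sum_{\mathbf{z}}\Pr[\mathbf{Q}\mathbf{z}=\mathbf{0}]<1$ for $q$ as large as claimed; a matrix with the desired property then exists. (This is essentially the Erd\H{o}s--R\'enyi random construction from the coin-weighing problem, which is what~\cite{lev2011size} invokes.)

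Next I would estimate, for a fixed $\mathbf{z}$, the probability $\Pr[\mathbf{Q}\mathbf{z}=\mathbf{0}]=p(\mathbf{z})^q$, where $p(\mathbf{z})=\Pr_{r}[\langle r,\mathbf{z}\rangle=0]$ over a single uniform row $r\in\{0,1\}^n$. If $\mathbf{z}$ has $a$ entries equal to $+1$, $b$ entries equal to $-1$, and support size $m=a+b$, then a short Vandermonde-identity calculation gives $p(\mathbf{z})=\binom{m}{a}/2^m\le\binom{m}{\lfloor m/2\rfloor}/2^m=O(1/\sqrt{m})$, the standard anti-concentration bound for signed subset sums. Since there are $\binom{n}{m}2^m$ vectors $\mathbf{z}$ of support size $m$, the union bound reads
\[
	\Pr[\exists\ \text{bad }\mathbf{z}]
	\ \le\
	\sum_{m=1}^{n}\binom{n}{m}\,2^m\left(\frac{O(1)}{\sqrt{m}}\right)^{q}
	\enspace.
\]

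It remains to show this sum is $o(1)$ when $q\ge(2\log_2 3+o(1))\,n/\log_2 n$. The intuition is that $\sum_{m}\binom{n}{m}2^m=3^n=2^{n\log_2 3}$, with the maximum of $\binom{n}{m}2^m\le 2^{n(h(m/n)+m/n)}$ (here $h$ is the binary entropy) attained around $m\approx\tfrac{2}{3}n$ since $\max_{\gamma\in(0,1]}(h(\gamma)+\gamma)=\log_2 3$; meanwhile for the genuinely exponential terms ($m=\Theta(n)$) one has $\log_2 m=(1-o(1))\log_2 n$, so the factor $m^{-q/2}$ equals $2^{-(1-o(1))(q/2)\log_2 n}$, which with $q=Kn/\log_2 n$ becomes $2^{-(1-o(1))Kn/2}$ and beats $2^{n\log_2 3}$ precisely when $K>2\log_2 3$. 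To make this rigorous I would split the sum at a threshold $m_0=n^{1-\delta(n)}$ with $\delta(n)\to 0$ slowly (say $\delta(n)=1/\log\log n$): for $m\le m_0$ the prefactor $\binom{n}{m}2^m\le(2n)^{m_0}=2^{o(n/\log n)}$ is dominated by $m^{-q/2}\le 2^{-q/2}=2^{-\Omega(n/\log n)}$ (using $m\ge 2$), and for $m>m_0$ one combines $\binom{n}{m}2^m\le 3^n$ with $\log_2 m>(1-\delta)\log_2 n$. Either way each of the at most $n$ terms is $2^{-\Omega(n/\log n)}$, so their sum is $o(1)$.

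The step I expect to be the main obstacle is this final estimate, specifically pinning down the \emph{exact} constant $2\log_2 3$ rather than an unspecified $\Theta(n/\log n)$: one must choose $m_0$ so that all the polynomial and sub-exponential slack — $\binom{n}{m}$ versus $2^{nh(m/n)}$, the hidden constant in the $O(1/\sqrt m)$ bound, and the at most $n$ summands — is absorbed into the $o(1)$ in the exponent, and it is precisely the optimization $\max_{\gamma\in(0,1]}(h(\gamma)+\gamma)=\log_2 3$ (attained at $\gamma=2/3$) that produces the constant. Everything else (the reformulation, the Vandermonde computation, the union bound) is routine; and in any case the lemma may simply be cited from~\cite{lev2011size}, the above being the self-contained argument behind it.
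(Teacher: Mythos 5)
Your argument is correct, but it is worth noting that the paper itself never proves this lemma: it is imported verbatim from~\cite{lev2011size}, so there is no in-paper proof to compare against. Your self-contained probabilistic-method proof is the standard argument behind results of this type, and it is sound: injectivity of $\mathbf{x}\mapsto\mathbf{Q}\mathbf{x}$ on $\{0,1\}^n$ is indeed equivalent to $\mathbf{Q}\mathbf{z}\neq\mathbf{0}$ for all nonzero $\mathbf{z}\in\{-1,0,1\}^n$, the Vandermonde computation $p(\mathbf{z})=\binom{m}{a}2^{-m}\leq\binom{m}{\lfloor m/2\rfloor}2^{-m}\leq\min\{1/2,\sqrt{2/(\pi m)}\}$ is right (the $\min$ with $1/2$ also disposes of the $m=1$ case, so your parenthetical ``using $m\geq 2$'' is unnecessary), and the split at $m_0=n^{1-\delta(n)}$ with $\delta(n)\to 0$ slowly does make the bookkeeping go through: the small-$m$ prefactor $(2n)^{m_0}$ is $2^{o(n/\log n)}$ and is crushed by $2^{-q}$, while for $m>m_0$ the bound $\binom{n}{m}2^m\leq 3^n$ against $m^{-q/2}\leq 2^{-(1-\delta)(q/2)\log_2 n}$ wins exactly when the constant exceeds $2\log_2 3$, with all slack ($\delta$, the anti-concentration constant, the factor $n$ from summing over $m$) absorbed into the additive $o(1)$ that the statement itself provides. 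So your write-up correctly explains where the specific constant $2\log_2 3$ comes from, which the paper leaves opaque. The trade-off is the expected one: the citation buys brevity (and the cited source, like the classical coin-weighing literature, actually achieves the better constant $2$ and even explicit constructions), whereas your union-bound argument buys self-containedness at the cost of the weaker constant $2\log_2 3$ --- which is immaterial here, since the paper only needs $q=O(n/\log n)$.
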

Algorithm~\ref{alg:set_identification} begins by finding such a matrix $\mathbf{Q}$ (this can be done by brute-force enumeration since we do not care about the time complexity). Then, the product $\mathbf{Q} \cdot \characteristic_{C^*}$ is calculated using $q = O(n / \log n)$ oracle queries, where $\characteristic_{C^*}$ is the characteristic vector of the set $C^*$ (\ie, a vector that includes $1$ in coordinate $i$ if $i \in C^*$, and otherwise, includes $0$ in this coordinate). Once Algorithm~\ref{alg:set_identification} has the product $\mathbf{Q} \cdot \characteristic_{C^*}$, Lemma~\ref{lem:special_matrix} guarantees that it is possible to recover the vector $\characteristic_{C^*}$ itself, which the algorithm can do, again, using brute-force.
\begin{algorithm2e}
\caption{Algorithm for {\SI}} \label{alg:set_identification}
Let $q = O(n / \log n)$ be a large enough value so that it obeys the requirement of Lemma~\ref{lem:special_matrix}.\\
Let $\mathbf{Q} \in \{0, 1\}^{q \times n}$ be a binary matrix with the properties stated in Lemma~\ref{lem:special_matrix}.\\
\For{$j = 1$ \KwTo $q$}
{
	Let $Q_j$ be a set such that $\characteristic^T_{Q_j}$ is equal to the $j$-th line of $\mathbf{Q}$.\\
	Let $s_j = |Q_j \cap C^*|$. \tcp{Can be calculated using a single oracle query.}
}
Let $\mathbf{s}$ be the vector whose $j$-th coordinate, for every integer $1 \leq j \leq q$ is $s_j$. \tcp{Note that $\mathbf{s} = \mathbf{Q} \cdot \characteristic_{C^*}$.}
Let $\mathbf{x} \in \{0, 1\}^n$ be a solution for $\mathbf{Q} \cdot \mathbf{x} = \mathbf{s}$. \tcp{By Lemma~\ref{lem:special_matrix}, $\mathbf{x} = \characteristic_{C^*}$ is the sole solution for this equation in $\{0, 1\}^n$.}
\Return{the set whose characteristic vector is $\mathbf{x}$.}
\end{algorithm2e}
\section{Fast Versions of Algorithms from Section~\ref{sec:SMKS}}

\subsection{Nearly Linear Time Version of Algorithm~\ref{alg:main}} \label{app:nearly_linear_main}

In this section we present a version of Algorithm~\ref{alg:main} that runs in nearly linear time. This version appears as Algorithm~\ref{alg:main_nearly_linear}, and it gets a quality control parameter $\eps \in (0, 1/4)$ (in addition to the parameters $\lambda$ and $\rho$ of Algorithm~\ref{alg:main}). The speedup in this version of the algorithm is obtained using a technique due to~\cite{badanidiyuru2014fast} which employs a decreasing threshold $\tau$. In every iteration of the loop starting on Line~\ref{line:scan_tau} of Algorithm~\ref{alg:main_nearly_linear}, the algorithm looks for elements whose marginal value exceeds this threshold. This guarantees that the element selected by the algorithm has an almost maximal marginal among the elements that can be selected because the previous iteration of the same loop already selected every element that could be selected when $\tau$ was larger.

\begin{algorithm2e}
\DontPrintSemicolon
\caption{\texttt{Nearly Linear Time General Algorithm}$(\lambda, \rho, \eps)$} \label{alg:main_nearly_linear}
\tcp{Build the set of big elements, and find a candidate solution based on them.}
Let $B \gets \{u \in \cN \mid \exists_{1 \leq i \leq d}\; c_i(u) > \lambda^{-1}\}$.\\
Let $S_B$ be the output set of $\BigAlg(B)$.\\

\BlankLine

\tcp{Construct a solution from the small elements.}
Let $S_0 \gets \varnothing$, $k \gets 0$, $M \gets \max_{u \in \cN} f(\{u\})$ and $\tau \gets M$.\\
\While{$\tau \geq \eps M / [(1 + \eps)n]$ \label{line:tau_loop}}
{
	\For{every element $u \in \cN \setminus (S_k \cup B)$ \label{line:scan_tau}}
	{
		\If{$S_k + u \in \cI$ and $f(u \mid S_k) \geq \max\{\tau, \rho \cdot \sum_{i = 1}^d c_i(u)\}$ \label{line:element_valid}}
		{
			Let $v_{k + 1} \gets u$, and let $S_{k + 1} \gets S_k + v_{k + 1}$.\\
			\lIf{$\max_{1 \leq i \leq d} c_i(S_{k + 1}) \leq 1$ \label{line:knapsack_check}}
			{
				Increase $k$ by $1$.
			}
			\lElse
			{
			\Return{the output set of {\SetExtract}$(\lambda, S_{k + 1})$}. \label{line:exceed_budget_nearly_linear}
			}
		}
	}
	Update $\tau \gets \tau / (1 + \eps)$.
}
\Return{the better set among $S_B$ and $S_k$}.
\end{algorithm2e}

As explained in Section~\ref{ssc:general_algorithm}, the properties we would like to prove for Algorithm~\ref{alg:main_nearly_linear} are summarized by Proposition~\ref{prop:main_guarantee_nearly_linear}, which we repeat here for convenience.

\propMainGuaranteeNearlyLinear*

We begin the proof of Proposition~\ref{prop:main_guarantee_nearly_linear} with the following lemma, which proves that Algorithm~\ref{alg:main_nearly_linear} has the time complexity stated in the proposition.
\begin{lemma}
Algorithm~\ref{alg:main_nearly_linear} has a time complexity of $O(\lambda n d + n\eps^{-1}(\log n + \log \eps^{-1}) + T_B)$, where $T_B$ is the time complexity of $\BigAlg$.
\end{lemma}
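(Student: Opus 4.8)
The plan is to break the running time into three additive pieces matching the three bounds in the claimed expression, exactly as was done for Algorithm~\ref{alg:main}. First, I would handle the set-up cost: constructing $B$ requires evaluating each $c_i(u)$ for every $u \in \cN$, which is $O(nd)$ time; pre-computing the sums $\sum_{i=1}^d c_i(u)$ for every element (needed on Line~\ref{line:element_valid}) is another $O(nd)$; and running $\BigAlg(B)$ costs $T_B$ by assumption. The interesting part is the nested loop structure (the outer threshold loop starting on Line~\ref{line:tau_loop} and the inner scan on Line~\ref{line:scan_tau}), plus the single possible call to \SetExtract.

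Next I would bound the number of iterations of the outer threshold loop. The threshold $\tau$ starts at $M = \max_{u\in\cN} f(\{u\})$ and is multiplied by $1/(1+\eps)$ each iteration until it drops below $\eps M/[(1+\eps)n]$; hence the number of iterations is at most $1 + \log_{1+\eps}\bigl((1+\eps)n/\eps\bigr) = O(\eps^{-1}\log(n/\eps)) = O(\eps^{-1}(\log n + \log\eps^{-1}))$, using $\ln(1+\eps)\geq \eps/2$ for $\eps\in(0,1/4)$. Each iteration of the outer loop performs one pass over $\cN$ (the inner loop on Line~\ref{line:scan_tau}), and I would argue that each such pass, ignoring element insertions and the \SetExtract call, costs $O(n)$: for each element $u$ we check membership in $S_k\cup B$, test independence $S_k+u\in\cI$ via an independence oracle call, compute $f(u\mid S_k)$ via a value oracle call, and compare against the precomputed $\max\{\tau,\rho\sum_i c_i(u)\}$ — all $O(1)$ amortized per element once the $c_i$-sums are precomputed and the $c_i(S_k)$ values are maintained incrementally. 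This gives $O(n\eps^{-1}(\log n+\log\eps^{-1}))$ for all the scans combined.

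Then I would account for the two operations excluded above. Element insertions into $S_k$ happen at most $r$ times total across the whole algorithm, since $S_k$ always stays independent in the $p$-set system of rank $r$ and grows by one each time; each insertion costs $O(d)$ to update the maintained values $c_i(S_k)$ and the check on Line~\ref{line:knapsack_check}, contributing $O(rd)$, which is absorbed into $O(nd)$ since $r\leq n$. The procedure \SetExtract$(\lambda, S_{k+1})$ is invoked at most once (it causes an immediate return), and by the same analysis as in the proof for Algorithm~\ref{alg:main} it runs in $O(\lambda|S_{k+1}|d) = O(\lambda r d) = O(\lambda nd)$ time. Summing the three pieces — $O(nd + T_B)$ for set-up, $O(n\eps^{-1}(\log n+\log\eps^{-1}))$ for the threshold scans, and $O(\lambda nd)$ for the single \SetExtract call and insertions — yields the claimed $O(\lambda nd + n\eps^{-1}(\log n + \log\eps^{-1}) + T_B)$. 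I expect the only mildly delicate point to be the amortized $O(1)$-per-element bookkeeping inside the inner scan (maintaining $c_i(S_k)$ incrementally so that the knapsack check and the density test are cheap), together with correctly charging the rare \SetExtract call; everything else is a routine geometric-series count of the threshold iterations.
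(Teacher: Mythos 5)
Your proposal is correct and takes essentially the same approach as the paper's proof: the same decomposition into setup cost ($O(nd + T_B)$ with precomputed $\sum_i c_i(u)$), a geometric count of $O(\eps^{-1}(\log n + \log\eps^{-1}))$ threshold-loop iterations each costing $O(n)$ for the scan, the $O(rd) \subseteq O(nd)$ charge for the at most $r$ insertions and knapsack checks, and the single $O(\lambda r d) = O(\lambda n d)$ call to \SetExtract. The only cosmetic difference is that the paper writes the iteration count as $\lceil \log_{1+\eps}(n/\eps)\rceil + 1$ rather than $1 + \log_{1+\eps}((1+\eps)n/\eps)$, but these are the same bound.
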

\begin{proof}
The construction of the set $B$ requires $O(dn)$ time, and therefore, the time complexity required for the entire Algorithm~\ref{alg:main_nearly_linear} except for the loop starting on Line~\ref{line:tau_loop} is $O(dn + T_B)$. In the rest of this proof we show that this loop requires $O(\lambda n d + n\eps^{-1}(\log n + \log \eps^{-1}))$ time, which implies the lemma.

The loop starting on Line~\ref{line:tau_loop} of Algorithm~\ref{alg:main_nearly_linear} runs at most the number of times that is required to decrease $\tau$ from $M$ to $\eps M / [(1 + \eps)n]$, which is
\[
	\lceil \log_{1 + \eps} (n/\eps) \rceil + 1
	\leq
	\frac{\ln n - \ln \eps}{\eps/2} + 2
	=
	O(\eps^{-1}(\log n + \log \eps^{-1}))
	\enspace.
\]
Within each iteration of the loop on Line~\ref{line:tau_loop}, the loop starting on Line~\ref{line:scan_tau} executes at most $n$ times. To understand the time complexity of the iterations of the last loop, we can observe that each such iteration takes a constant time with the exception of the following operations.
\begin{itemize}
	\item In Line~\ref{line:element_valid} we need to calculate the sum $\sum_{i = 1}^d c_i(u)$, which takes $O(d)$ time. However, we can pre-calculate this sum for all the elements of $\cN$ in $O(nd)$ time.
	\item Checking the condition on Line~\ref{line:knapsack_check} requires $O(d)$ time (assuming we maintain the values $c_i(S_k)$). However, this line is executed only when an element is added to the solution of the algorithm, which happens at most $O(r) = O(n)$ times.
	\item Executing $\SetExtract(\lambda, S_{k + 1})$ requires $O(\lambda |S_{k + 1}| d) = O(\lambda r d) = O(\lambda n d)$ time. However, this procedure is executed at most once by Algorithm~\ref{alg:main_nearly_linear}.
\end{itemize}
Combining all the above, we get that the loop starting on Line~\ref{line:tau_loop} of Algorithm~\ref{alg:main_nearly_linear} requires only $O(\lambda n d + n\eps^{-1}(\log n + \log \eps^{-1}))$ time.
\end{proof}

Like in Section~\ref{ssc:general_algorithm}, we use $\ell$ below to denote the final value of the variable $k$. Furthermore, one can observe that the proof of Observation~\ref{obs:feasible} applies to Algorithm~\ref{alg:main_nearly_linear} up to some natural modifications, and therefore, we are guaranteed that the output set of Algorithm~\ref{alg:main_nearly_linear} is feasible. The rest of this section is devoted to bounding the approximation guarantee of this output set.

Let $\tilde{E}$ be the event that Algorithm~\ref{alg:main_nearly_linear} returns through Line~\ref{line:exceed_budget_nearly_linear}. We consider separately the case in which the event $\tilde{E}$ happens and at the case in which it does not happen. When the event $\tilde{E}$ happens, the output set of Algorithm~\ref{alg:main} is the output set of {\SetExtract}. This set contains only high density elements and is large in terms of the linear constraints (by Lemma~\ref{lem:set_extract}), which provides a lower bound on its value. This argument can be formalized, leading to the following lemma, whose proof is omitted since it is analogous to the proof of Lemma~\ref{lem:E_guarantee}.
\begin{lemma} \label{lem:E_guarantee_fast}
If the event $\tilde{E}$ happens, then Algorithm~\ref{alg:main} returns a solution of value at least $\frac{\lambda \rho}{\lambda + 1}$.
\end{lemma}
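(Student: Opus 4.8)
The plan is to mirror the proof of Lemma~\ref{lem:E_guarantee} almost verbatim, the only adjustments being bookkeeping of line numbers and a check that the decreasing-threshold selection rule of Algorithm~\ref{alg:main_nearly_linear} still supplies the density guarantee we need. First I would observe that when $\tilde{E}$ happens the algorithm returns on Line~\ref{line:exceed_budget_nearly_linear}, so its output is the set $T$ produced by $\SetExtract(\lambda, S_{\ell+1})$, where $S_{\ell+1} = S_\ell + v_{\ell+1}$. To invoke Lemma~\ref{lem:set_extract} on this call I would verify its two hypotheses: (i) $c_i(S_{\ell+1}) > 1$ for some integer $1 \le i \le d$, which holds because the algorithm reached Line~\ref{line:exceed_budget_nearly_linear} only after the test on Line~\ref{line:knapsack_check} failed; and (ii) $\max_{1 \le i \le d} c_i(u) \le \lambda^{-1}$ for every $u \in S_{\ell+1}$, which holds since the loop on Line~\ref{line:scan_tau} ranges over $\cN \setminus (S_k \cup B)$, so $S_{\ell+1}$ consists only of small elements. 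Lemma~\ref{lem:set_extract} then yields $\sum_{i=1}^d c_i(T) \ge \lambda/(\lambda+1)$.

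Next I would lower-bound $f(T)$. Writing $S_{\ell+1} = \{v_1, \dots, v_{\ell+1}\}$ in the order the elements were added and telescoping over the elements of $T$ gives $f(T) = \sum_{v_k \in T} f(v_k \mid T \cap S_{k-1}) \ge \sum_{v_k \in T} f(v_k \mid S_{k-1})$ by submodularity, since $T \cap S_{k-1} \subseteq S_{k-1}$. The one place that differs from Lemma~\ref{lem:E_guarantee} is the justification of the next step: here each $v_k$ was selected on Line~\ref{line:element_valid}, so it satisfies $f(v_k \mid S_{k-1}) \ge \max\{\tau, \rho \cdot \sum_{i=1}^d c_i(v_k)\} \ge \rho \cdot \sum_{i=1}^d c_i(v_k)$; the extra $\max$ with $\tau$ only strengthens the inequality. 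Summing, $f(T) \ge \rho \sum_{v_k \in T} \sum_{i=1}^d c_i(v_k) = \rho \sum_{i=1}^d c_i(T) \ge \frac{\lambda \rho}{\lambda+1}$, which is the claimed bound.

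There is essentially no hard part: the argument is structurally identical to the proof of Lemma~\ref{lem:E_guarantee}, and the only substantive point to get right is that the thresholded acceptance criterion on Line~\ref{line:element_valid} still implies the density bound $f(v_k \mid S_{k-1}) \ge \rho \sum_{i=1}^d c_i(v_k)$ used in the telescoping sum --- which it does immediately, since that bound is one of the two conjuncts inside the $\max$. This is also precisely why the paper can assert that the proof is analogous to that of Lemma~\ref{lem:E_guarantee} and omit it.
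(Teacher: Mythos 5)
Your proposal is correct and is essentially the argument the paper intends: it omits the proof precisely because it is the proof of Lemma~\ref{lem:E_guarantee} transplanted to Algorithm~\ref{alg:main_nearly_linear}, with the only new point being that the acceptance test on Line~\ref{line:element_valid} still implies $f(v_k \mid S_{k-1}) \geq \rho \cdot \sum_{i=1}^d c_i(v_k)$, which you handle correctly. Your explicit verification of the two hypotheses of Lemma~\ref{lem:set_extract} is a harmless (indeed slightly more complete) addition, matching what the paper does in the feasibility observation.
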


Handling the case in which the event $\tilde{E}$ does not happen is somewhat more involved. Towards this goal, we recursively define a set $O_k$ for every $0 \leq k \leq \ell$ (note that the definition we give here is slightly different compared to the one given in Section~\ref{ssc:general_algorithm}). The base of the recursion is that for $k = \ell$ we define $O_\ell = OPT \setminus (B \cup \{u \in OPT \mid f(u \mid S_\ell) < \max\{\rho \cdot \sum_{i = 1}^d c_i(u)), \eps M / n\}\}$. Assuming $O_{k + 1}$ is already defined for some $0 \leq k < \ell$, we define $O_k$ as follows. Let $D_k = \{u \in O_{k + 1} \setminus S_k \mid S_k + u \in \cI\}$. If $|D_k| \leq p$, we define $O_k = O_{k + 1} \setminus D_k$. Otherwise, we let $D'_k$ be an arbitrary subset of $D_k$ of size $p$, and we define $O_k = O_{k + 1} \setminus D'_k$.
\begin{lemma} \label{lem:O_null_empty_fast}
Assuming $\tilde{E}$ does not happen, $O_0 = \varnothing$.
\end{lemma}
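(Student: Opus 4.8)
The plan is to mirror the proof of Lemma~\ref{lem:O_null_empty} essentially verbatim, adjusting only for the slightly different definition of $O_\ell$ and the fact that Algorithm~\ref{alg:main_nearly_linear} terminates when the threshold $\tau$ drops below $\eps M / [(1 + \eps)n]$ rather than when the loop condition of Algorithm~\ref{alg:main} fails. First I would prove, by downward induction on $k$, the stronger claim that $|O_k| \le pk$ for every $0 \le k \le \ell$; the lemma then follows by taking $k = 0$, since $|O_0| \le 0$ forces $O_0 = \varnothing$.

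For the base case $k = \ell$, I need to argue that $S_\ell$ is a base of $S_\ell \cup O_\ell$ in the $p$-system $\cM$. Here is the one place where care is needed: unlike in Algorithm~\ref{alg:main}, Algorithm~\ref{alg:main_nearly_linear} does not stop because \emph{no} element satisfies the selection condition; it stops because $\tau$ has decreased past its floor. So I would argue as follows. Since $\tilde E$ did not happen and the algorithm terminated after $\ell$ insertions, consider any element $u \in O_\ell \setminus S_\ell$. By the definition of $O_\ell$, $u \notin B$, $S_\ell + u \in \cI$ (wait --- this is not automatic, so let me instead use the standard reasoning), and $f(u \mid S_\ell) \ge \max\{\rho \cdot \sum_{i=1}^d c_i(u), \eps M / n\}$. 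In the \emph{last} iteration of the loop on Line~\ref{line:tau_loop}, the threshold satisfies $\tau \le \eps M / n \le f(u \mid S_\ell)$, so if additionally $S_\ell + u \in \cI$ then $u$ would have passed the test on Line~\ref{line:element_valid} and been added (or triggered a return through Line~\ref{line:exceed_budget_nearly_linear}, contradicting that $\tilde E$ did not happen). Hence no element of $O_\ell \setminus S_\ell$ can be added to $S_\ell$ without violating independence in $\cM$, so $S_\ell$ is a base of $S_\ell \cup O_\ell$; since $O_\ell \subseteq OPT$ is independent, the definition of a $p$-system gives $|O_\ell| \le p|S_\ell| = p\ell$. (I should double-check that $f(u \mid S_\ell) \ge f(u \mid S_j)$ for the set $S_j$ present when $u$ was scanned, which holds by submodularity since elements are only added.)

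For the inductive step, assuming $|O_{k+1}| \le p(k+1)$ for some $0 \le k < \ell$, I distinguish the two cases in the definition of $O_k$. If $O_k = O_{k+1} \setminus D'_k$ with $|D'_k| = p$, then $|O_k| = |O_{k+1}| - p \le p(k+1) - p = pk$. Otherwise $O_k = O_{k+1} \setminus D_k$, and by the definition of $D_k$ no element of $O_k \setminus S_k$ can be added to $S_k$ while preserving independence in $\cM$; thus $S_k$ is a base of $S_k \cup O_k$, and since $O_k \subseteq O_\ell \subseteq OPT$ is independent, $|O_k| \le p|S_k| = pk$. This closes the induction.

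I do not expect any real obstacle here --- the argument is a routine adaptation of Lemma~\ref{lem:O_null_empty}. The only subtlety, and the thing I would be most careful to state correctly, is the base-case argument that the termination of the \emph{threshold}-based loop (together with $\tilde E$ not occurring and the $\eps M / n$ slack built into the definition of $O_\ell$) still implies $S_\ell$ is a base of $S_\ell \cup O_\ell$; everything downstream is identical to the original proof.
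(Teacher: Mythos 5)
Your proposal is correct and follows essentially the same route as the paper's (sketched) proof: a downward induction on the claim $|O_k| \leq pk$, with the only new ingredient being the base case, where you argue exactly as the paper does that in the last iteration of the loop on Line~\ref{line:tau_loop} the threshold satisfies $\tau \leq \eps M / n$, so any $u \in O_\ell \setminus S_\ell$ would have been added (or would have triggered the event $\tilde{E}$) unless $S_\ell + u \notin \cI$, making $S_\ell$ a base of $S_\ell \cup O_\ell$. The only slip is in your parenthetical double-check: the inequality you need (and the one submodularity actually gives, since $S_j \subseteq S_\ell$) is $f(u \mid S_j) \geq f(u \mid S_\ell)$, not the reverse as written; with that direction fixed, the argument is complete.
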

\begin{proof}[Proof Sketch]
The proof of this lemma is very similar to the proof of Lemma~\ref{lem:O_null_empty}. The only difference is that arguing why $O_\ell$ is a base of $S_\ell \cup O_\ell$ is a bit more involved now. Specifically, consider the last iteration of the loop starting on Line~\ref{line:tau_loop} of Algorithm~\ref{alg:main_nearly_linear}. In this iteration the value of $\tau$ was at most $\eps M / n$, and therefore, every element of $O_\ell$ would have been added to $S_\ell$ during this iteration unless this addition violates independence in $\cM$.
\end{proof}

Using the last lemma, we can now prove the following corollary, which corresponds to Corollary~\ref{cor:p-system_analysis} from Section~\ref{ssc:general_algorithm}. 
\begin{corollary} \label{cor:p-system_analysis_fast}
Assuming $\tilde{E}$ does not happen, $f(S_\ell) \geq \frac{f(O_\ell \cup S_\ell)}{(1 + \eps)p + 1}$.
\end{corollary}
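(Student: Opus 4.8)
The plan is to mimic the inductive argument used in the proof of Corollary~\ref{cor:p-system_analysis}, replacing the exact greedy selection of $v_k$ with the approximate selection guaranteed by the thresholding technique, which loses a factor of $(1+\eps)$ in the key inequality. Concretely, I would prove by induction on $k$ the stronger statement that, for every integer $0 \le k \le \ell$,
\[
	f(S_k) \geq \frac{f(O_k \cup S_k)}{(1 + \eps)p + 1}
	\enspace.
\]
The base case $k = 0$ is immediate from the non-negativity of $f$, since $S_0 = O_0 = \varnothing$ by Lemma~\ref{lem:O_null_empty_fast}.

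For the inductive step, assume the bound holds for $k-1$ (with $0 \le k-1 < \ell$) and set $\Delta_k = O_k \setminus O_{k-1}$. Exactly as in the proof of Corollary~\ref{cor:p-system_analysis}, the construction of $O_{k-1}$ guarantees $|\Delta_k| \le p$ and that every element $u \in \Delta_k$ can be added to $S_{k-1}$ without violating independence in $\cM$; and since $\Delta_k \subseteq O_\ell$, every such $u$ also obeys $f(u \mid S_{k-1}) \geq f(u \mid S_\ell) \geq \rho \cdot \sum_{i=1}^d c_i(u)$ as well as $f(u \mid S_{k-1}) \geq f(u \mid S_\ell) \geq \eps M / n$. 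The new ingredient is the thresholding guarantee: when $v_k$ was selected, the current threshold $\tau_k$ satisfied $\tau_k \geq \eps M/[(1+\eps)n]$ (the loop on Line~\ref{line:tau_loop} had not yet terminated), and each $u \in \Delta_k$ was \emph{not} selected in the previous iteration with threshold $(1+\eps)\tau_k$, so $f(u \mid S_{k-1}) < (1+\eps)\tau_k \le (1+\eps) f(v_k \mid S_{k-1})$; hence
\[
	f(v_k \mid S_{k-1}) \geq \frac{1}{1+\eps} \cdot \frac{\sum_{u \in \Delta_k} f(u \mid S_{k-1})}{|\Delta_k|}
	\enspace.
\]
(One should also handle the degenerate case where $v_k$ was selected in the very first iteration with $\tau = M$, i.e.\ $\tau_k = M$: then $f(v_k \mid S_{k-1}) \geq M \geq f(u \mid S_{k-1})$ for every $u$, so the factor $1+\eps$ is not even needed there.) Combining this with $f(v_k \mid S_{k-1}) \geq \sum_{u \in \Delta_k} f(u \mid S_{k-1})/|\Delta_k|$ used only to absorb the ``$+v_k$'' term, I get, writing $m = |\Delta_k| \le p$,
\begin{align*}
	f(S_k)
	={} & f(S_{k-1}) + f(v_k \mid S_{k-1})
	\geq f(S_{k-1}) + \frac{1}{(1+\eps)} \cdot \frac{\sum_{u \in \Delta_k} f(u \mid S_{k-1})}{m} + \frac{f(v_k \mid S_{k-1})}{m+1} \cdot \frac{?}{?}
	\enspace,
\end{align*}
and here I would instead follow the cleaner route of the original proof: bound $f(v_k \mid S_{k-1}) \ge \frac{1}{(1+\eps)m+1}\big[ \sum_{u\in\Delta_k} f(u\mid S_{k-1}) + f(v_k\mid S_{k-1})\big]$, which holds because each of the $m$ terms $f(u\mid S_{k-1})$ is at most $(1+\eps)f(v_k\mid S_{k-1})$. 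By submodularity $\sum_{u\in\Delta_k} f(u\mid S_{k-1}) + f(v_k \mid S_{k-1}) \ge f(\Delta_k + v_k \mid S_{k-1})$, so $f(v_k\mid S_{k-1}) \ge \frac{f(\Delta_k+v_k\mid S_{k-1})}{(1+\eps)p+1}$ since $m\le p$. Then
\begin{align*}
	f(S_k)
	\geq{} & f(S_{k-1}) + \frac{f(\Delta_k + v_k \mid S_{k-1})}{(1+\eps)p+1}
	\geq \frac{f(O_{k-1} \cup S_{k-1})}{(1+\eps)p+1} + \frac{f(\Delta_k + v_k \mid S_{k-1})}{(1+\eps)p+1}
	\geq \frac{f(O_k \cup S_k)}{(1+\eps)p+1}
	\enspace,
\end{align*}
where the second inequality is the induction hypothesis and the last inequality uses monotonicity (to drop $v_k$, or rather to note $O_k \cup S_k \subseteq O_{k-1} \cup S_{k-1} \cup \{v_k\} \cup \Delta_k$) together with submodularity, exactly as in the proof of Corollary~\ref{cor:p-system_analysis}. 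Taking $k = \ell$ gives the corollary.

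The main obstacle is getting the bookkeeping with the factor $(1+\eps)$ exactly right when averaging over the $|\Delta_k|$ elements of $\Delta_k$ together with the selected element $v_k$: in the exact-greedy proof one divides by $|\Delta_k|+1$, but now the $|\Delta_k|$ ``competitor'' marginals are only bounded by $(1+\eps)$ times $f(v_k\mid S_{k-1})$, so the correct denominator is $(1+\eps)|\Delta_k|+1 \le (1+\eps)p+1$, and one must also carefully verify that $v_k$ itself was eligible (its density exceeds $\rho\cdot\sum_i c_i(v_k)$ and $S_{k-1}+v_k\in\cI$) and that the loop had not terminated when $v_k$ was chosen so that the ``previous threshold'' argument applies. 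The degenerate first-iteration case and the possibility $\Delta_k = \varnothing$ (handled trivially via monotonicity) should be mentioned explicitly.
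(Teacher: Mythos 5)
Your proof is correct and follows essentially the same inductive structure as the paper's. The one genuine (but minor) difference is how the $(1+\eps)$ loss is established in the inductive step. The paper picks the maximum-marginal element $u' \in \Delta_k$, defines $\tau'$ to be the largest threshold value that is at most $f(u' \mid S_{k-1})$, and argues that $u'$ would have been selected during the $\tau'$-iteration had the solution still been a subset of $S_{k-1}$; hence $v_k$ was necessarily chosen at a threshold at least $\tau' > f(u' \mid S_{k-1})/(1+\eps)$. You instead fix $\tau_k$ (the threshold at which $v_k$ was actually selected) and argue that every $u \in \Delta_k$ must have been rejected during the immediately preceding iteration with threshold $(1+\eps)\tau_k$, which by submodularity forces $f(u \mid S_{k-1}) < (1+\eps)\tau_k \le (1+\eps)f(v_k\mid S_{k-1})$. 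These are dual perspectives on the same estimate; yours has the small advantage of producing the pointwise bound $f(u\mid S_{k-1}) \le (1+\eps)f(v_k\mid S_{k-1})$ directly, without routing through a single extremal $u'$, and of making the degenerate first-iteration case $\tau_k = M$ explicit. Your reasoning correctly uses that elements of $\Delta_k \subseteq O_\ell$ never enter $S_\ell$ (so they are scanned in every iteration), satisfy the density condition at every state $\subseteq S_\ell$, and can be added to $S_{k-1}$ without breaking independence, so the only possible reason for rejection is the threshold. The aborted algebra block in the middle of your write-up should simply be deleted; the clean version that follows is what one would keep.
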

\begin{proof}
We prove by induction the stronger claim that, for every integer $0 \leq k \leq \ell$,
\begin{equation} \label{eq:guarantee_k_fast}
	f(S_k)
	\geq
	\frac{f(O_k \cup S_k)}{(1 + \eps)p + 1}
	\enspace.
\end{equation}
For $k = 0$ this inequality follows from the non-negativity of $f$ since $S_0 = O_0 = \varnothing$ by Lemma~\ref{lem:O_null_empty_fast}. Assume now that Inequality~\eqref{eq:guarantee_k_fast} holds for some value $k - 1$ obeying $0 \leq k - 1 < \ell$, and let us prove it for $k$.

Consider the set $\Delta_k = O_k \setminus O_{k - 1}$. Let $u'$ be an element of $\Delta_k$ maximizing $f(u' \mid S_{k - 1})$, and let $\tau'$ be the maximum value that $\tau$ takes in any iteration of the loop starting on Line~\ref{line:tau_loop} of Algorithm~\ref{alg:main_nearly_linear} that is not larger than $f(u' \mid S_k)$. Such a value exists because the inclusion $\Delta_k \subseteq O_\ell$ implies that every element $u \in \Delta_k$ obeys $f(u \mid S_{k - 1}) \geq f(u \mid S_\ell) \geq \max\{\rho \cdot \sum_{i = 1}^d c_i(u), \eps M / n\}$. Observe now that $u'$ cannot belong to $S_\ell$ because $u' \in \Delta_k \subseteq O_\ell$ guarantees $f(u' \mid S_\ell) > 0$, which implies that, during the iteration of of the loop starting on Line~\ref{line:tau_loop} that corresponds to $\tau'$, the element $u'$ was not added to the solution of Algorithm~\ref{alg:main_nearly_linear}. Let us study the reason that $u'$ was not added. By the definition of $\tau'$, $f(u' \mid S_{k - 1}) \geq \tau'$. Additionally, by the construction of $O_{k - 1}$, every element of $\Delta_k$ can be added to $S_{k - 1}$ without violating independence in $\cM$. These two facts imply that the reason that $u'$ was not added must have been that, by the time Algorithm~\ref{alg:main_nearly_linear} considers the element $u'$ in the iteration correspond to $\tau'$, the solution of Algorithm~\ref{alg:main_nearly_linear} already contained at least $k$ elements. Since up to this time the algorithms adds to its solution only elements whose marginal contribution is at least $\tau'$, this implies
\[
	f(S_{k}) - f(S_{k - 1})
	\geq
	\tau'
	\geq
	\frac{f(u' \mid S_{k - 1})}{1 + \eps}
	\geq
	\frac{\sum_{u \in \Delta_k} f(u \mid S_{k - 1})}{(1 + \eps)|\Delta_k|}
	\enspace.
\]

Recall now that $f(S_k) - f(S_{k - 1}) = f(v_k \mid S_{k - 1})$. Combining this equality with the previous one, we get
\begin{align*}
	f(S_k)
	\geq{} &
	f(S_{k - 1}) + \frac{f(v_k \mid S_{k - 1}) + \sum_{u \in \Delta_k} f(u \mid S_{k - 1})}{(1 + \eps)|\Delta_k| + 1}
	\geq
	\frac{f(O_{k - 1} \cup S_{k - 1})}{(1 + \eps)p + 1} + \frac{f(\Delta_k + v_k \mid S_{k - 1})}{(1 + \eps)|\Delta_k| + 1}\\
	\geq{} &
	\frac{f(O_{k - 1} \cup S_{k - 1})}{(1 + \eps)p + 1} + \frac{f(\Delta_k + v_k \mid S_{k - 1})}{(1 + \eps)p + 1}
	\geq
	\frac{f(O_{k} \cup S_{k})}{(1 + \eps)p + 1}
	\enspace,
\end{align*}
where the second inequality follows from the induction hypothesis and the submodularity of $f$, the penultimate inequality follows from the monotonicity of $f$ and the observation that the construction of $O_{k - 1}$ guarantees $|\Delta_k| \leq p$, and the last inequality follows again from the submodularity of $f$.
\end{proof}

To use the last corollary, we need a lower bound on $O_\ell \cup S_\ell$, which is given by the next lemma (and corresponds to Lemma~\ref{lem:Oell_lower_bound}).
\begin{lemma} \label{lem:Oell_lower_bound_fast}
$f(O_\ell \cup S_\ell) \geq (1 - \eps) f(OPT) - f(S_B) / \alpha - \rho \cdot \left[d - \frac{|OPT \cap B|}{\lambda} \right]$.
\end{lemma}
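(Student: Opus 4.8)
The plan is to mirror the proof of Lemma~\ref{lem:Oell_lower_bound} almost verbatim, with the only genuine change being that the "small marginal" elements discarded from $OPT$ now have their marginal bounded by $\max\{\rho \cdot \sum_{i=1}^d c_i(u), \eps M/n\}$ rather than just $\rho \cdot \sum_{i=1}^d c_i(u)$, which forces us to carry an extra additive error term that must be absorbed into the $(1-\eps)f(OPT)$ factor. First I would unwind the definition of $O_\ell$ for Algorithm~\ref{alg:main_nearly_linear}, namely $O_\ell = OPT \setminus (B \cup \{u \in OPT \mid f(u \mid S_\ell) < \max\{\rho \cdot \sum_{i=1}^d c_i(u), \eps M/n\}\})$, and apply submodularity and monotonicity of $f$ exactly as in Inequality~\eqref{eq:Oell_bound} to get
\[
	f(O_\ell \cup S_\ell)
	\geq
	f(OPT) - f(S_B)/\alpha - f(L \mid S_\ell)\enspace,
\]
where $L = \{u \in OPT \setminus B \mid f(u \mid S_\ell) < \max\{\rho \cdot \sum_{i=1}^d c_i(u), \eps M/n\}\}$, using the bound $f(OPT \cap B) \leq f(S_B)/\alpha$ from the definition of $S_B$.

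Next I would upper bound $f(L \mid S_\ell)$ by $\sum_{u \in L} f(u \mid S_\ell)$ via submodularity, and then split the sum: for each $u \in L$ the marginal is strictly less than $\rho \cdot \sum_{i=1}^d c_i(u) + \eps M/n$ (since the max of two nonnegative quantities is at most their sum). Summing the first part over all of $OPT \setminus B$ and using that $OPT$ is feasible (so $c_i(OPT) \leq 1$ for each $i$) together with the fact that each element of $B$ has $c_i(u) > \lambda^{-1}$ for some $i$, I get $\rho \cdot [\sum_{u \in OPT}\sum_i c_i(u) - \sum_{u \in OPT \cap B}\sum_i c_i(u)] \leq \rho \cdot [d - |OPT \cap B|/\lambda]$, exactly as in Lemma~\ref{lem:Oell_lower_bound}. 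For the second part, $\sum_{u \in L} \eps M/n \leq |L| \cdot \eps M/n \leq n \cdot \eps M/n = \eps M$, and since $M = \max_{u \in \cN} f(\{u\}) \leq f(OPT)$ by monotonicity (or at worst $f(OPT)\geq \max_u f(\{u\})$ because every singleton is feasible), this term is at most $\eps f(OPT)$. Combining the two pieces gives $f(L \mid S_\ell) \leq \rho \cdot [d - |OPT \cap B|/\lambda] + \eps f(OPT)$, and substituting back yields the claimed bound $f(O_\ell \cup S_\ell) \geq (1-\eps)f(OPT) - f(S_B)/\alpha - \rho \cdot [d - |OPT \cap B|/\lambda]$.

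The only mild subtlety — and the step I would be most careful about — is the bound $|L| \le n$, which holds trivially since $L \subseteq \cN$, together with confirming $M \le f(OPT)$; the latter follows because $OPT$ is nonempty whenever it matters and every singleton $\{u\}$ is a feasible set (by the simplifying assumptions of Section~\ref{sec:SMKS}), so $f(OPT) \ge \max_{u \in \cN} f(\{u\}) = M$. Everything else is a routine transcription of the monotone case, so no real obstacle is expected.
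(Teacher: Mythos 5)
Your proposal is correct and follows essentially the same route as the paper: the paper splits the discarded small-marginal elements into the two subsets (those below $\rho \cdot \sum_i c_i(u)$ and those below $\eps M/n$) and bounds each conditioned term separately, whereas you keep a single set and use $\max\{a,b\} \leq a+b$ per element; both yield the same bounds $\rho\cdot[d - |OPT\cap B|/\lambda]$ and $\eps M \leq \eps\, f(OPT)$ (the latter justified, as in the paper, by the feasibility of every singleton). No gaps.
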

\begin{proof}
Observe that
\begin{align} \label{eq:Oell_bound_fast}
	f(O_\ell \cup S_\ell)
	={} &
	f(OPT \setminus (B \cup \{u \in OPT \mid f(u \mid S_\ell) < \max\{\rho \cdot \sum\nolimits_{i = 1}^d c_i(u), \eps M / n \}\}) \cup S_\ell)\\ \nonumber
	\geq{} &
	f(OPT) - f(OPT \cap B) - f(\{u \in OPT \setminus B \mid f(u \mid S_\ell) < \rho \cdot \sum\nolimits_{i = 1}^d c_i(u)\} \mid S_\ell)\\ \nonumber
	& - f(\{u \in OPT \setminus B \mid f(u \mid S_\ell) < \eps M / n\} \mid S_\ell)\\\nonumber
	\geq{} &
	f(OPT) - f(S_B) / \alpha - f(\{u \in OPT \setminus B \mid f(u \mid S_\ell) < \rho \cdot \sum\nolimits_{i = 1}^d c_i(u)\} \mid S_\ell)\\\nonumber
	& - f(\{u \in OPT \setminus B \mid f(u \mid S_\ell) < \eps M / n\} \mid S_\ell)
	\enspace,
\end{align}
where the first inequality follows from the submodularity and monotonicity of $f$, and the second inequality follows from the definition of $S_B$. To lower bound the rightmost side of the last inequality, we need to upper bound the two last terms in it.
\begin{align*}
	f(\{u \in{} & OPT \setminus B \mid f(u \mid S_\ell) < \rho \cdot \sum\nolimits_{i = 1}^d c_i(u)\} \mid S_\ell)
	\leq
	\sum_{\substack{u \in OPT \setminus B \\ f(u \mid S_\ell) < \rho \cdot \sum\nolimits_{i = 1}^d c_i(u)}} \mspace{-36mu} f(u \mid S_\ell)\\
	\leq{} &
	\rho \cdot \sum_{u \in OPT \setminus B} \sum_{i = 1}^d c_i(u)
	=
	\rho \cdot \left[\sum_{u \in OPT} \sum_{i = 1}^d c_i(u) - \sum_{u \in OPT \cap B} \sum_{i = 1}^d c_i(u)\right]
	\leq
	\rho \cdot \left[d - \frac{|OPT \cap B|}{\lambda} \right]
	\enspace,
\end{align*}
where the last inequality holds since $OPT$ is a feasible set and every element of $B$ is big. Additionally,
\[
	f(\{u \in OPT \setminus B \mid f(u \mid S_\ell) < \eps M / n\} \mid S_\ell)
	\leq
	\sum_{\substack{u \in OPT \setminus B \\ f(u \mid S_\ell) < \eps M / n}} \mspace{-18mu} f(u \mid S_\ell)
	\leq
	\eps M
	\leq
	\eps \cdot f(OPT)
	\enspace,
\]
where the second inequality holds since $|OPT \cap B|$ is a subset of $\cN$, and therefore, is of size at most $n$; and the last inequality holds since every element of $\cN$ is a feasible solution by our assumption. Plugging the last inequalities into Inequality~\eqref{eq:Oell_bound_fast} completes the proof of the lemma.
\end{proof}

We are now ready to prove Proposition~\ref{prop:main_guarantee_nearly_linear}.
\begin{proof}[Proof of Proposition~\ref{prop:main_guarantee_nearly_linear}]
If the event $\tilde{E}$ happened, then Lemma~\ref{lem:E_guarantee_fast} guarantees that the output of Algorithm~\ref{alg:main_nearly_linear} is of value at least $\lambda \rho / (\lambda + 1)$. Therefore, to complete the proof of the lemma, it suffices to show that when the event $E$ does not happen and $\rho^* \geq \rho$, the value of the output of the Algorithm~\ref{alg:main_nearly_linear} is at least $\lambda \rho^* / (\lambda + 1)$; and the rest of this proof is devoted to showing that this is indeed the case.

In the last case, Corollary~\ref{cor:p-system_analysis_fast} and Lemma~\ref{lem:Oell_lower_bound_fast} imply together
\[
	f(S_\ell)
	\geq
	\frac{(1 - \eps)f(OPT) - f(S_B) / \alpha - \rho \cdot \left[d - \frac{|OPT \cap B|}{\lambda}\right]}{(1 + \eps)p + 1}
	\enspace,
\]
and therefore, the output set of Algorithm~\ref{alg:main_nearly_linear} is of value at least
\begin{align*}
	\max\{f(S_\ell), f(S_B)\}
	\geq{} &
	\frac{((1 + \eps)p + 1) \cdot f(S_\ell) + \alpha^{-1} \cdot f(S_B)}{(1 + \eps)p + 1 + \alpha^{-1}}\\
	\geq{} &
	\frac{(1 - \eps)f(OPT) - \rho(d - |OPT \cap B|/\lambda)}{(1 + \eps)p + 1 + \alpha^{-1}}\\
	\geq{} &
	\frac{(1 - \eps)f(OPT) - \rho^*(d - |OPT \cap B|/\lambda)}{(1 + \eps)p + 1 + \alpha^{-1}}
	=
	\frac{\lambda \rho^*}{\lambda + 1}
	\enspace.
	\qedhere
\end{align*}
\end{proof}

\subsection{Nearly-Linear Time Version of Algorithm~\ref{alg:complete}} \label{app:nearly_linear_complete}

In this section we describe a fast version of Algorithm~\ref{alg:complete}. This version appears as Algorithm~\ref{alg:complete_nearly_linear}. There are only three differences between the two algorithms: (1) Algorithm~\ref{alg:complete_nearly_linear} uses Algorithm~\ref{alg:main_nearly_linear} instead of Algorithm~\ref{alg:main}, (2) the value of $\bar{\tau}$ is slightly higher in Algorithm~\ref{alg:main_nearly_linear}, and (3) the definition $\rho(i)$ is slightly modified in Algorithm~\ref{alg:main_nearly_linear} to be $\rho(i) \triangleq (1 - 2\eps)(1 + \delta)^i \cdot \max_{u \in \cN} f(\{u\}) / (p + 1 + \underline{\alpha}^{-1} + d)$.
\begin{algorithm2e}
\DontPrintSemicolon
\caption{\texttt{$\rho$ Guessing Algorithm}$(\lambda, \eps, \delta)$} \label{alg:complete_nearly_linear}
Let $\underline{i} \gets 0$, $\bar{i} \gets \left\lceil \log_{1 + \delta} \frac{2n}{p} - \log_{1 + \delta}  \frac{1 - 2\eps}{p + 1 + \underline{\alpha}^{-1} + d} \right\rceil$ and $k \gets 0$.\\
\While{$\bar{i} - \underline{i} > 1$}
{
	Update $k \gets k + 1$.\\
	Let $i_k \gets \lceil (\underline{i} + \bar{i}) / 2 \rceil$.\\
	Execute Algorithm~\ref{alg:main_nearly_linear} with $\rho = \rho(i_k)$. Let $A_k$ denote the output set of this execution of Algorithm~\ref{alg:main_nearly_linear}, and let $E_k$ denote the event $\tilde{E}$ for the execution.\\
	\lIf{the event $E_k$ happened}{Update $\underline{i} \gets i_k$.}
	\lElse{Update $\bar{i} \gets i_k$.}
}
Execute Algorithm~\ref{alg:main_nearly_linear} with $\rho = \rho(\underline{i})$. Let $A'$ denote the output set of this execution of Algorithm~\ref{alg:main_nearly_linear}.\\
\Return{the set maximizing $f$ in $\{A'\} \cup \{A_{k'} \mid 1 \leq k' \leq k\}$}. 
\end{algorithm2e}

The following observation corresponds to Observation~\ref{obs:rho_star_limits}. The proofs of the two observations are very similar.

\begin{observation} \label{obs:rho_star_limits_nearly_linear}
For the value $\rho^*$ stated in Proposition~\ref{prop:main_guarantee_nearly_linear},
\[
	\frac{1 - 2\eps}{p + 1 + \underline{\alpha}^{-1} + d}
	\leq
	\frac{\rho*}{\max_{u \in \cN} f(u)}
	\leq
	\frac{2n}{p}
	\enspace.
\]
\end{observation}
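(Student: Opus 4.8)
The plan is to mimic the proof of Observation~\ref{obs:rho_star_limits} almost verbatim, adapting it to the slightly different closed form of $\rho^*$ supplied by Proposition~\ref{prop:main_guarantee_nearly_linear}; the only genuinely new bookkeeping is tracking the extra $(1 \pm \eps)$ factors that now sit in the numerator and denominator of $\rho^*$, namely
\[
	\rho^* = \frac{(1 - \eps)f(OPT)}{((1 + \eps)p + 1 + \alpha^{-1}) / (1 + \lambda^{-1}) + d - |OPT \cap B| / \lambda}\enspace.
\]

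For the upper bound I would bound the numerator using submodularity and non-negativity of $f$, giving $f(OPT) \leq n \cdot \max_{u \in \cN} f(\{u\})$, and bound the denominator from below by $p/2$. The latter holds because $d - |OPT \cap B| / \lambda \geq 0$ (using $|OPT \cap B| \leq d(\lambda - 1)$, the inequality recalled just before Proposition~\ref{prop:main_guarantee}) and because $((1 + \eps)p + 1 + \alpha^{-1})/(1 + \lambda^{-1}) \geq (1 + \eps)p/2 \geq p/2$, since $\lambda \geq 1$ implies $1 + \lambda^{-1} \leq 2$. Discarding the harmless factor $1 - \eps \leq 1$ then yields $\rho^* \leq 2n \cdot \max_{u \in \cN} f(\{u\}) / p$, exactly as in Observation~\ref{obs:rho_star_limits}.

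For the lower bound I would use $f(OPT) \geq \max_{u \in \cN} f(\{u\})$ (every singleton is feasible by our simplifying assumptions) and upper bound the denominator, via $1 + \lambda^{-1} \geq 1$ and $|OPT \cap B|/\lambda \geq 0$, by $(1 + \eps)p + 1 + \alpha^{-1} + d$. Since $\underline{\alpha} \leq \alpha$ gives $\alpha^{-1} \leq \underline{\alpha}^{-1}$, this is at most $(1 + \eps)(p + 1 + \underline{\alpha}^{-1} + d)$, so $\rho^* \geq \tfrac{1 - \eps}{1 + \eps} \cdot \frac{\max_{u \in \cN} f(\{u\})}{p + 1 + \underline{\alpha}^{-1} + d}$. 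The claim then follows from the elementary inequality $\tfrac{1 - \eps}{1 + \eps} \geq 1 - 2\eps$, which is equivalent to $0 \geq -2\eps^2$.

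The whole argument is routine; the step I would be most careful about is the direction of the $(1 \pm \eps)$ manipulations — specifically the factoring $(1 + \eps)p + 1 + \alpha^{-1} + d \leq (1 + \eps)(p + 1 + \underline{\alpha}^{-1} + d)$ and the conversion of $\tfrac{1 - \eps}{1 + \eps}$ into the stated $1 - 2\eps$ — since a sign slip there would break the matching with the $(1 - 2\eps)$ appearing in the modified definition of $\rho(i)$ in Algorithm~\ref{alg:complete_nearly_linear}. No assumption on $\eps$ beyond $\eps \in (0, 1/4)$ is required (this only ensures $1 - 2\eps > 0$, so that the lower bound is non-trivial).
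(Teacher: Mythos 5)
Your proposal is correct and follows essentially the same route as the paper's proof: bound $f(OPT)$ by $n \cdot \max_{u \in \cN} f(\{u\})$ above and by $\max_{u \in \cN} f(\{u\})$ below, use $|OPT \cap B| \leq d(\lambda-1)$ and $1 \leq 1 + \lambda^{-1} \leq 2$ to control the denominator, and absorb the $(1 \pm \eps)$ factors into the constants. The only (cosmetic) difference is in the final step of the lower bound, where you factor $(1+\eps)$ out of the whole denominator and use $\tfrac{1-\eps}{1+\eps} \geq 1 - 2\eps$, while the paper keeps the $(1+\eps)$ only on $p$ and compares the two fractions directly; both verifications are immediate.
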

\begin{proof}
According to the definition of $\rho^*$,
\begin{align*}
	\rho^*
	={} &
	\frac{(1 - \eps)f(OPT)}{((1 + \eps)p + 1 + \alpha^{-1}) / (1 + \lambda^{-1}) + d - |OPT \cap B| / \lambda}\\
	\leq{} &
	\frac{(1 - \eps) n \cdot \max_{u \in \cN} f(\{u\})}{(1 + \eps) p / (1 + \lambda^{-1}) + d - |OPT \cap B| / \lambda}\\
	\leq{} &
	\frac{(1 - \eps)2n \cdot \max_{u \in \cN} f(\{u\})}{(1 + \eps)p}
	\leq
	\frac{2n \cdot \max_{u \in \cN} f(\{u\})}{p}
	\enspace,
\end{align*}
where the first inequity follows from the submodularity and non-negativity of $f$, and the second inequality follows from the upper bound on $|OPT \cap B|$ given in the discussion before Proposition~\ref{prop:main_guarantee} and the inequality $\lambda \geq 1$.

Similarly,
\begin{align*}
	\rho^*
	={} &
	\frac{(1 - \eps)f(OPT)}{((1 + \eps)p + 1 + \alpha^{-1}) / (1 + \lambda^{-1}) + d - |OPT \cap B| / \lambda}\\
	\geq{} &
	\frac{(1 - \eps)\max_{u \in \cN} f(\{u\})}{((1 + \eps)p + 1 + \alpha^{-1}) / (1 + \lambda^{-1}) + d - |OPT \cap B| / \lambda}\\
	\geq{} &
	\frac{(1 - \eps) \max_{u \in \cN} f(\{u\})}{(1 + \eps)p + 1 + \underline{\alpha}^{-1} + d}
	\geq
	\frac{(1 - 2\eps) \max_{u \in \cN} f(\{u\})}{p + 1 + \underline{\alpha}^{-1} + d}
	\enspace,
\end{align*}
where the first inequality holds since every singleton is a feasible set by our assumption, and the second inequality holds since $1 + \lambda^{-1} \geq 1$ and $|OPT \cap B| / \lambda \geq 0$.
\end{proof}

Given the last observation, the proof of Proposition~\ref{prop:complete_guarantee_nearly_linear} is identical to the proof of Proposition~\ref{prop:complete_guarantee} up to the following changes.
\begin{itemize}
	\item One has to use Proposition~\ref{prop:main_guarantee_nearly_linear} and Observation~\ref{obs:rho_star_limits_nearly_linear} instead of Proposition~\ref{prop:main_guarantee} and Observation~\ref{obs:rho_star_limits}.
	\item The calculation needed to bound the number of iterations performed by the algorithm becomes a bit more involved. Specifically, we can upper bound it by
	\begin{align*}
	&
	2 + \log_2 \left[\log_{1 + \delta} \frac{2n}{p} - \log_{1 + \delta}  \frac{1 - 2\eps}{p + 1 + \underline{\alpha}^{-1} + d}\right]\\
	={} &
	2 + \log_2 [\ln (2n) - \ln(p) - \ln(1 - 2\eps) + \ln(p + 1 + \underline{\alpha}^{-1} + d)] - \log_2 \ln(1 + \delta)\\
	\leq{} &
	4 + \log_2 [\ln n + \ln 2 + \ln(2 + \underline{\alpha}^{-1} + d)] + \log_2 \delta^{-1}\\
	={} &
	O(\log \delta^{-1} + \log(\log n + \log (\underline{\alpha}^{-1} + d)))
	\enspace.
\end{align*}
\end{itemize}

\end{document}